\keywords{Propositional Dynamic Logic, Craig Interpolation, Cyclic Proofs}
\newenvironment{emphasis}{\begin{quote}\em}{\end{quote}}
\newcommand{\muML}{$\mu$ML}
\newcommand{\Gm}{\mathcal{G}}
\newcommand{\Gms}{\mathcal{G}_s}
\newcommand{\Gmt}{\mathcal{G}_t}
\newcommand{\Stree}{\mathbb{T}}
\newcommand{\isdef}{\mathrel{:=}}
\newcommand{\ol}[1]{\vec{#1}}
\newcommand{\eq}{\leftrightarrow}
\newcommand{\PDL}{\mathbf{PDL}} 
\newcommand{\Ax}[1]{(\mathrm{Ax}(#1))} 
\newcommand{\AxInd}{\mathrm{(Ind)}} 
\newcommand{\Test}{\mathit{Tests}}
\newcommand{\Prog}{\mathit{Progs}}
\newcommand{\TT}{\mathit{TP}}
\newcommand{\FL}{\mathit{FL}}
\newcommand{\Q}{\mathcal{Q}}
\newcommand{\qedge}{\edge_{\Q}}
\newcommand{\subst}[2]{#2\langle#1\rangle}
\newcommand{\iitp}{\iota}
\newcommand{\itpfmas}{\mathcal{L}_{\Q}}
\newcommand{\Simple}{\mathsf{Spl}}
\newcommand{\nf}[1]{#1^{\textit{nf}}}
\newcommand{\witdis}{\mathsf{wd}}
\newcommand{\pdlforms}{\mathcal{F}}
\newcommand{\loaded}[1]{\underline{#1}}
\newcommand{\kmodel}{\mathcal{M}}
\renewcommand{\phi}{\varphi}
\newcommand{\emptylist}{\varepsilon}
\newcommand{\seqc}{\mathop{;}} 
\newcommand{\edge}{\lessdot} 
\newcommand{\edgeT}{<} 
\newcommand{\edgeRT}{\leq} 
\newcommand{\cEdge}{\lhd} 
\newcommand{\cEdgeT}{\lhd^{+}} 
\newcommand{\cEdgeRT}{\lhd^{*}} 
\newcommand{\tedge}{\lessdot_{\mathbb{T}}} 
\newcommand{\tedgeT}{<_{\mathbb{T}}} 
\newcommand{\tedgeRT}{\leq_{\mathbb{T}}} 
\newcommand{\simpler}[2]{#2 <_{c} #1}
\newcommand{\cEquiv}{\equiv_c}
\newcommand{\comp}{\hspace{0.2em}\rotatebox[origin=c]{90}{$\heartsuit$}\hspace{0.1em}}
\newcommand{\compan}[1]{c(#1)}
\newcommand{\lpr}{loaded-path repeat\xspace}
\newcommand{\lprs}{loaded-path repeats\xspace}
\newcommand{\lprCond}{T1\xspace}
\newcommand{\freeCond}{T2\xspace}
\newcommand{\lprCondSplit}{T1S\xspace}
\newcommand{\freeCondSplit}{T2S\xspace}
\newcommand{\BNF}[2]{\begin{equation*}#1 \, \, \Coloneqq \, \, #2\BNFchecknextarg}
\newcommand{\BNFchecknextarg}{\@ifnextchar\bgroup{\BNFgobblenextarg}{\end{equation*}}}
\newcommand{\BNFgobblenextarg}[1]{\, \, | \, \, #1\@ifnextchar\bgroup{\BNFgobblenextarg}{\end{equation*}}}
\newcommand{\arr}[3]{
  \draw [->, thick, shorten >=2pt, shorten <=2pt] (#1) -- node[right]{\footnotesize { #2 }} (#3);
}
\newcommand{\+}{ {,} \ }
\newcommand{\rel}[1]{\xrightarrow{#1}}
\newcommand{\reach}[1]{R_{#1}}
\newcommand{\progs}{\mathcal{P}}
\newcommand{\tab}{\mathcal{T}}
\newcommand{\powerset}{\mathcal{P}}
\newcommand{\diam}{\mathop{\scalebox{1}{\ensuremath{\Diamond}}}}
\newcommand{\unfold}{\mathsf{unfold}}
\newcommand{\boxvec}{\overrightarrow{\Box\strut_{\phantom{x}}\kern-1ex}}
\newcommand{\voc}{\mathsf{voc}}
\newcommand{\cycs}[1]{L_{<#1}}
\newcommand{\splitCase}{\hspace{0.5em}\mid\hspace{0.5em}}
\newcommand{\twopartdef}[4]{
  \left\{
    \begin{array}{ll}
      #1 & #2 \\
      #3 & #4
    \end{array}
  \right.
}
\newread\fid
\newcommand{\readfile}[1]
{\bgroup
  \endlinechar=-1
  \openin\fid=#1
  \read\fid to\filetext
  \loop\ifx\empty\filetext\relax
    \read\fid to\filetext
  \repeat
  \closein\fid
  \global\let\filetext=\filetext
\egroup}
\newcommand{\whenlean}[2]{
  \ifshellescape%
    \IfFileExists{./tools/finder}{#1}{#2}%
  \else%
    #2\fi%
}
\newcommand{\lean}[1]{%
  \href{https://m4lvin.github.io/lean4-pdl/docs/find/?pattern=#1\#doc}{\texttt{\detokenize{#1}}}%
}
\newcommand{\getFromCache}[1]{%
  \ifcsname LeanFor#1\endcsname\csname LeanFor#1\endcsname\fi%
}
\newcommand{\checklean}[2]{%
  \texttt{\href{https://m4lvin.github.io/lean4-pdl/docs/find/?pattern=#2\#doc}{\detokenize{#2}}}%
  \whenlean{%
    \immediate\write18{bash ./tools/checker #2 > line.out}%
    \readfile{line.out}%
    \filetext%
  }{%
    \getFromCache{#2}
  }%
}
\begin{document}

\title[Propositional Dynamic Logic has Craig Interpolation]{Propositional Dynamic Logic has Craig Interpolation: a tableau-based proof}

\author[MB]{Manfred Borzechowski}[a]
\author[MG]{Malvin Gattinger\lmcsorcid{0000-0002-2498-5073}}[b]
\author[HHH]{Helle Hvid Hansen\lmcsorcid{0000-0001-7061-1219}}[c]
\author[RR]{Revantha Ramanayake\lmcsorcid{0000-0002-7940-9065}}[c]
\author[VTD]{Valentina Trucco Dalmas\lmcsorcid{0000-0003-2493-023X}}[c]
\author[YV]{Yde Venema}[b]

\address{EDV-Beratung Manfred Borzechowski, Berlin, Germany}
\email{mail@borzechowski.de}
\address{University of Amsterdam, The Netherlands}
\email{malvin@w4eg.eu, y.venema@uva.nl}
\address{University of Groningen, The Netherlands}
\email{h.h.hansen@rug.nl, d.r.s.ramanayake@rug.nl, f.c.trucco.dalmas@rug.nl}


\begin{abstract}
  We show that Propositional Dynamic Logic (PDL) has the Craig Interpolation Property.
  This question has been open for many years.
  Three proof attempts were published, but later criticized in the literature or retracted.
  Our proof is based on the main ideas from Borzechowski (1988).
  We define a cyclic tableau system for PDL with a loading mechanism to recognize successful repeats.
  For this system, we show soundness and completeness via a game.

  To show interpolation, we modify Maehara's method to work for tableaux with repeats: we first define pre-interpolants at each node, and then use a quasi-tableau to define interpolants for clusters (strongly connected components).
  In different terms, our method solves the fixpoint equations that characterize the desired interpolants, and the method ensures that the solutions to these equations can be expressed within PDL.

  The proof is constructive and we show how to compute interpolants.
  We also make available a Haskell implementation of the proof system that provides interpolants.
  Lastly, we mention ongoing work to formally verify this proof in the interactive theorem prover Lean, and several questions for future work.
\end{abstract}

\maketitle

\section{Introduction}

\paragraph{Propositional Dynamic Logic}

Propositional Dynamic Logic (abbreviated: PDL), is the propositional version of a first-order modal logic that was designed by Pratt~\cite{Pratt1976:Semantical} as
a formalism to reason about the behaviour of programs.
Since its introduction by Fischer \& Ladner~\cite{FL:PDL1979} PDL has been studied as a logic in its own right, and it has been recognized as an interesting and important modal logic for quite some time, see for instance Harel, Kozen and Tiuryn~\cite{hare:dyna00} or Troquard \& Balbiani~\cite{troq:prop23} for surveys.

The basic idea underlying PDL is that its syntax features a \emph{structured family of modalities}, each of which is indexed by some regular expression that represents a \emph{program} --- it is in fact customary to refer to these
expressions themselves as ``programs''.
Then, where $\alpha$ is such a program and $\phi$ is a formula, the intended meaning of the boxed formula $[\alpha]\phi$ is that ``after every execution of the program $\alpha$, the formula $\phi$ holds''.

The collection of all programs is generated from a set $\progs_{0}$ of so-called \emph{atomic programs}.
What defines PDL is that we close the set $\progs_{0}$ under three \emph{program constructors} which correspond to the standard regular operations 
of sequential composition, nondeterministic choice and unbounded iteration 
(to be denoted by the symbols $;$, $\cup$ and $^*$, respectively).
In \emph{test-free} PDL these are the only program constructors, so that the
programs correspond to the regular expressions over the set $\progs_{0}$.
The language of \emph{full} PDL (which we will usually refer to as ``PDL''
simpliciter) also admits so-called \emph{test programs}; that is, for every 
formula $\tau$, we have an expression $\tau?$ which amounts to the program 
“proceed if $\tau$ is true, else fail”.

The standard semantics for PDL is given by (poly-)modal Kripke models, which feature an accessibility relation $R_{a}$ for each atomic program $a$.
This interpretation is extended to arbitrary programs by a straightforward induction on the complexity of programs
by interpreting $;$, $\cup$ and $^*$ on relations as composition, union and reflexive, transitive closure, respectively.

The most important properties of PDL were established within a few years after its introduction.
The decidability of its satisfiability problem was already proved by Fischer \& Ladner~\cite{FL:PDL1979}, who also established an exponential lower bound.
The \textsc{ExpTime} upper bound was established by Pratt~\cite{prat:near80}.
A natural axiomatisation was proposed by Segerberg~\cite{sege:comp77} and proved 
to be complete by Parikh~\cite{pari:comp78} and others --- see 
Pratt~\cite{prat:dyna17} for an overview of the early history of PDL.
In contrast to these results, the status of the metalogical property of 
\emph{interpolation} has been much less clear.

\paragraph{Craig interpolation and Beth definability}

A logic has the Craig Interpolation Property (CIP) if for any validity $\phi\to\psi$ there exists a formula $\theta$ in the common vocabulary of $\phi$ and $\psi$ such that $\phi \to \theta$ and $\theta \to \psi$ are valid.
Any such formula $\theta$ is then called an \emph{interpolant} of $\phi$ and $\psi$.
The property was named after Craig who established it for first-order logic~\cite{Craig1957:UsesHG}.
Informally we will often say that a logic ``has interpolation'' if it enjoys the Craig Interpolation Property.

Interpolation is considered to be an attractive property of a logic.
Applications in computer science are generally related to modularity: if one needs to reason about two different parts of a system (for instance, in the setting of knowledge representation or program specification), interpolation guarantees that this reasoning can be done in the common vocabulary~\cite{mcmi:appl05,rena:inte08,jung:sepa21}.

A related property is that of Beth Definability, which informally states that any notion that can implicitly be defined in the logic, in fact also has an explicit definition.
This property owes its name to Beth, who established it for classical first-order logic~\cite{beth:pao53}; it is studied in the area of description logics, where it is used to optimize reasoning~\cite{cate:beth13}. 

Craig Interpolation and Beth Definability have been studied in logic, see for instance \cite{hoog:defi01,vB2008:ManyFaces}. 
For an overview of interpolation and definability properties in the setting of modal logics we refer to the monograph by Gabbay and Maksimova~\cite{gabb:inte05}.

To prove that a given logic has interpolation, proof methods from various areas have been developed, including algebra, model theory, proof theory and automata theory.
A widely applied proof-theoretic method originates with Maehara~\cite{Maehara1961}: here one obtains an interpolant for a derivable implication, by taking some derivation for the implication and defining interpolants for every node of the derivation tree by a leaf-to-root induction.

Interpolation is a rather \emph{fragile} property.
For instance, it is not preserved under axiomatic extensions of a logic or inherited by its fragments.
To elaborate on the latter: if a logic has interpolation, this does not necessarily mean that the interpolant of two formulas that both belong to some distinguished fragment of the language also belongs to this fragment.

Returning to the question whether PDL has interpolation, we must observe that the literature on this matter is rather puzzling.
As far as we know, affirmative answers have been claimed on three occasions.
The first of these claims was made in 1981 by Leivant~\cite{Leivant1981} and refuted in 1999 by Kracht~\cite{Kracht1999tools}, but for the wrong 
reasons~\cite{Gat2014:notesAboutLeivant}.
There is, however, a different gap in Leivant's proof for which no bridge has been found --- see \autoref{apdx:history}.
The second claim was made seven years later by Borzechowski, one of this paper's coauthors, in his Diplomarbeit~\cite{MB1988} (comparable to a MSc thesis).
Written in German and never properly published, this work received little attention; the general opinion, if any, seems to have been that this work could be dismissed since it was not ``possible to verify the argument''~\cite[p.~493]{Kracht1999tools}.
Finally, the third claim, made by Kowalski~\cite{Kowalski2002} in 2002, was refuted and officially withdrawn in 2004~\cite{KowalskiRetraction2004}.
In \autoref{apdx:history} we review this intriguing history in some more detail.

\paragraph{Propositional Dynamic Logic as a fixpoint logic}
One way to understand why the interpolation problem for PDL constitutes such a challenging problem is by taking the perspective of modal fixpoint logic.
Since ${}*$-free PDL is just a notational variation of basic poly-modal logic, many if not all of the challenges in the study of PDL involve the iteration operator ${}*$, and it is this operator that takes Propositional Dynamic Logic inside the field of modal fixpoint logics (and outside the area of compact logics such as first-order logic).
The key step to see this is to think of the formula $[\alpha^{*}]\phi$ as the greatest fixpoint of the formula $\phi \land [\alpha] x$.
That is, writing $\equiv$ for semantic equivalence, we think of $[\alpha^{*}]\phi$ as a solution to the ``equation'' $x \equiv \phi \land [\alpha] x$, in the sense that 
\[
[\alpha^{*}]\phi \equiv \phi \land [\alpha] [\alpha^{*}]\phi,
\]
and it is the \emph{largest} such, in the sense that for any formula $\psi$ we have the following:
\[
\text{if } \psi \equiv \phi \land [\alpha]\psi
\text{ then } \psi \text{ implies } [\alpha^{*}]\phi.
\]

Based on this observation, one may define a truth-preserving embedding of PDL into the modal $\mu$-calculus (\muML).
This logic, which extends basic poly-modal logic with arbitrary least- and greatest fixpoint operators, is a formalism of considerable importance, both because of its applications in computer science and for its metalogical properties, see Bradfield \& Stirling~\cite{brad:moda07} for an overview.
In fact, PDL corresponds to a natural fragment of the modal $\mu$-calculus, see Carreiro \& Venema~\cite{Carreiro2014:PDLInsideMu} for the details.

\paragraph{Interpolation for fixpoint logics}

Over the past three decades various results concerning interpolation properties
for modal fixpoint logics have been obtained, both positive and negative.
A quite striking result was obtained by D'Agostino \& Hollenberg, who proved, using automata-theoretic tools, that the modal $\mu$-calculus enjoys \emph{uniform} interpolation~\cite{AH2000:muCalcIP}, a very strong version of interpolation where the interpolant only depends on one of the two formulas and the vocabulary of the other.
These authors also showed that any extension of PDL with the uniform interpolation property must have the same expressive power as the full $\mu$-calculus.
From this it is immediate that the uniform interpolation property fails for PDL.
Another negative result worth mentioning concerns a fragment ML${}^{*}$ of PDL which features only a minimal trace of the iteration operator.
ML${}^{*}$ is obtained by adding the master modality $[*]$ to the language of basic poly-modal logic --- this modality can be expressed in PDL as $[(a_{1} \cup \cdots \cup a_{n})^{*}]$, if we restrict the language to a finite number of atomic programs $a_{1}, \ldots,a_{n}$.
It was proved that this fragment of PDL does not have Craig interpolation by Studer~\cite{stud:comm09}, using an argument going back to Maksimova~\cite{maks:temp91}.

Moving back to the positive side, in recent years Maehara's method has become applicable to fixpoint logics through the introduction of so-called \emph{cyclic} proof systems~\cite{brot:cycl05,BGP2012:GenCyclicProver}.
Cyclic proof and tableau systems have been introduced for various modal fixpoint logics; in particular, the focus games of Lange and Stirling can be seen as cyclic tableau systems, respectively for linear time logic and computation tree logic~\cite{lang:focu01}, and for PDL~\cite{lang:sati03}.
In a cyclic proof, open leaves are allowed, as long as from each of these there is a back edge to an identically labeled ancestor (which is called the leaf's companion), and each path from a companion to a repeat leaf satisfies some global soundness (or ``progress") condition.

Afshari \& Leigh established the Lyndon interpolation property (a slight strengthening of the CIP) for the modal $\mu$-calculus using a cyclic sequent calculus~\cite{afsh:lynd19}. 
Similarly, Marti \& Venema introduced a focus-style derivation system for the alternation-free $\mu$-calculus (a fragment of \muML\ extending PDL), and used this to give a Maehara-style interpolation proof for this 
logic~\cite{mart:focu21}.
(Note that the property itself had been established earlier by D'Agostino~\cite{dago:mule18}, who based her proof on automata-theoretic work by Santocanale \&  Arnold~\cite{sant:ambi05}.) 
The idea underlying these interpolation proofs is to think of a finite, cyclic derivation of an implication as a \emph{system of equations}, where each node of the derivation tree induces an equation, the shape of which is determined by the rule applied at the node --- in the style of Maehara's method.
In order to find the interpolant for the implication at the root of the derivation tree, one needs to \emph{solve these equations}, that is, find a certain set of formulas, indexed by the nodes of the derivation tree, that 
satisfy these equations.

\paragraph{Interpolation for PDL}
We may now give an explanation why establishing the Craig Interpolation Property is significantly harder for PDL than for either the full $\mu$-calculus or its alternation-free fragment.
Given the developments in cyclic proof theory, it is relatively easy to come up with a well-behaved cyclic proof system (or tableau calculus) for PDL, such that for any cyclic derivation of an implication $\phi\to\psi$ one may define a system of equations as sketched above.
The problem is that, while the systems of equations one obtains can easily be solved in the alternation-free $\mu$-calculus, it seems to be impossible to set up a proof system such that the system of equations admits solutions 
\emph{inside PDL}.
It is precisely here that the insights from Borzechowski's \emph{Diplomarbeit} play an important role.

\paragraph{Contribution}
Based on ideas from Borzechowski's \emph{Diplomarbeit}~\cite{MB1988}, we show that Propositional Dynamic Logic has the Craig Interpolation Property, and, as a corollary, the Beth Definability Property. 

We first introduce a new, cyclic, tableau system for PDL, with finitary rules such that proofs are finite trees.
We use a certain \emph{loading} mechanism, which allows one to place a focus on certain formulas in the tableau, in order to formulate an appropriate soundness condition for companion-repeat paths.
While the rules for the boolean connectives and for the atomic modal programs are standard, a novelty in our approach is the treatment of complex programs.
With every formula of the form $[\alpha]\psi$, where $\alpha$ is non-atomic, we associate a finite collection $\unfold_{\square}(\alpha,\psi)$ of special sets of formulas such that $[\alpha]\psi$ is equivalent to the disjunction $\displaystyle{\bigvee} \big\{ \bigwedge \Gamma \mid \Gamma \in \unfold_{\square}(\alpha,\psi) \big\}$, and the \emph{unfolding rule} then simply replaces $[\alpha]\psi$ with the respective members of 
$\unfold_{\square}(\alpha,\psi)$.
A similar rule takes care of formulas of the form $\neg[\alpha]\psi$.

Turning to the interpolation proof, we follow Maehara's proof-theoretical method for establishing interpolation, and develop a version of the tableau system that operates on \emph{split sequents}, that is, pairs of finite sets of formulas.
Given the cyclic nature of the tableau system, closed tableaux have a natural notion of a cluster, that is, a maximal strongly connected component of the graph that is obtained by adding the back edges from repeats to their companion to the parent-child relation of the tableau.
The idea of the proof is to define interpolants, by a root-to-leaf induction on the tableau structure, for the roots of these clusters.
The key insight of Borzechowski's approach is that, in order to define the interpolant for the root of a (non-trivial) cluster, we use an \emph{auxiliary structure} rather than the cluster itself to set up an appropriate system of equations.
This shift ensures that solutions for these equations can be found inside the language of PDL.
We will provide more intuitions on our approach in \autoref{subsec:ProofIdea}.

\paragraph{Outline}
This article is structured as follows. We introduce preliminaries in \autoref{sec:Preliminaries}, including the standard syntax and semantics of PDL, and notation and conventions for tableau systems.
In \autoref{sec:LocalTableaux}, we define the local part of our tableau system, and in \autoref{sec:PDLTableaux}, a full tableau system for PDL.
We then show soundness of the system in \autoref{sec:Soundness}, completeness in \autoref{sec:Completeness} and finally interpolation in \autoref{sec:Interpolation}.
We conclude in \autoref{sec:Conclusion} and provide an overview of notation and symbols in~\autoref{apdx:notation}.
As we also discuss in \autoref{sec:Conclusion}, we are formalizing the proof given here in the interactive theorem prover Lean 4 in an ongoing project.
In the present document, labels such as ``\checklean{theorem}{truthLemma}'' provide a link to the corresponding definition or statement in the formalization.
A check mark next to the label means that this theorem has been fully formalized and verified in Lean already.

\section{Preliminaries}\label{sec:Preliminaries}

We will write $\mathbb{N}$ for the set of natural numbers (including $0$).
For finite sets, instead of writing $\{x_1, \dots x_k\}$ we may write ${\{x_i\}}_{i \leq k}$.
To denote sequences (or lists) we will simply use juxtaposition; instead of $x_1 \cdots x_k$ we may sometimes write ${(x_i)}_{i\leq k}$.
We will use the notation $\emptylist$ to represent the empty sequence.

Given a set $A$, we write $\mathsf{Id}_A$ for the identity relation $\{(a, a) \mid a \in A \}$.
Let $A, B, C$ be sets. Let $R \subseteq A \times B$ and $S \subseteq B \times C$ be two binary relations.
The composition of $R$ and $S$, denoted $R \seqc S$, is defined as:
\begin{equation*}
R \seqc S = \{ (a, c) \mid \exists b \in B \text{ such that } (a, b) \in R \text{ and } (b, c) \in S \}
\end{equation*}
Let $n \in \mathbb{N}$ and let $R \subseteq A \times A$.
The $n$-fold composition of $R$, denoted $R^n$, is defined as $R^0 = \mathsf{Id}_A$ and $R^n = R^{n-1} \seqc R$ for $n \geq 1$.
The reflexive-transitive closure of $R$, denoted $R^*$, is defined as $R^* = \bigcup \{R^n \mid n \in \mathbb{N}\}$.

\subsection{Syntax of PDL}

Fix $\pdlforms_0$ to be a countably infinite set of atomic propositions and $\progs_0$ to be a countably infinite set of atomic programs.
\begin{defi}\label{d:pdl-syntax}
  The set $\pdlforms$ of PDL formulas and the set $\progs$ of PDL programs are defined by mutual recursion as follows:
  \BNF{\pdlforms \ni \phi, \psi, \tau}{\bot \mid p \mid \lnot \phi \mid \phi \land \psi \mid [\alpha] \phi}
  \BNF{\progs \ni \alpha, \beta}{a \mid \tau? \mid \alpha \cup \beta \mid \alpha ; \beta \mid \alpha^\ast}
  where $p \in \pdlforms_0$ and $a \in \progs_0$.
  When clear from the context, we will talk of formulas and programs rather than PDL formulas and PDL programs.
  (Other connectives $\lor$, $\to$ and $\eq$ are defined as usual in terms of $\neg$ and $\land$.)
\end{defi}

PDL is a logic of programs where formulas of the form $[\alpha]\phi$ are to be read as: $\phi$ is true after all possible executions of the program $\alpha$.
The program $\alpha$ in this formula can either be an atomic program $a \in \pdlforms_0$ or a complex program.
Next, we describe all possible complex programs.
A program of the form $\tau?$, called a test, represents a program that only succeeds when $\tau$ is true.
A program of the form $\alpha \cup \beta$, called a non-deterministic choice, represents a program that executes $\alpha$ or $\beta$ non-deterministically.
A program of the form $\alpha; \beta$, called a sequential composition, represents a program that sequentially executes $\alpha$ followed by $\beta$.
A program of the form $\alpha^*$, called an iteration, represents a program that executes $\alpha$ a non-deterministically determined finite number of times.

For tests $\tau?$, we could have used $\phi$ or $\psi$ for the formula $\tau$, but we prefer to use the symbol $\tau$ to stress that it is the formula of a test.

The logic $\PDL$ is usually defined via the following Hilbert-style axiomatisation.
\begin{defi}\label{d:pdl-logic}
The logic $\PDL$ is the least set of formulas that contains all propositional tautologies, 
all instances of the following axiom schemas,
\[\begin{array}{ll}
  (K)  &  [\alpha](\phi \to \psi) \to ([\alpha]\phi \to [\alpha]\psi) \\
  \Ax{?}  &  [\tau?]\phi \eq \neg \tau \lor \phi\\
  \Ax{\cup} & [\alpha \cup \beta]\phi \eq [\alpha]\phi \land [\beta]\phi\\
  \Ax{;} & [\alpha; \beta]\phi \eq [\alpha][\beta]\phi\\
  \Ax{^*} & [\alpha^*]\phi \eq \phi \land [\alpha][\alpha^*]\phi\\
  \AxInd & \phi \land [\alpha^*](\phi \to [\alpha]\phi) \to [\alpha^*]\phi
\end{array}
\]
and is closed under modus ponens and necessitation ($\phi \in \PDL \Rightarrow [\alpha]\phi \in \PDL$).
\end{defi}

Next, we give a series of auxiliary definitions that will be used throughout the document.

As usual, the subformulas of a formula $\phi$ (or a program $\alpha$) refer to the set of formulas occurring in $\phi$ (respectively in $\alpha$).
Similarly, the subprograms of a formula $\phi$ (or a program $\alpha$) refer to the set of programs occurring in $\phi$ (respectively in $\alpha$).

Due to the mutual recursion between formulas and programs in \autoref{d:pdl-syntax},
tests and programs can be nested inside test formulas.
For instance, $p?$ is a subprogram of $[p?]q?$ and $p$ is a subformula of $[p?]q?$.
However, when doing an inductive proof based on \autoref{d:pdl-syntax}, the inductive step often only considers the \emph{shallow} program structure for which the base cases are atomic programs and shallow tests.
For a program $\alpha$, the shallow tests and shallow subprograms of $\alpha$ are the tests and programs that occur in $\alpha$ that are not nested inside other tests.
Formally:

\begin{defi}[\lean{testsOfProgram}]\label{d:TestSet}
We define the set $\Test(\alpha)$ of \emph{shallow tests of $\alpha$} as:
\[\begin{array}{lll}
\Test(a) & \isdef & \emptyset
\\ \Test(\tau?) &\isdef& \{ \tau \}
\\ \Test(\alpha\cup\beta) &\isdef& \Test(\alpha) \cup \Test(\beta)
\\ \Test(\alpha;\beta) &\isdef& \Test(\alpha) \cup \Test(\beta)
\\ \Test(\alpha^{\ast}) &\isdef& \Test(\alpha)
\end{array}\]
where $a \in \progs_0$ and $\alpha, \beta \in \progs$ and $\tau \in \pdlforms$.
\end{defi}

\begin{defi}[\lean{subprograms}]\label{d:ProgSet}
We define the set $\Prog(\alpha)$ of shallow subprograms of $\alpha$ as:
\[\begin{array}{lll}
\Prog(a) & \isdef & \{ a \}
\\ \Prog(\tau?) &\isdef& \{ \tau? \}
\\ \Prog(\alpha\cup\beta) &\isdef& \{ \alpha\cup\beta \} \cup \Prog(\alpha) \cup \Prog(\beta)
\\ \Prog(\alpha;\beta) &\isdef& \{ \alpha;\beta \} \cup \Prog(\alpha) \cup \Prog(\beta)
\\ \Prog(\alpha^{\ast}) &\isdef& \{ \alpha^{\ast} \} \cup \Prog(\alpha)
\end{array}\]
where $a \in \progs_0$ and $\alpha, \beta \in \progs$ and $\tau \in \pdlforms$.
\end{defi}

\begin{exa}
We illustrate how $\Test(\alpha)$ does not include the tests that appear inside tests, 
and how $\Prog(\alpha)$ does not include programs that occur inside tests.
We have:
\begin{enumerate}[(a)]
\item $\Test([q?]p?;a) = \{[q?]p\}$. Note that $q \notin \Test([q?]p?;a)$.
\item $\Prog([q?]p?;a) = \{[q?]p?;a, [q?]p?, a\}$. Note that $q? \notin \Prog([q?]p?;a)$.
\end{enumerate}
\end{exa}

Given a list of programs $\alpha_1 \cdots \alpha_n$ and a formula $\phi$ we write $\Box(\alpha_1 \cdots \alpha_n, \phi)$ to mean the formula $[\alpha_1]\dots [\alpha_n]\phi$.
Formally:

\begin{defi}[\lean{Formula.boxes}]\label{d:boxes}
Given a list $\ol{\alpha}$ of programs and a formula $\phi$, we define the
formula $\Box(\ol{\alpha},\phi)$ as follows:
\[\begin{array}{lll}
   \Box(\emptylist,\phi) &\isdef& \phi
\\ \Box(\alpha\ol{\beta},\phi) & \isdef & [\alpha]\Box(\ol{\beta},\phi)
\end{array}\]
\end{defi}

We have the usual definition of uniform substitution of atomic propositions for formulas.

\begin{defi}[\lean{abbrev Substitution}]\label{d:subst}
A \emph{substitution} is a function mapping variables to formulas.
Any such map $\sigma$ can be extended to a function $\langle\sigma\rangle$
mapping formulas to formulas, and programs to programs, as follows.
\[
\begin{array}{lcl}
   \subst{\sigma}{\bot} & \isdef  & \bot
\\ \subst{\sigma}{p}    & \isdef  & \sigma(p)
\\ \subst{\sigma}{(\lnot\phi)} & \isdef  & \lnot \subst{\sigma}{\phi}
\\ \subst{\sigma}{(\phi \land \psi)} & \isdef &
    \subst{\sigma}{\phi} \land \subst{\sigma}{\psi}
\\ \subst{\sigma}{([\alpha]\phi)} & \isdef &
   [\subst{\sigma}{\alpha}]\subst{\sigma}{\phi}
\end{array}
\hspace{2em}
\begin{array}{lcl}
\subst{\sigma}{a} & \isdef  & a
\\ \subst{\sigma}{(\alpha ; \beta) }& \isdef&
   \subst{\sigma}{\alpha} ; \subst{\sigma}{\beta}
\\ \subst{\sigma}{(\alpha \cup \beta) } & \isdef &
\subst{\sigma}{\alpha} \cup \subst{\sigma}{\beta}
\\ \subst{\sigma}{\alpha^\ast} & \isdef  &
   (\subst{\sigma}{\alpha})^{\ast}
\\ \subst{\sigma}{(\tau?)} & \isdef& (\subst{\sigma}{\tau})?
\end{array}
\]
The substitution $\theta/p$ denotes the function that maps $p$ to $\theta$ and every other propositional variable to itself.
In the case that $p$ occurs in $\phi$ we generally write $\phi(q)$ to denote the formula $\subst{q/p}{\phi}$.
\end{defi}

For example, let $\sigma$ be a substitution such that we have $\sigma(p)=r$ and $\sigma(r)=p$.
Then $\subst{\sigma}{([[r?]p?]r)} = [[p?]r?]p$.

\begin{defi}
  Given a formula $\phi$, we define the \emph{single negation $\sim\!\phi$ of $\phi$} as the following formula:
  \[
  \sim\!\phi =
  \twopartdef
  {\psi}
  {\text{if } \phi \text{ is of the form } \lnot \psi}
  {\lnot \phi}
  {otherwise}
  \]
  A set $X$ of formulas is \emph{closed under single negations} if for all formulas $\phi$:
  \[
  \phi \in X \text{ implies } \sim\!\phi \in X.
  \]
\end{defi}

The following definition of the Fischer-Ladner closure is adapted from \cite[Definition~4.79]{BRV}.

\begin{defi}[\lean{fischerLadner}]\label{d:fischerLadner}
  A set $X$ of formulas is called \emph{Fischer-Ladner closed} if it is closed under single negations and subformulas and we have that:
  \[
    \begin{array}{lcl}
      [\alpha\cup\beta] \phi \in X  & \Rightarrow & [\alpha]\phi \in X \text{ and } [\beta]\phi \in X \\[0.3em]
      [\alpha;\beta] \phi \in X  & \Rightarrow & [\alpha][\beta]\phi \in X \\[0.3em]
      [\alpha^\ast] \phi \in X  & \Rightarrow &  [\alpha][\alpha^\ast]\phi \in X 
    \end{array}
  \]
For any set $X$ of formulas, we define its \emph{Fischer-Ladner closure} $\FL(X)$ as the smallest set containing $X$ that is Fischer-Ladner closed.
\end{defi}

It is well known that the Fischer-Ladner closure of a finite set is finite.

\begin{defi}[\lean{isBasic}]\label{d:isBasic}
A set of formulas is \emph{basic} if it only contains formulas of the
forms $\bot$, $\lnot\bot$, $p$, $\lnot p$, $[a]\phi$ and $\lnot [a]\phi$.
\end{defi}

\subsection{Semantics of PDL}

PDL formulas of the form $[\alpha]\phi$ are to be read as stating that all possible executions \emph{of the program $\alpha$ from a state $s$} end in \emph{some state $t$} where $\phi$ holds.
Intuitively, these states represent states in some computing environment, the programs represent computations that transform one state into another, and atomic programs represent atomic computations like primitive instructions in a programming language.
It is important to mention that these programs can behave non-deterministically, and for this reason we talk of \emph{all} possible executions when interpreting the meaning of $[\alpha]\phi$.

More abstractly, we view these states as states in a Kripke model, where the edges between these states are labeled by atomic programs and where in each state a set of atomic propositions holds.

\begin{defi}[\lean{KripkeModel}]\label{d:KripkeModel}
  A \emph{Kripke model} $\kmodel = (W, \{\reach{a} \mid a \in \progs_0\}, V)$ consists of a set $W$ of states, a reachability relation $\reach{a} \subseteq W^2$ for every atomic program $a \in \progs_0$, and a valuation $V: \pdlforms_0 \to \powerset(W)$ that assigns to each atomic proposition a subset of $W$.
  A pointed model is a pair $(\kmodel, v)$ where $\kmodel$ is a Kripke model and $v$ is a state in $\kmodel$.
\end{defi}

Next, we give the usual Kripke semantics that matches the previously discussed intuitions about the meaning of PDL formulas and programs.

\begin{defi}[\lean{evaluate}, \lean{relate}]\label{d:evaluate}
  Let $\kmodel= (W, \{\reach{a} \mid a \in \progs_0\}, V)$ be a Kripke model.
  The \emph{truth} of a formula $\phi$ in $\kmodel$ at a state $v$ (denoted as $\kmodel, v \Vdash \phi$) 
  and the interpretation of complex programs is defined by a mutual induction as follows:
\[
\begin{array}{lll}
\kmodel, v \Vdash p            & \text{ iff }& v \in V(p) \\
\kmodel, v \Vdash \lnot \phi      & \text{ iff }& \kmodel, v \not\Vdash \phi \\ 
\kmodel, v \Vdash \phi \land \psi    & \text{ iff }&  \kmodel, v \Vdash \phi \text{ and } \kmodel, v \Vdash \psi \\
\kmodel, v \Vdash [\alpha] \phi        & \text{ iff }& \kmodel, w \Vdash \phi \text{ for all $w$ with } (v,w) \in \reach{\alpha} \\[1em]
\end{array}
\]
  \vspace{-1em}
\[
\begin{array}{lll}
\reach{\phi?}       & \isdef & \{ (v,v) \mid v \Vdash \phi \} \\
\reach{\alpha \cup \beta} & \isdef & \reach{\alpha} \cup \reach{\beta} \\
\reach{\alpha ; \beta}    & \isdef & R_\alpha \seqc R_\beta = \{(w,v) \mid \exists u: (w,u) \in R_\alpha \land (u,v) \in R_{\beta}\}  \;\; (\text{composition})\\ 
\reach{\alpha^\ast}    & \isdef & \reach{\alpha}^\ast = \bigcup_{n=0}^\omega R_\alpha^n\;\; (\text{reflexive transitive closure of $R_\alpha$})
\end{array}
\]
  Note that $R_a$ for atomic programs is given by $\kmodel$.
  For $(v, w) \in \reach{\alpha}$ we also write $v \rel{\alpha} w$.
  Given a set $X$ of formulas, we write $\kmodel, v \Vdash X$ if $\kmodel, v \Vdash \phi$ for all $\phi \in X$.
\end{defi}

We extend the interpretation of single programs to lists of programs.

\begin{defi}[\lean{relateSeq}]\label{d:relateSeq}
  For any list of programs $\ol{\delta}$ and for any model $\kmodel$ with set of states $W$, we define $R_{\ol{\delta}}$ inductively as follows:
\[
\begin{array}{lll}
R_\emptylist & \isdef & \mathsf{Id}_W = \{ (v,v) \mid v \in W \} \\
R_{\alpha\ol{\delta}} & \isdef &R_\alpha \seqc R_{\ol{\delta}} 
    \end{array}
  \]
  where $\alpha$ is a program.
\end{defi}

Notice that for any Kripke model $(W, {R_a \mid  a \in \progs_0}, V)$ and for any list $\alpha_1\cdots\alpha_n$ of programs it holds that $R_{\alpha_1\cdots\alpha_n} = R_{\alpha_1; \dots; \alpha_n}$.

Our consequence relation is defined as follows.

\begin{defi}\label{d:validEquiv}
Let $\phi$ and $\psi$ be formulas and let $X$ be a set of formulas.
We say that $\phi$ is a \emph{(local) semantic consequence} of $X$ (written $X \models \phi$), if for all pointed models $(\kmodel, v)$, if $\kmodel, v \Vdash X$ then $\kmodel, v \Vdash \phi$.
  We say $\phi$ is \emph{semantically valid} (written $\models \phi$), if for all pointed models $(\kmodel, v)$, we have that $\kmodel, v \Vdash \phi$.
  Finally, we say that $\phi$ and $\psi$ are \emph{semantically equivalent} (written $\phi \equiv \psi$), if for all pointed models $(\kmodel, v)$ we have that $\kmodel, v \Vdash \phi$ if and only if $\kmodel, v \Vdash \psi$.
 \end{defi}

The Hilbert style axiomatisation of $\PDL$ from \autoref{d:pdl-logic} is known to be sound and complete with respect to the above Kripke semantics:  $\phi \in \PDL$ iff $\models \phi$, for all formulas $\phi$.
For a proof, see~\cite{pari:comp78} for test-free PDL, and e.g.~\cite[Sec.~10.2]{Kracht1999tools} for full PDL.

The next theorem, called the Local Deduction Theorem is used to show that the Beth definability property follows from the interpolation property in the proof of \autoref{cor:beth}.

\begin{thm}[\checklean{theorem}{deduction}]\label{t:deduction}
  Let $\phi$ and $\psi$ be formulas and let $X$ be a set of formulas.
  Then $X, \psi \models \phi$ if and only if $X \models \psi \to \phi$.
\end{thm}
\begin{proof}
Straightforward.
\end{proof}

The next lemma states that given a substitution $\sigma$ (see \autoref{d:subst}), a Kripke model $\kmodel$, and a formula $\phi$, the formula $\subst{\sigma}{\phi}$ is true in state $s$ if and only if the formula $\phi$ is true in the same state $s$ in a model equal to $\kmodel$ except that the valuation is updated with the substitution as given by $V^\sigma$ below.

\begin{lem}[\checklean{theorem}{substitutionLemma}]\label{l:subst}
Let $\kmodel = (W,R,V)$ be a Kripke model and $\sigma$ a substitution.
For every state $s$ in $\kmodel$ and every formula $\phi$, we have
\[
\kmodel, s \Vdash \subst{\sigma}{\phi} \text{ iff }
\kmodel^{\sigma}, s \Vdash \phi,
\]
where $\kmodel^{\sigma}$ is the model $(W,R,V^{\sigma})$ with $V^{\sigma}$
the valuation given by
\[
V^{\sigma}(p) \isdef \{ u \in W \mid \kmodel, u \Vdash \sigma(p) \}.
\]
\end{lem}
\begin{proof}
Straightforward.
\end{proof}

We finish this section with one more useful technical lemma about the semantics of the iteration operator.

\begin{lem}[\checklean{theorem}{stepToStar}]\label{l:stepToStar}
For all formulas $\phi$ and $\psi$, and all programs $\alpha$,
if $\phi \models \psi \land [\alpha]\phi $ then $\phi \models [\alpha^{\ast}]\psi$.
\end{lem}
\begin{proof}
Straightforward.
\end{proof}

\subsection{Tableau Calculi}

Trees are the fundamental structures underlying tableaux. We fix notation and recall concepts.

A \emph{rooted tree} is a triple $(V, \edge, r)$ where $V$ is a set of nodes, $\edge$ is a binary relation on $V$ and $r \in V$, called the \emph{root} of the tree, is such that for every node $s \in V$ there is a unique $\edge$-path from $r$ to $s$.
If $s \edge t$, then $s$ is called the \emph{parent} of $t$, and $t$ a \emph{child} of $s$.
If a node has no children then it is called a \emph{leaf}, otherwise it is called an \emph{interior node}.
We denote the transitive closure of $\edge$ by $\edgeT$, and the reflexive and transitive closure by $\edgeRT$.
If $s \edgeT t$ we say that $s$ is an \emph{ancestor} of $t$ and that $t$ is a \emph{descendant} of $s$.

In this document, we visualize trees as growing downwards from their root, so e.g., an ancestor of a node $s$ is above $s$.

A \emph{sequent} is a finite set of formulas.
We will use $\Gamma, \Sigma, \Delta$ to denote sequents.
Instead of $\Gamma \cup \Delta$ we will write $\Gamma, \Delta$, and instead of $\Gamma \cup \{\phi\}$ we will write $\Gamma, \phi$.
Also, instead of $\Gamma \setminus \{\phi\}$ we will write $\Gamma \setminus \phi$.

A tableau calculus is a finite set of \emph{schematic inference rules}.
Schematic inference rules specify how to transform a sequent into a list of new sequents based on the form of the formulas contained within the original sequent.

In \autoref{f:ruleExamples}, we present some examples of inference rules for tableau calculi.
The sequent above the horizontal line in a rule is called the \emph{parent} of the rule, while the sequents below the line that are separated by the symbol $\splitCase$, are called \emph{children} of the rule.
Sometimes, instead of separating the rule children with $\splitCase$, we might write a set below the horizontal line, as exemplified by Rule $(P)$.
In this case, each element in this set is a rule child.
A rule is said to be \emph{branching} if it has more than one rule child.
For instance, the rule $(\lnot \land)$ is branching whereas $(\lnot)$ is not.

The formula that appears in the rule parent is called the \emph{principal formula} of the rule.
For instance, $\lnot \lnot \phi$ is the principal formula of rule $(\lnot)$.
The sequent $\Delta$ that appears in the rules of \autoref{f:ruleExamples} is called the \emph{context} of the rule.
More generally, the context is the sequent that appears on the rule parent next to the principal formula.

A \emph{rule instance} is an instantiation of the parent and children of the rule with specific sequents that match the pattern of the rule.
We will say that a rule can be applied to a sequent whenever the sequent matches the parent of the rule.
For instance, consider the rule $(\lnot \land)$.
The sequent $[a]p, \lnot(p \land q)$ matches the rule parent, $[a]q, \lnot p$ the first rule child and $[a]q, \lnot q$ the second rule child, and thus the three sequents together are a rule instance of $(\lnot \land)$.
For brevity, we will write the rule instances as $\Gamma_0 ~/~ \Gamma_1 | \dots | \Gamma_n$ where $\{\Gamma_i\}_{i \leq n}$ are sequents.
Here the symbol $~/~$ is used to separate the sequent that instantiates the rule parent and the ones that instantiate the rule children whereas $|$ is used to separate the rule children.
Using this notation we can say that $[a]p, \lnot(p \land q) ~/~ [a]q, \lnot p \mid  [a]q, \lnot q$ is a rule instance of $(\lnot \land)$.

We use the same terminology that we used for rules for rule instances.
For example, the principal formula of a rule instance is simply the formula that substitutes the principal formula in the rule.

A key convention we follow when interpreting inference rules is that:

\begin{emphasis}
In all rule instances the principal formula is not an element of the context.
\end{emphasis}

For example, $\lnot \lnot p ~/~ \lnot \lnot p, p$ is not a rule instance of $(\lnot)$ whereas $\lnot \lnot p ~/~ p$ is.

Rules might have side conditions under which they are instantiated.
For example, we could add a side condition to the rule $(\lnot)$ stating that the context $\Delta$ does not contain a negated formula.
Then, with this side condition, the sequent $\lnot q, \lnot \lnot p ~/~ \lnot q, p$ is not a rule instance of $(\lnot)$.

\begin{figure}
\begin{center}
  \AxiomC{$\Delta , \lnot \lnot \phi$}
  \LeftLabel{$(\lnot)$}
  \UnaryInfC{$\Delta,\phi$}
  \DisplayProof
  \hspace{1.5em}
  \AxiomC{$\Delta , \lnot (\phi \land \psi)$}
  \LeftLabel{$(\lnot \land)$}
  \UnaryInfC{$\Delta , \lnot \phi \splitCase \Delta , \lnot \psi$}
  \DisplayProof
  \hspace{1.5em}
  \AxiomC{$\Delta , [\alpha]\phi$}
  \LeftLabel{$(P)$}
  \UnaryInfC{$\big\{\Delta, \Gamma 
      \mid  \Gamma \in P(\alpha, \phi) \big\}$}
  \DisplayProof
\end{center}
\caption{Examples of tableau rules.}\label{f:ruleExamples}
\end{figure}

Next, we will give a formal definition of a \emph{tableau} over a set $\mathsf{L}$ of inference rules.
Informally, tableaux are rooted trees in which the nodes are labeled with sequents, and their construction adheres to the rules in $\mathsf{L}$.

\begin{defi}
  Given a set $\mathsf{L}$ of inference rules,
  a \emph{tableau $\tab$ for a sequent $\Gamma$} is a tuple $(V, \edge, r, \Lambda, L)$ where
  $(V, \edge, r)$ is a rooted tree,
  $L$ is a function that assigns to every interior node $v \in V$ a rule $L(v) \in \mathsf{L}$,
  $\Lambda$ is a function that assigns to every node $v \in V$ a sequent $\Lambda(v)$ such that
  $\Lambda(r) = \Gamma$, and
  if $u \in V$ is an interior node with children $v_1, \dots, v_n$ then $\Lambda(u) ~/~ \Lambda(v_1) | \dots | \Lambda(v_n)$ is a rule instance of the rule $L(u)$ (respecting any side conditions).
\end{defi}

The tableau in \autoref{f:ruleExamples} is an example of how we represent tableaux.
The root is always the uppermost node in the drawing.
Note that we do not write names for the nodes and instead write the corresponding sequent $\Lambda(s)$ associated with a node $s$.
Whenever $s \edge t$ we draw an arrow from $s$ to $t$.
The rule $L(s)$ associated to an inner node $s$ is written next to the arrows to the children of $s$.

\begin{defi}
  A tableau is \emph{maximal} if no rules can be applied to any of the sequents of the leaves.
\end{defi}

For example, the tableau in \autoref{f:tableauExample} is maximal, as neither $(\lnot)$ nor $(\lnot \land)$ can be applied to $p, \lnot p$ nor to $p, \lnot q$.

\begin{figure}
\begin{center}
  \begin{tikzpicture}[node distance=4em, >=latex]
    \node (1) {$\lnot \lnot p \+ \lnot (p \land q)$};
    \node [below of=1] (2) {$p \+ \lnot (p \land q)$};
    \arr{1}{$(\lnot)$}{2}
    \node [below of=2] (3) {};
    \node [left of=3, node distance=8em] (3a) {$p, \lnot p$}; \arr{2}{$(\lnot \land)$}{3a}
    \node [right of=3, node distance=7em] (3b) {$p, \lnot q$};  \arr{2}{$(\lnot \land)$}{3b}
  \end{tikzpicture}
\end{center}
\caption{An example of tableau over the rules $(\lnot)$ and $(\lnot \land)$}\label{f:tableauExample}
\end{figure}

Intuitively, a tableau represents an attempt to construct a Kripke model satisfying the sequent at the root.
The following two definitions make this precise and will be used later.

\begin{defi}
A tableau rule is \emph{locally sound} if for all its instances $\Delta ~/~ \Gamma_1 | \dots | \Gamma_n$ and for all pointed models $(\kmodel, v)$: If $\kmodel, v \Vdash \Delta$ then $\kmodel, v \Vdash \Gamma_i$ for some $1 \leq i \leq n$.
It is \emph{locally invertible} if conversely, for all its instances $\Delta ~/~ \Gamma_1 | \dots | \Gamma_n$ and for all pointed models $(\kmodel, v)$: If $\kmodel, v \Vdash \Gamma_i$ for some $1 \leq i \leq n$ then $\kmodel, v \Vdash \Delta$.
\end{defi}

As an example, both the rules $(\lnot)$ and $(\lnot \land)$ are locally sound and locally invertible.
We include ``locally'' here because both definitions above are about the local truth at a specific state in a Kripke model --- not about satisfiability.

\section{Local Tableaux}\label{sec:LocalTableaux}

We now start to define a tableau proof system.
A key distinction we make is between local and modal reasoning.
Local means that we reason about the same state in a potential Kripke model.
For example, we can go from $[a \cup b^\ast]p$ to $[a]p \+ p \+ [b][b^\ast]p$ by local reasoning, but going from $\lnot[a]p$ to $\lnot p$ is a modal step.
Here in \autoref{sec:LocalTableaux}, we are only concerned with local reasoning.

As we are defining a tableau system for PDL the reader may now expect to see elimination rules for each program constructor.
This was also done in~\cite[p.~19]{MB1988} where those rules are called $(;)$, $(\cup)$, $(?)$ and $(\ast)$.
Each rule is applicable to formulas of the form $[\alpha]\psi$ where $\alpha$ has the corresponding top-level constructor.
For example, with such rules we obtain the (not necessarily maximal) tableau shown in \autoref{f:exLocRules}.

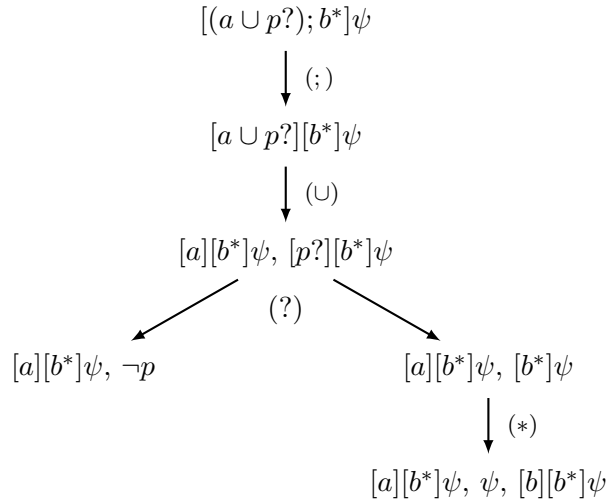
\begin{figure}[H]
\begin{center}
  \begin{tikzpicture}[node distance=4em, >=latex]
    \node (0) {$[(a \cup p?); b^\ast]\psi$};
    \node [below of=0] (1) {$[a \cup p?][b^\ast]\psi$}; \arr{0}{$(;)$}{1}
    \node [below of=1] (2) {$[a][b^\ast]\psi \+ [p?][b^\ast]\psi$}; \arr{1}{$(\cup)$}{2}
    \node [below of=2, node distance=2em] (2-3) {$(?)$};
    \node [below of=2] (3) {};
    \node [left of=3, node distance=7em] (3a) {$[a][b^\ast]\psi  \+  \lnot p$}; \arr{2}{}{3a}
    \node [right of=3, node distance=7em] (3b) {$[a][b^\ast]\psi  \+  [b^\ast]\psi$};  \arr{2}{}{3b} 
    \node [below of=3b] (4b) {$[a][b^\ast]\psi  \+  \psi  \+  [b][b^\ast]\psi$}; \arr{3b}{$(\ast)$}{4b}
  \end{tikzpicture}
\end{center}
\caption{A tableau with local rules for each program constructor --- not our system.}\label{f:exLocRules}
\end{figure}

We do \emph{not} define our system with a rule for each program constructor as in \autoref{f:exLocRules}.
Instead, we have only \emph{one general rule $(\Box)$} which is applicable to all formulas of the form $[\alpha]\psi$ where $\alpha$ is a non-atomic program.
There are no specific rules for each of the program constructors in our system.
Applying the rule $(\Box)$ to $[\alpha]\psi$ can be understood as reducing $\alpha$ maximally by applying $(;)$, $(\cup)$, $(?)$, $(\ast)$ exhaustively, but treating $\psi$ and all $\tau \in \Test(\alpha)$ as atomic, while also preventing infinite local unfolding of formulas of the form $[\beta^\ast]\phi$ by eliminating so-called local repeats.
The latter is described in more detail below.
To illustrate, instead of \autoref{f:exLocRules}, our system will generate the tableau in \autoref{f:exUnfoldRule}.

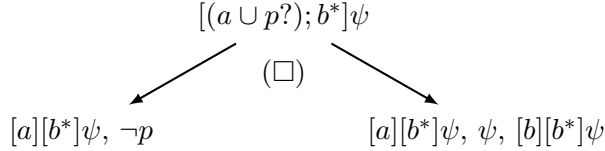
\begin{figure}[H]
\begin{center}
  \begin{tikzpicture}[node distance=4em, >=latex]
    \node (0) {$[(a \cup p?); b^\ast]\psi$};
    \node [below of=0, node distance=2em] {($\Box$)};
    \node [below of=0] (3) {};
    \node [left of=3, node distance=7em] (3a) {$[a][b^\ast]\psi  \+  \lnot p$}; \arr{0}{}{3a}
    \node [right of=3, node distance=7em] (3b) {$[a][b^\ast]\psi  \+  \psi  \+  [b][b^\ast]\psi$}; \arr{0}{}{3b}
  \end{tikzpicture}
\end{center}
\caption{A tableau in our system consisting of a single application of the $(\Box)$ rule.}\label{f:exUnfoldRule}
\end{figure}

Similarly to the rule $(\Box)$, we will define a rule $(\diam)$ to reduce formulas of the shape $\lnot[\alpha]\psi$.
Those two rules will do most of the work in local tableaux.

In order to avoid infinite local unfolding of fixpoint formulas, the $(\Box)$ rule
will eliminate formulas of the form $[\beta^*]\phi$ that are regenerated locally.
Such local regeneration can happen due to nestings of $^*$ and $?$ constructors.
This is illustrated in \autoref{fig:double-star} on the left.
Dually, the $(\diam)$ rule will eliminate sequents that contain a locally regenerated formula of the form $\neg[\beta^*]\phi$.
This is illustrated in \autoref{fig:double-star} on the right where the $\times$ indicates such a sequent.
The soundness of these eliminations can informally be explained by the fact that formulas of the form $[\beta^*]\phi$ are greatest fixpoints, and are therefore satisfied when unfolded infinitely often, whereas formulas of the form $\lnot[\beta^*]\phi$ are least fixpoints, and are therefore not satisfied when unfolded infinitely often. 

\begin{figure}[H]
\noindent
\centering
\begin{minipage}{0.39\textwidth}
  \begin{tikzpicture}[node distance=4em, >=latex]
    \node (0) {$[(a^\ast)^\ast]p$};
    \node [below of=0] (1) {$p, [a^\ast][(a^\ast)^{\ast}]p$}; \arr{0}{$(\ast)$}{1}
    \node [below of=1] (2) {$p, \cancel{[(a^\ast)^{\ast}]p}, [a][a^\ast][(a^\ast)^\ast]p$}; \arr{1}{$(\ast)$}{2}
    \node [below of=2](3) {$[(a^\ast)^\ast]p$};
    \node [below of=3] (4) {$p, [a][a^\ast][(a^\ast)^\ast]p$}; \arr{3}{$(\Box)$}{4}
  \end{tikzpicture}
\end{minipage}
\begin{minipage}{0.59\textwidth}
  \begin{tikzpicture}[node distance=4em, >=latex]
    \node (0) {$\lnot[(a^\ast)^\ast]p$};
    \node [below of=0, node distance=2em] (1L) {$(\lnot \ast)$};
    \node [below of=0] (1) {};
    \node [left of=1, node distance=4em] (1a) {$\lnot p$}; \arr{0}{}{1a}
    \node [right of=1, node distance=4em] (1b) {$\lnot[a^\ast][(a^\ast)^{\ast}]p$};  \arr{0}{}{1b} 
    \node [below of=1b, node distance=2em] (1bL) {$(\lnot \ast)$};
    \node [below of=1b] (2) {};
    \node [left of=2, node distance=4em] (2a) {$\lnot[(a^\ast)^{\ast}]]p$}; \arr{1b}{}{2a}
    \node [right of=2, node distance=4em] (2b) {$\lnot[a][a^\ast][(a^\ast)^\ast]p$}; \arr{1b}{}{2b}
    \node [below of=2a, node distance=1em] (3) {$\times$};
    \node [below of=2a] (4) {{$\lnot[(a^\ast)^\ast]p$}};
    \node [below of=4, node distance=2em] (4L) {$(\diam)$};
    \node [below of=4] (5) {};
    \node [left of=5, node distance=5em] (5a) {$\lnot p$}; \arr{4}{}{5a}
    \node [right of=5, node distance=5em] (5b) {$\lnot[a][a^\ast][(a^\ast)^\ast]p$}; \arr{4}{}{5b}    
  \end{tikzpicture}
\end{minipage}
\caption{Two "standard" local tableaux for $[(a^*)^*]p$ and $\lnot[(a^\ast)^\ast]p$ and their corresponding local tableaux in our system. }
\label{fig:double-star}
\end{figure}
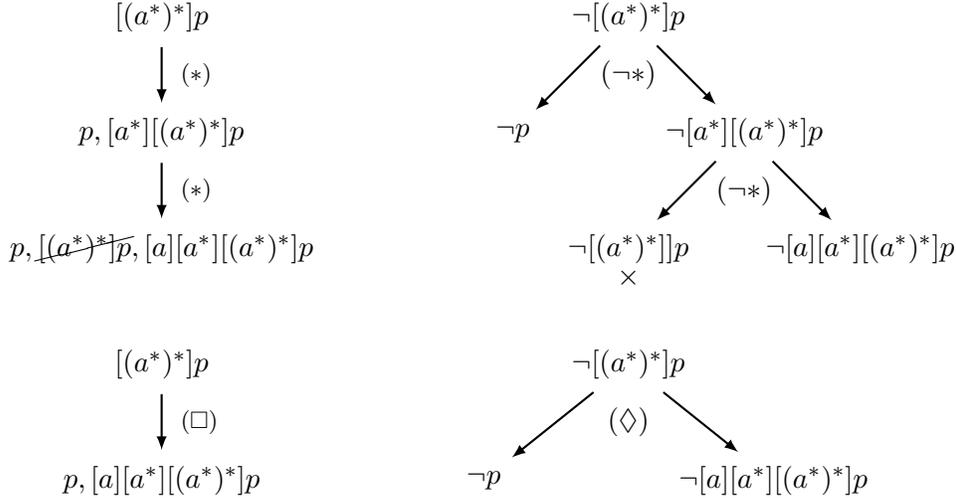

We prepare the rules for boxes in \autoref{subsec:Boxes}, and in \autoref{subsec:Diamonds} and \autoref{subsec:LoadedDiamonds} we prepare the rules for diamonds.
In \autoref{subsec:LocalRules} we then add the usual rules for the propositional connectives $\lnot$ and $\land$ to define local tableaux.

\subsection{Boxes}\label{subsec:Boxes}

In this section, our goal is to unfold programs in boxes.
We start by sketching our approach.
For a formula $[\alpha]\psi$, where $\alpha$ is a complex program, we will define a finite set $\unfold_{\square}(\alpha,\psi)$ of sequents such that the following holds (cf.~\autoref{l:localBoxTruth}):
\begin{equation}\label{eq:unfold-box}
  [\alpha]\psi
  \equiv
  \bigvee \Big\{ \textstyle{\bigwedge} \Gamma
  \mid \Gamma \in \unfold_{\square}(\alpha,\psi) \Big\}
\end{equation}
and every formula that belongs to one of the sequents in $\unfold_{\square}(\alpha,\psi)$ is either equal to $\psi$, or of the form $\lnot\tau$, for some test $\tau \in \Test(\alpha)$, or of the form $[a][\delta_{1}]\cdots[\delta_{n}]\psi$, where $a, \delta_1, \ldots, \delta_n \in \Prog(\alpha)$ and $a$ is atomic.

As a corollary of \autoref{l:localBoxTruth}, the following rule will be locally invertible and locally sound:
\begin{center}
  \AxiomC{$\Delta , [\alpha]\psi$}
  \LeftLabel{$(\Box)$}
  \UnaryInfC{$
    \big\{\Delta, \Gamma \mid \Gamma \in \unfold_{\square}(\alpha,\psi)
    \big\}
    $}
  \DisplayProof
\end{center}

Each $\Gamma \in \unfold_{\Box}(\alpha,\psi)$ will correspond to a subset $\ell \subseteq \Test(\alpha)$, which we will refer to as a test profile. Tests define control flow in the execution of $\alpha$, and $\ell$ represents a situation where the tests in $\ell$ succeed and the tests not in $\ell$ fail.
Given a test profile $\ell$, the corresponding $\Gamma$ will contain the formulas of form $[a][\delta_{1}]\cdots[\delta_{n}]\psi$,
where the program list $a \delta_1 \ldots \delta_n \in \Prog(\alpha)$ is a possible execution of $\alpha$ in the situation represented by $\ell$.

We give some further intuitions for the definition of $\unfold_{\Box}(\alpha,\psi)$ in terms of the reduction axioms from \autoref{d:pdl-logic}. The right-hand side of \eqref{eq:unfold-box} is \emph{almost} the disjunctive normal form obtained by rewriting $[\alpha]\psi$ exhaustively using the reduction axioms $\Ax{?}$, $\Ax{\cup}$, $\Ax{;}$, $\Ax{^*}$ while ensuring that formulas of the form $[\beta^*]\phi$
are only rewritten once, and considering test formulas and $\psi$ atomic, since these formulas will be reduced recursively by other local rule applications. Multiple disjuncts result from applying the axiom $\Ax{?}$. 
The ``almost'' refers to two things:
First, each disjunct in the resulting disjunctive normal form corresponds to one or more $\Gamma \in \unfold_{\Box}(\alpha,\psi)$.\footnote{This can be seen by considering $[p?;q?]r$ for which we obtain a disjunctive normal form $\neg p \lor \neg q \lor \neg r$. The disjunct $\neg p$ corresponds to both $\ell=\emptyset$ and $\ell = \{q\}$.
Informally, this can be explained as "if the test $p?$ fails then the outcome of $q?$ does not matter".}
Secondly, as mentioned above, we must also take care to avoid infinite unfolding of formulas of the form $[\beta^*]\phi$ (which would cause non-termination). This is done by excluding from $\Gamma$ all formula occurrences of the form $[\beta^*]\phi$ that were
already reduced. We will explain this mechanism below before \autoref{d:unfoldBox}.

First, we define the notion of a test profile $\ell$ for $\alpha$ and prove a number of useful results.

\begin{defi}[\lean{TP}]\label{d:TP}
Let $\alpha$ be a program.
A \emph{test profile} for $\alpha$ is a subset $\ell \subseteq \Test(\alpha)$.
The set of all test profiles of $\alpha$ is denoted as $\TT(\alpha)$.
Given a profile $\ell \in \TT(\alpha)$, we define its \emph{signature formula} 
$\sigma^{\ell}$
as
\[
\sigma^{\ell} \isdef \bigwedge \{ \pm_{\ell}\tau \mid \tau \in \Test(\alpha) \},
\]
where $\pm_{\ell}\tau$ denotes the formula $\tau$ if $\tau \in \ell$ and 
the formula $\lnot\tau$ if $\tau \notin \ell$.
\end{defi}

In the following two remarks, we gather some basic observations on test profiles and their signature formulas.

\begin{rem}
We will frequently use the following facts: 
\begin{equation}\label{e:test-profile-complete}
  \text{For all } \alpha \in \progs: \quad \top \equiv \bigvee \big\{ \sigma^{\ell} 
\mid \ell \in \TT(\alpha) \big\}
\end{equation}
\begin{equation}\label{e:test-profile-orthogonal}
\text{For all }  \alpha \in \progs \text{ and all } \ell,\ell' \in \TT(\alpha): \quad \ell \neq \ell' \text{ implies } \sigma^{\ell} \land \sigma^{\ell'} \equiv \bot    
\end{equation}
Note that in \eqref{e:test-profile-orthogonal}, 
the right to left direction does not hold, because $\sigma^{\ell}$
does not have to be consistent.
Consider for example $\alpha := (p \land q)?; (\lnot p)?; (\lnot q)?$ and the test profile $\ell := \{ p \land q \}$.
Then $\sigma^\ell = (p \land q) \land \lnot p \land \lnot q$, and therefore $\sigma^\ell \land \sigma^\ell \equiv \bot$, 
but still $\ell = \ell$.

Moreover, we will use the fact that for all formulas $\phi$ and $\psi$, and all programs $\alpha$:
\begin{equation}
\label{eq:equiv_iff_TPequiv}
\phi\models \psi \text{ \ iff \ }
\text{ for all } \ell \in \TT(\alpha):
\phi \land \sigma^{\ell} \models \psi \land \sigma^{\ell}.    
\end{equation}
We leave the verification of \eqref{e:test-profile-complete} -- \eqref{eq:equiv_iff_TPequiv} as an exercise in propositional logic to the reader.
\end{rem}

\begin{rem}
Below we will need to relate the test profiles of a complex program to those of its subprograms. 
Here the following observations will be useful.

Assume that $\beta \in \Prog(\alpha)$.
Every test profile $\ell \in \TT(\alpha)$ induces a test profile $\ell|_{\beta} \isdef \ell \cap \Test(\beta) \in \TT(\beta)$, and every test profile in $\TT(\beta)$ is of this kind.
For the signature formulas, we find for $\ell \in \TT(\alpha)$ that
\[
\sigma^{\ell} \equiv 
\sigma^{\ell|_{\beta}} \land \bigwedge \{ \pm_{\ell}\tau \mid \tau \in \Test(\alpha) \setminus \Test(\beta)\}.
\]
From this it easily follows that, if two formulas ``agree'' on every test profile in $\TT(\beta)$, they also agree on every test profile in $\TT(\alpha)$:
\begin{equation}
\label{eq:tp1}
\begin{array}{l}
\text{Let $\beta \in \Prog(\alpha)$. 
If $\phi \land \sigma^{\ell'} \equiv \psi \land \sigma^{\ell'}$ for all $\ell' \in \TT(\beta)$}
\\ \text{then $\phi \land \sigma^{\ell} \equiv \psi \land \sigma^{\ell}$ for all $\ell \in \TT(\alpha)$.}
\end{array}
\end{equation}
Again we leave the verification of these claims to the reader.
\end{rem}

Given a formula $[\alpha]\psi$ and a test profile $\ell \in \TT(\alpha)$, 
we will define a set $X^\ell_{\alpha,\psi}$ of formulas that suffice to make $[\alpha]\psi$ true
when the tests that are not in $\ell$ fail and those in $\ell$ succeed.
Every formula in $X^\ell_{\alpha,\psi}$ will be either of the form $\neg \tau$ where $\tau \in \TT(\alpha)\setminus\ell$ or
of the form $[a][\alpha_1] \ldots[\alpha_n]\psi$ for a possibly empty list $a\alpha_1\ldots \alpha_n$ of subprograms of $\alpha$.
Formulas of the form $[a][\alpha_1] \ldots[\alpha_n]\psi$ will be represented by the list $a\alpha_1\ldots \alpha_n$, and we compute them recursively as the set $P^\ell(\alpha)$. 
$P^\ell(\alpha)$ will contain the program lists of possible executions of $\alpha$ when precisely the tests in $\ell$ succeed.
Moreover, we must avoid infinite unfolding of formulas of the form $[\beta^*]\phi$.
This is taken care of in the definition of $P^\ell(\alpha)$ below in the clause for $P^\ell(\beta^\ast)$ 
where removing the empty list $\emptylist$ from $P^\ell(\beta)$ ensures that $[\beta^\ast]\phi$ is only unfolded once.
Put differently, this clause excludes regenerated occurrences of $[\beta^\ast]\phi$ from $X^\ell_{\alpha,\psi}$.

We provide examples of computing $P^\ell(\alpha)$ after the definition to illustrate.

\begin{defi}[\lean{unfoldBox}]\label{d:unfoldBox}
Let $\alpha \in \progs$ be a program and let $\ell \in \TT(\alpha)$ be a test profile.
We define for all $\beta \in \Prog(\alpha)$, the set $P^{\ell}(\beta)$ of lists of programs,
by induction on the shallow program structure of $\alpha$:
\[\begin{array}{lll}
P^{\ell}(a) &\isdef& \{ a \}
\\ P^{\ell}(\tau?) &\isdef& 
     \left\{\begin{array}{ll}
	    \emptyset & \text{ if } \tau \notin \ell
	 \\ \{ \emptylist \} & \text{ if } \tau \in \ell
	 \end{array}\right.
\\ P^{\ell}(\beta\cup\gamma) &\isdef& P^{\ell}(\beta) \cup P^{\ell}(\gamma)
\\ P^{\ell}(\beta;\gamma) &\isdef& 
    \big\{ \ol{\beta} \gamma \mid 
	     \ol{\beta} \in P^{\ell}(\beta) \setminus \{\emptylist \} \big\}
	\cup \big\{ \ol{\gamma} \mid 
	     \ol{\gamma} \in P^{\ell}(\gamma), \emptylist \in P^{\ell}(\beta) \big\}
\\ P^{\ell}(\beta^{\ast}) &\isdef& 
    \{ \emptylist \} \cup 
    \big\{ \ol{\beta}\beta^{\ast} \mid 
	    \ol{\beta} \in P^{\ell}(\beta) \setminus \{\emptylist \} \big\} 
\end{array}\]

Furthermore, we define the set $F^{\ell}(\alpha)$ of negations of "failed test formulas":
\[
F^{\ell}(\alpha) = \{ \lnot\tau \mid \tau \in \Test(\alpha) \text{ and } \tau \notin \ell \}
\]
Finally, for a formula $\psi \in \pdlforms$,
we define
\[
X^{\ell}_{\alpha,\psi} \isdef F^{\ell}(\alpha) \cup 
   \big\{ \Box(\ol{\alpha},\psi) \mid \ol{\alpha} \in P^{\ell}(\alpha) \big\}
\]
and
\[
\unfold_{\square}(\alpha,\psi) \isdef 
   \big\{ X^{\ell}_{\alpha,\psi} \mid \ell \in \TT(\alpha) \big\}.
\]
\end{defi}

\begin{exa}\label{ex:unfoldBox}
Consider the program $\alpha := (a \cup p?); b^\ast$.  
There are two test profiles for $\alpha$, namely $\ell_0 = \emptyset$ and $\ell_1 = \{p\}$,
corresponding to ``$p?$ fails'' and ``$p?$ succeeds''.
For the test profile $\ell_0=\emptyset$, we have $F^{\ell_0}( (a \cup p?); b^\ast ) = \{ \lnot p \}$ and:
  \[
    \begin{array}{rl}
      & P^{\ell_0}( (a \cup p?); b^\ast ) \\
      = &
          \big\{ \ol{\alpha} b^\ast \mid
          \ol{\alpha} \in P^{\ell_0}(a \cup p?) \setminus \{\emptylist \} \big\}
          \cup \big\{ \ol{\beta} \mid
          \ol{\beta} \in P^{\ell_0}(b^\ast), \emptylist \in P^{\ell_0}(a \cup p?) \big\} \\
      = &
          \big\{ \ol{\alpha} b^\ast \mid
          \ol{\alpha} \in ( P^{\ell_0}(a) \cup P^{\ell_0}(p?) ) \setminus \{ \emptylist \} \big\}
          \ \cup \ \big\{ \ol{\beta} \mid
          \ol{\beta} \in (\ldots)
          , \emptylist \in P^{\ell_0}(a) \cup P^{\ell_0}(p?) \big\} \\
      = &
          \big\{ \ol{\alpha} b^\ast \mid
          \ol{\alpha} \in ( \{ a \} \cup \emptyset ) \setminus \{ \emptylist \} \big\}
          \ \cup \ \big\{ \ol{\beta} \mid
          \ol{\beta} \in (\ldots)
          , \emptylist \in \{ a \} \cup \emptyset \big\} \\
      = &
          \big\{ a b^\ast \big\}
          \ \cup \ \emptyset \\
      = & \{ a b^\ast \}
    \end{array}
  \]
  The set $P^{\ell_0}( (a \cup p?); b^\ast) =\{ a b^\ast \}$ can be seen as the possible executions of $(a \cup p?); b^\ast$ when $p?$ fails.

  For the test profile $\ell_1=\{p\}$, we get $F^{\ell_1}( (a \cup p?); b^\ast ) = \emptyset$ and:
  \[
    \begin{array}{rl}
      & P^{\ell_1}( (a \cup p?); b^\ast ) \\
      = &
          \big\{ \ol{\alpha} b^\ast \mid
          \ol{\alpha} \in P^{\ell_1}(a \cup p?) \setminus \{\emptylist \} \big\}
          \cup \big\{ \ol{\beta} \mid
          \ol{\beta} \in P^{\ell_1}(b^\ast), \emptylist \in P^{\ell_1}(a \cup p?) \big\} \\
      = &
          \big\{ \ol{\alpha} b^\ast \mid
          \ol{\alpha} \in ( P^{\ell_1}(a) \cup P^{\ell_1}(p?) ) \setminus \{ \emptylist \} \big\} \\
        & \ \cup \ \big\{ \ol{\beta} \mid
          \ol{\beta} \in ( \{ \emptylist \} \cup \{ \ol{\gamma} b^\ast \mid \ol{\gamma} \in P^{\ell_1}(b) \setminus \{ \emptylist \} \} )
          , \emptylist \in P^{\ell_1}(a) \cup P^{\ell_1}(p?) \big\} \\
      = &
          \big\{ \ol{\alpha} b^\ast \mid
          \ol{\alpha} \in ( \{ a \} \cup \{ \emptylist \} ) \setminus \{ \emptylist \} \big\} \\
        & \ \cup \ \big\{ \ol{\beta} \mid
          \ol{\beta} \in ( \{ \emptylist \} \cup \{ \ol{\gamma} b^\ast \mid \ol{\gamma} \in \{ b \} \setminus \{ \emptylist \} \} )
          , \emptylist \in \{ a \} \cup \{ \emptylist \} \big\} \\
      = &
          \big\{ a b^\ast \big\} \ \cup \ \big\{ \ol{\beta} \mid \ol{\beta} \in ( \{ \emptylist \} \cup \{ b b^\ast \} ) \big\} \\
      = &
          \{ a b^\ast, \emptylist, b b^\ast \}
    \end{array}
  \]
  The set $P^{\ell_1}( (a \cup p?); b^\ast) =\{ a b^\ast, \emptylist, b b^\ast \}$ can be seen as the possible executions of $(a \cup p?); b^\ast$ when $p?$ succeeds.

  This can then be used to unfold a formula $[(a \cup p?);b^\ast]\psi$, for abitrary $\psi$, as follows:
  \[
    X_{(a \cup p?);b^\ast, \psi}^{\ell_0} = F^{\ell_0}((a \cup p?);b^\ast) \cup \big\{ \Box(\ol{\alpha},\psi) \mid \ol{\alpha} \in P^{\ell_0}((a \cup p?);b^\ast) \big\} = \{ \lnot p, [a][b^\ast]\psi \}
  \]
  \[
    X_{(a \cup p?);b^\ast, \psi}^{\ell_1} = F^{\ell_1}((a \cup p?);b^\ast) \cup \big\{ \Box(\ol{\alpha},\psi) \mid \ol{\alpha} \in P^{\ell_1}((a \cup p?);b^\ast) \big\} = \{ \psi, [a][b^\ast]\psi, [b][b^\ast]\psi \}
  \]
  Now by \autoref{d:unfoldBox} we get:
  \[
    \unfold_{\square}((a \cup p?);b^\ast, \psi) = \{ \{ \lnot p, [a][b^\ast]\psi \} ,  \{ \psi, [a][b^\ast]\psi,  [b][b^\ast]\psi \} \}
  \]  
  This means that our $(\Box)$ rule will split the formula $[(a \cup p?);b^\ast]\psi$ into two branches (cf.~\autoref{f:exUnfoldRule}), one containing
  $\{ \lnot \psi, [a][b^\ast]\psi \}$
  and another containing
  $\{ \psi, [a][b^\ast]\psi,  [b][b^\ast]\psi \}$.
    
  Lastly, \autoref{l:localBoxTruth} that we prove below applied to our example gives us
  \[ [(a \cup p?);b^\ast]\psi \ \equiv \  ( \lnot p \land [a][b^\ast]\psi ) \lor ( \psi \land [a][b^\ast]\psi \land  [b][b^\ast]\psi ) \]
  which can also be verified directly.
\end{exa}

\begin{exa}\label{ex:P-box-double-star}
    Consider $\alpha=(a^\ast)^\ast$. There is only the empty test profile $\ell=\emptyset$ for $\alpha$.
    One easily computes: 
    $P^\emptyset(a^\ast) = \{\emptylist\} \cup \{aa^\ast\}$ and 
    $P^\emptyset((a^\ast)^\ast) = \{\emptylist\} \cup \{aa^\ast(a^\ast)^\ast\}$,
    and hence $\unfold_{\square}((a^\ast)^\ast, \psi) = \{\psi, [a][a^\ast][(a^\ast)^\ast]\psi\}$,
    cf.~\autoref{fig:double-star}.
\end{exa}

In the following lemma, we gather some syntactic information about $P^\ell$ and the $\unfold_{\Box}$ operation.

\begin{lem}[\checklean{theorem}{boxHelperTermination}]\label{l:boxHelperTermination}
Let $\alpha$ be a program, let $\ell$ be a test profile for $\alpha$, and let $\ol{\delta} \in P^\ell(\alpha)$.
Then we have:
\begin{enumerate}[(a)]
\item If $\alpha = a$ for some atomic $a$, then $\ol{\delta} = a$.
\item If $\alpha = \beta^\ast$ for some $\beta$, then either $\ol{\delta} = \emptylist$ or $\ol{\delta}$ is of the form $\ol{\delta} = a\delta_1\cdots\delta_n\beta^\ast$, where $n \geq 0$, $a$ is atomic and $\delta_i \in \Prog(\alpha) \setminus \{\alpha\}$, for each $i$. 
\item Otherwise, either $\ol{\delta} = \emptylist$ or $\ol{\delta}$ is of the form $\ol{\delta} = a\delta_1\cdots\delta_n$, where $n \geq 0$, $a$ is atomic and $\delta_i \in \Prog(\alpha) \setminus \{\alpha\}$, for each $i$.
\end{enumerate}
\end{lem}
We note that case (c) applies if $\alpha$ is a test, or a choice or composition of simpler programs.
\begin{proof}
  Follows directly from the \autoref{d:unfoldBox} of $P^\ell(\alpha)$.
  We omit the details.
\end{proof}

From \autoref{l:boxHelperTermination} we also get the following two results which we will use to prove termination of local tableaux.

\begin{lem}[\checklean{theorem}{unfoldBoxContent}]\label{l:unfoldBoxContent}
  Let $\alpha$ be a program and let $\psi$ and $\phi$ be formulas such that $\phi \in \bigcup \unfold_{\square}(\alpha,\psi)$.
  Then we have $\phi \in \FL([\alpha]\psi)$.
  Furthermore, we find that $\phi = \psi$,
  or $\phi$ is of the form $\phi = \lnot\tau$ with $\tau \in \Test(\alpha)$,
  or $\phi$ is of the form $\phi = [a][\delta_{1}]\cdots[\delta_{n}]\psi$
  with $a,\delta_{1},\ldots,\delta_{n} \in \Prog(\alpha)$ and $n \geq 0$.
\end{lem}
\begin{proof}
Take any $\phi \in \bigcup \unfold_{\square}(\alpha,\psi)$.
To show $\phi \in FL([\alpha]\psi)$ we distinguish two cases.
Either $\phi$ comes from the function $P^\ell$ or from $F^\ell$ in \autoref{d:unfoldBox}.
If it comes from $P^\ell$ then it can be obtained using the three cases given in \autoref{d:fischerLadner} and closure under subformulas.
If it comes from $F^\ell(\alpha)$, then it is a subformula of $[\alpha]\psi$.
In both cases it must be in $\FL([\alpha]\psi)$.

The remaining claim follows immediately from \autoref{l:boxHelperTermination}.
\end{proof}

\begin{lem}[\checklean{theorem}{PgoesDown}]\label{l:PgoesDown}
  For a program $\alpha$, let $|\alpha|$ denote its length.
  Let $\alpha$ be a program,
  let $\ell$ be a test profile for $\alpha$,
  let $\ol{\delta} \in P^\ell(\alpha)$,
  and let $\gamma \in \ol{\delta}$.
  If $\alpha = a$ for some $a$, then  $\gamma = a$.
  If $\alpha = \beta^\ast$ for some $\beta$, then $|\gamma| \leq |\alpha|$.
In all other cases we have $|\gamma| < |\alpha|$.
\end{lem}
\begin{proof}
This follows immediately from \autoref{l:boxHelperTermination}.
\end{proof}

The following technical lemma will be needed in the proof of \autoref{l:localBoxTruth} below.
We omit its proof, which is straightforward.

\begin{lem}[\checklean{theorem}{boxHelperTP}]\label{l:boxHelperTP}
Let $\alpha$ be a program, and let $\ell$ be a test profile for $\alpha$.
For all formulas $\psi$, we have:
\[
\bigwedge X^{\ell}_{\alpha,\psi} \land \sigma^{\ell} \equiv
\bigwedge \big\{ \Box(\ol{\alpha},\psi) \mid 
    \ol{\alpha} \in P^{\ell}(\alpha) \big\} 
	\land \sigma^{\ell}.
\]
\end{lem}

A crucial part of how we defined $P^\ell$ in \autoref{d:unfoldBox} is the use of the empty sequence $\emptylist$ while unfolding tests, sequential compositions and iteration programs.

Essentially, these conditions allow our approach to deal with unguardedness in PDL formulas and programs.
\autoref{l:guard} below makes this explicit.
Before that, readers familiar with $\mu$-calculus may find the following remark helpful.

\begin{rem}\label{remark:mu-unguardedness}
Using notation from $\mu$-calculus we note that
\[
[\beta^{\ast}]\psi \equiv \nu x. \psi \land [\beta] x
\]
where $x$ is a fresh variable not occurring in either $\psi$ or $\beta$.
From this it follows that 
\[
[\beta^{\ast}]\psi \equiv \nu x. \psi \land \phi(x)
\]
for any formula $\phi$ such that $[\beta]x \equiv \phi(x)$.
Now suppose that $\phi'$ is a formula we obtain from $\phi$ by replacing any number of \emph{unguarded} occurrences of $x$ with $\top$.
Then it follows from the general theory of fixpoint logics that
\[
\nu x. \psi \land \phi'(x) \equiv \nu x. \psi \land \phi(x).
\]
In particular, if $\phi(x)$ is of the form $\phi = (x \land \chi_{0}) \lor \chi_{1}$ (where $x$ may or may not have occurrences in $\chi_0$ and $\chi_1$), we find that 
\[
[\beta^{\ast}]\psi \equiv \nu x. \psi \land (\chi_0 \lor \chi_1).
\]
From this it follows that, in order to show that $\rho\models [\beta^{\ast}]\psi$, for some formula $\rho$, it suffices to show that $\rho\models\psi$ and that $\rho \models (\chi_{0}\lor\chi_{1})[\rho/x]$.
We spell this out in the lemma below, and we also supply a direct proof.
\end{rem}

\begin{lem}[\checklean{theorem}{guardToStar}]\label{l:guard}
Let $\chi_0$ and $\chi_1$ be formulas, let $\beta$ be a program, and let $x$ be an atomic proposition that does not occur in $\beta$ (but $x$ may occur in $\chi_0$ and in $\chi_1$) such that $[\beta] x \equiv (x \land \chi_0) \lor \chi_1$.
Then, for all formulas $\rho$ and $\psi$,
if $\rho \models (\chi_{0}\lor\chi_{1})[\rho/x]$  and $\rho\models\psi$
then $\rho \models [\beta^{\ast}]\psi$.
\end{lem}

\begin{proof}
The key observation in this proof is the following: 
\begin{equation}\label{eq:ind}
\text{if $\kmodel, w \Vdash \rho$ and $(w,v) \in R_{\beta}$ then 
$\kmodel, v \Vdash \rho$.}
\end{equation}
To see why this is the case, assume that $\kmodel, w \Vdash \rho$.
Then by assumption we have $\kmodel, w \Vdash (\chi_{0}\lor\chi_{1})[\rho/x]$,
and since obviously the formula $x[\rho/x] = \rho$ holds at $w$, it follows that 
$\kmodel, w \Vdash ((x \land \chi_{0}) \lor\chi_{1})[\rho/x]$.
Hence by the two assumptions we obtain $\kmodel, w \Vdash [\beta]x[\rho/x]$, which just
means that $\kmodel, w \Vdash [\beta]\rho$.
It is then immediate by $(w,v) \in R_{\beta}$ that $\kmodel, v \Vdash \rho$.
This finishes the proof of \eqref{eq:ind}.

To see how the lemma follows from this, let $\kmodel, w$ be a pointed model such
that $\kmodel, w \Vdash \rho$.
Take an arbitrary state $v$ such that $(w,v) \in R_{\beta^{\ast}}$, then there is
a finite $\beta$-path from $w$ to $v$.
Starting from $w$ and iteratively applying \eqref{eq:ind} we find that 
$\kmodel, v \Vdash \rho$, so that $\kmodel, v \Vdash \psi$ by assumption. Hence $\kmodel, w \Vdash [\beta^\ast]\psi$.
\end{proof}

Before the main lemma of this section, we prove a technical helper lemma that relates test profiles to the set $X$.

\begin{lem}\label{l:localBoxTruthAux}
Let $\gamma$ be a program, and let $\psi$ be a formula.
Then the following hold:
\begin{enumerate}
\item[a)] If $[\gamma]\psi \land \sigma^{\ell} \equiv \textstyle{\bigwedge}X^{\ell}_{\gamma,\psi} \land \sigma^{\ell}$ for all $\ell \in \TT(\gamma)$, then $[\gamma]\psi \equiv \bigvee  \Big\{ \textstyle{\bigwedge}X^{\ell}_{\gamma,\psi} \mid \ell \in \TT(\gamma) \Big\}$;
\item[b)] If $[\gamma]\psi \land \sigma^{\ell} \models \textstyle{\bigwedge}X^{\ell}_{\gamma,\psi} \land \sigma^{\ell}$ for all $\ell \in \TT(\gamma)$, then $[\gamma]\psi \models \bigvee  \Big\{ \textstyle{\bigwedge}X^{\ell}_{\gamma,\psi} \mid \ell \in \TT(\gamma) \Big\}$.
\end{enumerate}
\end{lem}
\begin{proof}
First, we show a).
\begin{align*}
[\gamma]\psi
& \equiv [\gamma]\psi \land \bigvee \big\{ \sigma^{\ell}  \mid \ell \in \TT(\gamma) \big\}
& \text{by \eqref{e:test-profile-complete}} \\
& \equiv \bigvee \big\{ [\gamma]\psi \land \sigma^{\ell}  \mid \ell \in \TT(\gamma) \big\}
& \text{by propositional logic} \\
& \equiv \bigvee \big\{ \textstyle{\bigwedge}X^{\ell}_{\gamma,\psi} \land \sigma^{\ell}  \mid \ell \in \TT(\gamma) \big\}
& \text{by assumption} \\
& \equiv \bigvee 
  \big\{ \textstyle{\bigwedge}X^{\ell}_{\gamma,\psi} \mid \ell \in \TT(\gamma)
\big\} & \text{(see below)}
\end{align*}

For the last step, the downward implication is trivial.
For the other direction, let $\ell \in \TT(\gamma)$ be such that $\kmodel, w \Vdash \textstyle{\bigwedge}X^{\ell}_{\gamma,\psi}$.
Let $\ell' \in \TT(\gamma)$ be the type of $w$; that is, $\ell' \isdef \{ \tau \in \Test(\gamma) \mid \kmodel, w \Vdash \tau \}$.
Obviously, then we have $\kmodel,w \Vdash \sigma^{\ell'}$.

We claim that 
$\kmodel, w \Vdash \textstyle{\bigwedge}X^{\ell'}_{\gamma,\psi}$.
It is clear that $\kmodel, w \Vdash \textstyle{\bigwedge}F^{\ell'}_{\gamma,\psi}$, so that it is left to show that
$\kmodel, w \Vdash \textstyle{\bigwedge}\{ \Box(\ol{\alpha},\psi) \mid \alpha \in P^{\ell'}(\alpha) \}$. 
Note that from $\kmodel, w \Vdash \textstyle{\bigwedge}X^{\ell}_{\gamma,\psi}$ we may infer that $\kmodel, w \Vdash \textstyle{\bigwedge}\{ \Box(\ol{\alpha},\psi) \mid \alpha \in P^{\ell}(\alpha) \}$, so that it suffices to show that $P^{\ell'} \subseteq P^\ell$. But this is immediate by the observation that for any $\tau \notin \ell$ we have $\tau \notin \ell'$.

To show b), note that the same proof works if we replace the third equivalence with a $\models$.
\end{proof}

\begin{lem}[\checklean{theorem}{localBoxTruth}]\label{l:localBoxTruth}
For every program $\gamma$ the following holds:
\[
[\gamma]\psi \equiv \bigvee 
\Big\{ \textstyle{\bigwedge}X^{\ell}_{\gamma,\psi} \mid \ell \in \TT(\gamma)\Big\}, 
\text{ for all } \psi.
\]
\end{lem}

\begin{proof}
By \autoref{l:localBoxTruthAux} it suffices to show that for all programs $\gamma$ we have
\begin{equation}\label{eq:bxeq}
[\gamma]\psi \land \sigma^{\ell} \equiv 
\textstyle{\bigwedge}X^{\ell}_{\gamma,\psi} \land \sigma^{\ell},
\text{ for all $\psi$ and all $\ell \in \TT(\gamma)$}.
\end{equation}

We proceed to show~\eqref{eq:bxeq} by induction on the shallow program structure of $\gamma$.

\textbf{Base Case.}

\begin{description}
\item[Case $\gamma=a$]
In this case, the only test profile is $\ell=\emptyset$, and for this $\ell$ we find
that $X^{\ell}(\gamma,\psi) = \emptyset \cup \{ \Box(a,\psi) \}$, so that
$\bigwedge X^{\ell}(\gamma,\psi) \equiv [a]\psi$ as required.
\item[Case $\gamma=\tau?$]
Here we make a case distinction. 
If $\tau \notin \ell$ then $X^{\ell}(\gamma,\psi) = \{ \lnot\tau \}$, so that 
$\bigwedge X^{\ell}(\gamma,\psi) \land \sigma^{\ell} 
\equiv \lnot\tau \land \pm_{\ell}\tau 
\equiv \lnot\tau 
\equiv [\tau?]\psi \land \lnot\tau$.
On the other hand, if $\tau \in \ell$, then 
$X^{\ell}(\gamma,\psi) = \emptyset \cup \{ \Box(\emptylist,\psi) \} = \{\psi\}$,
so that 
$\bigwedge X^{\ell}(\gamma,\psi) \land \sigma^{\ell} 
\equiv \psi \land \pm_{\ell}\tau 
\equiv \psi \land \tau
\equiv [\tau?]\psi \land \tau$.
\end{description}

\textbf{Induction step.}
Here we make the usual case distinction.

\begin{description}
\item[Case $\gamma=\alpha\cup\beta$]
From the induction hypothesis and \autoref{l:boxHelperTP} we have that
\[
\begin{array}{llll}
[\alpha]\psi \land \sigma^{\ell} &\equiv& 
\displaystyle{\bigwedge} \big\{ \Box(\ol{\delta},\psi) \mid \ol{\delta} \in P^{\ell}(\alpha) \big\} \land \sigma^{\ell}, 
\text{ for all } \ell \in \TT(\alpha)
\\ {[\beta]}\psi \land \sigma^{\ell} &\equiv& 
\displaystyle{\bigwedge} \big\{ \Box(\ol{\delta},\psi) \mid \ol{\delta} \in P^{\ell}(\beta) \big\} \land \sigma^{\ell},
\text{ for all } \ell \in \TT(\beta).
\end{array}
\]
It then follows from \eqref{eq:tp1} that
\[
\begin{array}{llll}
[\alpha]\psi \land \sigma^{\ell} &\equiv& 
\displaystyle{\bigwedge} \big\{ \Box(\ol{\delta},\psi) \mid \ol{\delta} \in P^{\ell}(\alpha) \big\} \land \sigma^{\ell}, 
\text{ for all } \ell \in \TT(\gamma)
\\ {[\beta]}\psi \land \sigma^{\ell} &\equiv& 
\displaystyle{\bigwedge} \big\{ \Box(\ol{\delta},\psi) \mid \ol{\delta} \in P^{\ell}(\beta) \big\} \land \sigma^{\ell},
\text{ for all } \ell \in \TT(\gamma).
\end{array}
\]
On the basis of this we may calculate the following, for any $\ell \in TT(\gamma)$:
\begin{align*}
 & [\gamma]\psi \land \sigma^{\ell}
\\ & \equiv \big([\alpha]\psi \land \sigma^{\ell}\big) \land 
     \big([\beta]\psi \land \sigma^{\ell}\big)
\\ & \equiv \Big(\bigwedge \big\{ \Box(\ol{\delta},\psi) \mid 
    \ol{\delta} \in P^{\ell}(\alpha) \big\} 
	\land \sigma^{\ell} \Big) 
    \land 
	\Big(\bigwedge \big\{ \Box(\ol{\delta},\psi) \mid 
    \ol{\delta} \in P^{\ell}(\beta) \big\} 
	\land \sigma^{\ell} \Big) 
\\ & \equiv \bigwedge \big\{ \Box(\ol{\delta},\psi) \mid 
    \ol{\delta} \in P^{\ell}(\alpha) \cup P^{\ell}(\beta) \big\} 
	\land \sigma^{\ell} 
\\ & \equiv \bigwedge \big\{ \Box(\ol{\delta},\psi) \mid 
    \ol{\delta} \in P^{\ell}(\gamma) \big\} 
	\land \sigma^{\ell},
\end{align*}
which suffices by \autoref{l:boxHelperTP}.

\item[Case $\gamma=\alpha;\beta$]
By the induction hypothesis and \eqref{eq:tp1}, we may assume that for all $\ell \in \TT(\gamma)$:
\[
\begin{array}{lll}
[\alpha][\beta]\psi \land \sigma^{\ell} &\equiv& 
\bigwedge \big\{ \Box(\ol{\delta},[\beta]\psi) \mid 
    \ol{\delta} \in P^{\ell}(\alpha) \big\} 
	\land \sigma^{\ell}
\\ {[\beta]}\psi \land \sigma^{\ell} &\equiv& 
\bigwedge \big\{ \Box(\ol{\delta},\psi) \mid 
    \ol{\delta} \in P^{\ell}(\beta) \big\} 
	\land \sigma^{\ell}
\end{array}
\]
We may now argue as follows:
\begin{align*}
[\gamma]\psi \land \sigma^{\ell}
& \equiv [\alpha][\beta]\psi \land \sigma^{\ell}
\\ & \equiv \bigwedge \big\{ \Box(\ol{\delta},[\beta]\psi) \mid 
    \ol{\delta} \in P^{\ell}(\alpha) \big\} 
	\land \sigma^{\ell} & \text{as argued}
\\ & \equiv 
    \Big(\bigwedge \big\{ \Box(\ol{\delta},[\beta]\psi) \mid 
      \ol{\delta} \in P^{\ell}(\alpha) \setminus \{ \emptylist \} \big\}
	  \land \sigma^{\ell} \Big)
\\ & \hspace*{4mm} \land 
	\Big(\bigwedge \big\{ \Box(\emptylist,[\beta]\psi) \mid 
      \emptylist \in P^{\ell}(\alpha) \big\}
	  \land \sigma^{\ell} \Big)
\\ & \equiv \Big(\bigwedge \big\{ \Box(\ol{\delta}\beta,\psi) \mid 
    \ol{\delta} \in P^{\ell}(\alpha) \setminus \{ \emptylist \} \big\}
	\land \sigma^{\ell} \Big)
\\ & \hspace*{4mm}	\land 
	\Big(\bigwedge \big\{ [\beta]\psi \mid 
    \emptylist \in P^{\ell}(\alpha) \big\}
	\land \sigma^{\ell} \Big)
\\ & \equiv \Big(\bigwedge \big\{ \Box(\ol{\delta}\beta,\psi) \mid 
    \ol{\delta} \in P^{\ell}(\alpha) \setminus \{ \emptylist \} \big\}
	\land \sigma^{\ell} \Big)
\\ & \hspace*{4mm}	\land 
	\Big(\bigwedge \big\{ \Box(\ol{\delta},\psi) \mid 
    \ol{\delta} \in P^{\ell}(\beta), \emptylist \in P^{\ell}(\alpha) \big\} 
	\land \sigma^{\ell} \Big) & \text{as argued}
\\ & \equiv \bigwedge \big\{ \Box(\ol{\eta},\psi) \mid 
    \ol{\eta} \in P^{\ell}(\gamma) \big\}
	\land \sigma^{\ell}  & \text{by def of $P$}
\end{align*}

\item[Case $\gamma=\beta^{\ast}$]
By the induction hypothesis for $\beta$ we have:
\begin{equation}\label{eq:lbtStarIH}
[\beta]\phi \land \sigma^{\ell} \equiv 
\textstyle{\bigwedge}X^{\ell}_{\beta,\phi} \land \sigma^{\ell},
\text{for all $\phi$ and all $\ell \in \TT(\beta)$},
\end{equation}
and, as we saw before, by \autoref{l:localBoxTruthAux} we obtain the following equivalence, for all $\phi$:
\begin{equation}\label{eq:IHbetaThm}
{[\beta]}\phi  \equiv 
\bigvee \big\{
\textstyle{\bigwedge}X^{\ell}_{\beta,\phi} \mid \ell \in \TT(\beta) \big\}.
\end{equation}

We now have to show that \eqref{eq:bxeq} holds for $\gamma = \beta^*$.
For this purpose we fix an arbitrary formula $\psi$ and a test profile $\ell \in \TT(\beta^\ast) = \TT(\beta)$, and claim that 
\begin{equation}\label{eq:bxToShow}
[\beta^\ast]\psi \land \sigma^{\ell} \equiv 
\textstyle{\bigwedge}X^{\ell}_{\beta^\ast,\psi} \land \sigma^{\ell}.
\end{equation}

For the implication from left to right of \eqref{eq:bxToShow}, by unfolding the definition of $X$ we only need to show that
$[\beta^\ast]\psi \land \sigma^\ell \models
\bigwedge \big\{ \Box(\ol{\alpha},\psi) \mid \ol{\alpha} 
\in P^{\ell}(\beta^\ast) \big\}$.
Take an arbitrary program list $\ol{\alpha} \in P^{\ell}(\beta^{\ast})$.
If $\ol{\alpha} = \emptylist$ then $\Box(\ol{\alpha},\psi) = \psi$,
and so we have $[\beta^{\ast}]\psi \models \Box(\ol{\alpha},\psi)$.
Otherwise $\ol{\alpha}$ must be of the form $\ol{\delta}\beta^{\ast}$, for some $\ol{\delta} \neq \emptylist$ in $P^{\ell}(\beta)$.
In this case, we apply \eqref{eq:lbtStarIH} where we take $[\beta^\ast]\psi$ as the formula $\phi$, and obtain, for all $\ell' \in \TT(\beta)$:
\begin{align*}
[\beta][\beta^\ast]\psi \land \sigma^{\ell'}
&\equiv 
\textstyle{\bigwedge}X^{\ell'}_{\beta,[\beta^\ast]\psi} \land \sigma^{\ell'}
\\ &=
\textstyle{\bigwedge}
F^{\ell'}(\beta)
\land
\textstyle{\bigwedge} 
\big\{ \Box(\ol{\beta},[\beta^\ast]\psi) \mid \ol{\beta} \in P^{\ell'}(\beta) \big\}
\land \sigma^{\ell'}
\\ &\models
\textstyle{\bigwedge} 
\big\{ \Box(\ol{\beta},[\beta^\ast]\psi) \mid \ol{\beta} \in P^{\ell'}(\beta) \big\}
\land \sigma^{\ell'}.
\end{align*}
Now:
\begin{align*}
[\beta^\ast]\psi \land \sigma^\ell
& \models
[\beta][\beta^\ast]\psi \land \sigma^{\ell} \\
& \models
\bigvee_{\ell'} ( [\beta][\beta^\ast]\psi \land \sigma^{\ell'} ) \land \sigma^\ell \\ 
& \models 
\bigvee_{\ell'} (
\textstyle{\bigwedge} 
\big\{ \Box(\ol{\beta},[\beta^\ast]\psi) \mid \ol{\beta} \in P^{\ell'}(\beta) \big\} \land \sigma^{\ell'}
) \land \sigma^{\ell} \\
& \models 
\bigvee_{\ell'} \Big(
\textstyle{\bigwedge} 
\big\{ \Box(\ol{\beta},[\beta^\ast]\psi) \mid \ol{\beta} \in P^{\ell'}(\beta) \big\} \land \sigma^{\ell'}
 \land \sigma^{\ell} \Big) \\
& \models 
\textstyle{\bigwedge} 
\big\{ \Box(\ol{\beta},[\beta^\ast]\psi) \mid \ol{\beta} \in P^{\ell}(\beta) \big\}
\\
& \models
\Box(\ol{\delta}\beta^{\ast},\psi)
\end{align*}
where the second step uses \eqref{e:test-profile-complete} and distributivity, and the penultimate step uses \eqref{e:test-profile-orthogonal}.

This finishes the proof that $[\beta^\ast]\psi \land \sigma^\ell \models
\bigwedge \big\{ \Box(\ol{\alpha},\psi) \mid \ol{\alpha} \in P^{\ell}(\beta^\ast) \big\}$, and hence the direction from left to right of \eqref{eq:bxToShow}.

For the implication from right to left of \eqref{eq:bxToShow} more work is required.
Let $\rho$ be the formula
\[
\rho \isdef \bigvee 
  \Big\{ \textstyle{\bigwedge}X^{\ell'}_{\beta^{\ast},\psi} 
  \mid \ell' \in \TT(\beta^{\ast}) \Big\}.
\]
Our key claim is that
\begin{equation}
\label{eq:rho}
\rho \models [\beta^\ast]\psi.
\end{equation}

To see why this suffices to prove the right-to-left implication of \eqref{eq:bxToShow}, observe that we have $\bigwedge X^{\ell}_{\beta^\ast,\psi} \models \rho$ by definition of $\rho$.
But then by \eqref{eq:rho} we obtain $\bigwedge X^{\ell}_{\beta^\ast,\psi} \models [\beta^\ast]\psi$, and from this and \eqref{eq:equiv_iff_TPequiv} we arrive at the required $\bigwedge X^{\ell}_{\beta^\ast,\psi} \land \sigma^\ell \models [\beta^\ast]\psi \land \sigma^\ell$.

Our aim then is to show $\rho \models [\beta^\ast]\psi$ by applying \autoref{l:guard}.
For this, let $x$ be a fresh variable that does not occur in $\beta$.
We now need to find suitable formulas $\chi_{0}$ and $\chi_{1}$ such that
  $[\beta]x \equiv (x \land \chi_{0}) \lor \chi_{1}$.
From \eqref{eq:IHbetaThm} we have
\begin{align*}
{[\beta]}x & \equiv 
\bigvee \big\{
\textstyle{\bigwedge}X^{\ell}_{\beta,x} \mid \ell \in \TT(\beta) \big\}.
\end{align*}
For any $\ell \in \TT(\beta)$ we will write
\[
\chi_{\ell} \isdef \bigwedge \big\{ \Box(\ol{\alpha},x) \mid \ol{\alpha} 
\in P^{\ell}_{\beta} \big\},
\]
so that we find
\begin{align*}
{[\beta]}x & \equiv 
\bigvee \big\{
\textstyle{\bigwedge}F^{\ell}(\beta) \land \phi_{\ell} 
   \mid \ell \in \TT(\beta) \big\}.
\end{align*}

Partition the set $\TT(\beta)$ as $\TT(\beta) = T_{0} \uplus T_{1}$,
where $T_{0}$ and $T_{1}$ are given as follows:
\[\begin{array}{lll}
T_{0} &\isdef& \{ \ell \in \TT(\beta) \mid \emptylist \in P^{\ell}_{\beta}\}
\\
T_{1} &\isdef& \{ \ell \in \TT(\beta) \mid \emptylist \not\in P^{\ell}_{\beta}\}.
\end{array}\]
Clearly, this gives
\[
{[\beta]}x \equiv 
\bigvee_{\ell\in T_{0}} ({\textstyle{\bigwedge}}F^{\ell}(\beta) \land \phi_{\ell})
\lor
\bigvee_{\ell\in T_{1}} (\textstyle{\bigwedge}F^{\ell}(\beta) \land \phi_{\ell}).
\]

Now abbreviate
\[
\phi'_{\ell} \isdef \bigwedge \big\{ \Box(\ol{\alpha},x) \mid \ol{\alpha} 
\in P^{\ell}_{\beta} \setminus \{ \emptylist \} \big\},
\]
then we have
\[
\phi_{\ell} \equiv 
   \left\{\begin{array}{ll}
   x \land \phi'_{l} & \text{ if } \ell \in T_{0}
   \\ \phi'_{\ell}   & \text{ if } \ell \in T_{1}
\end{array}\right.
\]
so that we obtain
\begin{align*}
{[\beta]}x & \equiv 
\bigvee_{\ell\in T_{0}} 
   ({\textstyle{\bigwedge}}F^{\ell}(\beta) \land x \land \phi'_{\ell})
\land 
\bigvee_{\ell\in T_{1}} 
  (\textstyle{\bigwedge}F^{\ell}(\beta) \land \phi'_{\ell})
\\ & \equiv 
\big(x \land \bigvee_{\ell\in T_{0}} 
   ({\textstyle{\bigwedge}}F^{\ell}(\beta) \land \phi'_{\ell}) \big)
\land 
\bigvee_{\ell\in T_{1}} 
  (\textstyle{\bigwedge}F^{\ell}(\beta) \land \phi'_{\ell}).
\end{align*}
Hence, if we define, for $i = 0,1$:
\[
\chi_{i} \isdef 
\bigvee_{\ell\in T_{i}} 
  (\textstyle{\bigwedge}F^{\ell}(\beta) \land \phi'_{\ell}),
\]
then we obtain 
\begin{equation}\label{eq:xx0}
[\beta]x \equiv (x \land \chi_{0}) \land \chi_{1},
\end{equation}
as required.

It is left to verify that with this definition of $\chi_{0}$ and $\chi_{1}$, the other conditions for \autoref{l:guard} are also satisfied.
For starters, it is easy to check that 
\begin{equation}\label{eq:xx1}
\rho \models \psi.
\end{equation}
To see this, recall that $\varepsilon$ belongs to the set $P^{\ell'}(\beta^*)$, for every $\ell'$.
It follows that the formula $\Box(\varepsilon,\psi) =\psi$ belongs to $X^{\ell'}_{\beta^{\ast},\psi}$, for every $\ell' \in \TT(\beta^*)$, and from this \eqref{eq:xx1} is immediate by definition of $\rho$.

The key claim then is the following:
\begin{equation}\label{eq:xx2}
\rho \models (\chi_{0}\lor\chi_{1})[\rho/x].
\end{equation}
To prove this, by the definition of $\rho$ it suffices to show
\[
\bigwedge X^{\ell}_{\beta^{\ast},\psi} \models  (\chi_{0}\lor\chi_{1})[\rho/x],
\]
for every test profile $\ell \in \TT(\beta^{\ast})$.
Fix such an $\ell$, then we find
\begin{align*}
{\textstyle{\bigwedge}} X^{\ell}_{\beta^{\ast},\psi}
& = {\textstyle{\bigwedge}} F^{\ell}(\beta^{\ast}) \land 
    {\textstyle{\bigwedge}} \big\{ \Box(\ol{\delta},\psi) 
	  \mid \ol{\delta} \in P^{\ell}(\beta^{\ast}) \big\}
\\& \models {\textstyle{\bigwedge}} F^{\ell}(\beta) \land 
    {\textstyle{\bigwedge}} \big\{ \Box(\ol{\alpha}\beta^{\ast},\psi) 
	  \mid \ol{\alpha} \in P^{\ell}(\beta) \setminus \{ \emptylist \} 
	  \big\}
\\& = {\textstyle{\bigwedge}} F^{\ell}(\beta) \land 
    {\textstyle{\bigwedge}} \big\{ \Box(\ol{\alpha},[\beta^{\ast}]\psi) 
	  \mid \ol{\alpha} \in P^{\ell}(\beta) \setminus \{ \emptylist \} 
	  \big\}
\\& = {\textstyle{\bigwedge}} F^{\ell}(\beta) 
        \land \phi'_{\ell}[[\beta^{\ast}]\psi/x]
\\& \models (\chi_{0} \lor \chi_{1}) [[\beta^{\ast}]\psi/x]
\\& \models (\chi_{0} \lor \chi_{1}) [\rho/x]
\end{align*}
Notice that the last step follows from the left-to-right direction of \eqref{eq:bxToShow} (shown before) and \autoref{l:localBoxTruthAux}, which together imply $[\beta^*]\psi\models \rho$.

With equations \eqref{eq:xx0}, \eqref{eq:xx1} and \eqref{eq:xx2} it follows immediately from \autoref{l:guard} that $\rho \models  [\beta^{\ast}]\psi$, as required.
\qedhere
\end{description}
\end{proof}

With \autoref{l:localBoxTruth} we reached the main goal of this section: local soundness and local invertibility of the $(\Box)$ rule which we will define in the next section.
We now provide one more lemma about boxes that will be needed in the proof of the Truth \autoref{l:truthLemma}.
It is similar to \autoref{l:existsDiamondH} for diamonds and intuitively says the following: no matter how a Kripke model executes a program, the definitions of $F$ and $P$ will provide a list of failed tests and a list of programs that match the program execution in the model.

\begin{lem}[\checklean{theorem}{existsBoxFP}]\label{l:FP}
Let $\gamma$ be some program, let $\kmodel = (W,R,V)$ be a Kripke model, and let $v$ and $w$ be states in $\kmodel$ such that $(v,w) \in R_{\gamma}$.
Furthermore, let $\ell \in \TT(\gamma)$ be such that $\kmodel, v \Vdash \bigwedge F^{\ell}(\gamma)$.
Then we find $(v,w) \in R_{\ol{\delta}}$ for some $\ol{\delta} \in P^{\ell}(\gamma)$.
\end{lem}

\begin{proof}
We prove the lemma by induction on the shallow program structure of $\gamma$.

\textbf{Base case.}

\begin{description}
\item[\it Case $\gamma = a$]
This case is easy since $P^{\ell}(\gamma) = \{ a \}$ for all $\ell \in \TT(\gamma)$.
\item[\it Case $\gamma = \tau?$:L]
Here  $(v,w) \in R_{\gamma}$ implies that $v=w$ and $\kmodel, v \Vdash \tau$.
Now assume that $\ell \in \TT(\tau?)$ is such that $\kmodel, v \Vdash \bigwedge F^{\ell}(\tau?)$.
It cannot be the case that $\tau \notin \ell$ since then $F^{\ell}(\tau?)= \{ \neg\tau \}$ would imply $\kmodel, v \Vdash \lnot\tau$.
This means that $\tau \in \ell$ and we are fine since now $\emptylist \in 
P^{\ell}(\tau?)$ and $(v,w) \in R_{\emptylist}$.
\end{description}

\textbf{Induction step.}
\begin{description}
\item[\it Case $\gamma = \alpha\cup\beta$]
If $\kmodel, v \Vdash \bigwedge F^{\ell}(\gamma)$ then since $F^{\ell}(\gamma) = F^{\ell}(\alpha) \cup F^{\ell}(\beta)$, obviously $\kmodel, v \Vdash \bigwedge F^{\ell}(\alpha)$ and $\kmodel, v \Vdash \bigwedge F^{\ell}(\beta)$.
Furthermore, $(v,w) \in R_{\gamma}$ implies that $(v,w) \in R_{\alpha}$ or $(v,w) \in R_{\beta}$; without loss of generality, we may assume the first.
Then by the induction hypothesis on $\alpha$ we may infer that $(v,w) \in 
R_{\ol{\delta}}$ for some $\ol{\delta} \in P^{\ell}(\alpha)$.
Clearly, then by the definition of $P^{\ell}(\gamma)$ this $\ol{\delta}$ also belongs
to $P^{\ell}(\gamma)$.
\item[\it Case $\gamma = \alpha; \beta$]
Let $u$ be such that $(v,u) \in R_{\alpha}$ and $(u,w) \in R_{\beta}$.
By definition of $F^{\ell}(\gamma)$ we have $\kmodel, v \Vdash \bigwedge 
F^{\ell}(\alpha)$, so that by the induction hypothesis on $\alpha$ we find
$(v,u) \in R_{\ol{\delta}}$ for some $\ol{\delta} \in P^{\ell}(\alpha)$.
We make a further case distinction.

If $\ol{\delta} = \emptylist$ then we have $v=x$ and so $(v,w) \in R_{\beta}$.
Furthermore, by definition of $F^{\ell}(\gamma)$ we have $\kmodel, v \Vdash 
\bigwedge F^{\ell}(\beta)$.
Hence, by induction hypothesis on $\beta$ we find $(v,w) \in R_{\ol{\eta}}$,
for some $\ol{\eta}\in P^{\ell}(\beta)$.
It follows that $\ol{\eta}\in P^{\ell}(\gamma)$ and so we are done.

If, on the other hand, $\ol{\delta} \neq \emptylist$, then we have $(v,w) \in
R_{\ol{\delta}}\seqc R_{\beta} = R_{\ol{\delta}\beta}$, and this suffices since 
$\ol{\delta}\beta \in P^{\ell}(\gamma)$.

\item[\it Case $\gamma=\beta^{\ast}$]
Since $(v,w) \in R_{\beta^{\ast}}$ we can make the following case 
distinction.

If $v = w$ then we have $(v,w) \in R_{\emptylist}$ and we are done since
$\emptylist \in P^{\ell}(\beta^{\ast})$. 

If $v \neq w$ then there is some $u \neq v$ such that $(v,u) \in R_{\beta}$ and 
$(u,w) \in R_{\beta^{\ast}}$.
Now let $\ell \in \TT(\beta^{\ast})$ be such that $\kmodel, v \Vdash
\bigwedge F^{\ell}(\beta^{\ast})$.
Since $F^{\ell}(\beta^{\ast}) = F^{\ell}(\beta)$ this implies, by the induction hypothesis on $\beta$, that $(v,u) \in R_{\ol{\delta}}$, for some 
$\ol{\delta} \in P^{\ell}(\beta)$.
Note that $\ol{\delta} \neq \emptylist$ since $v \neq u$.
It then follows that $\ol{\delta}\beta^{\ast} \in P^{\ell}(\beta^{\ast})$ and so
we are done since $(v,w) \in R_{\ol{\delta}} \seqc R_{\beta^{\ast}} = 
R_{\ol{\delta}\beta^{\ast}}$.
\qedhere
\end{description}
\end{proof}

\subsection{Diamonds}\label{subsec:Diamonds}

Analogous to the previous section about boxes, in this section our goal is to locally unfold programs in diamonds.
That is, given a formula $\lnot[\alpha]\psi$, we will define a finite set  $\unfold_{\diam}(\alpha,\psi)$ of sequents such that
\[
\neg[\alpha]\psi \equiv 
\bigvee \Big\{ \textstyle{\bigwedge} \Gamma \mid \Gamma \in 
\unfold_{\diam}(\alpha,\psi) \Big\}
\]
and for every formula $\phi$ in $\bigcup \unfold_{\diam}(\alpha,\psi)$, we have $\phi \in \Test(\alpha)$ or $\phi$ is of the form $\lnot[a][\alpha_{1}]\cdots[\alpha_{n}]\psi$, where $a$ and all $\alpha_{i}$ are in $\Prog(\alpha)$.

As a corollary, the following rule will be locally invertible and locally sound:
\begin{center}
  \AxiomC{$\Delta , \lnot[\alpha]\phi$}
  \LeftLabel{$(\diam)$}
  \UnaryInfC{$
    \big\{\Delta, \Gamma \mid \Gamma \in \unfold_{\diam}(\alpha,\phi)
    \big\}
    $}
  \DisplayProof
\end{center}

We again give some intuitions with respect to the well-known reduction axioms from \autoref{d:pdl-logic}.
Each $\Gamma \in \unfold_{\diam}(\alpha,\psi)$ corresponds to a disjunct in the disjunctive normal form obtained by rewriting 
$\neg [\alpha]\psi$ exhaustively using the reduction axioms $\Ax{?}$, $\Ax{\cup}$, $\Ax{;}$, $\Ax{^*}$ while ensuring 
that formulas of the form $[\beta^*]\phi$ are only rewritten once and considering test formulas and $\psi$ atomic. 
To prevent infinite local unfolding, disjuncts containing a formula of the form $\neg[\beta^\ast]\phi$ that has been reduced before will not be represented by any $\Gamma \in \unfold_{\diam}(\alpha,\psi)$. 
Inspecting the axioms, we see that multiple disjuncts in the normal form arise from applications of $\Ax{\cup}$ and $\Ax{^*}$,
and that every $\Gamma \in \unfold_{\diam}(\alpha,\psi)$ will consist of exactly one formula of the form 
$\lnot[a][\alpha_{1}]\cdots[\alpha_{n}]\psi$ (for a possibly empty list $a \alpha_1 \cdots \alpha_n$) together with the test formulas (zero or more) that must be true for $a \alpha_1 \cdots \alpha_n$ to be a successful execution of $\alpha$.
We can therefore represent such a sequent as a pair $(X,\ol{\delta})$
where $X \subseteq \Test(\alpha)$ and $\ol{\delta}$ is a list of programs from $\Prog(\alpha)$.
We compute the set $H_\alpha$ of these pairs recursively using the function $H$ defined below.
Note that the clause for $H_{\alpha;\beta}$ ensures that $\beta$ is reduced iff $\alpha$ does not entail executing an atomic program.
Note also that in the clause for $H_{\alpha^{\ast}}$, the requirement $\ol{\delta} \neq \emptylist$ corresponds to the exclusion
of sequents that contain a regenerated formula of the form $\neg[\beta^\ast]\phi$. 

\begin{defi}[\lean{H}]\label{d:H}
For all programs $\alpha$, we recursively define $H_{\alpha}$ as a set of pairs $(X, \ol{\delta})$  
where $X \subseteq \Test(\alpha)$ and $\ol{\delta}$ is a list of programs from $\Prog(\alpha)$:
\[\begin{array}{lll}
H_{a} &\isdef& \big\{ (\emptyset, a) \big\}
\\ H_{\tau?} &\isdef& \big\{ ( \{ \tau \}, \emptylist )\big\}
\\ H_{\alpha\cup\beta} &\isdef& H_{\alpha} \cup H_{\beta}
\\ H_{\alpha;\beta} &\isdef& 
   \big\{(X,\ol{\delta}\beta ) \mid (X,\ol{\delta}) \in H_{\alpha}, \ol{\delta} \neq 
      \emptylist \big\}
\\ && \cup 
   \big\{(X \cup Y, \ol{\delta} ) \mid (X,\emptylist) \in H_{\alpha}, 
      (Y, \ol{\delta}) \in H_{\beta} \big\}
\\ H_{\alpha^{\ast}} &\isdef& \big\{ (\emptyset,\emptylist) \big\} \cup 
     \big\{ (X, \ol{\delta}\alpha^{\ast}) 
	   \mid (X,\ol{\delta}) \in H_{\alpha}, \ol{\delta} \neq 
	   \emptylist \big\}
\end{array}\]
\end{defi}

\begin{defi}[\lean{unfoldDiamond}]\label{d:unfoldDiamond}
For all programs $\alpha$ and all formulas $\psi$, we define:
\[
\unfold_{\diam}(\alpha,\psi) \isdef
\{ X \cup \{ \lnot\Box(\ol{\delta},\psi) \} \mid (X, \ol{\delta}) \in H_{\alpha} \}
\]
\end{defi}

We illustrate the definition of $H$ and $\unfold_{\diam}$ with a few examples.

\begin{exa}\label{ex:H-dia-double-star}
To illustrate the exclusion of sequents to prevent infinite unfolding via the requirement $\ol\delta \neq \emptylist$,
consider the program $(a^\ast)^\ast$.
\[
  \begin{array}{rcl}
    H_{(a^*)^*} 
        &=& \{ (\emptyset,\emptylist) \} \cup  \{ (X, \ol{\delta}(a^{\ast})^\ast) 
	   \mid (X,\ol{\delta}) \in H_{\alpha^*}, \ol{\delta} \neq \emptylist \}
       \\
        &=& \{ (\emptyset,\emptylist) \} \cup  \{ (X, \ol{\delta}(a^{\ast})^\ast) 
	   \mid (X,\ol{\delta}) \in \{(\emptyset,\emptylist), (\emptyset,aa^*)\}, \ol{\delta} \neq \emptylist \}
       \\
        &=& \{ (\emptyset,\emptylist) \} \cup  \{ (X, \ol{\delta}(a^{\ast})^\ast) 
	   \mid (X,\ol{\delta}) \in \{ (\emptyset,aa^*)\}, \}
       \\
       &=& \{(\emptyset,\emptylist), (\emptyset,aa^*(a^*)^*) \}
  \end{array}
\]
and hence for all $\psi$, $\unfold_{\diam}((a^\ast)^\ast, \psi) =  \{ \{\neg\psi\}, \{\neg[a][a^\ast][(a^\ast)^\ast]\psi \}\}$, cf.~\autoref{fig:double-star}.
\end{exa}

\begin{exa}\label{ex:H-comp-star}
To illustrate how in the clause for $H_{\alpha;\beta}$, the program $\beta$ is reduced iff $\alpha$ does not entail the execution of an atomic program, 
consider the program $(p? \cup a); b^\ast$.
\[
  \begin{array}{rcl}
      H_{ (p? \cup a);b^\ast } 
        & = & \{ (X, \ol{\delta} b^*) \mid (X, \ol{\delta}) \in H_{p? \cup a}, \ol{\delta} \neq \emptylist\}\\
            &&  \cup \;\; \{ (X \cup Y, \ol{\delta}) \mid (X, \emptylist) \in H_{p? \cup a}, (Y, \ol\delta) \in H_{b^*}\}\\
        & = & \{ (X, \ol{\delta} b^*) \mid (X, \ol{\delta}) \in \{(\{p\}, \emptylist), (\emptyset,a)\}, \ol{\delta} \neq \emptylist\}\\
            &&  \cup \;\; \{ (X \cup Y, \ol{\delta}) \mid (X, \emptylist) \in \{(\{p\}, \emptylist), (\emptyset,a)\}, (Y, \ol\delta) \in \{ (\emptyset,\emptylist), (\emptyset,bb^*)\}\}\\
        & = & \{ (\emptyset, ab^*)\} \cup \{(\{p\}, \emptylist), (\{p\}, bb^*)\}
  \end{array}
 \]
hence  for all $\psi$, $\unfold_{\diam}((p? \cup a);b^\ast, \psi) = \{ \{ \neg[a][b^*]\psi\} , \{p, \neg\psi\}, \{p, \neg[b][b^*]\psi\}\}$.
\end{exa}

\begin{exa}\label{ex:H}
We illustrate with the program ${(p? \cup a)}^\ast$ how nesting of $?$ and $^*$ leads to exclusion of a sequent via the requirement $\ol\delta \neq \emptylist$:
  \[
    \begin{array}{rcl}
      H_{ {(p? \cup a)}^\ast } & = & \{ (\emptyset,\emptylist) \} \cup  \{ (X, \ol{\delta}{(p? \cup a)}^{\ast})
                                     \mid (X,\ol{\delta}) \in H_{(p? \cup a)}, \ol{\delta} \neq \emptylist \} \\
                               & = & \{ (\emptyset,\emptylist) \} \cup  \{ (X, \ol{\delta}{(p? \cup a)}^{\ast})
                                     \mid (X,\ol{\delta}) \in H_{p?} \cup H_a , \ol{\delta} \neq \emptylist \}\\
                               & = & \{ (\emptyset,\emptylist) \} \cup  \{ (X, \ol{\delta}{(p? \cup a)}^{\ast})
                                     \mid (X,\ol{\delta}) \in \{(\{p\},\emptylist), (\emptyset,a)\} , \ol{\delta} \neq \emptylist \}\\
                               & = & \{ (\emptyset,\emptylist) \} \cup  \{ (X, \ol{\delta}{(p? \cup a)}^{\ast})
                                     \mid (X,\ol{\delta}) = (\emptyset,a) \}\\
                               & = & \{ (\emptyset,\emptylist), (\emptyset, a {(p? \cup a)}^{\ast}) \}
    \end{array}
  \]
\autoref{l:localDiaTruth} below, applied to $\neg[(p? \cup a)^\ast]\psi$, will then tell us that for all $\psi$:
\[ \lnot[{(p? \cup a)}^\ast]\psi \ \equiv \ \lnot \psi \lor \lnot[a][{(p? \cup a)}^\ast]\psi \]
\end{exa}

For termination of our system and the interpolation, we note the following easy fact.

\begin{fact}\label{f:terminationHelperDiamond}
For all $(X, \ol{\delta}) \in H_\alpha$,
if $\ol{\delta} \neq \emptylist$, then the first element of $\ol{\delta}$ is an atomic program.
\end{fact}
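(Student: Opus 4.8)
The plan is to prove this by a routine induction on the shallow program structure of $\alpha$, simply unfolding the recursive definition of $H_\alpha$ given in \autoref{d:H}. For the base cases, if $\alpha = a$ is atomic then $H_a = \{(\emptyset, a)\}$ and the only list appearing is the singleton $a$, whose first element is the atomic program $a$; if $\alpha = \tau?$ then $H_{\tau?} = \{(\{\tau\}, \emptylist)\}$ and the only list is empty, so the statement holds vacuously.

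For the inductive step I would go through the three program constructors. If $\alpha = \beta \cup \gamma$, then $H_\alpha = H_\beta \cup H_\gamma$, so any pair $(X, \ol{\delta}) \in H_\alpha$ with $\ol{\delta} \neq \emptylist$ already lies in $H_\beta$ or in $H_\gamma$, and the induction hypothesis applies directly. If $\alpha = \beta;\gamma$, then by \autoref{d:H} any pair $(X, \ol{\delta}) \in H_\alpha$ is either of the form $(X, \ol{\eta}\gamma)$ with $(X, \ol{\eta}) \in H_\beta$ and $\ol{\eta} \neq \emptylist$ --- in which case the first element of $\ol{\delta} = \ol{\eta}\gamma$ equals the first element of $\ol{\eta}$, which is atomic by the induction hypothesis for $\beta$ --- or of the form $(X \cup Y, \ol{\delta})$ with $(X, \emptylist) \in H_\beta$ and $(Y, \ol{\delta}) \in H_\gamma$, in which case, if $\ol{\delta} \neq \emptylist$, its first element is atomic by the induction hypothesis for $\gamma$. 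Finally, if $\alpha = \beta^\ast$, then every pair in $H_\alpha$ is either $(\emptyset, \emptylist)$ (vacuous) or of the form $(X, \ol{\eta}\beta^\ast)$ with $(X, \ol{\eta}) \in H_\beta$ and $\ol{\eta} \neq \emptylist$; in the latter case the first element of $\ol{\eta}\beta^\ast$ is the first element of $\ol{\eta}$, which is atomic by the induction hypothesis for $\beta$.

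I do not expect any genuine obstacle: the only thing to verify is that the clauses for $;$ and $^\ast$ never move a compound program to the head of the list, and this is exactly what the side condition $\ol{\delta} \neq \emptylist$ on the recursively computed ingredient guarantees (a compound program is only ever \emph{appended} at the end). The argument is entirely mechanical and a Lean formalisation would proceed by the same structural induction; note also that \autoref{l:boxHelperTermination}(b)--(c) records the analogous fact for the $P^\ell$ operation, proved in the same way.
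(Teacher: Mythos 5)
Your proof is correct: the paper states this as an ``easy fact'' without giving a proof, and your structural induction on the shallow program structure of $\alpha$, checking each clause of \autoref{d:H} and observing that the clauses for $;$ and $^\ast$ only ever append a program at the tail while the side condition $\ol{\delta}\neq\emptylist$ on the recursive ingredient preserves the atomic head, is exactly the intended routine verification. No gaps.
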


The following lemma, whose proof is straightforward, will be used to show termination of our system.

\begin{lem}\label{l:unfoldDiamondInFL}
  Let $\alpha$ be a program and let $\psi$ and $\phi$ be formulas such that $\phi \in \bigcup \unfold_{\diam}(\alpha, \psi)$.
  Then we have $\phi \in \FL(\lnot[\alpha]\psi)$ (as in \autoref{d:fischerLadner}).
\end{lem}

Later we will use \autoref{d:unfoldDiamond} only for non-atomic programs to ensure termination of our tableau system, but the following \autoref{l:localDiaTruth} also holds for atomic programs.

\begin{lem}[\checklean{theorem}{localDiamondTruth}]\label{l:localDiaTruth}
For all programs $\alpha$ and all formulas $\psi$, we have:
\[
\neg[\alpha]\psi \equiv \bigvee 
  \Big\{ \textstyle{\bigwedge} X \land \lnot\Box(\ol{\delta},\psi) \mid (X,\ol{\delta}) \in H_{\alpha}
\Big\}
\]
\end{lem}

We will prove \autoref{l:localDiaTruth} by induction on the shallow program structure of $\alpha$. 
In the inductive case where $\alpha = \beta^{\ast}$ we shall need the following lemma.

\begin{lem}[\checklean{theorem}{guardToStarDiamond}]\label{l:guardToStarDiamond}
Suppose we have two formulas $\chi_0$ and $\chi_1$, a program $\beta$ and an atomic proposition $x$ that does not occur in $\beta$ (but may occur in $\chi_0$ and in $\chi_1$), such that $\lnot[\beta]\lnot x \equiv (x \land \chi_0) \lor \chi_1$.
Then for any two formulas $\rho$ and $\psi$ such that $\chi_1[\rho/x] \models \rho$ and $\lnot\psi \models \rho$ we have $\lnot[\beta^{\ast}]\psi \models \rho$. 
\end{lem}

\begin{proof}
Assume that $\kmodel, w \Vdash \lnot[\beta^{\ast}]\psi$, then we need to show that $\kmodel, w \Vdash \rho$.
Note that since $\kmodel, w \Vdash \lnot[\beta^{\ast}]\psi$ there is a $\beta$-path from $w$ to some state where $\psi$ does not hold, and hence, by assumption, $\rho$ does hold.
Therefore, it suffices to prove the following claim.
\begin{equation}\label{eq:dd1}
\text{For all states $u, v$ in $\kmodel$, if } \kmodel, v \Vdash \rho \text{ and } (u,v) \in R_{\beta} \text{ then } \kmodel, u \Vdash \rho.
\end{equation}
To prove this claim, assume that $\kmodel, v \Vdash \rho$ and $(u,v) \in R_{\beta}$.
From this it follows that $\kmodel, u \Vdash \neg[\beta]\neg\rho$, and so by the assumption on the $\chi$-formulas we find that $\kmodel, u \Vdash (\rho \land \chi_{0}[\rho/x]) \lor \chi_{1}[\rho/x]$.

This means that we either have $\kmodel, u \Vdash \rho \land \chi_{0}[\rho/x]$, in which case we find $\kmodel, u \Vdash \rho$ immediately, or $\kmodel, u \Vdash \chi_{1}[\rho/x]$, in which case we find $\kmodel, u \Vdash \rho$ by the assumption that $\chi_{1}[\rho/x] \models \rho$.
This shows that in both cases we have $\kmodel, u \Vdash \rho$, as required.
\end{proof}

We will also need the following fact, which can easily be verified given \autoref{d:H} of $H_\alpha$ above.
\begin{fact}[\checklean{theorem}{keepFreshH}]\label{f:keepFreshH}
For any atomic proposition $x$ and program $\alpha$, if $x$ does not occur in $\alpha$ then $x$ also does not occur in $H_\alpha$ in the sense that for all $(X, \ol{\delta}) \in H_\alpha$, for all $\phi \in X$ and all $\beta \in \ol{\delta}$, $x$ occurs in neither $\phi$ nor $\beta$.
\end{fact}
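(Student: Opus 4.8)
The plan is to prove this by a straightforward induction on the shallow program structure of $\alpha$, following the clauses of \autoref{d:H}. The one fact that needs to be invoked repeatedly is that if $x$ does not occur in a compound program, then it does not occur in any of its shallow subprograms either — and, crucially, also not in the program ($\gamma$ or $\beta^\ast$) that gets appended to a list in the clauses for $H_{\alpha;\beta}$ and $H_{\alpha^\ast}$.

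For the base cases, if $\alpha = a$ is atomic then $H_a = \{(\emptyset,a)\}$ and there is nothing to check beyond the trivial observation that $x$ does not occur in $a$; if $\alpha = \tau?$ then $H_{\tau?} = \{(\{\tau\},\emptylist)\}$, and since $x$ does not occur in $\tau?$ it does not occur in $\tau$, while the empty list contributes no programs.

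For the inductive step I would split on the shape of $\alpha$. When $\alpha = \beta\cup\gamma$ the claim is immediate from $H_{\beta\cup\gamma} = H_\beta \cup H_\gamma$ and the two induction hypotheses. When $\alpha = \beta;\gamma$, every pair in $H_{\beta;\gamma}$ is either $(X,\ol{\delta}\gamma)$ with $(X,\ol{\delta})\in H_\beta$ — handled by the induction hypothesis for $\beta$ together with the fact that $x$ does not occur in $\gamma$ — or $(X\cup Y,\ol{\delta})$ with $(X,\emptylist)\in H_\beta$ and $(Y,\ol{\delta})\in H_\gamma$, handled by both induction hypotheses. When $\alpha = \beta^\ast$, the pair $(\emptyset,\emptylist)$ is trivial and every other pair is $(X,\ol{\delta}\beta^\ast)$ with $(X,\ol{\delta})\in H_\beta$, handled by the induction hypothesis for $\beta$ together with the fact that $x$ does not occur in $\beta^\ast$. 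I do not expect any real obstacle here: the statement is purely syntactic bookkeeping, and the only thing that could go wrong is forgetting to check the freshly introduced tail program in the composition and iteration clauses, which the remark in the first paragraph is designed to cover.
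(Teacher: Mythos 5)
Your proposal is correct and matches the paper's intent: the paper states this fact without proof, remarking only that it ``can easily be verified given \autoref{d:H} of $H_\alpha$'', which is exactly the clause-by-clause induction on the shallow program structure that you spell out. Your explicit attention to the appended tail programs $\gamma$ and $\beta^\ast$ in the composition and iteration clauses is precisely the only non-trivial bookkeeping point.
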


\begin{proof}[Proof of \autoref{l:localDiaTruth}]
By induction on the shallow program structure of $\alpha$. 

Base Case.
\begin{description}
\item[Case $\alpha=a$]
In this case we find $\lnot[a]\psi \equiv \top \land \lnot[a]\psi \equiv \bigwedge\emptyset \land \lnot\Box(a,\psi)$, as required.
\item[Case $\alpha=\tau?$]
Here we obtain $\lnot[\tau?]\psi \equiv \tau \land \lnot\psi \equiv \tau \land \neg\Box(\emptylist,\psi)$.
\end{description}

Induction hypothesis:
Suppose that for two programs $\beta$ and $\gamma$ we have, for all formulas $\psi$:
\begin{gather}
  \tag{IH $\beta$}
  \neg[\beta]\psi \equiv \bigvee
  \Big\{ \textstyle{\bigwedge} X \land \lnot\Box(\ol{\delta},\psi) \mid (X,\ol{\delta}) \in H_{\beta}
  \Big\}
  \\
  \tag{IH $\gamma$}
  \neg[\gamma]\psi \equiv \bigvee
  \Big\{ \textstyle{\bigwedge} X \land \lnot\Box(\ol{\delta},\psi) \mid (X,\ol{\delta}) \in H_{\gamma}
  \Big\}
\end{gather}

Induction step:
\begin{description}
\item[Case $\alpha=\beta\cup\gamma$]
This case is straightforward, using the IH for $\beta$ and $\gamma$.

\item[Case $\alpha=\beta;\gamma$] This case follows from the following computation: 
\begin{align*}
\lnot[\beta;\gamma]\psi & \equiv \lnot[\beta][\gamma]\psi
\\ & \equiv \bigvee \big\{ {\textstyle{\bigwedge} X} \land 
        \lnot\Box(\ol{\delta},[\gamma]\psi) 
		\mid (X,\ol{\delta}) \in H_{\beta}\big\} & \text{by IH $\beta$}
\\ & \equiv \bigvee \big\{ {\textstyle{\bigwedge} X} \land 
        \lnot\Box(\ol{\delta},[\gamma]\psi) 
		\mid (X,\ol{\delta}) \in H_{\beta}, \ol{\delta} \neq \emptylist \big\} &
\\ & \hspace*{5mm} \lor \bigvee \big\{ {\textstyle{\bigwedge} X} \land 
        \lnot\Box(\emptylist,[\gamma]\psi)
		\mid (X,\emptylist) \in H_{\beta} \big\}
\\ & \equiv \bigvee \big\{ {\textstyle{\bigwedge} X} \land 
        \lnot\Box(\ol{\delta}\gamma,\psi) 
		\mid (X,\ol{\delta}) \in H_{\beta}, \ol{\delta} \neq \emptylist \big\}
\\ & \hspace*{5mm} \lor \bigvee \big\{ {\textstyle{\bigwedge} X} \land 
        \lnot[\gamma] \psi 
		\mid (X,\emptylist) \in H_{\beta} \big\}
\\ & \equiv \bigvee \big\{ {\textstyle{\bigwedge} X} \land 
        \lnot\Box(\ol{\delta}\gamma,\psi) 
		\mid (X,\ol{\delta}) \in H_{\beta}, \ol{\delta} \neq \emptylist \big\}
\\ & \hspace*{5mm} \lor \bigvee \big\{ 
        {\textstyle{\bigwedge} X} \land {\textstyle{\bigwedge} Y} \land
		\lnot\Box(\ol{\lambda},\psi) 
		\mid (X,\emptylist) \in H_{\beta}, (Y,\ol{\lambda}) \in H_{\gamma} \big\} & \text{by IH $\gamma$}
\\ & \equiv \bigvee \big\{ {\textstyle{\bigwedge} Z} \land 
        \lnot\Box(\ol{\eta},\psi) 
		\mid (Z,\ol{\eta}) \in H_{\alpha} \big\}
\end{align*}

\item[Case $\alpha=\beta^\ast$]
Define 
\[
\rho \isdef \bigvee \big\{ 
    {\textstyle{\bigwedge} X} \land \lnot \Box(\ol{\delta},\psi)
   \mid (X,\ol{\delta}) \in H_{\beta^{*}} \big\},
\]
then our goal will be to show that $\lnot[\beta^{\ast}]\psi \equiv \rho$.

We first show that $\rho \models \lnot[\beta^{\ast}]\psi$ as we will use it for the other direction.
For this it suffices to show that each of the disjuncts of $\rho$ implies $\lnot[\beta^{\ast}]\psi$, so take an arbitrary pair $(X,\ol{\delta}) \in H_{\beta^{\ast}}$.
In case $(X,\ol{\delta}) = (\emptyset,\emptylist)$ we find 
$\bigwedge X \land \lnot \Box(\ol{\delta},\psi) \equiv \top \land \neg\psi$, and this formula obviously implies $\lnot[\beta^{\ast}]\psi$.
If $(X,\ol{\delta}) \neq (\emptyset,\emptylist)$, $\ol{\delta}$ must be of the form $\ol{\gamma}\beta^{\ast}$ with $\ol{\gamma} \neq \emptylist$ and $(X, \ol{\gamma}) \in H_{\beta}$. 
In this case we find
\begin{align*}
{\textstyle{\bigwedge} X} \land \lnot \Box(\ol{\delta},\psi) 
   & = {\textstyle{\bigwedge} X} \land \lnot \Box(\ol{\gamma},[\beta^{\ast}]\psi) 
\\ & \models \neg [\beta][\beta^{\ast}]\psi & \text{by IH for $\beta$}
\\ & \models \neg [\beta^{\ast}]\psi
\end{align*}

In order to prove that $\lnot[\beta^{\ast}]\psi \models \rho$ we will apply \autoref{l:guardToStarDiamond}.
For this, let $x$ be a fresh variable not occurring in $\beta$ and therefore, by \autoref{f:keepFreshH} also not occurring in $H_\beta$.
Define
\[\begin{array}{lll}
\phi_{0} &\isdef& \bigvee \big\{ {\textstyle{\bigwedge} X}
     \mid (X,\emptylist) \in H_{\beta} \big\} 
\\ \phi_{1} &\isdef& \bigvee \big\{
     {\textstyle{\bigwedge} X} \land \lnot\Box(\ol{\delta},\neg x) 
     \mid (X,\ol{\delta}) \in H_{\beta}, \ol{\delta} \neq \emptylist \big\} 
\end{array}\]
It now suffices to show that these formulas meet the conditions of \autoref{l:guardToStarDiamond}.

First of all, it easily follows from the induction hypothesis that $\lnot[\beta]\lnot x \equiv (x \land \phi_{0}) \lor \phi_{1}$.
Second, since $(\emptyset,\emptylist) \in H_{\beta^{\ast}}$, we have $\lnot\psi\equiv \bigwedge \emptyset \land \lnot\Box(\emptylist,\psi) \models \rho$.
It thus remains to show that $\phi_{1}[\rho/x] \models \rho$; this we prove as follows:
\begin{align*}
\phi_{1}[\rho/x] & =
\bigvee \big\{ 
     {\textstyle{\bigwedge} X} \land \lnot\Box(\ol{\delta}, \neg\rho)
     \mid (X,\ol{\delta}) \in H_{\beta}, \ol{\delta} \neq \emptylist \big\}
   & \text{by \autoref{f:keepFreshH}}
\\ & \models
\bigvee \big\{
     {\textstyle{\bigwedge} X} \land \lnot\Box(\ol{\delta}, \neg\neg[\beta^{\ast}]\psi)
     \mid (X,\ol{\delta}) \in H_{\beta}, \ol{\delta} \neq \emptylist \big\}
   & \text{by $\rho \models \lnot[\beta^{\ast}]\psi$}
\\ & \models
\bigvee \big\{
     {\textstyle{\bigwedge} X} \land \lnot\Box(\ol{\delta},[\beta^{\ast}]\psi)
     \mid (X,\ol{\delta}) \in H_{\beta}, \ol{\delta} \neq \emptylist \big\}
\\ & = 
\bigvee \big\{ 
     {\textstyle{\bigwedge} X} \land \lnot\Box(\ol{\delta}\beta^{\ast},\psi)
     \mid (X,\ol{\delta}) \in H_{\beta}, \ol{\delta} \neq \emptylist \big\}
\\ & \models
\bigvee \big\{
     {\textstyle{\bigwedge} X} \land \lnot\Box(\ol{\eta},\psi)
     \mid (X,\ol{\eta}) \in H_{\beta^{\ast}} \big\}
   & \text{by def.~of~$H_{\beta^\ast}$}
  \\ & = \rho. & & \qedhere
\end{align*}
\end{description}
\end{proof}

This finishes the main result of this section.
Only later (for \autoref{l:loadedDiamondPaths}) will we need the following lemma.
As it is a simple result about diamonds and our $H$ function, we include it here.
Note that we use $R_{\ol{\delta}}$ from \autoref{d:relateSeq}.

\begin{lem}[\checklean{theorem}{existsDiamondH}]\label{l:existsDiamondH}
Let $\gamma$ be some program, let $\kmodel = (W,R,V)$ be a Kripke model, 
and let $v$ and $w$ be states in $\kmodel$ such that $(v,w) \in R_{\gamma}$.
Then there is a pair $(X,\ol{\delta}) \in H_{\gamma}$ such that $\kmodel, v \Vdash X$
and $(v,w) \in R_{\ol{\delta}}$.
\end{lem}
\begin{proof}
By a straightforward induction on the shallow program structure of $\gamma$.
We cover only two cases of the induction step.

\begin{description}
\item[Case $\gamma = \alpha; \beta$]
It follows from $(v, w) \in R_{\gamma}$ that there is some $u$ such that
$(v,u) \in R_{\alpha}$ and $(u,w) \in R_{\beta}$.
By the induction hypothesis on $\alpha$ we find a pair $(X,\ol{\delta}) \in H_\alpha$
such that $v \Vdash X$ and $(v,u) \in R_{\ol{\delta}}$.

Now distinguish cases. 
If $\ol{\delta} \neq \emptylist$ then we find 
$(X,\ol{\delta}\beta) \in H_{\gamma}$ and $(v,w) \in R_{\ol{\delta}\beta}$
as required.
On the other hand, if $\ol{\delta} = \emptylist$ we have $v = u$;
furthermore,
by the induction hypothesis it follows from $(u,w) \in R_{\beta}$ that there
is a pair $(Y,\ol{\eta}) \in H_{\beta}$ such that $\kmodel, v \Vdash Y$ and 
$(v,w) \in R_{\ol{\eta}}$.
Combining all of this we find $(X\cup Y, \ol{\eta}) \in H_{\gamma}$, 
$\kmodel, v \Vdash X \cup Y$ and $(v,w) \in R_{\ol{\eta}}$, as required.

\item[Case $\gamma = \beta^{\ast}$]
It follows from $(v,w) \in R_{\beta^{\ast}}$ that there are two possibilities.
If $v = w$ then we can take the pair $(\emptyset,\emptylist)\in 
H_{\beta^{\ast}}$:
It is easily verified that $\kmodel, v \Vdash \bigwedge\emptyset$ and that 
$(v,w) \in R_{\emptylist}$.

If, on the other hand, $v \neq w$ then there is some $v_{1} \neq v$ such that 
$(v,v_{1}) \in R_{\beta}$ and $(v_{1},w) \in R_{\beta^{\ast}}$.
By the induction hypothesis there is some pair $(X,\ol{\delta}) \in H_{\beta}$
such that $\kmodel, v \Vdash X$ and $(v,v_{1}) \in R_{\ol{\delta}}$.
Note that we must have $\ol{\delta} \neq\emptylist$ since $v \neq v_{1}$.
This implies that the pair $(X,\ol{\delta}\beta^{\ast})$ belongs to 
$H_{\beta^{\ast}}$.
Furthermore, we already saw that $\kmodel, v \Vdash X$ and we obviously have
$(v,w) \in R_{\ol{\delta}}\seqc R_{\beta^{\ast}} = R_{\ol{\delta}\beta^{\ast}}$.
In other words, we are done.
\qedhere
\end{description}
\end{proof}

\subsection{Loaded Diamonds}\label{subsec:LoadedDiamonds}

The unfolding of boxes and diamonds that we defined in the previous two subsections is sufficient to deal with local repeats.
However, repeats can also be generated by non-local, that is, modal reasoning.
To recognize these situations in our tableau system, we will \emph{load} diamond formulas before applying the modal rule.

\begin{defi}[\lean{NegLoadFormula}]\label{d:NegLoadFormula}
  By $\loaded{\progs}$ we denote the set of \emph{loaded programs} $\{\loaded{\alpha} \mid  \alpha \in \progs\}$.
  A \emph{loaded formula} is of the form
  $\neg[\loaded{\alpha_1}]\dots[\loaded{\alpha_{n+1}}]\phi$ where
  $\phi \in \pdlforms$, $\loaded{\alpha_i} \in \loaded{\progs}$ for $i \leq n + 1$,
  for some $n \geq 0$.
  By $\loaded{\pdlforms}$ we denote the set of all loaded formulas.
  For simplicity we sometimes also write $\lnot\loaded{[\alpha_{1}]\cdots[\alpha_{n}]}\phi$
  to mean a formula $\lnot[\loaded{\alpha_{1}}][\loaded{\alpha_{2}}]\cdots[\loaded{\alpha_{n}}]\phi$.

  A sequent is \emph{loaded} if it contains a loaded formula, otherwise it is \emph{free}.
  A PDL-tableau node is loaded (free) if it is labeled with a  loaded (free) sequent.

  Given any (possibly) loaded formula $\xi$, we let $\xi^-$ denote its \emph{unloaded version} obtained by replacing each $\loaded{\alpha}$ with $\alpha$.
  Given a set $\Gamma \subseteq \pdlforms \cup \loaded{\pdlforms}$, we will let $\Gamma^-$ denote the set of formulas we obtain by unloading every formula in $\Gamma$, that is, $\Gamma^- = \{\xi^- \mid \xi \in \Gamma\}$.
\end{defi}

The loading mechanism is purely syntactic.
That is, there is no semantic difference between $\alpha$ and $\loaded{\alpha}$, as the following definition and fact make explicit.

\begin{defi}
  We extend \autoref{d:evaluate} from $\pdlforms$ to $\loaded{\pdlforms}$ and from $\progs$ to $\loaded{\progs}$ by adding two cases.
  First, let $v \Vdash [\loaded{\alpha}] \phi$ iff $v \Vdash [\alpha] \phi$.
  Second, let $\reach{\loaded{\alpha}} := \reach{\alpha}$.
\end{defi}

\begin{fact}\label{f:loadedSemantics}
  Let $\kmodel$ be a Kripke model and $s$ be a state of $\kmodel$.
  We then have $s \Vdash \loaded{[\alpha_{1}]\cdots[\alpha_{n}]}\phi$ if and only if $s \Vdash [\alpha_{1}]\cdots[\alpha_{n}]\phi$.
\end{fact}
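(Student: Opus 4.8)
The plan is to prove the equivalence by a straightforward induction on $n$, the number of (loaded) boxes prefixing $\phi$. For $n=0$ there is nothing to do: both $\loaded{[\alpha_1]\cdots[\alpha_n]}\phi$ and $[\alpha_1]\cdots[\alpha_n]\phi$ are literally $\phi$, so the two sides coincide, being $\kmodel, s \Vdash \phi$.

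For the inductive step I would peel off the outermost box. Write $\psi \isdef [\loaded{\alpha_2}]\cdots[\loaded{\alpha_{n}}]\phi$ and $\psi' \isdef [\alpha_2]\cdots[\alpha_{n}]\phi$. By the clause for loaded boxes added in the definition immediately preceding this fact, $\kmodel, s \Vdash [\loaded{\alpha_1}]\psi$ holds iff $\kmodel, s \Vdash [\alpha_1]\psi$; and by the box clause of \autoref{d:evaluate} the latter says that $\kmodel, w \Vdash \psi$ for every $w$ with $(s,w)\in \reach{\alpha_1}$. Applying the induction hypothesis at each such $w$ turns this into ``$\kmodel, w \Vdash \psi'$ for every $w$ with $(s,w)\in\reach{\alpha_1}$'', which is exactly $\kmodel, s \Vdash [\alpha_1]\psi'$, i.e.\ $\kmodel, s \Vdash [\alpha_1][\alpha_2]\cdots[\alpha_n]\phi$. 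Every step here is an equivalence, so the claim follows. An essentially equivalent route: since $\reach{\loaded{\alpha}}=\reach{\alpha}$ by definition, extending \autoref{d:relateSeq} to lists of loaded programs gives $\reach{\loaded{\alpha_1}\cdots\loaded{\alpha_n}}=\reach{\alpha_1\cdots\alpha_n}$, after which the statement drops out of the truth clause for iterated boxes.

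I do not expect any genuine obstacle: the statement holds essentially by construction, because the loading annotation was defined to be semantically inert --- it is introduced purely to keep track of certain formula occurrences syntactically in the tableau. The only point needing a modicum of care is to run the induction on the number of loaded boxes, so that the inductive hypothesis applies to the shorter formula $\psi$ obtained after removing the leading $[\loaded{\alpha_1}]$, rather than on the structure of the programs $\alpha_i$; the latter would be clumsy, since a single loaded program $\loaded{\alpha}$ may contain an arbitrarily complex program $\alpha$.
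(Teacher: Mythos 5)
The paper states this as a \emph{Fact} and supplies no proof: it is taken to be immediate from the preceding definition, which makes the loading annotation semantically inert by setting $v \Vdash [\loaded{\alpha}]\phi$ iff $v \Vdash [\alpha]\phi$ and $\reach{\loaded{\alpha}} \isdef \reach{\alpha}$. Your induction on $n$ (the number of leading boxes) is a correct and natural way to spell out this ``immediate'' step, and you are right that inducting on $n$ rather than on program structure is the clean choice since each $\loaded{\alpha_i}$ packages an arbitrarily complex program. The alternative route via $\reach{\loaded{\alpha}}=\reach{\alpha}$ and the extension of $R$ to lists of programs (\autoref{d:relateSeq}) is equally valid and equally short; there is nothing here to prefer one over the other.
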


\begin{defi}[\lean{loadMulti}]\label{d:loadMulti}
  Given a list $\ol{\alpha}$ of programs and a formula $\phi$,
  we define the loaded formula $\loaded{\Box}(\ol{\alpha},\phi)$ as follows:
  \[
    \begin{array}{lll}
      \Box(\emptylist,\phi)                & \isdef & \phi \\
      \loaded{\Box}(\alpha\ol{\beta},\phi) & \isdef & [\loaded{\alpha}]\loaded{\Box}(\ol{\beta},\phi)
    \end{array}
  \]
\end{defi}

\begin{defi}[\lean{unfoldDiamondLoaded}]\label{d:unfoldDiamondLoaded}
For all programs $\alpha$ and all possibly loaded formulas $\xi$, we define:
\[
\loaded{\unfold}_{\diam}(\alpha,\xi) \isdef
\{ X \cup \{ \lnot\loaded{\Box}(\ol{\delta},\xi) \} \mid (X, \ol{\delta}) \in H_{\alpha} \}
\]
\end{defi}

The following lemma is analogous to \autoref{l:unfoldDiamondInFL}, and we will use it to prove termination of our system.

\begin{lem}\label{l:unfoldLoadedDiamondInFL}
  Let $\alpha$ be a program and let $\psi$ and $\phi$ be formulas such that $\phi \in \bigcup \loaded{\unfold}_{\diam}(\alpha, \psi)$.
  Then we have $\phi^- \in \FL(\lnot[\alpha]\psi)$ (as in \autoref{d:fischerLadner}). 
\end{lem}

\subsection{Local Rules}\label{subsec:LocalRules}

\begin{defi}[\lean{LocalRule}]\label{d:localTableau}
A \emph{local PDL-tableau} for $\Gamma$ is a tableau over the following 
rules, called the local rules:

\begin{center}
  \AxiomC{$\Delta , \lnot \lnot \phi$}
  \LeftLabel{$(\lnot)$}
  \UnaryInfC{$\Delta,\phi$}
  \DisplayProof
  \hspace{1.5em}
  \AxiomC{$\Delta , \phi \land \psi$}
  \LeftLabel{($\land$)}
  \UnaryInfC{$\Delta,\phi,\psi$}
  \DisplayProof
  \hspace{1.5em}
  \AxiomC{$\Delta , \lnot (\phi \land \psi)$}
  \LeftLabel{$(\lnot \land)$}
  \UnaryInfC{$\Delta , \lnot \phi \splitCase \Delta , \lnot \psi$}
  \DisplayProof

  \smallskip
  
  \AxiomC{$\Delta , [\alpha]\phi$}
  \LeftLabel{$(\Box)$}
  \RightLabel{\ \ $\alpha$ non-atomic}
  \UnaryInfC{$\big\{\Delta, \Gamma 
      \mid  \Gamma \in \unfold_{\square}(\alpha, \phi) \big\}$}
  \DisplayProof

  \smallskip
  
  \AxiomC{$\Delta , \lnot[\alpha]\phi$}
  \LeftLabel{$(\diam)$}
  \RightLabel{\ \ $\alpha$ non-atomic}
  \UnaryInfC{$\{\Delta, \Gamma \mid \Gamma \in \unfold_{\diam}(\alpha,\phi)\}$}
  \DisplayProof
  
  \smallskip
  
  \AxiomC{$\Delta , \lnot\loaded{[\alpha]}\xi$}
  \LeftLabel{$(\loaded{\diam})$}
  \RightLabel{\ \ $\alpha$ non-atomic}
  \UnaryInfC{$\big\{\Delta, \Gamma \mid \Gamma \in 
      \loaded{\unfold}_{\diam}(\alpha, \xi)\big\}$}
  \DisplayProof
  
  \smallskip
  
  \AxiomC{$\Delta , \lnot[a][\alpha_1]\ldots[\alpha_n]\phi$}
  \LeftLabel{$(L+)$}
  \RightLabel{\ \ $\Delta$ free and basic and $n \geq 0$ maximal}
  \UnaryInfC{$\Delta , \lnot\loaded{[a][\alpha_1]\ldots[\alpha_n]}\phi$}
  \DisplayProof

  \smallskip
  
  \AxiomC{$\Delta , \lnot\loaded{[\alpha_1]\ldots[\alpha_n]}\phi$}
  \LeftLabel{$(L-)$}
  \RightLabel{\ \ $\Delta$ basic and $n \geq 1$}
  \UnaryInfC{$\Delta , \lnot[\alpha_1]\ldots[\alpha_n]\phi$}
  \DisplayProof
\end{center}

where $\phi, \psi \in \pdlforms$, $\alpha_1, \dots, \alpha_n \in \progs$, $a$ atomic, and $\xi \in \pdlforms \cup \loaded{\pdlforms}$, i.e.\ $\xi$ is a possibly loaded formula.

Furthermore, if $t$ is a child of a node $s$ and $L(s) = (L+)$ then $L(t) \neq (L-)$.
That is, we are not allowed to apply $(L-)$ immediately after $(L+)$.
\end{defi}

Notice that the rules involving $\unfold$ are only applicable when the program $\alpha$ is non-atomic.
Furthermore, notice that the L+ rule loads \emph{all} boxes at the front of the principal formula.
For example, given $\lnot[a][b]p$ we will obtain $\lnot\loaded{[a][b]}p$ and not $\lnot\loaded{[a]}[b]p$.
That is, we always load maximally.

Recall that a set of formulas is called \emph{basic} if it only contains formulas of the
forms $\bot$, $\lnot\bot$, $p$, $\lnot p$, $[a]\phi$ and $\lnot [a]\phi$.
We will use the following fact implicitly throughout the document.

\begin{fact}\label{f:basicIffNoLocalRule}
A set $\Gamma$ of formulas is basic if and only if no local rules other than $(L+)$ and $(L-)$ can be applied to $\Gamma$.
\end{fact}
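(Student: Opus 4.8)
The plan is to prove the two directions by matching, for each local rule other than $(L+)$ and $(L-)$, the shape of its principal formula against the formula shapes permitted by \autoref{d:isBasic}. Throughout I would use the convention recalled in the preliminaries that in any rule instance the principal formula does not occur in the context; consequently a local rule can be applied to $\Gamma$ exactly when $\Gamma$ contains a formula matching that rule's principal-formula pattern and the side conditions are met, and in that case one may take the context to be $\Gamma \setminus \{\phi\}$, where $\phi$ is the matched principal formula.

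For the implication from left to right, suppose $\Gamma$ is basic. The principal formulas of $(\lnot)$, $(\land)$, $(\lnot\land)$, $(\Box)$, $(\diam)$ and $(\loaded{\diam})$ are, respectively, of the shapes $\lnot\lnot\psi$, $\psi \land \chi$, $\lnot(\psi \land \chi)$, $[\alpha]\psi$ with $\alpha$ non-atomic, $\lnot[\alpha]\psi$ with $\alpha$ non-atomic, and the loaded shape $\lnot\loaded{[\alpha]}\xi$ with $\alpha$ non-atomic. None of these matches any of the basic forms $\bot$, $\lnot\bot$, $p$, $\lnot p$, $[a]\psi$, $\lnot[a]\psi$ (with $a$ atomic): here the side conditions ``$\alpha$ non-atomic'' on $(\Box)$, $(\diam)$, $(\loaded{\diam})$ are exactly what blocks a match against the atomic boxes allowed in a basic sequent, and the only basic formulas of negated-box shape --- $\lnot[a]\psi$ and its loaded counterpart $\lnot\loaded{[a]}\psi$ --- are principal formulas only of $(L+)$ and $(L-)$, the two rules we are permitted to exclude. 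Hence no local rule other than $(L+)$ and $(L-)$ can be applied to $\Gamma$.

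For the converse I would argue by contraposition: assume $\Gamma$ is not basic, fix a formula $\phi \in \Gamma$ whose outermost shape is not among the forms of \autoref{d:isBasic}, and exhibit an applicable rule other than $(L+)$, $(L-)$, taking the context to be $\Gamma \setminus \{\phi\}$. A case distinction on the top connective of $\phi$ suffices: $\phi = \lnot\lnot\psi$ yields $(\lnot)$; $\phi = \psi \land \chi$ yields $(\land)$; $\phi = \lnot(\psi \land \chi)$ yields $(\lnot\land)$; $\phi = [\alpha]\psi$ forces $\alpha$ non-atomic (else $[a]\psi$ would be basic), yielding $(\Box)$; an unloaded $\phi = \lnot[\alpha]\psi$ likewise forces $\alpha$ non-atomic (else $\lnot[a]\psi$ would be basic), yielding $(\diam)$; and a loaded $\phi = \lnot\loaded{[\alpha_1]\cdots[\alpha_{n+1}]}\psi$ forces $\alpha_1$ non-atomic, so, writing $\phi = \lnot\loaded{[\alpha_1]}\xi$ with $\xi \in \pdlforms \cup \loaded{\pdlforms}$, we get $(\loaded{\diam})$. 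Every remaining shape of $\phi$ lies among the basic forms, contradicting the choice of $\phi$, so the analysis is exhaustive.

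The boolean and non-atomic-modal cases are routine inspections of the rule schemas against \autoref{d:isBasic}. The one place I expect to need care is the loaded case: one must be precise about the abbreviation $\lnot\loaded{[\alpha_1]\cdots[\alpha_{n+1}]}\psi = \lnot\loaded{[\alpha_1]}\xi$ (with $\xi$ possibly loaded) and about which loaded formulas are to be counted as basic --- namely those whose leading program is atomic, these being exactly the loaded sequents on which, among all local rules other than $(L+)$, only $(L-)$ is applicable --- so that the loaded principal-formula pattern of $(\loaded{\diam})$ aligns with the ``$\lnot[a]\psi$'' clause of \autoref{d:isBasic} and its liberation counterpart $(L-)$.
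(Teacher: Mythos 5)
The paper states this as a \emph{Fact} and gives no proof of its own, so there is no authorial argument to compare against; I can only assess your proposal on its merits. Your case analysis by principal-formula shape is correct and is the natural argument: for each rule in $\{(\lnot), (\land), (\lnot\land), (\Box), (\diam), (\loaded{\diam})\}$ you check that its principal pattern is incompatible with the forms in \autoref{d:isBasic} (using the ``$\alpha$ non-atomic'' side conditions where relevant), and conversely you show by contraposition that any non-basic formula in $\Gamma$ triggers one of those six rules. The exhaustiveness is right: a PDL formula or loaded formula whose outermost shape avoids $\lnot\lnot$, $\land$, $\lnot(\cdot \land \cdot)$, and a non-atomic (possibly loaded) leading box must be one of $\bot$, $\lnot\bot$, $p$, $\lnot p$, $[a]\psi$, $\lnot[a]\psi$, or $\lnot\loaded{[a]}\xi$.

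Your final caveat is well taken and worth stressing: read literally, \autoref{d:isBasic} lists only unloaded shapes, so a sequent like $\{\lnot\loaded{[a]}\phi\}$ would not be basic yet admits no rule besides $(L-)$, and the fact would fail. The paper's actual usage (e.g.\ the side condition ``$\Delta$ basic'' on $(M)$ and $(L-)$ where $\Delta$ excludes the principal loaded formula, and the later phrase ``$\Delta_x$ is basic and the unique loaded formula in $\Delta_x$ is of the form $\lnot\loaded{[a][\delta_1]\cdots[\delta_n]}\phi$'') makes clear that loaded formulas with an atomic leading program are implicitly counted as basic. You identified exactly the reading needed for the statement to be true as written; with that convention in place, your proof is complete and correct.
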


\begin{lem}[\checklean{theorem}{localRuleTruth}]\label{l:localRuleTruth}
All local rules are locally sound and locally invertible.
That is, if any rule leads from $\Delta$ to the children $\Gamma_1,\ldots,\Gamma_n$ then for any pointed model $(\kmodel, w)$ we have:
  \[ \kmodel, w \Vdash \Delta \text{ if and only if there is an $i$ with } \kmodel, w \Vdash \Gamma_i \]
\end{lem}
\begin{proof}
We distinguish cases by rule.
Most cases are straightforward.
For the $(\Box)$ and the $(\diam)$ rules, use, respectively, \autoref{l:localBoxTruth} and \autoref{l:localDiaTruth}.
For the $(\loaded{\diam})$ rule, we also use \autoref{l:localDiaTruth} via \autoref{f:loadedSemantics}.
\end{proof}

Next, we introduce a measure on formulas and sequents that we use to show that all local PDL-tableaux are finite and also use in the soundness proof.

\begin{defi}[\lean{lmOfFormula}]\label{d:lmOfFormula}
    Let $m : \pdlforms \cup \loaded{\pdlforms} \to \mathbb{N}$ be the following measure:
    \begin{align*}
    m(\bot) &= 0
    \\
    m(p) &= 0
    \\
    m(\lnot p) &= 0
    \\
    m(\lnot \lnot \phi) &= 1 + m(\phi)
    \\
    m(\phi \land \psi) &= 1 + m(\phi) + m(\psi)
    \\
    m(\lnot(\phi \land \psi)) &= 1 + max(m(\lnot \phi), m(\lnot \psi))
    \\
    m([a]\phi) &= 0
    \\
    m(\lnot[a]\phi) &= 0
    \\
    m([\alpha]\phi) &= 1 + m(\phi) + \sum_{\tau \in \Test(\alpha)} m(\lnot \tau)
    \\
    m(\lnot[\alpha]\phi) &= 1 + \sum_{\tau \in \Test(\alpha)} m(\tau)
    \\
    m(\lnot\loaded{[\alpha_1]\dots[\alpha_n]}\phi) &=  m(\lnot[\alpha_1]\dots[\alpha_n]\phi)
    \end{align*}
where $\phi$ and $\psi$ are PDL formulas, $p$ is an atomic proposition, $a$ is an atomic program, $\alpha$ s a non-atomic program, and $\alpha_1, \ldots, \alpha_n$ are arbitrary programs.
We extend the domain of $m$ to sequents as follows:
    \[
    m(\Gamma) = \sum_{\psi \in \Gamma} m(\psi).
    \]
\end{defi}

\begin{lem}\label{l:propertiesOfm}
    The following properties hold:
    \begin{enumerate}[(a)]
        \item $m(\lnot\lnot\phi) > m(\phi)$
        \item $m(\phi_1 \land \phi_2) > m(\phi_1) + m(\phi_2)$
        \item $m(\lnot(\phi_1 \land \psi_2)) > m(\lnot \phi_i)$ for all $i \in \{1, 2\}$
        \item $m([\alpha]\phi) > m(\Gamma)$ for all $\Gamma \in \unfold_{\square}(\alpha, \phi)$
        \item $m(\lnot[\alpha]\phi) > m(\Gamma)$ for all $\Gamma \in \unfold_{\diam}(\alpha, \phi)$.
    \end{enumerate}
\end{lem}
\begin{proof}
The properties (a), (b), (c) are immediate by the definition of $m$.

We now show (d).
Let $\Gamma \in \unfold_{\square}(\alpha, \phi)$.
By \autoref{l:unfoldBoxContent}, we have that if $\psi \in \bigcup \unfold_{\square}(\alpha,\phi)$ then either $\psi = \phi$, or $\psi$ is of the form $\psi = \lnot\tau$ with $\tau \in \Test(\alpha)$, or $\psi$ is of the form $\psi = [a][\delta_{1}]\cdots[\delta_{n}]\phi$ with $a,\delta_{1},\ldots,\delta_{n} \in \Prog(\alpha)$ and $n \geq 0$.
From this it follows that:
\[ m(\bigcup \unfold_{\square}(\alpha, \phi)) \leq m(\phi) + \sum_{\tau \in \Test(\alpha)}(m(\lnot\tau)) \]
and thus:
\[ m(\Gamma)
  \ \leq \ m(\bigcup \unfold_{\square}(\alpha, \phi))
  \ < \ 1 + m(\phi) + \sum_{\tau \in \Test(\alpha)}(m(\lnot\tau))
  \ = \ m([\alpha]\phi),
\]
which shows (d). 

For (e), let $\Gamma \in \unfold_{\diam}(\alpha, \phi)$.
Then, by the definition of $\unfold_{\diam}$, there exists $(X, \ol{\delta}) \in H_{\alpha}$ such that $\Gamma = X \cup \{\lnot\Box(\ol{\delta},\phi)\}$.
By \autoref{f:terminationHelperDiamond}, we have that $X \subseteq \Test(\alpha)$ and that the leading program in $\lnot\Box(\ol{\delta},\phi)$ is atomic.
From this it follows that:
\[m(\Gamma) = m(X) + m(\lnot\Box(\ol{\delta},\phi)) = m(X) < 1 + \sum_{\tau \in \Test(\alpha)} m(\tau) = m(\lnot [\alpha]\phi).
\]
This shows (e) and thus completes the proof.
\end{proof}

\begin{lem}[\checklean{theorem}{LocalRuleDecreases}]\label{l:localMeasureDown}
Let $\tab = (V, \edge, r, \Lambda, L)$ be a tableau over the local rules which has no applications of either $(L+)$ or $(L-)$.
If $s, t \in V$ are such that $s \edge t$, then $m(\Lambda(s)) > m(\Lambda(t))$.
\end{lem}
\begin{proof}
    The key idea behind the proof is that for each rule, \autoref{l:propertiesOfm} provides exactly what is required to show that $m(\Lambda(s)) > m(\Lambda(t))$.
    
We only consider the case $L(s) = (\square)$; the other cases are analogous.
For this case, we have that $\Lambda(s) = \Sigma \cup \{[\alpha]\phi\}$ where $[\alpha]\phi \not \in \Sigma$.
We also have that
$\Lambda(t) = \Sigma \cup \Gamma$ where $\Gamma \in \unfold_{\square}(\alpha, \phi)$.
From this and \autoref{l:propertiesOfm}(d) it follows that:
\[ m(\Lambda(t)) \ \leq \ m(\Sigma) + m(\Gamma) \ < \ m(\Sigma) + m([\alpha]\phi) \ = \ m(\Lambda(s)) \qedhere \]
\end{proof}

\begin{lem}\label{l:localTabFinite}
  All local PDL-tableaux are finite.
\end{lem}
\begin{proof}
To reach a contradiction, assume there exists an infinite local PDL-tableau.
From \autoref{l:unfoldBoxContent}, \autoref{l:unfoldDiamondInFL}, \autoref{l:unfoldLoadedDiamondInFL} and inspection of the rules it follows that all branching is finite, that is, for every node $s$ in a local PDL-tableau, there are finitely many children of $s$.
As all branching is finite, by K\H{o}nig's Lemma there exists an infinite branch ${(s_i)}_{i < \omega}$.
    
Assume for contradiction that $(L+)$ is applied at some node $s_i$.
This means that the sequent $\lambda(s_{i+1})$ is basic and loaded, but then by \autoref{f:basicIffNoLocalRule} and the fact that $(L-)$ may not be applied directly after $(L+)$, no local rule would be applicable at node $s_{i+1}$.
Similarly, if the rule $(L+)$ were applied at some node $s_i$, then the sequent $\lambda(s_{i+1})$ is basic and free, and so the only applicable local rule at node $s_{i+1}$ would be $(L+)$, but we just saw that this is impossible.

It follows that $L(s_i) \neq (L+), (L-)$ for all $i \in \omega$.
Then, by \autoref{l:localMeasureDown}, we obtain an infinite descending chain $m(\lambda(s_0)) > m(\lambda(s_1)) > \dots$, which provides the desired contradiction.
\end{proof}

\section{PDL-Tableaux}\label{sec:PDLTableaux}

We obtain the system of PDL-tableaux by adding a modal rule to the rules of local tableaux, and adding some loop checking conditions that will ensure that all closed tableaux are finite.

\subsection{Modal Rules and PDL-Tableaux}

To define the modal rule, we first need to define the notion of projection.

\begin{defi}[\lean{projection}]\label{d:projection}
For a sequent $\Gamma$ and an atomic program $a$, we define the \emph{projection} of $\Gamma$ under $a$ as $\Gamma_a \isdef \{ \phi \mid [a]\phi \in \Gamma \}$.
\end{defi}

For example, if $\Gamma = \{[a]p, [b]q, [a^*]r, \lnot [b] r, r\}$ then $\Gamma_a= \{p\}$ and  $\Gamma_b=\{q\}$.

\begin{defi}\label{d:Modalrule}
The modal rule is:
\begin{center}
  \AxiomC{$\Delta , \lnot\loaded{[a]}\xi$}
  \LeftLabel{$(M)$}
  \RightLabel{\ \ $\Delta$ basic}
  \UnaryInfC{$\Delta_a , \lnot \xi$}
  \DisplayProof
\end{center}
where $\xi \in \pdlforms \cup \loaded{\pdlforms}$, i.e.\ $\xi$ is a possibly loaded formula.
\end{defi}

Note that the modal rule can only be applied to formulas starting with a loaded \emph{atomic} box, and that the context of the rule must be basic.

\begin{defi}[\lean{LoadedPathRepeat}]\label{d:repeat}
Let $\tab$ be a tableau over the local rules (\autoref{d:localTableau}) and the modal rule (\autoref{d:Modalrule}).
A \emph{repeat} is a node $t$ such that $\Lambda(t) = \Lambda(s)$, for some ancestor $s$ of $t$; in this case, the nearest such ancestor to $t$ is called its \emph{companion} and we denote it by $\compan{t}$.
A repeat $t$ is a \emph{\lpr} if the path from $\compan{t}$ to $t$ (including end points) consists only of loaded nodes.
\end{defi}

Because local tableaux are finite by \autoref{l:localTabFinite}, they also cannot have repeats and we get the following fact.

\begin{fact}\label{f:lprAreCritical}
Let $t$ be a \lpr.
Then on the path from $\compan{t}$ to $t$ the rule modal $(M)$ is applied at least once.
\end{fact}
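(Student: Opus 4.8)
The plan is to argue by contradiction: suppose $t$ is a \lpr\ whose companion-to-$t$ path contains no application of the modal rule $(M)$. Then every rule applied along this path is one of the local rules from \autoref{d:localTableau}. I would first observe that the rules $(L+)$ and $(L-)$ cannot both be absent from this path in a way that causes trouble, but more importantly that a path using only local rules is itself (a branch of) a local PDL-tableau. Concretely, restrict the tableau $\tab$ to the subtree hanging below $\compan{t}$ but cut off at the first modal rule application on each branch; since no $(M)$ occurs on the $\compan{t}$-to-$t$ path, this path survives entirely inside that restricted structure, which is a local PDL-tableau for $\Lambda(\compan{t})$.

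The key step is then to invoke \autoref{l:localTabFinite}, which tells us that all local PDL-tableaux are finite. A finite tree has no infinite branches and, crucially for us, contains no repeats: a repeat would require a node $t$ with $\Lambda(t) = \Lambda(s)$ for a proper ancestor $s$, but the termination argument underlying \autoref{l:localTabFinite} (via the measure $m$ of \autoref{d:lmOfFormula} together with the careful handling of $(L+)$ and $(L-)$ in \autoref{l:localMeasureDown} and \autoref{l:localTabFinite}) shows strictly decreasing behavior along branches that would be incompatible with returning to an identical sequent. This is exactly the content of the sentence preceding the fact in the excerpt (``local tableaux are finite \dots\ they also cannot have repeats''). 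Hence the $\compan{t}$-to-$t$ path, being a path in a local PDL-tableau from $\compan{t}$ to a strictly later node with the same label, yields the desired contradiction.

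I would phrase it succinctly: by \autoref{f:lprAreCritical}'s hypothesis $t$ is a repeat, so $\Lambda(\compan{t}) = \Lambda(t)$ with $\compan{t} \edgeT t$; if no $(M)$ rule were applied on the path from $\compan{t}$ to $t$, this path would lie within a local PDL-tableau, contradicting the fact that local PDL-tableaux, being finite by \autoref{l:localTabFinite}, have no repeats. Therefore $(M)$ is applied at least once on the path from $\compan{t}$ to $t$.

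The main obstacle, such as it is, is purely bookkeeping: making precise that ``the portion of $\tab$ between $\compan{t}$ and the first modal step'' genuinely is a local PDL-tableau in the sense of \autoref{d:localTableau} — i.e.\ that all side conditions (in particular the restriction that $(L-)$ not follow $(L+)$) are inherited from $\tab$, which they are since $\tab$ is built over exactly the local rules plus $(M)$. Once that identification is granted, the result is immediate from \autoref{l:localTabFinite}. No delicate calculation is needed; the proof is a one-paragraph reduction to the finiteness of local tableaux.
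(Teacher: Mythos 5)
Your approach matches the paper's: the Fact is preceded by the one-sentence justification that local tableaux are finite (by \autoref{l:localTabFinite}) and hence cannot have repeats, and you elaborate this into a contradiction argument by restricting $\tab$ below $\compan{t}$ to a local tableau. However, the claim that local tableaux have no repeats --- which both you and the paper's preamble lean on --- is not actually a consequence of finiteness, and is in fact false as stated: a local PDL-tableau rooted at a basic loaded sequent may apply $(L-)$ and then $(L+)$ and return to the same label, since the measure $m$ of \autoref{d:lmOfFormula} is \emph{preserved}, not decreased, by $(L+)$ and $(L-)$. Your phrase ``shows strictly decreasing behavior along branches that would be incompatible with returning to an identical sequent'' is therefore not quite right in general.

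What rescues the argument for a \lpr specifically is the hypothesis --- which you do not explicitly invoke --- that every node on the path from $\compan{t}$ to $t$ is loaded. This rules out $(L+)$ on the path (its side condition requires a free context, but the context is loaded) and rules out $(L-)$ (which would produce a free child, contradicting the loaded-path requirement). Hence only the strictly measure-decreasing rules of \autoref{l:localMeasureDown} can appear on the path, giving $m(\Lambda(\compan{t})) > m(\Lambda(t))$ along the proper path from $\compan{t}$ to $t$ and contradicting $\Lambda(\compan{t}) = \Lambda(t)$. You gesture at ``careful handling of $(L+)$ and $(L-)$'', but the decisive observation --- that the \lpr\ condition prevents $(L+)$ and $(L-)$ from appearing on the path at all --- is what should be made explicit rather than appealing to the general (and false) claim that local tableaux are repeat-free.
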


Since repeats are nodes, we can speak of free repeats and \lprs (cf.~\autoref{d:NegLoadFormula}).
Clearly each type is distinct.
In our system, we enforce that free repeats and \lprs are leaves.
\begin{defi}[\lean{Tableau}]\label{d:Tableau}
A \emph{PDL-tableau} is a tableau over the local rules and the modal rule (together also called \emph{PDL rules}), which adheres to the following two stopping conditions:
\begin{enumerate}[label=(T\arabic*)]
  \item[(\lprCond)] every \lpr is a leaf;
  \item[(\freeCond)] every free repeat is a leaf.
\end{enumerate}
\end{defi}
One other type of repeat may be encountered: a repetition of loaded sequents such that some sequent on the path between is not loaded: there is no stopping condition for such repeats so the construction of the PDL-tableau proceeds in such cases.
\begin{defi}[\lean{closed}]\label{d:closedNode}
A sequent $\Gamma$ is \emph{closed} if we have $\bot \in \Gamma$, or $\{\phi,\lnot\phi\}\subseteq \Gamma$ for some $\phi$.
A node $t$ in a tableau is \emph{closed} if $\Lambda(t)$ is closed.
\end{defi}

\begin{defi}[\lean{inconsistent}, \lean{consistent}]\label{def-tab-closed}
A PDL-tableau is \emph{closed} if each of its leaves is either closed or a \lpr.
A PDL-tableau is \emph{open} if it is not closed.
A sequent $\Gamma$ is \emph{inconsistent} if there is a closed PDL-tableau for $\Gamma$.
Otherwise $\Gamma$ is said to be \emph{consistent}.
\end{defi}

Now that we have defined open and closed tableau, we can explain the role of the loading mechanism in our system.
A loading can be seen as a promise to show that the loaded formula, in the context of the sequent, is not satisfiable.
That is, we want to show that the loaded program sequence in this formula cannot be executed to reach a state that satisfies the formula after the loaded boxes.
This promise can be fulfilled by reaching only closed leaves or \lprs, but it will remain unfulfilled if we reach a free repeat.
Using terms from the literature, we also think of \lprs as \emph{successful repeats} and free repeats as \emph{unsuccessful repeats}, in the sense of whether we succeeded to make a closed tableau.
We provide two examples.

\begin{exa}\label{ex:freeRepeat}
  To illustrate why we treat free repeats as open leaves,
  consider the PDL formula $[a^\ast]\lnot[a]p \to p$ which is \emph{not} valid.
  We show a tableau for $\{ [a^\ast]\lnot[a]p \+ \lnot p \}$ in \autoref{f:freeRepeat}.
  The free repeat is visualized by the dashed arrow labeled with $\infty$.

  From this tableau we can build a Kripke model that satisfies the root by making one state $w$ where $\lnot p$ holds, and adding one $a$-edge from that state to itself --- we do this ad hoc here, a systematic approach will be given in the completeness proof below in \autoref{sec:Completeness}.
  We stress that the companion and the repeat are satisfied by the same state.
  Moreover, on the way from the companion to the repeat in the tableau we have executed \emph{all} loaded programs (in this case: just one).
  This means the self-loop suffices to satisfy the whole sequence, and no other state is needed.

  \begin{figure}[H]
    \begin{center}
      \begin{tikzpicture}[node distance=4em, >=latex]
        \node (0) {$[a^\ast]\lnot[a]p \+ \lnot p$};
        \node [below of=0] (1) {$\lnot[a]p \+ [a][a^\ast]\lnot[a]p \+ \lnot p$}; \arr{0}{$(\Box)$}{1}
        \node [below of=1] (2) {$\lnot\loaded{[a]}p \+ [a][a^\ast]\lnot[a]p \+ \lnot p$}; \arr{1}{$(L+)$}{2}
        \node [below of=2] (3) {$[a^\ast]\lnot[a]p \+ \lnot p$}; \arr{2}{$(M)$}{3}
        \draw (3.east) edge [dashed, gray, ->, thick, bend right=60, looseness=1.5] node [right] {$\infty$} (0.east);
      \end{tikzpicture}
      \hspace{2cm}
      \begin{tikzpicture}[>=latex, baseline=(hidden.base)]
        \node [draw, circle, minimum size=1cm] (0) {$w$};
        \draw (0.south) edge [->, thick, out=180, in=100, bend right=100, looseness=3] node [right] {$a$} (0.east);
        \node [below of=0, node distance=3cm] (hidden) {};
      \end{tikzpicture}
    \end{center}
    \caption{An open tableau ending in a free repeat and a model satisfying its root.}\label{f:freeRepeat}
  \end{figure}
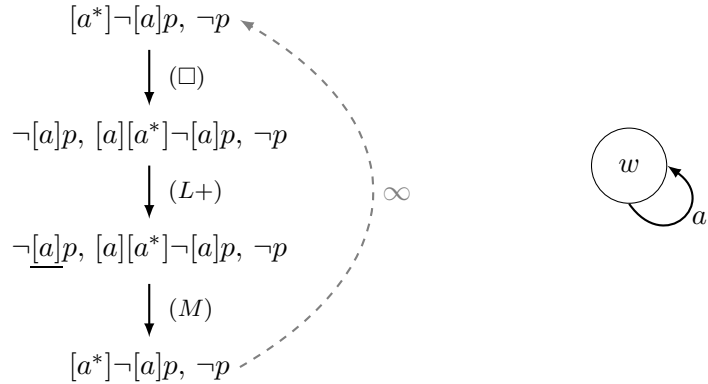
  We also note that our system would still be sound without any special condition about free repeats, but to ensure termination of proof search we also require free repeats to be leaves per condition (\freeCond).
\end{exa}

\begin{exa}\label{ex:lprEx}
  In contrast to free repeats, a loaded-path repeat by itself is \emph{not enough} to satisfy the repeated sequent.
  Consider $[a][a^\ast]p \to [a][a^\ast]q$ which is \emph{not} a validity.
  We show a tableau for $\{ [a][a^\ast]p, \lnot [a][a^\ast]q \}$ in \autoref{f:lprEx}.
  Note that this tableau is only open because the left branch ends in an open node (whereas a tableau for $\{ [a][a^\ast]p, \lnot [a][a^\ast]p \}$, which is not satisfiable, could have the same shape but would be closed).
  The \lpr is visualized by the dashed arrow labeled with $\comp$.

  Next to the tableau we again show a model that satisfies its root.
  It consists of two states, $w$ which satisfies the first two nodes in the tableau and $v$ satisfying all other nodes.
  Also here the companion and the repeat are satisfied by the same state.
  Note that the two applications of the $(M)$ rule correspond to the two edges in the model.
  A crucial difference to \autoref{ex:freeRepeat} is that here we need to make not only $p$ true but also $q$ false.
  This information is coming from the left branch, not from the loaded path.
  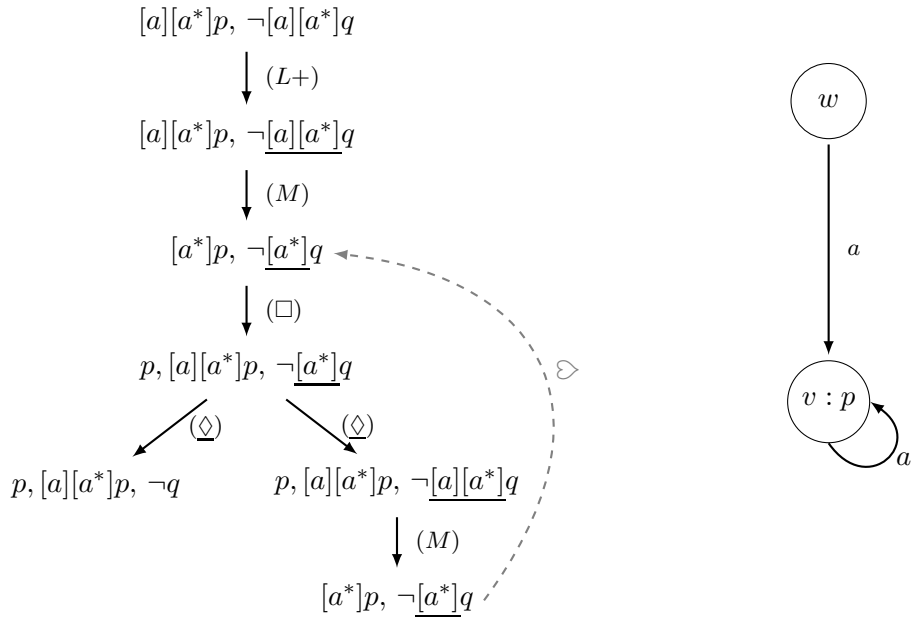
\begin{figure}[H]
    \begin{center}
      \begin{tikzpicture}[node distance=4em, >=latex]
        \node (0) {$[a][a^\ast]p \+ \lnot [a][a^\ast]q$};
        \node [below of=0] (1) {$[a][a^\ast]p \+ \lnot\loaded{[a][a^\ast]}q$}; \arr{0}{$(L+)$}{1}
        \node [below of=1] (2) {$[a^\ast]p \+ \lnot\loaded{[a^\ast]}q$}; \arr{1}{$(M)$}{2}
        \node [below of=2] (3) {$p, [a][a^\ast]p \+ \lnot\loaded{[a^\ast]}q$}; \arr{2}{$(\Box)$}{3}
        \node [below of=3] (4) {};
        \node [left  of=4, node distance=2cm] (4l) {$p, [a][a^\ast]p \+ \lnot q$}; \arr{3}{$(\loaded{\diam})$}{4l}
        \node [right of=4, node distance=2cm] (4r) {$p, [a][a^\ast]p \+ \lnot\loaded{[a][a^\ast]}q$}; \arr{3}{$(\loaded{\diam})$}{4r}
        \node [below of=4r] (5r) {$[a^\ast]p \+ \lnot\loaded{[a^\ast]}q$}; \arr{4r}{$(M)$}{5r}
        \draw (5r.east) edge [dashed, gray, ->, thick, bend right=60, looseness=1.5] node [right] {$\comp$} (2.east);
      \end{tikzpicture}
      \hspace{2cm}
      \begin{tikzpicture}[>=latex, baseline=(hidden.base)]
        \node [draw, circle, minimum size=1cm] (0) {$w$};
        \node [draw, circle, minimum size=1cm, below of=0, node distance=4cm] (1) {$v : p$}; \arr{0}{$a$}{1}
        \draw (1.south) edge [->, thick, out=180, in=100, bend right=100, looseness=3] node [right] {$a$} (1.east);
        \node [below of=1, node distance=3cm] (hidden) {};
      \end{tikzpicture}
    \end{center}
    \caption{An open tableau with a loaded-path repeat and a model satisfying its root.}\label{f:lprEx}
  \end{figure}
\end{exa}

\autoref{ex:lprEx} shows that if a \lpr is found, in order to satisfy the loaded formula at some state, we must be able to execute some of the loaded programs to reach a state that satisfies the same sequent.
Moreover, because the \lpr is still loaded, not the whole chain of loaded programs has been executed yet, and to actually satisfy the loaded formula at our starting state we need to continue.
This means the \lpr alone does not suffice to build the model.
To actually satisfy the repeated sequent at the initial state we also need to satisfy an open branch that is reachable from the companion and not ending in a \lpr.
Later we will call these branches ``exits''.

Of course these two examples are just that, examples.
But the difference between free repeats and loaded-path repeats will be crucial for
the soundness proof, where in \autoref{l:loadedDiamondPaths} we deal with \lprs,
and for the completeness proof, where we use free repeats to build a model.

The first main result on PDL-tableaux that we prove concerns their finiteness.

\begin{lem}\label{l:pdlTabFinite}
  Every PDL-tableau is finite.
\end{lem}

\begin{proof}
Suppose otherwise.
In that case there is a PDL-tableau~$\tab$ for $\Gamma$ that is infinite.
Since the underlying rooted tree is clearly finitely branching, by K\H{o}nig's lemma it must have an infinite branch.
For each node~$s$ on the branch, $\Lambda(s)$ is a sequent.
Each $\Lambda(s)$ is a subset of the Fischer-Ladner closure of $\Gamma$, and either free or loaded. 
Consequently, the number of labels given by either free sequents or loaded sequents is bounded.  
Since every repeat of a free sequent is a leaf in the PDL-tableau (\autoref{d:Tableau}), the infinite branch does not repeat any free sequents.
This means that from some node onwards, every descendant on the branch is loaded, and hence there must be a \lpr.
The latter is a leaf (\autoref{d:Tableau}) so the branch is finite, reaching a contradiction.
\end{proof}

The following theorem states the soundness and completeness of our tableau system.
Soundness is the statement that satisfiability implies consistency (or, equivalently, that no sequent with a closed tableau is satisfiable); completeness is the converse statement that for every unsatisfiable sequent we may find a closed tableau.

\begin{thm}[\checklean{theorem}{consIffSat}]\label{t:consIffSat}
Let $\Phi \subseteq \pdlforms$ be a sequent.
Then $\Phi$ is consistent if and only if it is satisfiable.
\end{thm}

\begin{proof}
The two directions of the statement will be established in the next two sections: soundness (the direction from right to left) as \autoref{t:soundness} and completeness (the direction from left to right) as \autoref{t:completeness}.
\end{proof}

\subsection{Properties of PDL-Tableaux}\label{subsec:PropertiesOfPDLTableuax}

We finish this section by making some useful observations concerning the shape of PDL-tableaux.

\begin{defi}\label{d:loadedPath}
Let $s_1 \edge \ldots \edge s_n$ be a path of a PDL-tableau.
If all $s_i$ are loaded, then we say that the path is \emph{loaded}, otherwise it is \emph{free}.
\end{defi}

We note that the modal rule $(M)$ is the only additional rule that can be used in PDL-tableaux but not in local tableaux.
Therefore each PDL-tableau $\tab$ can be partitioned into maximal local sub-tableaux that are glued together by applications of the modal rule:
to each leaf of a local sub-tableau that is not already a leaf of $\tab$ the rule $(M)$ is applied to obtain the root of another local sub-tableau.
We illustrate this with a simple example.

\begin{exa}\label{ex:validity}
  We prove the validity
  $([a;b](p \land q) \land [c]\bot) \to ([a;b]q \land [c]r)$
  in our system by constructing a closed tableau for the set
  $\{ [a;b](p \land q) \land [c]\bot, \lnot([a;b]q \land [c]r) \}$
  in \autoref{f:fourLocTab}.
It consists of four local tableaux, connected by three applications of the modal rule, marked with dotted lines.
\end{exa}

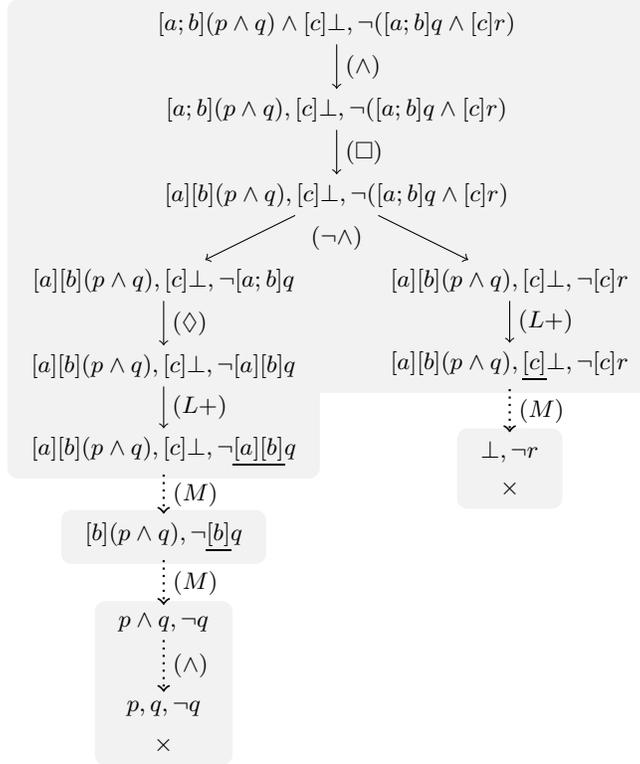
\begin{figure}[ht]
  \centering
  \footnotesize
  \begin{tikzpicture}[node distance=3.5em,->]
    \node (0) {
      $ [a;b](p \land q) \land [c]\bot
      , \lnot([a;b]q \land [c]r)$  };
    \node (1) [below of=0] {
      $ [a;b](p \land q)
      , [c]\bot
      , \lnot([a;b]q \land [c]r)$  };
    \draw (0) -- node [right] {$(\land)$} (1);
    \node (2) [below of=1] {
      $ [a][b](p \land q)
      , [c]\bot
      , \lnot([a;b]q \land [c]r)$  };
    \draw (1) -- node [right] {$(\Box)$} (2);
    \node (2-3) [below of=2, node distance = 1.75em] {$(\lnot\land)$};
    \node (3) [below of=2] {};
    \node (3a) [left of=3, node distance=2.3cm] {
      $ [a][b](p \land q)
      , [c]\bot
      , \lnot[a;b]q$ };
    \draw (2) -- (3a);
    \node (4a) [below of=3a] {
      $ [a][b](p \land q)
      , [c]\bot
      , \lnot[a][b]q$ };
    \draw (3a) -- node [right] {$(\diam)$} (4a);
    \node (5a) [below of=4a] {
      $ [a][b](p \land q)
      , [c]\bot
      , \lnot\loaded{[a][b]}q$ };
    \draw (4a) -- node [right] {$(L+)$} (5a);
    \node (6a) [below of=5a] {
      $ [b](p \land q)
      , \lnot\loaded{[b]}q $ };
    \draw [thick,dotted] (5a) -- node [right] {$(M)$} (6a);
    \node (7a) [below of=6a] {
      $ p \land q
      , \lnot q $ };
    \draw [thick,dotted] (6a) -- node [right] {$(M)$} (7a);
    \node (8a) [below of=7a] {
      $ p
      , q
      , \lnot q $ };
    \draw [thick,dotted] (7a) -- node [right] {$(\land)$} (8a);
    \node (8ax) [below of=8a, node distance=1.5em] { $\times$ };
    \node (3b) [right of=3, node distance=2.3cm] {
      $ [a][b](p \land q)
      , [c]\bot
      , \lnot[c]r$  };
    \draw (2) -- (3b);
    \node (4b) [below of=3b] {
      $ [a][b](p \land q)
      , \loaded{[c]}\bot
      , \lnot[c]r$  };
    \draw (3b) -- node [right] {$(L+)$} (4b);
    \node (5b) [below of=4b] {
      $ \bot
      , \lnot r$  };
    \draw [thick,dotted] (4b) -- node [right] {$(M)$} (5b);
    \node (5bx) [below of=5b, node distance=1.5em] { $\times$ };
    \begin{scope}[on background layer,
      highlight/.style = {rounded corners, fill = gray!10, inner ysep=0.5mm, inner xsep=2mm},
      ]
      \node [highlight, fit=(0) (3a) (4a) (4b) (3b)] {};
      \node [highlight, fit=(3a) (5a)] {};
      \node [highlight, fit=(6a) (6a)] {};
      \node [highlight, fit=(7a) (8ax)] {};
      \node [highlight, fit=(5b) (5bx)] {};
    \end{scope}
  \end{tikzpicture}
  \caption{A PDL-tableau consisting of four local tableau.}\label{f:fourLocTab}
\end{figure}

Recall that we write $s \edge t$ when $s$ is a parent of $t$.
We now define another relation $\cEdge$, which extends $\edge$ by adding 
\emph{back edges}, that is, connections \emph{from \lprs to their companions}.
Alternatively, we could say that we now view our tableaux as \emph{cyclic} proofs~\cite{Docherty2019:NonWellfoundedPDL,BGP2012:GenCyclicProver}.

\begin{defi}[\lean{companion}]\label{d:companion}\label{d:cEdge}
Given a PDL-tableau $\tab$, we let $\comp$ be the \emph{companion relation} such
that 
\[
s \comp t\ \ :\iff \ \  \text{$s$ is a leaf that is a \lpr with companion $t$},
\]

and we define the \emph{edge relation} $\cEdge$ by putting:
\[ s \cEdge t \ \ :\iff\ \ s \edge t \ \text{ or } \ s \comp t,
\]
that is, $\cEdge = \edge \cup \comp$.
A sequence $t_1 \cEdge \ldots \cEdge t_n$ is called a \emph{$\cEdge$-path};
it is a \emph{proper path} if $n>1$.
Finally, we let $\cEdgeT$ and $\cEdgeRT$ denote, respectively, the transitive 
and the reflexive transitive closure of $\cEdge$.
\end{defi}

Clearly we have $s \cEdgeRT t$ iff there is a path from $s$ to $t$, and
accordingly we think of $\cEdgeRT$ as a reachability relation in the underlying
graph  of $\tab$ (that is, the version where the edge relation $\cEdge$ 
includes back edges).
On the basis of this we define two further relations.

\begin{defi}\label{d:cEquiv}
Let $\tab$ be a tableau.
We define the relations $\equiv_c$ and $\simpler{}{}$ by putting, for any pair of nodes $s, t \in \tab$:
\[ \begin{array}{lcl}
   s \cEquiv t    & :\iff & s \cEdgeRT t \text{ and } t \cEdgeRT s 
\\ \simpler{t}{s} & :\iff & s \cEdgeRT t \text{ but not } t \cEdgeRT s.
\end{array}\]
\end{defi}

Informally, we have $s \cEquiv t$ if there are $\cEdge$-paths from $s$ to $t$
and from $t$ to $s$, and we have $\simpler{t}{s}$ if there is an 
$\cEdge$-path from $s$ to $t$, but not vice versa.
The following lemma states some immediate observations.

\begin{lem}[\checklean{theorem}{eProp}]\label{l:eProp}
Let $\tab$ be a closed tableau.
Then the following hold:

\begin{enumerate}
\item[(a)]
  the relation $\cEquiv$ is an equivalence relation, and its equivalence classes
are maximal strongly connected components of $\cEdge$; 
\item[(b)]
  the relation $\simpler{}{}$ is a strict partial order which is both 
well-founded and conversely well-founded.
\end{enumerate}
\end{lem}

Based on \autoref{l:eProp}(b), we may use the converse of $\simpler{}{}$
as a basis for well-founded induction.
That is, in order to prove that every node of $\tab$ satisfies some property $P$,
it suffices to show that $P$ holds of any node $t$,
given that $P$ already holds of all $s$ such that $\simpler{s}{t}$.

Based on \autoref{l:eProp}(a) we introduce some further terminology.

\begin{defi}\label{d:cluster}
Let $\tab$ be a closed tableau.
The equivalence classes of $\cEquiv$ are called \emph{clusters}.
Non-singleton clusters will be called \emph{proper}.
\end{defi}

\begin{rem}
To explain the terminology `proper' we first note that in every non-singleton 
cluster any pair of nodes is connected by a proper $\cEdge$-path.
However, this is not the case for singleton clusters.
The key observation here is that in any tableau $\tab$, the relation $\cEdge$
is irreflexive --- we leave it for the reader to verify this.
As a consequence, we can only have $s \cEdge^{+} s$ if there is a 
\emph{different} node $t$ such that $s \cEdge^{+} t$ and $t \cEdge^{+} s$.
This means that a singleton $\{ s \}$ can only be a cluster of $\cEquiv$ if 
there is no proper path from $s$ to itself.
In summary, a cluster $C$ has more than one element iff any pair of nodes
in $C$ is connected by a proper $\cEdge$-path.
\end{rem}

\begin{exa}
We prove the validity $[a^\ast]q \to [a][{(a \cup p?)}^\ast]q$ in our system, by
constructing a closed tableau for the set 
$\{ [a^\ast]q, \ \lnot[a][{(a \cup p?)}^\ast]q \}$, cf. \autoref{f:ex1}.
The solid arrows here show the $\edge$ relation.

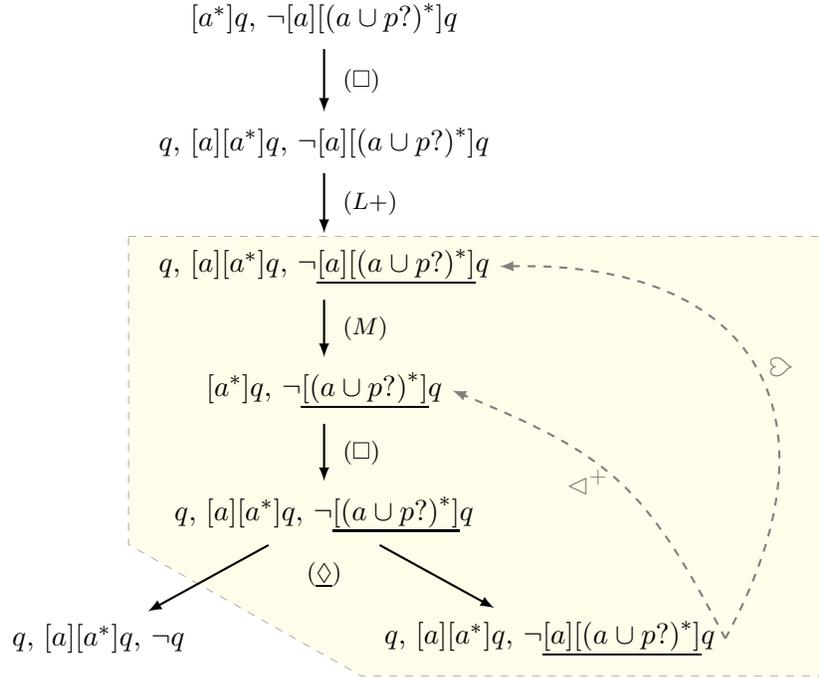
\begin{figure}[htb]
\begin{center}
  \begin{tikzpicture}[node distance=16.5mm, >=latex]
    \draw[black!30,dashed,fill=yellow!10] (-2.6,-2.9) -- (6.7,-2.9) -- (6.7, -8.75) -- (0.5,-8.75) -- (-2.6,-7) -- cycle;
    \node (-2) {$[a^\ast]q  \+ \lnot[a][{(a \cup p?)}^\ast]q$};
    \node [below of=-2] (-1) {$q \+ [a][a^\ast]q  \+ \lnot[a][{(a \cup p?)}^\ast]q$}; \arr{-2}{$(\Box)$}{-1}
    \node [below of=-1] (0) {$q \+  [a][a^\ast]q  \+ \lnot\loaded{[a][{(a \cup p?)}^\ast]}q$}; \arr{-1}{$(L+)$}{0}
    \node [below of=0] (1) {$[a^\ast]q  \+  \lnot\loaded{[{(a \cup p?)}^\ast]}q$}; \arr{0}{$(M)$}{1}
    \node [below of=1] (2) {$q \+ [a][a^\ast]q  \+  \lnot\loaded{[{(a \cup p?)}^\ast]}q$}; \arr{1}{$(\Box)$}{2}
    \node [below of=2, node distance=8mm] (2-3) {{\footnotesize $(\loaded{\diam})$}};
    \node [below of=2] (3) {};
    \node [left of=3, node distance=30mm] (3a) {$ q  \+  [a][a^\ast]q  \+  \lnot q$}; \arr{2}{}{3a}
    \node [right of=3, node distance=30mm] (3b) {$   q  \+  [a][a^\ast]q  \+  \lnot\loaded{[a][{(a \cup p?)}^\ast]}q$};  \arr{2}{}{3b}
    \draw (3b.east) edge [dashed, gray, ->, thick, bend right=60, looseness=1.5] node [right] {$\comp$} (0.east);
    \draw (3b.east) edge [dashed, gray, ->, thick, bend right=20, looseness=1.2] node [left] {$\cEdgeT$} (1.east);
  \end{tikzpicture}
\end{center}
\caption{A closed tableau with a highlighted cluster.}\label{f:ex1}
\end{figure}

This tableau has two leaves.
The left leaf labeled with $\{ q \+ [a][a^\ast]q \+ \lnot q \}$ is closed and free. Recall that all free leaves of a tableau must be closed for the tableau to be closed.
The right leaf labeled with $\{ q \+ [a][a^\ast]q \+ [a^\ast] \+ \lnot\loaded{[{(a \cup p?)}^\ast]}q \}$ is not closed, but it is a \lpr,
since all nodes on the $\edge$-path from the companion to the \lpr are loaded.
The right dashed arrow shows the $\comp$ relation from the \lpr to its companion, a node with the same label.
The other dashed arrow is an example of the $\cEdgeT$ relation (\autoref{d:cEdge}) given by first making a $\comp$- and then a $\edge$-step.

The dashed zone with yellow background indicates the \emph{proper cluster} to which the \lpr and companion belong.
The other nodes of the tableau each form a \emph{singleton cluster}.
We stress that clusters are \emph{not} the same as local tableau (a highlighted in \autoref{f:fourLocTab}.)

Lastly, if we call the left leaf $s$ then for all nodes $t$ in the dashed cluster we have $\simpler{s}{t}$.
On the other hand, if $u$ is the root of the tableau then for all nodes $t$ in the dashed cluster, we have $\simpler{t}{u}$.
\end{exa}

In \autoref{l:ePropB} below we gather some observations on the relations that we just introduced.
In particular, these will be needed in the soundness proof we give in the next section.

\begin{lem}[\checklean{theorem}{ePropB}]\label{l:ePropB}
  Let $\tab$ be some tableau.
  For all $s, u, t \in T$ we have:
  \begin{enumerate}[(a)]
  \item if $s \edge t$ then $\simpler{t}{s}$ or $t \cEquiv s$;
  \item if $s \comp t$ then $t \cEquiv s$;
  \item if $s$ is free and $s \edgeT t$, then $\simpler{t}{s}$;
  \item if $t$ is free and $s \edgeT t$, then $\simpler{t}{s}$;
  \item if $s$ is loaded, $t$ is free and $s \edge t$, then $\simpler{t}{s}$;
  \item if $\simpler{u}{s}$ and $\simpler{t}{u}$ then $\simpler{t}{s}$;
  \item if $t \cEdgeT s$ and $t \not\cEquiv s$ then $\simpler{s}{t}$;
  \item if $\simpler{t}{s}$ then $s \not\cEquiv t$;
  \item if $\simpler{u}{s}$ and $s \cEquiv t$ then $\simpler{u}{t}$.
  \end{enumerate}
\end{lem}
\begin{proof}
    We only show (c), since the proof for (d) is similar to that of (c), and all other items are straightforward from the definitions.
    
    Proof of \textbf{(c)}: 
        First we show the following Claim.

        \emph{Claim.}
        If $t$ is free, then for all loaded path repeats $l$ such that $t < l$, we have that $t < c(l)$.\\
        \emph{Proof of Claim:} Assume that $c(l) \leq t$.
        Then, since there exists a unique $\edge$-path from $c(l)$ to $l$, we have that the path from $c(l)$ to $l$ contains a free node.
        This is not possible since $l$ is a loaded path repeat.
        This shows the claim.
        
        Since $s < t$, it suffices to show that $s \not\cEquiv t$.
        Assume towards a contradiction that $s \cEquiv t$.
        It then follows that there exists a $\cEdge$-path of the form $t = t_0 \cEdge ... \cEdge t_m = s$.
        We will show that this implies that $t < s$ which contradicts $s < t$.
        Let $n$ be the number of $\comp$-links in the $\cEdge$-path  $t_0, ..., t_m$
        If $n=0$ then $t < s$ immediately.
        If $n > 1$,  since $s < t$ there is a smallest $j$ such that $t_j \comp t_{j+1}$.
        By the claim, $t < t_{j+1}$ and hence there is a $\cEdge$-path from $t$ to $s$ with $n-1$ $\comp$-links.
        Repetition of this argument gives us a $\cEdge$-path from $t$ to $s$ with zero $\comp$-links, hence $t < s$.
        \qedhere
\end{proof}

\section{Soundness}\label{sec:Soundness}

In this section, we will prove the soundness of the PDL-tableau proof system, stating that every satisfiable sequent is consistent.
In fact, contrapositively we will prove the following statement.

\begin{thm}[\checklean{theorem}{soundness}]\label{t:soundness}
Let $\tab$ be a closed PDL-tableau.
Then for any node $t$ in $\tab$, the sequent $\Lambda(t)$ is unsatisfiable.
\end{thm}

We will prove \autoref{t:soundness} by induction on the well-founded strict partial order
$\simpler{}{}$ from \autoref{d:cEdge} above.

The final preparations for the soundness proof are the following two technical lemmas.

\begin{lem}\label{l:loadedSucc}
Assume that $t$ is a node in some closed tableau $\tab$ and that $v$ is a state of a Kripke model $\kmodel$ such that $\kmodel, v \Vdash \Lambda(t)$.
Suppose $t$ is loaded.

Then there is a $\cEdge$-path $\pi$ of satisfiable nodes from $t$ to a node $s$ such that the loaded formula is principal in $s$ and $\kmodel, v \Vdash \Lambda(r)$, for any node $r$ on $\pi$.
\end{lem}
\begin{proof}
First, note that every loaded node has children with respect to $\cEdge$, obtained by going to the companion or by applying a rule either to the loaded formula or other formulas.

If the loaded formula is not the principal formula at $t$, then the rule applied at $t$ must be local and leave the loaded formula untouched.
Moreover, by \autoref{l:localRuleTruth} the label set of at least one of the children of the node is also true in $v$.
Now consider any $\cEdge$-path starting from $t$ in which no rule is ever applied to the loaded formula, and the state $v$ satisfies the label set of every node on the path.
This means that only local rules may be applied on this path, and so by \autoref{l:localMeasureDown} it must be finite; furthermore, every label set on such a path contains the original loaded formula.

Hence if we consider a \emph{maximal} path of this kind, the only rule that is applicable at its final node must have the loaded formulas as its principal formula.
Thus any such maximal path meets the conditions of the lemma.
\end{proof}

\begin{lem}[\checklean{theorem}{loadedDiamondPaths}]\label{l:loadedDiamondPaths}
Assume that $t$ is a node in some closed tableau $\tab$ and that $v$ is a state of a Kripke model $\kmodel$ such that $\kmodel, v \Vdash \Lambda(t)$.
Furthermore, let $\xi$ be a (possibly loaded) formula such that $\lnot\underline{[\alpha]}\xi \in \Lambda(t)$, and suppose that $w$ in $\kmodel$ is such that $v \rel{\alpha} w$ and  $\kmodel, w \Vdash \lnot\xi$.

Then there is a $\cEdge$-path of satisfiable nodes from $t$ to a node $s$ such that either $s \not\cEquiv t$ or every node on $\pi$ belongs to the same cluster as $t$, and $s$ satisfies the following three conditions:
\begin{enumerate}[(a)]
\item $\lnot\xi \in \Lambda(s)$;
\item $\kmodel, w \Vdash \Lambda(s)$;
\item $\Lambda(s) \setminus \{ \neg\xi \}$ is free.
\end{enumerate}
\end{lem}

Note that ``$\Lambda(s) \setminus \{ \neg\xi \}$ is free'' just means that there can be no loaded formula in $\Lambda(s)$ other than $\neg\xi$ (which may itself be loaded or not).

\begin{proof}
We will prove the Lemma by induction on the shallow program structure of $\alpha$.
Before distinguishing the different cases of $\alpha$, we note that by \autoref{l:loadedSucc} we may without loss of generality assume that either the modal rule or the loaded diamond rule is applied to $\lnot\underline{[\alpha]}\xi$ at node $t$.

To see this, assume that $\lnot\underline{[\alpha]}\xi$ is not principal at $t$.
Then by \autoref{l:loadedSucc} there is a $\cEdge$-path $\rho$ of satisfiable nodes from $t$ to a node $t'$ such that the loaded formula is principal at $t'$ and $\kmodel, v \Vdash \Lambda(r)$, for any node $r$ on $\rho$.
If some node $r$ on this path falls out of the cluster of $t$, we may put $s \isdef r$ and we are done.
Hence, we assume otherwise, i.e., we have $t \cEquiv t'$.
Now if the rule applied at $t'$ is the unloading rule $(L-)$, then we are done as well.
For, in this case $t'$ has a unique descendant, say, $s$, and this descendant is free.
By \autoref{l:ePropB}(e) this implies $\simpler{s}{t'}$, and so by \autoref{l:ePropB}(f) we find $\simpler{s}{t}$ which means in particular that $s \not\cEquiv t$.
Then this $s$ meets the requirements of the lemma.
Hence we may assume that at $t'$ some rule other than $(L-)$ is applied to the loaded formula $\lnot\loaded{[\alpha]}\xi$.
Concretely, this rule must either be $(M)$ (if $\alpha$ is atomic) or $(\loaded{\diam})$ (if it is not).

We now turn to the inductive proof of the Lemma.

\begin{description}
\item[Base case]
  In the base case of the proof we consider the following options.
  In both cases we assume that $\lnot\underline{[\alpha]}\xi$ is principal at $t$.
  \begin{description}
  \item[\it Case $\alpha=a$]
    In this case the rule applied at $t$ is the modal rule $(M)$; let $s$ be the unique child of $t$, then $s$ is labeled with the sequent
    ${\big(\Lambda(t)\big)}_{a}, \neg\xi$.
    Furthermore, we have $v \rel{a} w$.
    Since $\Lambda(t)$ holds at $v$, it is then obvious that $s$
    satisfies the required properties (a), (b), and (c).
  \item[\it Case $\alpha={\tau?}$]
    In this case the rule applied at $t$ is $(\loaded{\diam})$.
    Moreover, we have $w = v$ and $v \Vdash \tau,\neg\xi$, so we may take the
    child $s$ of $t$ that is labeled with the sequent $\Lambda(t), \tau, \neg\xi$.
  \end{description}

\item[Induction step]
In the induction step of the proof we prove the statement of the lemma for an arbitrary complex program $\alpha$.
As argued above, without loss of generality we may assume that the rule applied
at $t$ is $(\loaded{\diam})$, with $\lnot\underline{[\alpha]}\xi$ as the
principal formula.

As our induction hypothesis, we may suppose that the lemma holds for all programs
simpler than $\alpha$, but in fact, it will be convenient to use the following
extended version of this statement that applies to \emph{lists} of such programs.

\emph{Claim (Extended Induction Hypothesis)}:
Let $\ol{\delta}$ be a sequence of programs that are all simpler than $\alpha$.
Let $\psi$ be some (possibly loaded) formula, and assume that $r$ in $\tab$ and
$x$ in $\kmodel$ are such that $\kmodel, x \Vdash \Lambda(r)$, $\lnot\underline{[\ol{\delta}]}\psi \in
\Lambda(r)$, $(x,y) \in R_{\ol{\delta}}$ and $\kmodel, y \Vdash \neg\psi$.
\\
Then there is a $\cEdge$-path of satisfiable nodes from $r$ to a node $s$ such that either $s \not\cEquiv r$ or every node on $\pi$ belongs to the same cluster as $t$, and $s$ satisfies the following three conditions:
(a) $\lnot\psi \in \Lambda(s)$;
(b) $\kmodel, y \Vdash \Lambda(s)$;
(c) $\Lambda(s) \setminus \{ \neg\psi \}$ is free.

One may \emph{prove} this Claim by a straightforward induction on the length of the sequence $\ol{\delta}$.
We leave the details for the reader, and turn to the proof of the induction step of the main proof.
Here we make a case distinction as to the nature of $\alpha$.

\begin{description}
\item[Case $\alpha=\beta\cup\gamma$ or $\alpha=\beta;\gamma$]
By \autoref{l:existsDiamondH} there is a pair $(F,\ol{\delta}) \in H_{\alpha}$ such
that $\kmodel, v \Vdash F$ and $(v,w) \in R_{\ol{\delta}}$.
Since the rule applied at $t$ is $(\loaded{\diam})$, $t$ has a child $t'$ such
that $F \cup \{ \lnot\loaded{\Box}(\ol{\delta},\xi) \} \subseteq \Lambda(t')$,
and $\kmodel, v \Vdash  \Lambda(t')$.

Since all programs in $\ol{\delta}$ are proper subprograms of $\alpha$,
by the Extended Induction Hypothesis we obtain a $\cEdge$-path $\rho$ in
$\tab$ from $t'$ to some node $s$ such that
(a) $\lnot\xi \in \Lambda(s)$;
(b) $\kmodel, w \Vdash \Lambda(s)$; and
(c) $\Lambda(s) \setminus \{ \neg\xi \}$ is free.
From this and the fact that $t \cEdge t'$ it is obvious that the statement of
the Lemma holds.

\item[Case $\alpha = \beta^{\ast}$]
Here we would like to use the same argument as in the previous cases, and it is
true that there is a pair $(F,\ol{\delta}) \in H_{\alpha}$ such that $\kmodel, v
\Vdash F$ and $(v,w) \in R_{\ol{\delta}}$.
The problem, however, is that here we may encounter $\beta^{*}$ itself as the
final program on the list $\ol{\delta}$, so that the Extended Induction Hypothesis
may not apply directly.

We therefore reason as follows.
Since $(v,w) \in R_{\beta^{*}}$, there is a path
\[
v = w_0 \rel{\beta} w_1 \rel{\beta} \ldots \rel{\beta} w_n = w.
\]
Here we may take $n$ to be minimal, which means that all $w_{i}$ are distinct.

We now first show the following claim.

\emph{Claim}:
For all $k$ such that $0 \leq k \leq n$ there is a satisfiable $\cEdge$-path
from $t$ to a node $s_k$ such that either $s_k \not\cEquiv t$ or we have
(a$^k$)~$\lnot\loaded{[\beta^\ast]}\xi \in \Lambda(s_k)$,
(b$^k$)~$\kmodel, w_k \Vdash \Lambda(s_k)$,
(c$^k$)~$\Lambda(s_k) \setminus \{\lnot\loaded{[\beta^\ast]}\xi\}$ is free,
and (d$^k$)~the formula $\lnot\loaded{[\beta^\ast]}\xi$ is principal at $s_k$.

We prove the claim by an inner induction on $k$.
For the base case $k = 0$ we take $s_0$ to be any node as given by \autoref{l:loadedSucc}.
Then clearly $s_0$ either satisfies $s_0 \cEquiv t$ or else it has the properties (a$^0$) to (d$^0$).

For the induction step to $k+1$, we assume as our inner inductive hypothesis
that the Claim holds for $k$.
If $s_k \not\cEquiv t$ then let $s_{k+1} \isdef s_k$ and check that we are done.
Otherwise, there is a satisfiable $\cEdge$-path $\pi^{k}$ from $t$ to $s_k$
satisfying the properties (a$^k$) to (d$^k$).

Furthermore, by \autoref{l:existsDiamondH} there is a pair $(F,\ol{\delta}) \in
H_{\beta}$ such that $\kmodel, w_{k} \Vdash F$ and $(w_{k},w_{k+1}) \in
R_{\ol{\delta}}$.
Note that since $w_{k} \neq w_{k+1}$ it cannot be the case that $\ol{\delta} =
\varepsilon$; from this it follows that $(F,\ol{\delta}\beta^{*},\xi) \in
H_{\beta^{*}}$.
Then, since $\lnot\loaded{[\beta^\ast]}\xi$ is principal at $s_k$ by (d$^k$),
$s_{k}$ must have a successor $s'_{k}$ labeled
\[
\Lambda(s'_{k}) =
\big(\Lambda(s_{k}) \setminus \{ \lnot\loaded{[\beta^\ast]}\xi \} \big)
\cup F \cup \{ \lnot\loaded{\Box}(\ol{\delta}\beta^\ast,\xi) \}.
\]
Now observe that, since all programs in $\ol{\delta}$ are subprograms of $\beta$
and hence proper subprograms of $\beta^{*}$, we may apply the outer inductive
hypothesis, or rather its strengthened version, to the state $s'_{k}$, the list
$\ol{\delta}$, and the formula $\lnot\loaded{\Box}(\ol{\delta}\beta^\ast,\xi) =
\lnot\loaded{\Box}(\ol{\delta},\loaded{[\beta^\ast]}\xi)$.
This gives a satisfiable path $\rho$ from $s'_{k}$ to some node $t_{k+1}$ such that either $t_{k+1} \cEquiv s_{k}$, or we have
(a)~$\lnot\loaded{[\beta^\ast]}\xi \in \Lambda(t_{k+1})$,
(b)~$\kmodel, w_{k+1} \Vdash \Lambda(t_{k+1})$, and
(c)~$\Lambda(t_{k+1}) \setminus \{\lnot\loaded{[\beta^\ast]}\xi\}$ is free.
Finally, if $t_{k+1} \cEquiv s_{k}$ we are done; otherwise, let $s_{k+1}$ be a node as given by \autoref{l:loadedSucc},
in particular, the formula $\lnot\loaded{[\beta^\ast]}\xi$ is principal at
$s_{k+1}$, while in addition $s_{k+1}$ satisfies the properties
(a$^{k+1}$) to (d$^{k+1}$).

It is then straightforward to verify that the concatenation of the path
$\pi^{k}$ (from $t$ to $s_{k}$), the one-edge path $s_{k} \cEdge s'_{k}$,
the path $\rho$ (from $s'_{k}$ to $t_{k+1}$) and the path from
$t_{k+1}$ to $s_{k+1}$ satisfies the required conditions.
\emph{This finishes the proof of the Claim.}
\medskip

Finally, returning to the main proof of the lemma, in order to obtain the
required node $s$ in the inductive case where  $\alpha = \beta^{*}$, we consider
the node $s_{n}$ as given by the Claim instantiated with $k=n$.
Since by condition (d$^{n}$) the formula $\lnot\loaded{[\beta^\ast]}\xi$ is
principal at $s_n$ and the pair $(\emptyset,\varepsilon)$ belongs to
$H_{\alpha}$, the node $s_{n}$ has a $\cEdge$-successor $s$ such that
\[
\Lambda(s) =
\big(\Lambda(s_{k}) \setminus \{ \lnot\loaded{[\beta^\ast]}\xi \} \big)
\cup \{ \lnot\xi) \}.
\]
It is then straightforward to verify that this node $s$ has the required
properties.
\qedhere
\end{description}
\end{description}
\end{proof}

We are now ready to prove the soundness of our system.

\begin{proof}[Proof of \autoref{t:soundness}]
We prove the theorem by induction on the relation $\simpler{}{}$.
Let $\tab$ be a closed PDL-tableau, take an arbitrary node $t$ in $\tab$ and
assume inductively that all nodes $s$ with $\simpler{s}{t}$ are unsatisfiable.
It suffices to show that $t$ is unsatisfiable.
We make a case distinction as to the nature of $t$.

First, assume that $t$ is free.
If $t$ is a leaf, then since $\tab$ is assumed to be closed, $t$ must be closed
and hence, $\Lambda(t)$ is unsatisfiable.
If $t$ is not a leaf, then since it is free, by \autoref{l:ePropB}(c) each of
its children is $\simpler{}{}$-smaller than $t$.
It thus follows by the induction hypothesis that no child of $t$ carries a
satisfiable sequent.
But since the rule applied at node $t$ is sound (as all rules are), this means
that $\Lambda(t)$ cannot be satisfiable either.

The interesting case is where $t$ is loaded; say, we have
$\lnot\underline{[\alpha_{1}]\cdots[\alpha_{n}]}\xi \in \Lambda(t)$,
for some (unloaded) PDL formula $\xi$.

Let $C$ be the cluster to which $t$ belongs.
We claim that any $\cEdge$-path of satisfiable nodes starting at $t$ must
remain in $C$.
This easily follows from \autoref{l:ePropB}(g): if we leave the cluster we arrive at a node which is unsatisfiable by the induction hypothesis.

Now assume for contradiction that $\Lambda(t)$ is satisfiable.
Satisfiability of $\lnot[\alpha_{1}]\cdots[\alpha_{n}]\xi$
and successive applications of \autoref{l:loadedDiamondPaths}
(together with the observation that paths of satisfiable nodes must stay in $C$) implies
that there is a $\cEdge$-path of satisfiable nodes leading from $t$ to a
node $s$ such that $\lnot\xi \in \Lambda(s)$ and $\Lambda(s) \setminus \{ \neg\xi \}$
is free.
Since $\xi$ is not loaded this means that $s$ itself is a free node.
Since $s$ is free, it cannot belong to the same cluster as $t$.
Since there exists a $\cEdge$-path from $t$ to $s$, and $t \not\cEquiv s$ we get that $\simpler{s}{t}$.
Thus we have arrived at the desired contradiction:
We had that $\Lambda(s)$ was satisfiable, but by the induction hypothesis, $\Lambda(s)$ is unsatisfiable.
We thus conclude that $\Lambda(t)$ is unsatisfiable.
\end{proof}

\section{Completeness}\label{sec:Completeness}

The aim of this section is to prove the completeness of the PDL-tableau system, stated as follows.

\begin{thm}[\checklean{theorem}{completeness}]\label{t:completeness}
Let $\Gamma$ be a sequent.
If $\Gamma$ is consistent, then it is satisfiable.
\end{thm}

In \autoref{subsec:modelgraphs} we will first define \emph{model graphs}, a class of models based on sets of formulas and similar to canonical models used in many other completeness proofs~\cite[Section~4.2]{BRV}.
Then in \autoref{subsec:TableauGame} we describe how proof search in our tableau system can be seen as a two player game, and in \autoref{subsec:StrToMoG} we show how to convert a winning strategy for one of the players to a model graph.
Combining all of this we then prove \autoref{t:completeness} in \autoref{subsec:CompProof}.

\subsection{Saturated sets and model graphs}\label{subsec:modelgraphs}

\begin{defi}[\lean{saturated}]\label{d:saturated}
A set of formulas $X$ is \emph{saturated} if it fulfils the following conditions:
\[
  \begin{array}{lcl}
    \lnot \lnot \phi \in X             & \Rightarrow & \phi \in X   \\[0.3em]
    \phi_1 \land \phi_2 \in X          & \Rightarrow & \phi_1 \in X \text{ and } \phi_2 \in X \\[0.3em]
    \lnot (\phi_1 \land \phi_2) \in X  & \Rightarrow & \lnot \phi_1 \in X \text{ or } \lnot \phi_2 \in X \\[0.3em]
    [\alpha]\phi \in X      & \Rightarrow & \Delta \subseteq X, \text{ for some } \Delta \in \unfold_{\Box}(\alpha,\phi) \\[0.3em]
    \lnot[\alpha]\phi \in X & \Rightarrow & \Delta \subseteq X, \text{ for some } \Delta \in \unfold_{\diam}(\alpha,\phi) \\[0.3em]
  \end{array}
\]
A saturated set $X$ is \emph{locally consistent} if $\bot \notin X$ and $p \in X \Rightarrow \lnot p \notin X$, for all proposition letters $p$.
\end{defi}

\begin{defi}[\lean{Q}]\label{d:Qrelation}
Let $W$ be some collection of saturated sets, and let, for every atomic program $a$, $R_{a}$ be a binary relation on $W$. 
Then by program induction we define a relation $Q_{\alpha}$ on $W$ for every $\alpha \in \progs$:
\[
  \begin{array}{lll}
    Q_{a} &\isdef& R_{a} \\[0.3em]
    Q_{\tau?}  &\isdef& \{ (X,X) \mid \tau \in X \} \\[0.3em]
    Q_{\alpha\cup\beta} &\isdef&  Q_{\alpha} \cup Q_{\beta} \\[0.3em]
    Q_{\alpha; \beta} &\isdef& Q_{\alpha} \seqc Q_{\beta} \\[0.3em]
    Q_{\alpha^{\ast}} &\isdef& Q_{\alpha}^{*} \\[0.3em]
  \end{array}
\]
\end{defi}

\begin{defi}[\lean{ModelGraph}]\label{d:ModelGraph}
A \emph{model graph} is a Kripke model $\kmodel = (W, \{\reach{a} \mid a \in \progs_0\}, V)$ such that:
\begin{enumerate}[(a)]
\item Each element of $W$ is a locally consistent saturated set of formulas;
\item If $X \in W$, then $p \in X$ iff $X \in V(p)$.
\item If $X \rel{a} Y$ and $[a]\phi \in X$, then $\phi \in Y$.
\item 
If $X \in W$ and $\lnot [\alpha]\phi \in X$, then there exists a $Y \in W$ such
that $(X,Y) \in  Q_{\alpha}$ and $\lnot \phi \in Y$.
\end{enumerate}
\end{defi}

\begin{lem}[\checklean{theorem}{truthLemma}]\label{l:truthLemma}
Suppose $\kmodel$ is a model graph.
Then for all states $X$ and all formulas $\phi \in X$ it holds that $\kmodel, X
\Vdash \phi$.
\end{lem}

\begin{proof}
The proof of this Lemma is more or less standard.
By a simultaneous induction on formulas and programs we prove that the following
statements hold, for all formulas $\phi$ and all programs $\alpha$:

(C1) if $\phi \in X$, then $\kmodel, X \Vdash \phi$;

(C2) if $\lnot\phi \in X$, then $\kmodel, X \not\Vdash \phi$;

(C3) $Q_{\alpha} \subseteq R_{\alpha}$;

(C4) if $[\alpha]\phi \in X$ and $X \rel{\alpha} Y$, then $\phi \in Y$.

\emph{Base cases for (C1) and (C2):}
If $\phi=\bot$ or $\phi$ is a propositional variable, then (C1) and (C2) hold by conditions (a) and (b) in \autoref{d:ModelGraph}.\\
\emph{Base case for (C3):}
For atomic programs $\alpha = a$, we have $Q_{a} = R_{a}$ by definition.\\
\emph{Base case for (C4):}
For atomic programs, (C4) holds by condition (c).

\emph{Induction step for (C1) and (C2):}
Let $\phi$ be a formula. By induction hypothesis, (C1) and (C2) hold for all subformulas of $\phi$, and (C3) and (C4) hold for all programs in $\phi$.

\begin{description}
\item[Case $\phi = \lnot \psi$]
(C1)
If $\neg\psi \in X$, 
then by the induction hypothesis (C2) for subformula $\psi$, we have $X \not\Vdash\psi$, hence $X  \Vdash \lnot \psi$.
(C2)
If $\lnot \phi = \lnot \lnot \psi \in X$, then because $X$ is saturated, we have
$\psi \in X$ and by induction hypothesis (C1) for $\psi$, we have $X \Vdash \psi$ and hence $X \not\Vdash \lnot \psi$.

\item[Case $\phi = \psi \land \chi$]
(C1)
If $\psi \land \chi \in X$, then by saturation, $\psi \in X$ and $\chi \in X$, which by induction hypothesis implies 
$X \Vdash \psi$ and $X \Vdash \chi$, thus $X \Vdash \psi \land \chi = \phi$.
(C2)
If $\lnot(\psi \land \chi) \in X$, then $\lnot \psi \in X$ or $\lnot \chi \in X$, which 
by induction hypothesis (C2) for $\psi$ and $\chi$ implies 
$X \not\Vdash \psi \land \chi$.

\item[Case $\phi = {[\alpha]}\psi$]
(C1) If $[\alpha]\psi \in X$, then by induction hypothesis (C4) for $\alpha$ we have that 
$\psi \in Y$ for all $Y$ such that $X \rel{\alpha} Y$.
By induction hypothesis (C1) for $\psi$ we also have that $\psi \in Y$ implies 
$Y \Vdash \psi$, hence we have $X \Vdash [\alpha]\psi$.\\
(C2)
If $\lnot [\alpha]\psi \in X$, then by condition (d) there is a $Y$ such that $(X,Y)
\in Q_{\alpha}$ and $\lnot \psi \in Y$.
By induction hypothesis (C3) for $\alpha$, we have $Q_{\alpha} \subseteq
R_{\alpha}$ and by induction hypothesis (C2) for $\psi$, we have $Y \Vdash 
\lnot \psi$.
Combining these observations we find $X \not\Vdash [\alpha]\psi$.
\end{description}

\emph{Induction step for (C3):}
If $\alpha$ is a test, say, $\alpha = \tau?$, then by definition we have $Q_{\tau?} = \{ (X,X) \mid \tau \in X \}$.
By the induction hypothesis (C1) for subformula $\tau$ we find that $Q_{\tau?} \subseteq \{ (X,X) \mid \kmodel, X \Vdash \tau \} = R_{\tau?}$.
The other cases are straightforward by using the induction hypothesis for (C3).
For instance, we have $Q_{\alpha; \beta} = Q_{\alpha} \seqc Q_{\beta} \subseteq  R_{\alpha} \seqc R_{\beta} = R_{\alpha; \beta}$.

\emph{Induction step for (C4):}
Let $\alpha$ be a non-atomic program, and assume as induction hypothesis  that (C4) holds for all subprograms of $\alpha$, and that (C1) and (C2) hold for all subformulas of $\alpha$, in particular for formulas $\tau \in \Test(\alpha)$.

\begin{description}
\item[Case $\alpha = {\tau?}$, $\alpha = \beta \cup \gamma$ or $\alpha = \beta; 
\gamma$]
Assume that $[\alpha]\phi \in X$ and $X \rel{\alpha} Y$.
Then by saturation of $X$ there is some test profile $\ell \in \TT(\alpha)$ such
that $X^{\ell}_{\alpha,\phi} = F^{\ell}(\alpha) \cup 
\{ \Box(\ol{\delta},\phi) \mid \ol{\delta} \in P^{\ell}(\alpha) \} \subseteq X$.
Recall that the formulas in $F^{\ell}(\alpha)$ are of the form $\neg\tau$ where $\tau?$ is a subprogram of $\alpha$, and hence $\tau$ is a subformula of $\alpha$.
By induction hypothesis (C2) on these subformulas $\tau$, we obtain that $\kmodel, X \Vdash \bigwedge F^{\ell}(\alpha)$.
By \autoref{l:FP} there is $(X,Y) \in R_{\ol{\delta}}$ for some 
$\ol{\delta} = \delta_{1}\cdots\delta_{n} \in P^{\ell}(\alpha)$.
Now observe that by \autoref{l:boxHelperTermination}(c) each $\delta_i$ belongs to $\Prog(\alpha)$, but is not equal to $\alpha$.
Hence each $\delta_i$ is a proper subprogram of $\alpha$, so that we may apply the inductive hypothesis to it.
Now $n$ successive applications of the inductive hypothesis (C4) on the programs $\delta_{1},\ldots,\delta_{n}$ imply that $\phi\in Y$, as required.

\item[Case $\alpha = \beta^\ast$]
Assume that $[\beta^\ast]\phi \in X$ and  $X \rel{\beta^\ast} Y$. 
Then there is a minimal $n \in \omega$ such that 
$(X,Y) \in R_{\beta}^{n}$.
We now prove (C4) by an inner induction over $n$.

Base case $n=0$: In this case $X = Y$, so we must show that $\phi \in X$. By saturation of $X$, 
there is some $\ell \in \TT(\beta^{\ast})$ such that $X^{\ell}_{\beta^\ast,\phi} \subseteq X$.
Since $\emptylist \in P^\ell(\beta^\ast)$, we obtain that
$\phi = \Box(\emptylist,\phi) \in X^{\ell}_{\beta^\ast,\phi}$, hence $\phi\in X$.

Induction step:
Let  $n>0$ and assume the claim holds for $m = n-1$.
It follows from $(X,Y) \in R_{\beta}^{n}$ and minimality of $n$ that there is a $Z\neq X$ such that $X \rel{\beta} Z$ and $Z \rel{\beta^{m}} Y$.

By saturation of $X$ there is some $\ell \in \TT(\beta^{\ast})$ such that 
$X^{\ell}_{\beta^{\ast},\phi} = F^{\ell}(\beta^{\ast}) \cup 
\{ \Box(\ol{\delta},\phi) \mid \ol{\delta} \in P^{\ell}(\beta^{\ast}) \} 
\subseteq X$.
It follows by the induction hypothesis (C2) on the subformulas $\tau$ such that $\neg\tau \in F^{\ell}(\beta^{\ast}) = F^{\ell}(\beta)$ that $\kmodel, X \Vdash \bigwedge F^{\ell}(\beta)$.
By \autoref{l:FP} we may conclude that $(X,Z) \in R_{\ol{\delta}}$, for some $\ol{\delta} \in P^{\ell}(\beta)$.
Note that $\ol{\delta} \neq \emptylist$ since $n$ was minimal and thus $X \neq Z$.
It follows that $\ol{\delta}\beta^{\ast} \in P^{\ell}(\beta^{\ast})$, and because of this we have $\Box(\ol{\delta}\beta^{\ast},\phi) = \Box(\ol{\delta},[\beta^{\ast}]\phi) \in X$.
Now, as in the previous case, through successive applications of the inductive hypothesis for (C4) on the programs in $\ol{\delta}$ we find $[\beta^{\ast}]\phi 
\in Z$.
But now we may apply the inner induction hypothesis on $Z$ and $Y$, for the 
formula $[\beta^{\ast}]\phi$.
This gives $\phi \in Y$ as required.
\qedhere
\end{description}
\end{proof}

In the proof below we will need the following definition and lemma.

\begin{defi}[\lean{Qcombo}]\label{d:Qcombo}
Let $F$ be some set of formulas, and let $\ol{\delta}$ be a list of programs.
We define the relations $Q_{F}, Q_{\ol{\delta}}$ and $Q_{F,\ol{\delta}}$ as 
follows:
\[
  Q_{F} \isdef \mathsf{Id}_W \cap \bigcap_{\tau\in F} Q_{\tau?},
\]
while $Q_{\ol{\delta}}$ is defined by induction on the length of the list 
$\ol{\delta}$:
\[
  \begin{array}{lll}
    Q_{\emptylist} &\isdef& \mathsf{Id}_W \\[0.3em]
    Q_{\alpha\ol{\gamma}} &\isdef& Q_{\alpha}\seqc Q_{\ol{\gamma}}
  \end{array}
\]
Finally we set
\[
  Q_{F,\ol{\delta}} \isdef Q_{F} \seqc Q_{\ol{\delta}}.
\]
\end{defi}

\begin{lem}[\checklean{theorem}{cpHelpA}]\label{l:cp3a}
Let $\alpha$ be a PDL-program.
Then we have $Q_{X,\ol{\delta}} \subseteq Q_{\alpha}$ for every pair $(X,\ol{\delta}) \in H(\alpha)$.
\end{lem}
\begin{proof}
  By induction on the shallow program structure of $\alpha$.
  The claim for the base cases $\alpha=a$ and $\alpha=\tau?$ follows directly from \autoref{d:Qcombo}.
  For the inductive step we have three cases.
  If $\alpha = \beta \cup \gamma$ then the claim follows from applying the induction hypothesis to $\beta$ and $\gamma$.
  For the cases $\alpha = \beta ; \gamma$ and $\alpha = \beta^\ast$ before applying the induction hypothesis to $\beta$ (and $\gamma$) we make an additional case distinction on whether the unfolding of $\beta$ yields only tests or (also) a non-empty list of programs.
  Notably, in the case $\alpha = \beta^\ast$ there is \emph{no} need for an additional inner induction.
  The proof overall is very similar to that of \autoref{l:FP}, so we omit further details here.
\end{proof}

\subsection{Tableau game}\label{subsec:TableauGame}

The search space of tableaux for a sequent $\Gamma$ can be described as a game between two players.

Given a sequent $\Gamma$, the \emph{tableau game} $\Gm(\Gamma)$ is a two-player graph game with players Builder and Prover.
Intuitively, Builder aims to show that the sequent $\Gamma$ is satisfiable, whereas it is Prover's goal to construct a closed tableau for $\Gamma$.

The game proceeds by the two players moving a token from one position to another.
A \emph{position} of $\Gm(\Gamma)$ is either:
\begin{itemize}
\item a (possibly loaded) sequent $\Delta$, or
\item a triple of the form $(\Gamma,\phi,R)$ such that $\Gamma \cup \{ \phi \}$ is a (possibly loaded) sequent such that $\phi \not\in \Gamma$, and $R$ is a rule applicable to $\Gamma, \phi$ where $\phi$ is the principal formula.
\end{itemize}
Positions of the first type belong to Prover; at a position $\Delta$ he picks 
a formula $\phi \in \Delta$ and a rule $R$ to be applied to $\phi$, thus moving
the token to the position $(\Delta \setminus \{\phi\}, \phi,
R)$.
Positions of the second type belong to Builder; at a position $(\Gamma,\phi,R)$
she picks one of the rule children $R$, and this rule child is then the next position.
(In most cases, of course, her move is fixed since the rule has only one rule child.)
A \emph{match} of $\Gm(\Gamma)$ consists of a sequence of positions, starting at
$\Gamma$, induced by the players' moves.
Such a match is \emph{won by Prover} if it reaches a closed sequent or a \emph{\lpr}.
On the other hand, a match is \emph{won by Builder} if it reaches a sequent to which no rule is applicable, or a free repeat.

A strategy $f$ for either player can be represented by its \emph{strategy tree}
$\Stree_{f}$.

\begin{defi}
A \emph{strategy tree for Builder} is a structure $\Stree = (T,\tedge,r,\Lambda)$, where $(T,\tedge,r)$ is a tree with root $r$, and $\Lambda$ is a labelling which assigns to each node $s$ in $T$ a (possibly loaded) sequent $\Lambda(s)$,
such that 
\begin{itemize}
\item $\Lambda(r) = \Gamma$;
\item For every pair of nodes $s,t \in T$ we have $s \tedge t$ if and only if 
there is a rule $R$ that can be applied to $\Lambda(s)$ in such a way that $\Lambda(s)$ and $\Lambda(t)$ match, respectively, the conclusion and one of the rule children of $R$.
\end{itemize}
We say that $\Stree$ is \emph{winning} for Builder if it is finite and each of its branches represents a $\Gm(\Gamma)$-match which is won by Builder.
\end{defi}

In the next subsection we will discuss winning strategy trees for Builder in some more detail.
We will write $\tedgeT$ for the transitive closure of $\tedge$ and $\tedgeRT$ for the reflexive transitive closure of $\tedge$.

\begin{defi}
A \emph{strategy tree for Prover} is a structure $\Stree = (T,\tedge,r,\Lambda)$, where $(T,\tedge,r)$ is a tree with root $r$, and $\Lambda$ is a labelling which assigns to each node $s$ in $T$ a (possibly loaded) sequent $\Lambda(s)$,
such that 
\begin{itemize}
\item $\Lambda(r) = \Gamma$;
\item For every $s \in T$ with successors $t_1, \dots, t_n$, there is a rule $R$ that can be applied to $\Lambda(s)$ in such a way that $\Lambda(s)$ and $\Lambda(t_1), \dots, \Lambda(t_n)$ match, respectively, as the conclusion and the rule children of $R$.
\end{itemize}
We say that $\Stree$ is \emph{winning} for Prover if it is finite and each of its branches represents a $\Gm(\Gamma)$-match which is won by Prover.
\end{defi}

\begin{lem}\label{l:matchesFinite}
All matches of $\Gm(\Gamma)$ are finite.
\end{lem}
\begin{proof}
Similar to the termination proof for PDL-tableaux.
\end{proof}

As a corollary of this we obtain the \emph{determinacy} of the tableau game.
\begin{cor}[\checklean{theorem}{gamedet}]\label{lem:gamedet}
$\Gm(\Gamma)$ is determined: either Builder or Prover has a winning strategy in $\Gm(\Gamma)$.
\end{cor}

Furthermore, the following observation is an almost immediate consequence of the game's definition.

\begin{lem}[\checklean{theorem}{gameP}]\label{lem:gameP}
  If Prover has a winning strategy in $\Gm(\Gamma)$ then $\Gamma$ has a closed tableau.
\end{lem}
\begin{proof}
Let $f$ be a winning strategy for Prover in $\Gm(\Gamma)$.
It is straightforward to carve out a closed tableau for $\Gamma$ from the strategy tree of
$f$.
\end{proof}

As a consequence of these observations, in order to prove the completeness of 
the PDL-tableau system, it suffices to show that, for any sequent $\Gamma$, 
if Builder has a winning strategy in $\Gm(\Gamma)$ then $\Gamma$ is satisfiable.

\subsection{From winning strategies to model graphs}\label{subsec:StrToMoG}

Throughout this section we assume a winning strategy for Builder in $\Gm(\Sigma)$
which is represented by the strategy tree $\Stree = (T,\tedge,r,\Lambda)$.

We first gather some basic observations and definitions on $\Stree$.
We will write $\tedgeT$ for the transitive closure of $\tedge$ and $\tedgeRT$
for the reflexive transitive closure of $\tedge$.
A \emph{path} on $\Stree$ is a sequence $\pi = {(s_{i})}_{1\leq i\leq k}$ such
that $s_{i} \tedge s_{i+1}$, for all $i$; a \emph{branch} of $\Stree$
is a maximal path.
Observe that branches of $\Stree$ correspond to matches of $\Gm(\Gamma)$ in which
Builder plays her winning strategy; in particular, if $s \tedge t$ then the
$\Lambda(t)$ is a rule child of $\Lambda(s)$.
It follows that a leaf of $\Stree$ is either labeled with a sequent to which no
rule is applicable, or it is a \emph{(free) repeat leaf}; that is, it has a
(unique) ancestor in $\Stree$ with the same, free, label.
Given a repeat leaf $u$, we will denote this ancestor as $c(u)$ and refer to it
as the companion of $u$.
By the same argumentation that we used to establish the finiteness of local PDL-tableaux, one can use \autoref{f:lprAreCritical} to show that (M) must be used on the path from $c(u)$ to $u$.

\begin{defi}\label{d:prestate}
A node $t$ in $\Stree$ is called \emph{initial} if either $t = r$ or else $t$
is the rule child of a modal rule instance.
A \emph{pre-state} is a maximal sequence $\pi = {(s_{i})}_{1\leq i\leq k}$ on
$\Stree$ such that $s_{1}$ is initial, there is no application on $\pi$ of the
rules $(M)$, $(L+)$ or $(L-)$, and either $\pi$ is a path on $\Stree$, or $\pi$
is of the form $\pi = \pi'\pi''$, where both $\pi'$ and $\pi''$ are paths,
the last node $s$ of $\pi'$ is a repeat leaf, the first node of $\pi''$ is its
companion, and $\pi''$ is a prefix of the path from $c(s)$ to $s$.
Given any path or pre-state $\pi = {(s_{i})}_{1\leq i\leq k}$ on $\Stree$, we
define $\Lambda(\pi) \isdef \bigcup_{i} \Lambda(s_{i})$.
\end{defi}

Recall that, given a (possibly) loaded formula $\phi$, we let $\phi^-$ denote its \emph{unloaded version}; we will let $\Lambda^-(\pi)$ denote the set of formulas we obtain by unloading every formula in $\Lambda(\pi)$.
\autoref{l:cp1} below states that every set of the form $\Lambda^-(\pi)$ where $\pi$ is a pre-state is a locally consistent and saturated set.
In its proof the following auxiliary result will be necessary, which provides information on how we can collect the formulas in $\Lambda^-(\pi)$ from the label sets on its path.
The proof of this Lemma is left as an exercise.

\begin{lem}\label{l:cp1a}
Let $s_{j}$ be some node on a pre-state $\pi = {(s_{i})}_{1\leq i \leq k}$ and let $\phi \in \Lambda(s_{j})$ be some formula.
Then we have one of the following:
\begin{enumerate}[(a)]
\item
$\phi$ is of the form $p$, $\lnot p$, $[a]\psi$, $\lnot[a]\psi$ or
$\lnot\loaded{[a]}\psi$, and $\phi \in \Lambda(s_{k})$;
furthermore, in case $\phi = \lnot\loaded{[a]}\psi$ we have $s_{j} \tedgeRT s_{k}$;
\item
there is some $i$ with $j \leq i < k$ such that $\phi$ is the principal formula
at $s_{i}$ and, in case $\phi$ is loaded, $s_{j} \tedgeRT s_{i}$.
\end{enumerate}
\end{lem}

Here we say that $\phi$ is the principal formula at $s_{i}$ if $\phi \in
\Lambda(s_{i})$, and $\Lambda(s_{i+1})$ is one of the children of a rule which is
applicable to the sequent $\Lambda(s_{i})$ with $\phi$ as its principal formula.

\begin{lem}\label{l:cp3}
Let $\lnot[\alpha]\phi$ be a PDL formula where $\alpha$ is not atomic and let $\pi$ be a pre-state such that $\lnot[\alpha]\phi \in \Lambda(\pi)$.
Then there is a pair $(X,\ol{\delta}) \in H(\alpha)$ such that $X \cup \{ \lnot\Box(\ol{\delta},\phi) \} \subseteq \Lambda(\pi)$.
Similarly, if $\lnot\loaded{[\alpha]}\phi \in \Lambda(\pi)$, then there is a pair $(X,\ol{\delta}) \in H(\alpha)$ such that $X \cup \{ \lnot\loaded{\Box}(\ol{\delta},\phi) \} \subseteq \Lambda(\pi)$.
\end{lem}

\begin{proof}
Straightforward by the proof rules and the previous Lemma.
\end{proof}

\begin{lem}\label{l:cp1}
Let $\pi$ be a pre-state.
Then $\Lambda^-(\pi)$ is locally consistent and saturated, and the sequent
carried by the last node of $\pi$ is basic.
\end{lem}

\begin{proof}
Straightforward by the previous Lemmas.
\end{proof}

In the following key definition of the completeness proof, we see how to define a model graph from the winning strategy tree for Builder.
\begin{defi}\label{d:theModelGraph}
The structure $\kmodel_{\Stree} = (W,\{\reach{a} \mid a \in \progs_0 \},V)$ is defined as follows.
Its domain $W$ consists of all sets of the form $\Lambda^-(\pi)$, where $\pi$ is a pre-state of $\Stree$.
For an atomic program $a$ we define the relation $R_{a} \subseteq W \times W$ by putting
\[
(X,Y) \in R_{a} \mathrel{{:}{\iff}} X_{a} \cup \{ \lnot\phi \} \subseteq Y, \text{ for some } \lnot[a]\phi \in X.
\]
Finally, the valuation $V$ is given by
\[
V(p) \isdef \{ X \in W \mid p \in X \}.
\]
\end{defi}

Our goal will be to prove \autoref{thm:strmg} below, which states that $\kmodel_{\Stree}$ is indeed a model graph.
That is, we need to show that $\kmodel_{\Stree}$ satisfies the conditions (a), (b), (c) and (d) from \autoref{d:ModelGraph}.
Condition (a) will follow from the already stated \autoref{l:cp1}.
Conditions (b, c) will follow immediately from the definition of $\kmodel_{\Stree}$.
The main challenge is to show that $\kmodel_{\Stree}$ satisfies condition (d).
To show this, it suffices to prove that for every pre-state $\pi$, whenever $\lnot[\alpha]\phi \in \Lambda^-(\pi)$, there exists a pre-state $\pi'$ such that $(\Lambda^-(\pi),\Lambda^-(\pi')) \in Q_{\alpha}$ and $\lnot\phi \in \Lambda^-(\pi')$.

Observe that if $\lnot[\alpha]\phi \in \Lambda^-(\pi)$, then there are two possible cases:
either $\lnot[\alpha]\phi \in \Lambda(\pi)$, or
$\lnot\loaded{[\alpha]}\psi \in \Lambda(\pi)$ for some loaded formula $\psi$ such that $\phi = \psi^-$.
We will prove these two cases separately, starting with the loaded case; the
unloaded case will be shown in \autoref{l:cp5}.

\begin{lem}\label{l:cp2new}
Let $t$ be a node of $\Stree$ on the pre-state $\pi$, and let
$\lnot\loaded{[\alpha]}\phi$ be a loaded formula in $\Lambda(t)$.
Then there is a node $u$ on $\Stree$ such that $t \tedgeT u$, $\lnot\phi \in
\Lambda(u)$ and $u$ lies on some pre-state $\rho$ such that
$(\Lambda^-(\pi), \Lambda^-(\rho)) \in Q_{\alpha}$.
\end{lem}

\begin{proof}
The intuition underlying the proof is to look at matches that continue the
partial match represented by the node $t$ along $\pi$, and where Builder plays
her winning strategy.
If Prover keeps the same formula loaded, he can ensure that the
resulting path from $t$ stays loaded, until a node $u$ is reached such that
$\lnot\phi \in \Lambda(u)$.
In particular, no back edge of $\Stree$ is taken.

Formally, we prove this lemma by an outer induction on $\alpha$, and an inner
induction on the converse of the tree order $\tedgeT$.

In the base case of the outer induction we have $\alpha = a$.
Then by \autoref{l:cp1a} we find $\lnot[\loaded{a}]\phi \in \Lambda(s_{k})$
and $t \tedgeRT s_{k}$.
Now suppose that at this position $s_{k}$ Prover applies the modal rule with
principal formula $\lnot[\loaded{a}]\phi$.
Let $u$ be the induced $\tedge$-successor of $s_{k}$, then we find $t \tedgeT u$
and $\lnot\phi \in \Lambda(u)$.
Furthermore, let $\rho$ be any pre-state starting at $u$, then we have
$\lnot\phi \in \Lambda(\rho)$ and $(\Lambda^-(\pi),\Lambda^-(\rho)) \in Q_{a}$.
That is, all conditions of the Lemma are satisfied.

In the inductive step of the outer induction proof $\alpha$ is a non-atomic
program.
By \autoref{l:cp1a} there is a node $t_{1}$ on $\pi$ such that $t \tedgeT t_{1}$
and $\lnot\loaded{[\alpha]}\phi$ is the principal formula at $t_{1}$.
It then follows by the loaded diamond rule (\loaded{$\diam$}) that for some pair $(X,\ol{\delta}) \in
H(\alpha)$ we have $X \cup \{ \lnot\loaded{\Box}(\ol{\delta},\phi) \}
\subseteq \Lambda(t_{2})$, where $t_{2}$ is the immediate successor of $t_{1}$
on $\pi$.
It follows that $X \subseteq \Lambda(\pi)$, so that we find $(\Lambda^{-}(\pi),
\Lambda^{-}(\pi)) \in Q_{X}$.

Now make a case distinction.

\begin{description}
\item[Case $\ol{\delta} = \varepsilon$]
In this case it is clear that $u \isdef t_{2}$ and $\rho \isdef \pi$ satisfy the
conditions of the lemma.
For instance, $(\Lambda^{-}(\pi), \Lambda^{-}(\pi)) \in Q_{\alpha}$ is immediate
from \autoref{l:cp3a}.

\item[Case $\ol{\delta} \neq \varepsilon$, $\alpha$ is not of the form
$\beta^{\ast}$]
In this case $\ol{\delta}$ is of the form $\ol{\delta} =
a\delta_{1}\cdots \delta_{n}$, where $a$ is atomic and every $\delta_{i}$ is a
\emph{proper} subprogram of $\alpha$.
Hence we may apply the outer inductive hypothesis to $a, \delta_{1}, \ldots,
\delta_{n}$, successively.
It follows that there are nodes $u_{0}, \ldots, u_{n}$ in $\Stree$ such that
$t \tedgeT u_{0} \tedgeT \cdots \tedgeT u_{n}$, $\lnot\phi \in \Lambda(u_{n})$,
as well
as pre-states $\rho_{0},\ldots,\rho_{n}$ such that $u_{i}$ lies on $\rho_{i}$
(for each $i$), $(\Lambda^{-}(\pi),\Lambda^{-}(\rho_{0})) \in Q_{a}$ and
$(\Lambda^{-}(\rho_{i}),\Lambda^{-}(\rho_{i+1})) \in Q_{\delta_{i}}$
(for each $i<n$).
We then have $t \tedgeT u_{n}$ and from \autoref{l:cp3a} it easily follows that
$(\Lambda^{-}(\pi),\Lambda^{-}(\rho_{n})) \in Q_{X} \seqc Q_{\ol{\delta}}
\subseteq Q_{\alpha}$ as required.

\item[Case $\ol{\delta} \neq \varepsilon$, $\alpha$ is of the form $\alpha =
\beta^{\ast}$]
In this case $\ol{\delta}$ is of the form $\ol{\delta} =
a\delta_{1}\cdots \delta_{n}\beta^{*}$, where $a$ is atomic and
$(X,a\delta_{1}\cdots \delta_{n}) \in H_{\beta}$.
In particular, every $\delta_{i}$ is a subprogram of $\beta$, and hence a
\emph{proper} subprogram of $\alpha$.

Reasoning as in the previous case (but with $\lnot\loaded{[\beta^{*}]}\phi$ taking
the role of $\lnot\phi$) we find a node $u'$ on $\Stree$ and a pre-state $\rho'$
such that $u'$ lies on $\rho'$, $t \tedgeT u'$, $\lnot\loaded{[\beta^{*}]}\phi
\in \Lambda(u')$, and $(\Lambda^{-}(\pi),\Lambda^{-}(\rho')) \in Q_{\beta}$.
We now use the \emph{inner} induction hypothesis on $u'$, $\rho'$ and
the loaded formula $\lnot\loaded{[\beta^{*}]}\phi = \lnot\loaded{[\alpha]}\phi$.
This gives a node $u$ and a pre-state $\rho$ such that $u$ lies on $\rho$,
$\lnot\phi \in \Lambda(u)$, $u' \tedgeT u$ and
$(\Lambda^{-}(\rho'),\Lambda^{-}(\rho)) \in Q_{\beta^{*}}$.\
From this we immediately obtain $t \tedgeT u$ and
$(\Lambda^{-}(\pi),\Lambda^{-}(\rho)) \in Q_{\beta} \seqc Q_{\beta^{*}}
\subseteq Q_{\beta^{*}} = Q_{\alpha}$, which means that we are done.
\qedhere
\end{description}
\end{proof}

For unloaded diamond formulas we have to work a bit harder.
For this reason we introduce the following auxiliary lemma that will be used to show \autoref{l:cp5}, the analogue of \autoref{l:cp2new} for unloaded diamonds.

\begin{lem}\label{l:cp4}
Let $\pi$ be a pre-state such that $\lnot[a][\delta_{1}]\cdots[\delta_{n}]\phi \in \Lambda(\pi)$, where $n$ is maximal, i.e.\ $\phi$ itself is not a box formula.
Then there is a pre-state $\pi' = (s_{i})_{1\leq i\leq k}$ such that $(\Lambda^-(\pi),\Lambda^-(\pi')) \in Q_{a}$ and $\lnot\loaded{[\delta_{1}]\cdots[\delta_{n}]}\phi \in \Lambda(s_{1})$.
\end{lem}

\begin{proof}
Let $\pi = (t_{i})_{1\leq i\leq m}$, then we find
$\lnot[a][\delta_{1}]\cdots[\delta_{n}]\phi \in \Lambda(t_{m})$ by
\autoref{l:cp1a}.
We make a case distinction.
\begin{description}
\item[Case $\Lambda(t_{m})$ is free]
Prover may now simply apply the rule $(L+)$, loading the formula $\lnot[a][\delta_{1}]\cdots[\delta_{n}]\phi$ to
$\lnot\loaded{[a][\delta_{1}]\cdots[\delta_{n}]}\phi$, and subsequently apply the modal rule.
The next position in the game will contain the formula
$\lnot\loaded{[\delta_{1}]\cdots[\delta_{n}]}\phi$, and any pre-state $\pi'$ starting at $s_{1}$ will meet the criteria of the Lemma.
\item[Case $\Lambda(t_{m})$ is loaded]
Prover can now only load the formula $\lnot[a][\delta_{1}]\cdots[\delta_{n}]\phi$
if he first \emph{liberates} the loaded formula of $\Lambda(t_{m})$,
by means of an application of the rule $(L-)$.
Let $s$ be the child of $t_{m}$ in the strategy tree $\Stree$ such that
$\Lambda(s)$ is the unloaded version of $\Lambda(t_{m})$.
We make a further case distinction.
\begin{description}
\item[Case $s$ is not a free repeat]
We may now simply continue similarly to the previous case:
Prover may apply the rule $(L+)$ at position $s$, etc.

\item[Case $s$ is a free repeat]
By definition of a repeat this means that $s$ has a companion, that is, an \emph{ancestor} $s^{-}$ in $\Stree$ such that $\Lambda(s^{-}) = \Lambda(s)$.
Let $\rho$ be the pre-state such that $s^-$ is either the final position of $\rho$ or obtained from this final position by the unloading rule.
We may now continue as before, finding the required pre-state $\pi'$ in $\Stree$ by applying first the loading rule and then the modal rule at the node $s^{-}$.
(Note that by definition of the relation $R_{a}$, $\Lambda^-(\pi')$ is not only an $a$-successor of $\Lambda^-(\rho)$, but also of $\Lambda^-(\pi)$.)
\qedhere
\end{description}
\end{description}
\end{proof}

We can now prove the analogue of \autoref{l:cp2new} for non-loaded diamonds.

\begin{lem}\label{l:cp5}
Let $\pi$ be a pre-state, and let $\lnot[\alpha]\phi$ be a formula in $\Lambda(\pi)$.
Then there is a pre-state $\pi'$ such that $(\Lambda^-(\pi),\Lambda^-(\pi')) \in Q_{\alpha}$ and $\lnot\phi \in \Lambda^-(\pi')$.
\end{lem}

\begin{proof}
We will prove the lemma by induction on $\alpha$, but for the induction to work smoothly we need a slightly stronger version of the statement.

\emph{Claim}
Let $\pi$ be a pre-state, and let $\lnot[\alpha]\phi$ be a formula in $\Lambda(\pi)$.
Write $\phi = [\eta_1]\cdots[\eta_k]\psi$, where $\psi$ itself is not a box formula, and define $\loaded{\phi} \isdef \loaded{[\eta_1]\cdots[\eta_k]}\psi$.
Then there is a pre-state $\pi'$ such that $(\Lambda^-(\pi),\Lambda^-(\pi')) \in Q_{\alpha}$ and $\lnot\phi$ or $\lnot\loaded{\phi}$ belongs to $\Lambda(\pi')$.

\begin{description}
\item[Case $\alpha=a$]
Immediate by \autoref{l:cp4}.
\item[Case $\alpha=\tau?$]
It is easy to see that if $\lnot[\tau?]\phi \in \Lambda(\pi)$, then we have $\tau,\lnot\phi\in \Lambda(\pi)$.
From $\tau \in \Lambda(\pi)$ it follows that $(\Lambda^-(\pi),\Lambda^-(\pi)) \in Q_{\tau?}$ so that we can take $\pi$ itself as our $\pi'$.

\item[Case $\alpha$ is not atomic]
By \autoref{l:cp3} and \autoref{l:cp3a} there is a pair $(X,\ol{\delta}) \in H(\alpha)$ such that
$X \subseteq \Lambda(\pi)$, $\lnot\Box(\ol{\delta},\phi) \in \Lambda(\pi)$, and $Q_{X,\ol{\delta}} \subseteq Q_{\alpha}$.

First consider the case where $\ol{\delta} = \emptylist$.
This implies that $(\Lambda^-(\pi),\Lambda^-(\pi)) \in Q_{X,\ol{\delta}} \subseteq Q_{\alpha}$ and $\lnot\Box(\emptylist,\loaded{\phi}) = \lnot\loaded{\phi} \in \Lambda(\pi)$, so that we are done.

Now consider the case where $\ol{\delta} \neq \emptylist$.
In this case $\ol{\delta}$ must be of the form $\ol{\delta} = a\ol{\gamma}$, for some atomic program $a$ and some program list $\ol{\gamma} = \gamma_{1}\cdots \gamma_{n}$.
It follows by \autoref{l:cp4} that there is some pre-state $\rho$ such that
\[
(\Lambda^-(\pi),\Lambda^-(\rho)) \in Q_{a}
\]
and $\lnot\loaded{[\gamma_{1}]\cdots[\gamma_{n}]}\loaded{\phi} \in \Lambda(\rho)$.
But then by $n$ iterative applications of \autoref{l:cp2new} we find a pre-state
$\pi'$ such that $\lnot\loaded{\phi} \in \Lambda(\pi')$ and
\[
(\Lambda^-(\rho),\Lambda^-(\pi')) \in Q_{\gamma_{1}} \seqc \cdots Q_{\gamma_{n}},
\]
where we note that here we do \emph{not} need the induction hypothesis
since we refer to the earlier lemma on loaded formulas.
But from $X \subseteq \Lambda(\pi)$ it follows that $(\Lambda^-(\pi),\Lambda^-(\pi)) \in Q_{X}$, so that with $(\Lambda^-(\pi),\Lambda^-(\rho)) \in Q_{a}$ we find that
\[
(\Lambda^-(\pi),(\Lambda^-(\pi')) \in
Q_{X} \seqc Q_{a} \seqc Q_{\gamma_{1}} \seqc \cdots \seqc Q_{\gamma_{n}}
= Q_{X} \seqc Q_{\ol{\delta}} \subseteq Q_{\alpha}
\]
as required.
\end{description}
Finally, it is easy to see how the Lemma follows from the Claim.
\end{proof}

\begin{thm}[\checklean{theorem}{strmg}]\label{thm:strmg}
Let $\Stree$ be the strategy tree representing a winning strategy for Builder in $\Gm(\Gamma)$.
Then $\kmodel_{\Stree}$ is a model graph including the state $\Gamma$.
\end{thm}

\begin{proof}
We check the four conditions of \autoref{d:ModelGraph}.
\begin{enumerate}[align=left]
\item[(a)] For this condition we need to check that each element of $W$ is a locally consistent saturated set of formulas.
Let $X$ be an element of $W$.
From the definition of $\kmodel_{\Stree}$, it follows that there exists a pre-state $\pi$ on $\Stree$ such that $X = \Lambda^-(\pi)$.
Thus, we can apply \autoref{l:cp1} to obtain that $X$ is locally consistent and saturated.

\item[(b, c)] Immediate from the definition of $\kmodel_{\Stree}$ in \autoref{d:theModelGraph}.
\item[(d)]
Assume that $X \in W$ and $\lnot [\alpha]\phi \in X$.
Let $\pi$ be any pre-state of $\Stree$ such that $X = \Lambda^-(\pi)$.
Then for some node $t$ on the pre-state $\pi$ we either have $\lnot [\alpha]\phi \in \Lambda(t)$, or else $\lnot [\loaded{\alpha}]\psi \in \Lambda(t)$ for some (possibly loaded) formula $\psi$ such that $\phi = \psi^-$.
To show that there exists a $Y \in W$ such that $(X,Y) \in  Q_{\alpha}$ and
$\lnot \phi \in Y$, we use \autoref{l:cp5} in the first case and
\autoref{l:cp2new} in the second.
\qedhere
\end{enumerate}
\end{proof}

\subsection{Completeness proof}\label{subsec:CompProof}

We have now have all the material that is needed for the completeness proof.

\begin{proof}[Proof of \autoref{t:completeness}]
Suppose that $\Gamma$ is a consistent sequent, i.e., there is no closed tableau for $\Gamma$.
Then by \autoref{lem:gameP} and \autoref{lem:gamedet} Builder has a winning strategy $\Stree$ in the tableau game $\Gm(\Gamma)$.
By \autoref{thm:strmg} we obtain a model graph $\kmodel_{\Stree}$ including $\Gamma$ as a state.
Using the Truth \autoref{l:truthLemma} we find that $\kmodel_{\Stree}, \Gamma \models \Gamma$, hence $\Gamma$ is satisfiable.
\end{proof}

\section{Interpolation}\label{sec:Interpolation}

In this section we establish our main theorem: that PDL has the Craig Interpolation Property.
Before presenting the theorem, we introduce the following definitions required for its formulation.

\begin{defi}[\lean{Formula.voc}]\label{d:vocab}
Let $\xi$ be a formula or a program.
The vocabulary~$\voc(\xi)$ of $\xi$ is defined as the set of atomic formulas and atomic programs occurring in $\xi$.
Moreover, for a sequent $\Gamma$, we define $\voc(\Gamma) \isdef  \{ \voc(\phi) \mid \phi \in \Gamma \}$.
\end{defi}

\begin{defi}[\lean{Interpolant}]\label{d:Interpolant}
Let $\phi$ and $\psi$ be formulas such that $\models \phi \to \psi$.
An \emph{interpolant} for $\phi \to \psi$ is a formula $\theta$ such that 
$\voc(\theta) \subseteq \voc(\phi) \cap \voc(\psi)$,
$\models \phi \to \theta$, and
$\models \theta \to \psi$ hold.
\end{defi}

We generally use the Greek letter $\theta$ to denote interpolants.

With these definitions, we state our main theorem.

\begin{thm}[\checklean{theorem}{interpolation}]\label{t:interpolation}
PDL has the Craig Interpolation Property.
That is, for all formulas $\phi$ and $\psi$ such that $\models \phi \to \psi$, there exists an interpolant for $\phi \to \psi$.
\end{thm}

A typical and important consequence of Craig Interpolation Property is the Beth definability property which states that implicit definability implies explicit definability.
\begin{defi}[\lean{Formula.impDef}, \lean{Formula.expDef}]\label{d:ImpExpDef}
  A formula $\phi(p)$ is said to \emph{implicitly define} $p$ in terms of $\voc(\phi)\setminus \{p\}$, if
  \[ \phi(p_0), \phi(p_1) \models  p_0 \leftrightarrow p_1 \]
  where $p_0, p_1 \notin \voc(\phi)$.
  
  A formula $\psi$ is an \emph{explicit definition} of $p$ relative to $\phi$, if $\voc(\psi) \subseteq \voc(\phi) \setminus \{ p \}$ and 
  \[ \phi(p) \models p \leftrightarrow \psi. \]
\end{defi}

The argument that Craig Interpolation Property entails the Beth definability property for PDL is similar to that for many other propositional logics.
We refer to~\cite{hoog:defi01} for further details and examples of other logics.

\begin{cor}[\checklean{theorem}{beth}]\label{cor:beth}
  PDL has the Beth definability property.
  That is, for any formula $\phi(p)$ that implicitly defines $p$ in terms of $\voc(\phi) \setminus \{p\}$,
  there exists an explicit definition of $p$ relative to $\phi$.
\end{cor}

\begin{proof}
  Fix a formula $\phi(p)$.
  Suppose that $\phi(p_0), \phi(p_1) \models  p_0 \leftrightarrow p_1$.
By \autoref{t:deduction}
$\models \phi(p_0) \to (\phi(p_1) \to  (p_0 \leftrightarrow p_1)) $.
By propositional reasoning we have:
\[ \models (\phi(p_0) \land p_0) \to (\phi(p_1) \to p_1) \]
By \autoref{t:interpolation}, there exists an interpolant, i.e., a formula $\theta$ such that:
\begin{enumerate}
\item[(0)] $\voc(\theta) \subseteq \voc(\phi(p_0) \land p_0) \cap \voc(\phi(p_1) \to p_1)$
\item[(1)] $\models (\phi(p_0) \land p_0) \to \theta$
\item[(2)] $\models \theta \to (\phi(p_1) \to p_1)$
\end{enumerate}

By (0), we have that $\{p, p_0, p_1\}\cap\voc(\theta)=\emptyset$. 
Hence, by uniformly substituting $p_0$ with $p$ in (1), we  obtain
$\models (\phi(p) \land p) \to \theta$
which by \autoref{t:deduction} implies
$\phi(p) \models p \to \theta$.
Similarly, from (2) we obtain
$\models \theta \to (\phi(p) \to p)$
which by \autoref{t:deduction} implies
$\phi(p) \models \theta \to p$.
Combining these, we obtain
\[ \phi(p) \models p \leftrightarrow \theta \]
which means that $\theta$ is an explicit definition of $p$ relative to $\phi$.
\end{proof}

The proof-theoretic tool that we use in this paper is a tableau calculus. Definitions about the tableau calculus are typically stated in terms of (un)satisfiability rather than validity/provability, so let us restate interpolation in those terms.
\begin{defi}[\lean{PartInterpolant}]\label{d:PartInterpolant}
  Let $\Gamma_1$ and $\Gamma_2$ be two sequents such that $\Gamma_1 \cup \Gamma_2$ is unsatisfiable.
  An interpolant for $\Gamma_1$ and $\Gamma_2$ is a formula $\theta$ such that:
  \begin{itemize}
  \item $\voc(\theta) \subseteq \voc(\Gamma_1) \cap \voc(\Gamma_2)$,
  \item $\Gamma_1 \models \theta$, and
  \item $\Gamma_2 \models \lnot \theta$.
  \end{itemize}
\end{defi}
The following fact establishes the connection between \autoref{d:Interpolant} and the above:
\begin{fact}\label{fact:intToPartInt}
    Let $\phi, \psi$ be formulas.
    Then $\models \phi \to \psi$ iff $\phi\cup\{\lnot \psi\}$ is unsatisfiable.
    Furthermore, if $\models \phi \to \psi$, then a formula
    $\theta$ is an interpolant for $\phi \to \psi$
    if and only if
    $\theta$ is an interpolant for $\{\phi\}$ and $\{\lnot \psi\}$.
\end{fact}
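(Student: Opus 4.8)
The plan is to establish both halves directly from the semantic definitions; nothing beyond \autoref{d:validEquiv} and \autoref{d:evaluate} is required, and throughout I would identify a formula $\chi$ with the singleton sequent $\{\chi\}$ (so that $\models\chi$ is read as $\chi \in \{\chi\}$ being satisfied, $\chi\models\eta$ means $\{\chi\}\models\eta$, and the vocabulary of $\{\chi\}$ is that of $\chi$).

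For the equivalence ``$\models\phi\to\psi$ iff $\{\phi,\lnot\psi\}$ is unsatisfiable'', I would note that, whatever the chosen abbreviation of $\to$ in terms of $\lnot$ and $\land$, the semantics of \autoref{d:evaluate} give $\kmodel,v\Vdash\phi\to\psi$ iff ($\kmodel,v\not\Vdash\phi$ or $\kmodel,v\Vdash\psi$), i.e.\ iff \emph{not} ($\kmodel,v\Vdash\phi$ and $\kmodel,v\Vdash\lnot\psi$). Hence there is a pointed model at which $\phi\to\psi$ fails precisely when there is a pointed model satisfying both $\phi$ and $\lnot\psi$, that is, satisfying the sequent $\{\phi,\lnot\psi\}$. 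Negating both sides yields the claim.

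For the second part, assume $\models\phi\to\psi$; by the first part $\{\phi\}\cup\{\lnot\psi\}$ is unsatisfiable, so \autoref{d:PartInterpolant} genuinely applies to the pair $\{\phi\},\{\lnot\psi\}$. I would then match the three conditions of \autoref{d:Interpolant} against those of \autoref{d:PartInterpolant}. The vocabulary conditions agree because the vocabulary of a singleton sequent is that of its unique member and a negation adds no atoms, so $\voc(\phi)\cap\voc(\psi)$ is exactly the restriction imposed in both definitions. The condition $\models\phi\to\theta$ is equivalent to $\phi\models\theta$ by the semantics of $\to$ used above (it is also the instance of the Local Deduction \autoref{t:deduction} with $X=\emptyset$). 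Finally, $\models\theta\to\psi$ holds iff every pointed model making $\theta$ true makes $\psi$ true, which, taking contrapositives at each pointed model, is the same as: every pointed model making $\lnot\psi$ true makes $\lnot\theta$ true, i.e.\ $\{\lnot\psi\}\models\lnot\theta$. Combining the three equivalences gives exactly that $\theta$ interpolates $\phi\to\psi$ iff $\theta$ interpolates the pair $\{\phi\},\{\lnot\psi\}$.

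The argument is entirely routine; the only thing to be a little careful about is the notational overloading between a formula and the corresponding singleton sequent, and checking that $\voc$ and $\models$ are insensitive to it — which they are by definition. I do not anticipate any genuine obstacle.
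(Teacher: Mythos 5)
Your proposal is correct, and it is exactly the routine semantic verification the paper has in mind: the paper states \autoref{fact:intToPartInt} without proof, treating it as immediate from \autoref{d:validEquiv}, \autoref{d:Interpolant} and \autoref{d:PartInterpolant}, which is precisely what you spell out (validity of $\phi\to\psi$ versus unsatisfiability of $\{\phi,\lnot\psi\}$, and the pointwise contraposition turning $\models\theta\to\psi$ into $\{\lnot\psi\}\models\lnot\theta$). No gap; your care about the formula/singleton-sequent overloading and the fact that $\voc(\lnot\psi)=\voc(\psi)$ covers the only points worth checking.
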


By completeness (\autoref{t:completeness}) and \autoref{def-tab-closed}, unsatisfiability of $\Sigma_1 \cup \Sigma_2$ implies the existence of a closed PDL-tableau rooted at $\Sigma_1 \cup \Sigma_2$.\footnote{Readers more familiar with proof systems may interpret this as a proof of $\lnot((\land \Sigma_1) \land (\land \Sigma_2))$.}
Our task is to to define an interpolant from the latter. Since each of $\Sigma_1$ and $\Sigma_2$ plays a distinct role, it will be technically convenient to consider a ‘split’ PDL-tableau that is rooted at
the sequent $\Sigma_1;\Sigma_2$. The semicolon allows us to demarcate at each step the formulas
with ancestry to the first component $\Sigma_1$ and second component $\Sigma_2$.
Formally, the expression $\Sigma_1;\Sigma_2$ is a pair of sequents $(\Sigma_1, \Sigma_2)$.
We refer to $\Sigma_1$ and $\Sigma_2$ as, respectively, its first or left component, and its second or right component. 
The tableau rules ensure that at most one component contains a loaded formula; such a component is called the loaded component.
The split sequents $\Sigma_1; \Sigma_2$ and $\Sigma_1'; \Sigma_2'$ are equal if $\Sigma_1 = \Sigma_1'$ and $\Sigma_2 = \Sigma_2'$.

\subsection{An overview of the construction}\label{subsec:ProofIdea}

The standard proof-theoretic route to interpolation is Maehara's method (see \cite{Maehara1961} and \cite[p.~33]{Takeuti1975:ProofTheory}).
The idea is to define interpolants for the leaves of the closed tableau and combine the interpolants that have been obtained at the children of a rule to define an interpolant for its parent.
For example, we add a conjunction or disjunction, or prefix the interpolant obtained in the previous step with a modality. Defining an interpolant for a closed leaf is straightforward (\autoref{subsec:InterpolationForSingletonClusters}) because its unsatisfiability is directly evident. E.g., from a leaf labeled $\Sigma_1', p; \Sigma_2', \lnot p$
we can read off the interpolant $p$, and from $\Sigma_1', p, \lnot p; \Sigma_2$ we read off the interpolant $\bot$. 
However, a closed PDL-tableau may also contain leaves that are loaded-path repeats. Recall that a loaded-path repeat has the same label as a sequent closer to the root (its companion), and every sequent on the path from companion to loaded-path repeat is loaded. Maehara's method cannot be applied here because it is not possible to read off an interpolant from the loaded-path repeat: the unsatisfiability of its label is not evident.

\textbf{Adapting Maehara's method to cyclic tableaux.} 
The challenge is to construct an interpolant at the companion of a loaded-path repeat in a closed PDL-tableau (rooted at $\Sigma_1;\Sigma_2$, say). It should be noted that in a general analysis of such leaves requires a richer graph structure than a branch, to account for the nesting of companion nodes from multiple loaded-path repeats.
For this purpose, we consider a \emph{cluster}~$C$ (\autoref{d:cluster}) rooted at $\Gamma_1;\Gamma_2$; this is a maximal subgraph of the closed tableau that is strongly connected if we were to add edges from a loaded-path repeat to its companion. In the following, \emph{we  assume that $\Gamma_1$ is unloaded and $\Gamma_2$ is loaded}, and construct a formula~$\theta$ such that the $\Gamma_1\models\theta$, $\Gamma_2\models\lnot\theta$, and 
$\voc(\theta) \subseteq \voc(\Gamma_1) \cap \voc(\Gamma_2)$.
In the case that it is $\Gamma_1$ that is loaded and $\Gamma_2$ is unloaded, we first swap the sides of the cluster so that it is rooted at $\Gamma_2;\Gamma_1$ and then run the argument that follows to compute an interpolant. Finally, observe that the negation of an interpolant for $\Gamma_2;\Gamma_1$ is an interpolant for $\Gamma_2;\Gamma_1$.\footnote{An alternative to swapping the sides of the cluster is observing that the argument constructs a formula~$\theta$ that is relativized to the loading: the unloaded component forces $\theta$ and the loaded component forces $\lnot\theta$.}

Let us first consider a specific version of the problem by assuming that on the loaded path from $c(t)$ to $t$, rules are only applied to the loaded component.
Anticipating that we will reach a fixed-point equation, we start by assigning 
the variable $q_{c(t)}$ as the candidate (`pre-interpolant') for the interpolant 
at $t$. 
Note that an interpolant at $t$ is also an interpolant at $c(t)$ since the two
nodes have the same label. 
The idea is to proceed from $t$ towards $c(t)$, extending the pre-interpolant as
described below.
We ultimately compute a pre-interpolant $\iota_{c(t)}$ at $c(t)$.
We summarize the key ideas for this extension of Maehara's method in the following three items E1 to E3.

\begin{enumerate}[label=\textbf{E\arabic*}, ref=E\arabic*]
\item\label{loadedrule} at a local rule --- in the terminology of this paper, this means some rule applied to a formula not of the form $\lnot\loaded{[a][\alpha]}\phi$ --- in the loaded component, the pre-interpolant for the parent is the conjunction\footnote{A disjunction formula in the loaded component (under our assumption, this is $\Gamma_2$) contributes a conjunction to the pre-interpolant because we are showing that the loaded component forces the \emph{negation} of the target formula. Meanwhile, a conjunction formula in the loaded component does not change the pre-interpolant since the set of formulas in the sequent is already read conjunctively.}
of the pre-interpolants that were computed for the children of that rule.

\item\label{loadedmodal} for a modal rule on atomic program~$a$ --- the rule is applied to the unique loaded formula $\lnot\loaded{[a][\alpha]}\phi$ in the loaded component --- the pre-interpolant for the parent is $[a]\iitp_{y}$ where~$\iota_y$ is the pre-interpolant already computed for the child.
\end{enumerate}

By repeatedly applying the above rules, and using the logical equivalence 
$[a](\xi\land \chi)\equiv [a]\xi\land [a]\chi$, we have that $\iota_{c(t)}$ has
the form $[\alpha]q_{c(t)}\land\psi$ where $\alpha$ and $\psi$ are free of 
$q_{c(t)}$. 
Since $t$ and $c(t)$ are identically labeled, they have the same interpolant 
so we expect that $q_{c(t)}\equiv [\alpha]q_{c(t)}\land\psi$. 
Crucially, an equivalence of this form has a solution $q_{c(t)}=[\alpha^*]\psi$
in PDL, so this is how the pre-interpolant will be defined at companion nodes 
(\textbf{E3} below).

We can now identify the problem in the general case: when the cluster features branching rules in the unloaded component. 
If we were to follow Maehara's method in such a case, we would define the pre-interpolant for the 
parent node as the \emph{disjunction} of the pre-interpolants of its children.
However, when arriving at a companion node, the presence of disjunctions means that the fixed-point equation might not have the desired shape 
$q_{c(t)}\equiv [\alpha]q_{c(t)}\land\psi$.
In particular, the issue is that the equations we obtain might not have a solution 
`inside PDL'.

To arrive at a solution of this problem, consider the \emph{special case} where
a branching rule is applied to the unloaded component of a node $t$ that has a child $s$ that belongs to the cluster $C$, and a child $v$ that lies
outside of $C$.
In this case, we may inductively assume that we have already found a proper 
interpolant for the node $v$ and a pre-interpolant $\iitp_{s}$ for $s$, and
we define $\iitp_{t} \isdef [\lnot\theta_{v}?]\iitp_s$.
This formula is logically equivalent to the disjunction $\theta_{v}\lor\iitp_{s}$, but the difference is that we 
have now `packaged' the interpolant $\theta_{v}$ which contains no occurrence of any
$q$-variable inside the boxed test program $\lnot\theta_{v}?$.
Consequently, the equation that we obtain at the companion node will have the desirable shape we mentioned above. Hence, a solution inside PDL is evident.

\begin{exa}
Consider the closed PDL-tableau in \autoref{f:overviewExampleInterpolant} 
which consists of a cluster~$C$ and nodes that are adjacent to the cluster.
Let $C^+$ be the set of all nodes.
We then call the elements of $C^+\setminus C$ exits.

\begin{figure}[htb]
\begin{center}
\begin{tikzpicture}[node distance=16.5mm, >=latex]
    \draw[black!30,dashed,fill=yellow!10]
    (-3.5, 0.5) -- (7.2, 0.5) -- (7.2, -8.7) -- (0.5, -8.7) -- 
    (0.5, -3.6) -- (-3.5, -3.6) -- cycle;
 
  \node (2) at (0,0)
    {$p \+ [a][a^\ast](p \lor [a^\ast]p) ~;~ \lnot\loaded{[a][a^\ast]}p$};
  
  \node [below of=2] (4) 
    {$[a^\ast](p \lor [a^\ast]p) ~;~ \lnot\loaded{[a^\ast]}p$};
  \arr{2}{$(M)$}{4}
  
  \node [below of=4] (6) 
    {$p \lor [a^\ast]p \+ [a][a^\ast](p \lor [a^\ast]p) ~;~ \lnot\loaded{[a^\ast]}p$};
  \arr{4}{$(\Box)$}{6}
  
  \node [below of=6, node distance=2em] {\footnotesize $(\lnot \land)$};
  
  \node [below of=6] (9) {};
  
    \node [left of=9, node distance=40mm] (8) 
      {\fbox{$\lnot \lnot [a^\ast]p$} 
      $\lnot \lnot [a^\ast]p \+ [a][a^\ast](p \lor [a^\ast]p) ~;~ \lnot\loaded{[a^\ast]}p$};
    \arr{6}{}{8}
  
    \node [right of=9, node distance=35mm] (10) 
      {$\lnot \lnot p \+ [a][a^\ast](p \lor [a^\ast]p) ~;~ \lnot\loaded{[a^\ast]}p$};
    \arr{6}{}{10}
  
  \node [below of=10] (12) 
    {$p \+ [a][a^\ast](\lnot p \lor [a^\ast]p) ~;~ \lnot\loaded{[a^\ast]}p$};
  \arr{10}{$(\lnot)$}{12}
  
  \node [below of=12, node distance=2em] (12l) {};
  \node [left of=12l, node distance=2em] {\footnotesize $(\loaded{\diam})$};
  
  \node [below of=12] (15) {};
  
    \node [left of=15, node distance=14em] (14) 
      {\fbox{$p$}
      $p \+ [a][a^\ast](p \lor [a^\ast]p) ~;~ \lnot p$};
    \arr{12}{}{14}
    
    \node [right of=15, node distance=1em] (16) 
      {$p \+ [a][a^\ast](p \lor [a^\ast]p) ~;~ \lnot\loaded{[a][a^\ast]}p$};
    \arr{12}{}{16}
  
  \draw (16.east) edge [dashed, gray, ->, thick, bend right=28, looseness=1.5] 
    node [right] {$\comp$} (2.east);
\end{tikzpicture}
\end{center}
\caption{A closed tableau for $p \+ [a][a^\ast](p \lor [a^\ast]p) ~;~ \lnot\loaded{[a][a^\ast]}p$ with the highlighted proper cluster $C$.
The interpolants at $C^+\setminus C$ (already computed) have been indicated within a solid box and placed to the left of the node.
}
\label{f:overviewExampleInterpolant}
\end{figure}
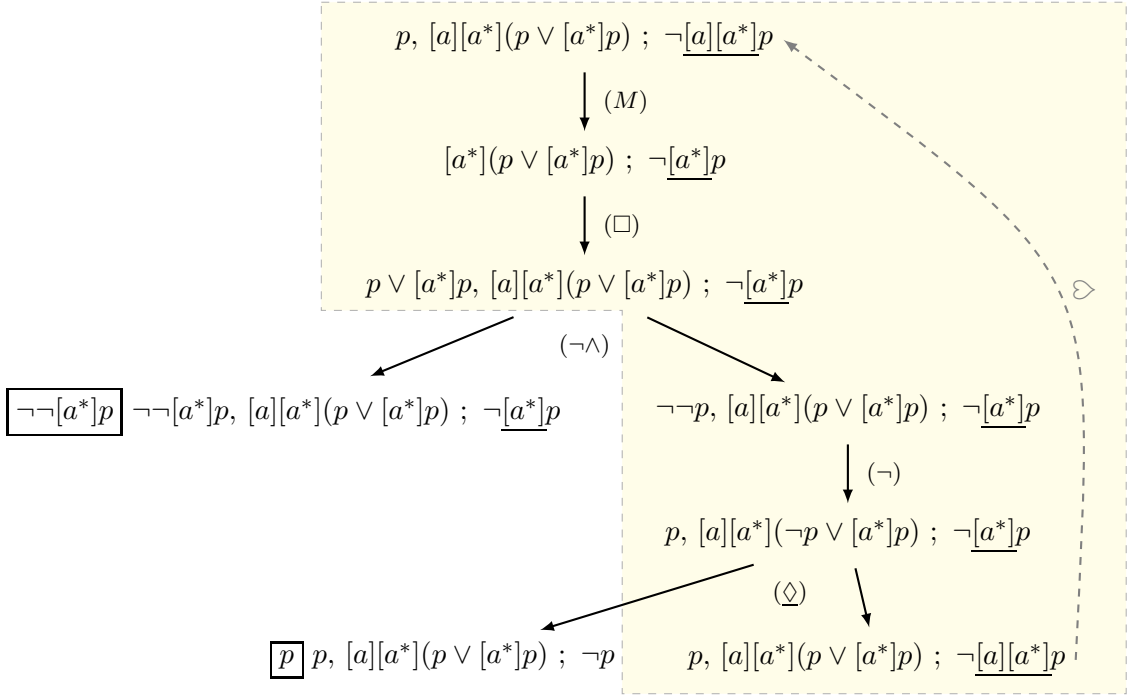

\begin{itemize}[label=$-$]
\item 
Noting that the leaf~$t$ is a loaded-path repeat in the cluster, assign its pre-interpolant as the variable~$q_{c(t)}$.
Observe that $t$ is the right child of a $(\loaded{\diam})$ rule; by inspection, the left child of the $(\loaded{\diam})$ rule has interpolant $p$.
\item 
The parent of the $(\loaded{\diam})$ rule is assigned the pre-interpolant~$p\land q_{c(t)}$ by \ref{loadedrule}.
\item 
The parent of the $(\loaded{\diam})$ rule belongs to a $\lnot[a^\ast]p$-region (the significance of regions in the general construction will be explained later on) that consists of four nodes.
There is a single node adjacent to this region that is outside the cluster: the left child of the rule $(\lnot \land)$ applied to the unloaded component which, by inspection, has interpolant $\lnot\lnot[a^\ast]p$.
The right child of the rule $(\lnot \land)$ has pre-interpolant $p\land q_{c(t)}$ (this is because the parent of the $(\lnot)$ rule is given the same pre-interpolant as its child).
Hence the parent of $(\lnot \land)$ has pre-interpolant $[\lnot\lnot\lnot[a^\ast]p?](p\land q_{c(t)})$.
The $(\Box)$ rule in the unloaded component is non-branching so it does not change the pre-interpolant either, hence the pre-interpolant at the root of the $\lnot[a^\ast]p$-region is $[\lnot\lnot\lnot[a^\ast]p?](p\land q_{c(t)})$.
\item
Moving from this $\lnot[a^\ast]p$-region to the premise of the (M) rule on atomic program~$a$, by \ref{loadedmodal} we get $[a][\lnot\lnot\lnot[a^\ast]p?](p\land q_{c(t)})$ as the pre-interpolant~$\iota_{c(t)}$ for the root.
\end{itemize}
Now solve $q_{c(t)}\equiv \iota_{c(t)}$ to obtain $q_{c(t)}=[(a;\lnot\lnot\lnot[a^\ast]p?)^\ast]p$. Call this formula~$\theta$.
By inspection: $p \+ [a][a^\ast](p \lor [a^\ast]p)\models \theta$, and $\lnot[a][a^\ast]p\models\lnot\theta$. 
\end{exa}

The trick of replacing a disjunction with a test-boxed formula works if only one 
of the disjuncts contains an auxiliary $q$-variable, but this is not the case
at every node where a rule is applied to the unloaded component.
However, a general solution is within reach if instead of individual nodes we consider subsets of the cluster, not necessarily (weakly) connected.

\begin{enumerate} [label=\textbf{E\arabic*}, ref=E\arabic*]
\setcounter{enumi}{2}
\item\label{unloadedregion} 
Let $C_{\Delta}$ be the set of nodes in the cluster $C$ of which the loaded 
component is the set $\Delta$.
By enforcing a \emph{uniformity condition} on the tableau calculus which ensures that the same non-modal rule was applied to the loaded component of every node $t\in C_{\Delta}$, we may presume that a pre-interpolant~$\iitp_{y}$ corresponding to the region(s) that follow from the application of \emph{the} rule to (all nodes with) the loaded component~$\Delta$ has been inductively found.
Define $\theta_{\Delta}$ as the disjunction of the interpolants from those nodes outside of the cluster but directly adjacent to $C_\Delta$ --- these interpolants exist by the inductive hypothesis, and do not contain any $q$-variables.
Then we define the interpolant for $C_{\Delta}$ as $[\lnot\theta_\Delta?]\iitp_y$.
\end{enumerate}

\textbf{Adapting Maehara's method to PDL.} 
The key observation in the interpolation proof for PDL is that, in order to 
obtain an interpolant for the root of a cluster $C$ of a closed split tableau, 
we do not need to base the definition of the pre-interpolants on the tree 
structure of $C$ itself.
Rather we define for each cluster $C$ an auxiliary structure $\Q$ which we call a \emph{quasi-tableau}
since it resembles a tableau.
The idea is then to define pre-interpolants for the nodes of $\Q$, inductively 
from leaves to root.
To make this work, we need to synchronize the order in which rules are applied
in $C$: our uniformity condition states that when a non-modal rule is applicable
to the loaded component~$\Delta$, the same rule must be applied 
uniformly throughout $C_\Delta$.
Crucially, following the above rules \emph{E1} to \emph{E3}, 
we make sure that pre-interpolants have a shape that ensures
that we can solve, inside PDL, the equations that appear at the 
companion nodes.
Finally, having defined pre-interpolants for every node of the quasi-tableau,
we define the interpolant for the root of the cluster $C$ to be the 
pre-interpolant that we associate with the root of its quasi-tableau 
$\Q$.

With these definitions in place we briefly sketch the argument that the formula 
we obtained for the root of the cluster is indeed an interpolant, i.e.,
it is forced by the unloaded component, and its negation is forced by the loaded component.
To see this, consider a cluster~$C$ whose root~$r_C$ is labeled $\Gamma_1;\Gamma_2$, and suppose that $\Gamma_1$ is unloaded and $\Gamma_2$ contains a loaded formula $\lnot\loaded{[\alpha]}\phi$.
Let $\iitp_{r}$ be the pre-interpolant computed for the root $r$ of $\Q$. Our task is to show $\Gamma_1 \models \iitp_{r}$ and $\Gamma_2 \models \lnot\iitp_{r}$.

To establish that $\Gamma_1\models\iitp_{r}$ we prove, by a leaf-to-root 
induction on the quasi-tableau $\Q$, that every node $x$ in $\Q$ satisfies the 
following statement, where $\Delta_{x}$ is the label of $x$, $t$ is any node 
in the region $C_{\Delta_{x}}$, and $\sigma$ is a substitution that 
replaces every internal variable $q_{c(z)}$ with a formula representing the 
region $C_{\Delta_{c(z)}}$: the unloaded component of $t$ forces $\subst{\sigma}{\iitp_x}$.
Since the pre-interpolant $\iitp_{r}$ contains no internal variables 
and $\Sigma_{i}$ is the unloaded component at the root of the cluster, the result follows immediately.

To establish that $\Gamma_2\models \lnot\iitp_{r}$ we reason towards a 
contradiction and assume that there is a Kripke model $\kmodel,v\Vdash\Gamma_2',
\lnot[\alpha]\lnot\phi,\iitp_{r}$.
It follows that there is a finite path of minimal length (`witness distance') 
in $\kmodel$ to some node~$w$ such that $\kmodel,w\Vdash\phi$.
This finite path determines a walk in the cluster~$C$ (it is a walk in the 
sense that some nodes in the cluster may be visited multiple times; this occurs
due to the loaded-path repeats) such that at each node~$v'$ on the path, there 
is a $z \in \Q$ such that $\kmodel,v'\Vdash \Delta_z,\iitp_{z}$.
The end of this path corresponds to a node in $C^{+}\setminus C$ i.e., a step outside~$C$. 
That would mean $\kmodel, w\Vdash\Delta,\iitp_{z}$ for some $z\in C^{+}\setminus C$.
However, $\iitp_{z}$ is known to be an interpolant for nodes in $C_{\Delta_z}$, thus $\Delta_z\models\lnot\iitp_{z}$. Hence $\kmodel, w \Vdash \lnot\iitp_{z}, \iitp_{z}$ which is the desired contradiction.

\subsection{Split tableaux}\label{subsec:InterpolationViaTableaux}
In this section, we define the split PDL-tableau system.

\begin{defi}\label{d:split-sequent}
A \emph{split sequent} is a pair of sequents.
A split sequent $(\Gamma_1, \Gamma_2)$ will usually be denoted as $\Gamma_1; \Gamma_2$, and we refer to $\Gamma_1$
and $\Gamma_2$ as, respectively, its first or left component, and its second or right component.
If $\Gamma_i$ with $i \in \{1, 2\}$ is (un)loaded we call it the (un)loaded component.
\end{defi}

We now define the notion of a split PDL-tableau where the nodes of the tableau are labeled with split sequents instead of sequents.
\begin{defi}
  Given a set $\mathsf{L}$ of inference rules, a \emph{split tableau $\tab$ for a split sequent $\Gamma_1; \Gamma_2$} is a tuple $(V, \edge, r, \Lambda, L)$ where
  $(V, \edge, r)$ is a rooted tree,
  $L$ is a function that assigns to every interior node $v \in V$ a rule $L(v) \in \mathsf{L}$,
  $\Lambda$ is a function that assigns to every node $v \in V$ a split sequent $\Lambda(v)$,
  and the split sequent $\Lambda(r)$ of the root is equal to $\Gamma_1; \Gamma_2$.
  Furthermore, if $u \in V$ is an interior node with children $v_1, \dots, v_n$ then $\Lambda(u) ~/~ \Lambda(v_1) | \dots | \Lambda(v_n)$ is a rule instance of the rule $L(u)$ (respecting any side conditions).
\end{defi}

For a node $v \in V$, with $\Lambda(v) = \Gamma_1;\Gamma_2$ sometimes we will write $\Lambda_1(v)$ instead of $\Gamma_1$ and $\Lambda_2(v)$ instead of $\Gamma_2$.
We will call $\Lambda_1(v)$ and $\Lambda_2(v)$, respectively, 
the first or left component of $v$, and the second or right component of $v$.

\begin{defi}\label{d:split-PDL-tableau}
A \emph{local split PDL-tableau} for $\Delta; \Gamma$ is a tableau over the following set of rules, called the local split rules:
    \begin{center}
      \AxiomC{$\Delta , \lnot \lnot \phi; \Gamma$}
      \LeftLabel{$(\lnot)_1$}
      \UnaryInfC{$\Delta,\phi; \Gamma$}
      \DisplayProof%
      \hspace{1.5em}
      \AxiomC{$\Delta , \phi \land \psi; \Gamma$}
      \LeftLabel{$(\land)_1$}
      \UnaryInfC{$\Delta,\phi,\psi; \Gamma$}
      \DisplayProof%
      \hspace{1.5em}
      \AxiomC{$\Delta , \lnot (\phi \land \psi); \Gamma$}
      \LeftLabel{$(\lnot \land)_1$}
      \UnaryInfC{$\Delta , \lnot \phi; \Gamma \splitCase \Delta , \lnot \psi; \Gamma$}
      \DisplayProof%
      
      \smallskip
      
      \AxiomC{$\Delta , [\alpha]\phi; \Gamma$}
      \LeftLabel{$(\Box)_1$}
      \RightLabel{\ \ $\alpha$ non-atomic}
      \UnaryInfC{$\big\{\Delta, \Sigma  \mid  \Sigma \in \unfold_{\square}(\alpha, \phi) \big\}; \Gamma$}
      \DisplayProof
      
      \smallskip
      
      \AxiomC{$\Delta , \lnot[\alpha]\phi; \Gamma$}
      \LeftLabel{$(\diam)_1$}
      \RightLabel{\ \ $\alpha$ non-atomic}
      \UnaryInfC{$\{\Delta, \Sigma \mid \Sigma \in \unfold_{\diam}(\alpha,\lnot\phi)\}; \Gamma$}
      \DisplayProof
      
      \smallskip
      
      \AxiomC{$\Delta , \lnot\loaded{[\alpha]}\xi; \Gamma$}
      \LeftLabel{$(\loaded{\diam})_1$}
      \RightLabel{\ \ $\alpha$ non-atomic}
      \UnaryInfC{$\big\{\Delta, \Sigma \mid \Sigma \in 
          \loaded{\unfold}_{\diam}(\alpha, \lnot\xi)\big\}; \Gamma$}
      \DisplayProof
      
      \smallskip
      
      \AxiomC{$\Delta , \lnot[a][\alpha_1]\ldots[\alpha_n]\phi; \Gamma$}
      \LeftLabel{$(L+)_1$}
      \RightLabel{\ \ $\Delta, \Gamma$ free, basic and $n \geq 0$ maximal}
      \UnaryInfC{$\Delta , \lnot\loaded{[a][\alpha_1]\ldots[\alpha_n]}\phi; \Gamma$}
      \DisplayProof
    
      \smallskip
    
      \AxiomC{$\Delta , \lnot\loaded{[\alpha_1]\ldots[\alpha_n]}\phi; \Gamma$}
      \LeftLabel{$(L-)_1$}
      \RightLabel{\ \ $n \geq 1$}
      \UnaryInfC{$\Delta , \lnot[\alpha_1]\ldots[\alpha_n]\phi; \Gamma$}
      \DisplayProof

      \smallskip
      
      \AxiomC{$\Gamma; \Delta , \lnot \lnot \phi$}
      \LeftLabel{$(\lnot)_2$}
      \UnaryInfC{$\Gamma; \Delta,\phi$}
      \DisplayProof
      \hspace{1.5em}
      \AxiomC{$\Gamma; \Delta , \phi \land \psi$}
      \LeftLabel{$(\land)_2$}
      \UnaryInfC{$\Gamma; \Delta,\phi,\psi$}
      \DisplayProof%
      \hspace{1.5em}
      \AxiomC{$\Gamma; \Delta , \lnot (\phi \land \psi)$}
      \LeftLabel{$(\lnot \land)_2$}
      \UnaryInfC{$\Gamma; \Delta , \lnot \phi \splitCase \Gamma; \Delta , \lnot \psi$}
      \DisplayProof
      
      \smallskip
      
      \AxiomC{$\Gamma; \Delta , [\alpha]\phi$}
      \LeftLabel{$(\Box)_2$}
      \RightLabel{\ \ $\alpha$ non-atomic}
      \UnaryInfC{$\Gamma; \big\{\Delta, \Sigma  \mid  \Sigma \in \unfold_{\square}(\alpha, \phi) \big\}$}
      \DisplayProof
      
      \smallskip
      
      \AxiomC{$\Gamma; \Delta , \lnot[\alpha]\phi$}
      \LeftLabel{$(\diam)_2$}
      \RightLabel{\ \ $\alpha$ non-atomic}
      \UnaryInfC{$\Gamma; \{\Delta, \Sigma \mid \Sigma \in \unfold_{\diam}(\alpha,\lnot\phi)\}$}
      \DisplayProof
      
      \smallskip
      
      \AxiomC{$\Delta , \lnot\loaded{[\alpha]}\xi; \Gamma$}
      \LeftLabel{$(\loaded{\diam})_2$}
      \RightLabel{\ \ $\alpha$ non-atomic}
      \UnaryInfC{$\Gamma; \big\{\Delta, \Sigma \mid \Sigma \in  \loaded{\unfold}_{\diam}(\alpha, \lnot\xi)\big\}$}
      \DisplayProof
      
      \smallskip
      
      \AxiomC{$\Gamma; \Delta , \lnot[a][\alpha_1]\ldots[\alpha_n]\phi$}
      \LeftLabel{$(L+)_2$}
      \RightLabel{\ \ $\Delta, \Gamma$ free, basic and $n \geq 0$ maximal}
      \UnaryInfC{$\Gamma; \Delta , \lnot\loaded{[a][\alpha_1]\ldots[\alpha_n]}\phi$}
      \DisplayProof
    
      \smallskip
    
      \AxiomC{$\Gamma; \Delta , \lnot\loaded{[\alpha_1]\ldots[\alpha_n]}\phi$}
      \LeftLabel{$(L-)_2$}
      \RightLabel{\ \ $n \geq 1$}
      \UnaryInfC{$\Gamma; \Delta , \lnot[\alpha_1]\ldots[\alpha_n]\phi$}
      \DisplayProof
    \end{center}
    In the above, $\phi, \psi \in \pdlforms$, $\alpha$ is a non-atomic program, $\alpha_1, \dots, \alpha_n \in \progs$, $a$ is an atomic program, and $\xi \in \pdlforms \cup \loaded{\pdlforms}$, i.e.\ $\xi$ is a possibly loaded formula.
    
    Furthermore, if $i \in \{1, 2\}$ and $t$ is a child of node $s$ such that $L(s) = (L+)_i$ then $L(t) \neq (L-)_i$.
    That is, if a node was obtained using $(L+)_i$ then it is not allowed to apply $(L-)_i$ immediately to it.
\end{defi}

\begin{defi}\label{d:split-modalrule}
The split modal rules are:

\begin{center}
  \AxiomC{$\Delta, \lnot\loaded{[a]}\xi; \Gamma$}
  \LeftLabel{${(M)}_1$}
  \UnaryInfC{$\Delta_a, \lnot \xi; \Gamma_a$}
  \DisplayProof
  \quad
  \quad
  \AxiomC{$\Gamma; \Delta , \lnot\loaded{[a]}\xi$}
  \LeftLabel{${(M)}_2$}
  \UnaryInfC{$\Gamma_a; \Delta_a , \lnot \xi$}
  \DisplayProof
\end{center}
where $\Delta, \Gamma$ are basic and $\xi \in \pdlforms \cup \loaded{\pdlforms}$, i.e.\ $\xi$ is a possibly loaded formula.

The set of split PDL rules is then defined as the set of split local rules together with the set of split modal rules.
\end{defi}

The rules of the form $(R)_1$ will be called left rules and the rules of the form $(R)_2$ right rules.

\begin{defi}\label{d:split-repeat}
Let $\tab$ be a split tableau over the split PDL rules.
A \emph{repeat} is a node $t$ such that $\Lambda_1(t) = \Lambda_1(s)$ and $\Lambda_2(t) = \Lambda_2(s)$, for some ancestor $s$ of $t$;
in this case, the nearest such ancestor to $t$ is called its \emph{companion} and denoted $c(t)$.
A repeat $t$ is a \emph{\lpr} if the path from $c(t)$ to $t$ consists only of loaded nodes.
A node $s$ is \emph{free} if $\Lambda_1(s)\cup \Lambda_2(s)$ is a free sequent, and \emph{loaded} 
otherwise.
\end{defi}

\begin{defi}
A \emph{split} PDL-Tableau is a tableau over the split PDL rules, which adheres to the following conditions:
\begin{enumerate}[label=(T\arabic*S)]
\item[(\lprCondSplit)] every \lpr is a leaf;
\item[(\freeCondSplit)] every free repeat is a leaf.
\end{enumerate}
\end{defi}

\begin{defi}\label{d:split-closed}
A split PDL-tableau is \emph{closed} if each of its leaves is either closed or a \lpr.
A split PDL-tableau is \emph{open} if it is not closed.
\end{defi}

The notions from \autoref{subsec:PropertiesOfPDLTableuax} of loaded path, the companion relation \comp, the edge relation $\edge$, its transitive closure $\edgeT$, its reflexive transitive closure $\edgeRT$, the relations $\cEquiv$ and $\simpler{}{}$, the notions of clusters, proper clusters and singleton clusters can be defined completely analogously for split PDL-tableaux.

\begin{lem}\label{l:splitPdlTabFinite}
  Every split PDL-tableau is finite.
\end{lem}
\begin{proof}
  Analogous to \autoref{l:pdlTabFinite}.
\end{proof}

The following lemma will be useful to show that the vocabulary condition holds.

\begin{lem}\label{l:vocIsPreserved}
    Let $\tab = (V, \edge, r, \Lambda, L)$ be a tableau over the split PDL rules and let $u, v \in V$ be such $u \edgeRT v$.
    Then, $voc(\Lambda_i(u)) \supseteq voc(\Lambda_i(v))$ for $i \in \{1, 2\}$.
\end{lem}
\begin{proof}
    By inspection of the rules.
\end{proof}

Next, we introduce two \emph{uniformity conditions} for split tableaux, which will be needed to define the pivotal construction in the interpolation proof, viz., that of a \emph{quasi-tableau} (in \autoref{d:Qtableau}).
Informally, the uniformity conditions together say that for loaded nodes, the unloaded component must first be reduced to a basic sequent, and after that, nodes with the same loaded component must synchronize on rules for propositional connectives and the unfold rules $(\Box)_i$ and $(\Diamond)_i$ that are applied in the loaded component. Note that the synchronization does not apply to the loading rules $(L_+)_i$ due to the side condition that the $i$-th component is basic. Hence different diamond formulas can be loaded, so that the modal rule can be applied to different diamond formulas.   
\begin{defi}\label{d:uniform}
A split PDL-tableau $\tab$ is \emph{uniform} if it satisfies the following conditions:
\begin{itemize}[align=left]
\item[U1.]
If $s\in \tab$ is such that $\Lambda_{i}(s)$ is loaded then a local rule is applied to a formula in the unloaded component of $s$, unless this component is basic;
\item[U2.]
If $s,t \in \tab$ are such that  $\Lambda_{i}(s) = \Lambda_{i}(t)$ is loaded and not basic, and the unloaded components of both $s$ and $t$ are basic, then at $s$ and at $t$ the same rule is applied, to the same formula in the loaded component of the node.
\end{itemize}
\end{defi}

\begin{rem}
Although condition U1 is not explicitly used in the proofs, it simplifies the formulation of condition U2.
In particular, U1 ensures that all rules are applied fully to the unloaded component before synchronizing the application of a rule to the loaded components.
While an alternative notion of uniformity that omits U1 and adjusts U2 accordingly is likely possible, we find using U1 and U2 more intuitive.
\end{rem}

\autoref{t:interpolation}, the main theorem of the document, follows directly from \autoref{t:unsatIffSplitClosed} and the following theorem.

\begin{thm}[\checklean{def}{tabToInt}]\label{t:tabToInt}
Let $\tab$ be a uniform, closed split PDL-tableau for $\Gamma_1; \Gamma_2$.
Then $\Gamma_1; \Gamma_2$ has an interpolant.
\end{thm}

We will prove the theorem by induction on the size (number of nodes) of $\tab$, making a case distinction as to whether the root of $\tab$ belongs to a singleton or a proper cluster.
In the case for proper clusters, we will need the following definition and observation.

\begin{defi}[\lean{plus_exits}]\label{d:plus_exits}
Let $C$ be some cluster of $\tab$.
Then we define
\[
C^{+} \isdef C \cup \{ s \in T \mid t \edge s, \text{ for some } t \in C\}
\]
as the set of nodes that either belong to $C$ or are the child of a node in $C$.
We refer to nodes in $C^+ \setminus C$ as \emph{exit nodes}, or simply \emph{exits}.
\end{defi}

The following lemma collects some properties of clusters in split tableaux that we will use implicitly from now on.

\begin{lem}\label{l:splitClusterProp}
Let $C$ be a proper cluster of a closed split tableau $\tab$.
\begin{enumerate}[(a)]
\item $C$ is a subtree\footnote{The pair $(T', \edge')$ is called a \emph{subtree} of a tree $(T, \edge)$ if it is a tree such that $T' \subseteq T$ and $\edge' \subseteq \edge$.}
of $(T,\edge)$; in particular, $C$ has a root.
\item Either $\Lambda_{1}(t)$ is loaded for every $t\in C$, or
$\Lambda_{2}(t)$ is loaded for every $t\in C$.
\item If $\Lambda_{i}(t) = \emptyset$ for some $t \in C$, then
$\Lambda_{i}(t) = \emptyset$ for all $t \in C$.
\end{enumerate}
\end{lem}

\begin{proof}
    Since the proofs of (a) and (c) are straightforward, we only show (b).
    First, we will show the following claim:

    \emph{Claim.}
    If $s \cEquiv t$, $s \cEdge t$, and $\Lambda_i(s)$ is loaded, then $\Lambda_i(t)$ is loaded.
    \medskip

    By definition, $s \cEdge t$ iff $s \edge t$ or $s \comp t$.
    If $s \comp t$, then the sequents of $s$ and $t$ must be equal, including loading.
    If $s \edge t$, it follows by inspection of the rules that the loading must either remain in the same component, or it must be removed.
    If it is removed, then we have that $t$ is free, and thus, by \autoref{l:ePropB}(d), that $s \not\cEquiv t$.
    This establishes the claim.

    Since $C$ is a proper cluster, there exists an $s \in C$ that is loaded. 
    For all nodes $t$ in $C$, there is a $\cEdge$-path from $s$ to $t$, and hence by induction on path length
    using the above Claim, it follows that all nodes in $C$ are loaded in the same component as $s$.
    \end{proof}

\subsection{Soundness and completeness for uniform split tableaux}

The following theorem states a soundness and completeness result for (uniform) split tableaux.
It is on the basis of this result that we can indeed use split tableaux to establish  interpolation for PDL.

\begin{thm}\label{t:unsatIffSplitClosed}
[Soundness and completeness for split tableaux.]
For all split sequents $\Gamma_1; \Gamma_2$, the following are equivalent:
\begin{enumerate}
\item[1)] $\Gamma_{1} \cup \Gamma_{2}$ is not satisfiable;
\item[2)] there is a closed split tableau for $\Gamma_1; \Gamma_2$.
\item[3)] there is a uniform, closed split tableau for $\Gamma_1; \Gamma_2$.
\end{enumerate}
\end{thm}

\begin{proof}
We prove the implications 1) $\Rightarrow$ 2), 
1) $\Rightarrow$ 3), 3) $\Rightarrow$ 2) and 2) $\Rightarrow$ 1).
\medskip

\textbf{1) $\Rightarrow$ 2)} Due to the
similarities to the completeness proof in \autoref{sec:Completeness} we 
confine ourselves to a sketch.

The crucial idea is to modify the tableau game of \autoref{subsec:TableauGame}.
The \emph{split tableau game} $\Gms(\Gamma_{1};\Gamma_{2})$ is a two-player graph 
game with players $B$ (Builder) and $P$ (Prover).
A position of $\Gms(\Gamma_1;\Gamma_2)$ is either:
\begin{itemize}
\item a (possibly loaded) split sequent $\Delta_{1};\Delta_{2}$; or
\item a pair of the form $\big((\Gamma,\phi,R_{1}),\Delta_{2}\big)$, such that $R_{1}$ is a left rule applicable to the split sequent $\Gamma, \phi; \Delta_{2}$
with $\phi$ as the principal formula; or
\item a pair of the form $\big(\Delta_{1}, (\Gamma,\phi,R_{2})\big)$, such that 
$R_{2}$ is a right rule applicable to the split sequent $\Delta_{1};\Gamma, \phi$
with $\phi$ as the principal formula.
\end{itemize}
Positions of the first type belong to Prover; at a position $\Delta_{1};
\Delta_{2}$ he picks either a formula $\phi \in \Delta_{1}$ and a left rule 
$R_{1}$ to be applied to $\phi$, or a formula $\phi \in \Delta_{2}$ and a right
rule $R_{2}$ to be applied to $\phi$.
He thus moves the token to a position of the second or the third type.
Positions of the second type belong to Builder; at a position $\big((\Gamma,
\phi,R_{1}),\Delta_{2}\big)$ she picks one of the rule children of $R_{1}$, and
this rule child is then the next position, of the first type.
Positions of the third type also belong to Builder, and her admissible moves are 
defined similarly.

A \emph{match} of $\Gms(\Gamma_{1};\Gamma_{2})$ consists of a sequence of positions, starting at $\Gamma_{1};\Gamma_{2}$, induced by the players' moves.
Such a match is \emph{winning for Prover} if it reaches a closed split sequent
or a \emph{\lpr}.
On the other hand, a match is \emph{winning for Builder} if it reaches a sequent
to which no rule is applicable, or a free repeat.
Recall (\autoref{d:split-repeat}) that the notion of a repeat is now defined in terms of split sequents.

Similar to the simpler game $\Gm$ it is easy to show that full matches of this
game must be finite, so that $\Gms(\Gamma_{1};\Gamma_{2})$ is determined.
It is also straightforward to prove that a winning strategy for Prover in 
$\Gms(\Gamma_{1};\Gamma_{2})$ induces a closed split tableau.
Thus the crucial result to prove is the following:
\begin{equation}\label{eq:splcpl1}
\Gamma_{1} \cup \Gamma_{2} \text{ is satisfiable if Builder has a 
winning strategy in } \Gms(\Gamma_{1};\Gamma_{2}).
\end{equation}

To prove \eqref{eq:splcpl1} we assume that Builder has a winning strategy
in $\Gms(\Gamma_{1};\Gamma_{2})$, and as in \autoref{subsec:TableauGame} we 
represent this winning strategy by its strategy tree $\Stree = (T,\tedge,r,
\Lambda)$.
Now, $\Lambda$ is a labelling that assigns to each node $s$ in $T$ a 
(possibly loaded) \emph{split} sequent $\Lambda(s) = \Lambda_{1}(s);
\Lambda_{2}(s)$.
As before we will carve a \emph{model graph} $\kmodel_{\Stree}$ out of $\Stree$.

For this purpose, we define \emph{pre-states} completely analogously to \autoref{d:prestate}.
That is, a pre-state as a maximal sequence $\pi = (s_{i})_{1\leq i\leq k}$ on $\Stree$ such that $s_{1}$ is initial; there is no application on $\pi$ of a modal rule, a loading rule or an unloading rule; and $\pi$ is either a path or the composition $\pi = \pi'\pi''$ of two paths such that the last node $s$ of $\pi'$ is a repeat leaf, the first node of $\pi''$ is its companion, and $\pi''$ is an initial segment of the path from $c(s)$ to $s$.
Given any path $\pi = (s_{i})_{1\leq i\leq k}$ on $\Stree$ and $h \in \{1,2\}$, we define $\Lambda_{h}(\pi) \isdef \bigcup_{i} \Lambda_{h}(s_{i})$ and we let $\Lambda^-(\pi)$ denote the set of formulas we obtain from unloading the set $\Lambda_{1}(\pi) \cup \Lambda_{2}(\pi)$.

The goal is now to prove that this structure $\kmodel_{\Stree}$ is a model graph
indeed, and for this purpose we prove \emph{split} versions of the 
\autoref{l:cp2new}, \autoref{l:cp4}, and \autoref{l:cp5}.
As an example, the split version \autoref{l:cp2new}$^{s}$ of \autoref{l:cp2new} reads as follows.

\emph{\autoref{l:cp2new}${\,}^{s}$}
Let $t$ be a node of $\Stree$ on the pre-state $\pi$, let $h \in \{1,2\}$, and let $\lnot\loaded{[\alpha]}\phi$ be a loaded formula in $\Lambda_{h}(t)$.
Then there is a node $u$ on $\Stree$ such that $t \tedgeT u$, $\lnot\phi \in \Lambda(u)$ and $u$ lies on some pre-state $\rho$ such that $(\Lambda^-(\pi), \Lambda^-(\rho)) \in Q_{\alpha}$.
\medskip

In all three cases, the proof of the split lemma is \emph{almost verbatim} the
same as the proof of the original statement: the only modification is to replace
at some spots a set $\Lambda(t)$ or $\Lambda(\pi)$ with its left- or right-hand side 
version $\Lambda_{h}(t)$ or $\Lambda_{h}(\pi)$.

From the `split lemmas', \autoref{l:cp2new}$^{s}$,
\autoref{l:cp4}$^{s}$ and \autoref{l:cp5}$^{s}$, it easily follows (as in the proof of
\autoref{thm:strmg}) that the structure $\kmodel_{\Stree}$ is a model graph, 
and from this the statement \eqref{eq:splcpl1} follows in the same manner as 
in \autoref{subsec:TableauGame}.
This means that we are done, since, as we saw, \eqref{eq:splcpl1} is the crucial
statement needed to prove the implication 1) $\Rightarrow$ 2).
\medskip

\textbf{1) $\Rightarrow$ 3)} This is a minor variation of
the proof of the implication 1) $\Rightarrow$ 2).
The idea is simply to tweak the rules of the split tableau game $\Gms(\Gamma_{1};
\Gamma_{2})$ in such a way that a winning strategy for Prover in the resulting 
game $\Gms(\Gamma_{1};\Gamma_{2})$ induces a closed split tableau for $\Gamma_1; 
\Gamma_2$ that in addition is \emph{uniform}.
This is quite simple: we restrict Prover's admissible rules at a position
$\Delta_{1};\Delta_{2}$ in the following way:
\begin{itemize}
\item
if one of the components is loaded and the unloaded component is not basic, then Prover must pick a formula from the unloaded component; 
\item
if one of the components is loaded but not basic, and the unloaded component is basic, then we require Prover to pick the smallest admissible and unloaded formula in the loaded component with respect to some arbitrary but fixed order on the formulas in $\FL(\Gamma_{1} \cup \Gamma_{2})$. 
\end{itemize}
Clearly, the absence of a uniform, closed split tableau for $\Gamma_1;\Gamma_2$ implies the non-existence of a winning strategy for Prover in the tweaked game $\Gmt(\Gamma_{1};\Gamma_{2})$. By inspection, this implies a winning strategy for Builder in the tweaked game. The construction of a model graph from this winning strategy proceeds from the split versions of  
\autoref{l:cp2new}, \autoref{l:cp4}, and \autoref{l:cp5} as in 1) $\Rightarrow$ 2) above. Indeed, observe that the saturation in the definition of model graph (\autoref{d:ModelGraph}) is a closure condition so the order in which formulas are picked does not matter as long as they are eventually picked (\autoref{l:cp1a} remains true so the latter holds).
\medskip

\textbf{3) $\Rightarrow$ 2)}
Trivial.
\medskip

\textbf{2) $\Rightarrow$ 1)} 
The proof of this implication is a minor variation of the Soundness Theorem of the unsplit system, \autoref{t:consIffSat}.
We leave it for the reader to verify the details.
\end{proof}

\subsection{Defining interpolants for singleton clusters}\label{subsec:InterpolationForSingletonClusters}

In this subsection, we consider the relatively easy case where the tableau root forms a singleton cluster.
That is, we consider a cluster with respect to $\cEquiv$ as defined in \autoref{d:cEquiv} that consists of a single node.

\begin{lem}\label{l:itpLeaves}
Let the tableau $\tab$ for $\Gamma_1; \Gamma_2$ consist of a single, closed node.
Then $\Gamma_1; \Gamma_2$ has an interpolant.
\end{lem}
\begin{proof}
If the split sequent $\Gamma_1;\Gamma_2$ is closed, we must have at least one of
the following cases:
 \begin{itemize}
  \item $\bot \in \Gamma_1$ or $\phi,\lnot \phi \subseteq \Gamma_1$ for some formula $\phi$:\newline
    then we have $\Gamma_1 \models \bot$ and $\Gamma_2 \models \lnot \bot$, hence let $\theta \isdef  \bot$.
  \item $\bot \in \Gamma_2$ or $\phi,\lnot \phi \subseteq \Gamma_2$ for some formula $\phi$:\newline
    then we have $\Gamma_1 \models \lnot \bot$ and $\Gamma_2 \models \bot$, hence let $\theta \isdef  \lnot \bot$.
  \item $\phi \in \Gamma_1$ and $\lnot \phi \in \Gamma_2$ for some formula $\phi$:\newline
    then we have $\Gamma_1 \models \phi$ and $\Gamma_2 \models \lnot \phi$, hence let $\theta \isdef  \phi$.
  \item $\lnot \phi \in \Gamma_1$ and $\phi \in \Gamma_2$ for some formula $\phi$:\newline
    then we have $\Gamma_1 \models \lnot \phi$ and $\Gamma_2 \models \phi$, hence let $\theta \isdef  \lnot \phi$.
  \end{itemize}
Moreover, in all cases we obviously have $\voc(\theta) \subseteq \voc(\Gamma_1)
\cap \voc(\Gamma_2)$.
Consequently, in all cases we have defined an interpolant $\theta$ for $\Gamma_1;
\Gamma_2$.
\end{proof}

\begin{lem}[\checklean{theorem}{localInterpolantStep}]\label{l:easyItp}
Let $\tab$ be a uniform closed tableau for the split sequent $\Gamma_1; \Gamma_2$,
and assume that the root $r$ of $\tab$ forms a singleton cluster.
Furthermore, assume that for every child $t$ of $r$ we already have obtained an
interpolant $\theta_{t}$ for the split sequent $\Lambda_{1}(t);\Lambda_{2}(t)$.
Then $\Gamma_1; \Gamma_2$ has an interpolant.
\end{lem}
\begin{proof}
We make a case distinction as to the rule that is applied at the root of $\tab$.

If the $n$ children are obtained using a rule different from $(M)$, then we have
one of the following cases:
\begin{itemize}
\item
  A rule was applied in the second component.
  The children of $\Gamma_1;\Gamma_2$ are then of the form $\Gamma_1;\Pi_1,\ldots,\Gamma_1;\Pi_n$,
  and we have $\Gamma_1 \models \theta_1 \land \ldots \land \theta_n$.
  Moreover $\Gamma_2$ only holds in a state in which also one of the $\Pi_i$ holds due to the fact that all local rules are locally sound by \autoref{l:localRuleTruth}.
  But then from $\Pi_i \models \lnot \theta_i$ for each $i \leq n$, we get that $\Gamma_2 \models \lnot \theta_1 \lor \ldots \lor \lnot \theta_n$, and thus $\Gamma_2 \models \lnot (\theta_1 \land \ldots \land \theta_n)$.
  Let $\theta \isdef  \theta_1 \land \ldots \land \theta_n$.
  Because we have $\voc(\Pi_i) \subseteq \voc(\Gamma_2)$ for each $i$, it follows that $\voc(\theta) \subseteq \voc(\Gamma_1) \cap \voc(\Gamma_2)$.
  Hence $\theta$ is an interpolant for $\Gamma_1;\Gamma_2$.
\item 
  A rule was applied in the first component.
  The children of $\Gamma_1;\Gamma_2$ are then of the form $\Pi_1;\Gamma_2,\ldots,\Pi_n;\Gamma_2$,
  and we have $\Gamma_2 \models \lnot \theta_1 \land \ldots \land \lnot \theta_n$,
  and thus $\Gamma_2 \models \lnot (\theta_1 \lor \ldots \lor \theta_n)$.
  Moreover, $\Gamma_1$ only holds in a state in which also one of the $\Pi_i$ holds.
  From $\Pi_i \models \theta_i$ for each $i \leq n$ we get that $\Gamma_1 \models \theta_1 \lor \ldots \lor \theta_n$.
  Hence let $\theta \isdef  \theta_1 \lor \ldots \lor \theta_n$.
  Because we have $\voc(\Pi_i) \subseteq \voc(\Gamma_1)$ for each $i$, it follows that $\voc(\theta) \subseteq \voc(\Gamma_1) \cap \voc(\Gamma_2)$.
  Hence $\theta$ is an interpolant for $\Gamma_1;\Gamma_2$.
\end{itemize}

  Before we consider the case for the $(M)$ rule, we note that $\Gamma_a \models \phi$ implies $\Gamma \models [a]\phi$.
  To see this, suppose $\kmodel, v \Vdash \Gamma$ for some pointed model $(\kmodel, v)$.
  Then we have $w \Vdash \Gamma_a$ and thus $w \Vdash \phi$ for all $w \in \kmodel$ with $v \rel{a} w$.
  But this means $v \Vdash [a]\phi$.

  Moreover $\Gamma_a , \lnot \psi \models \phi$ implies $\Gamma , \lnot[a]\psi \models \lnot[a]\lnot \phi$, because if we have $v \Vdash \Gamma, \lnot [a]\psi$ for some pointed model $(\kmodel, v)$, then there exists a $w \in \kmodel$ with $v \rel{a} w$ and $\kmodel, w \Vdash \Gamma_a , \lnot \psi$.
  By the assumption, it follows that $\kmodel, w \Vdash \phi$, and hence $\kmodel, v \Vdash \lnot[a]\lnot \phi$.

  If now $\Gamma_1;\Gamma_2$ has a child $\Pi_1;\Pi_2$ with interpolant $\theta$ which is obtained using the modal rule $(M)$, then we have one of the following cases:
  \begin{itemize}
  \item $\Gamma_1$ contains a loaded formula of the form $\lnot\underline{[a]}\chi$.
    Then we have $\Pi_1; \Pi_2 = \lnot \chi,{(\Gamma_1)}_a; {(\Gamma_2)}_a$, and a $\theta$ such that
    $\lnot\chi, (\Gamma_1)_a \models \theta$ and $(\Gamma_2)_a \models \lnot\theta$.

    If $\Pi_2$ is non-empty, then $a \in \voc(\Gamma_1) \cap \voc(\Gamma_2)$.
    As shown above, $\lnot \chi , {(\Gamma_1)}_a \models \theta$ implies $\Gamma_1 \models \lnot[a]\lnot \theta$ (since $\lnot[a]\chi\in \Gamma_1$), 
    and ${(\Gamma_2)}_a \models \lnot \theta$ implies $\Gamma_2 \models [a]\lnot \theta$.
    Hence, $\lnot[a]\lnot \theta$ is an interpolant for $\Gamma_1;\Gamma_2$.

    If $\Pi_2$ is empty, then $\Gamma_1$ is not satisfiable.
    To see this, simply notice that the split tableau obtained from $\tab$ by only changing the label of the root to $\Gamma_1;$ is also a closed tableau for $\Gamma_1;$,
    and thus by \autoref{t:unsatIffSplitClosed}, we get that $\Gamma_1$ is inconsistent.
    Then we have $\Gamma_1 \models \bot$ and $\Gamma_2 \models \lnot \bot$, and $\bot$ is an interpolant for $\Gamma_1;\Gamma_2$.

  \item $\Gamma_2$ contains a loaded formula of the form $\lnot\underline{[a]}\chi$.
    Then we have $\Pi_1; \Pi_2= {(\Gamma_1)}_a; \lnot \chi,{(\Gamma_2)}_a$, and a $\theta$ such that
    $(\Gamma_1)_a \models \theta$ and $\lnot \chi, (\Gamma_2)_a \models \lnot\theta$.

    If $\Pi_1$ is non-empty, then $a \in \voc(\Gamma_1) \cap \voc(\Gamma_2)$.
    As shown above, ${(\Gamma_1)}_a \models \theta$ implies $\Gamma_1 \models [a]\theta$,
    and $\lnot \chi , {(\Gamma_2)}_a \models \lnot\theta$ implies $\Gamma_2 \models \lnot[a]\theta$ (since $\lnot[a]\chi\in \Gamma_2$).
    Hence, $[a] \theta$ is an interpolant for $\Gamma_1;\Gamma_2$.

    Analogous to the previous case, if $\Pi_1$ is empty, then $\Gamma_2$ is not satisfiable.
    Then we have $\Gamma_1 \models \lnot \bot$ and $\Gamma_2 \models \bot$, and $\lnot \bot$ is an interpolant for $\Gamma_1;\Gamma_2$.
  \end{itemize}
  This shows the lemma.
\end{proof}

\subsection{Defining interpolants for proper clusters}\label{subsec:InterpolationForProperClusters}

In this subsection and the next, we take care of the more difficult case of defining interpolants for roots of proper clusters,
assuming that interpolants have been found for all exits of the cluster. 
The formal result is stated in \autoref{l:clusterInterpolation}, which provides the inductive step for proper clusters in the proof of 
\autoref{t:tabToInt}.

\begin{lem}[\checklean{def}{clusterInterpolation}]\label{l:clusterInterpolation}
Let $\tab$ be a uniform closed split PDL-tableau for $\Gamma_1;
\Gamma_2$, and assume that the root $r$ of $\tab$ belongs to a proper
cluster $C$.
Furthermore, assume that for all exit nodes $t \in C^{+}\setminus C$, 
we have an interpolant $\theta_{t}$ for the split sequent 
$\Lambda_{1}(t);\Lambda_{2}(t)$.
Then $\Gamma_1; \Gamma_2$ has an interpolant.
\end{lem}

In the remainder of this section, let $\tab$ be a uniform closed split PDL-tableau for 
$\Gamma_1;\Gamma_2$ and let $C$ be a proper cluster of $\tab$ including the root $r$ of $\tab$. 
We first consider the case where $\Gamma_2$ is loaded, and then in the proof of \autoref{l:clusterInterpolation} we will show how to define the interpolant for the case where $\Gamma_1$ is loaded.

\begin{center}
\fbox{Fix $\tab, C, r, \Gamma_{1}, \Gamma_{2}$ for the rest of the section.}
\end{center}

Here are some first observations on $C$.

\begin{fact}\label{f:splitpcl}
For all $t \in C$, the following hold:
\begin{enumerate}[(a)]
\item $\Lambda_{2}(t)$ is loaded (and thus, nonempty);
\item $\Lambda_{1}(t)$ is empty iff $\Gamma_{1}$ is empty;
\item all children of $t$ are in $C^{+}$, and at least one of them belongs to $C$.
\end{enumerate}
\end{fact}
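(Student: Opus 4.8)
The plan is to read off all three items from the general structural facts about clusters established for (split) tableaux, instantiated at the standing data $\tab, C, r, \Gamma_{1}, \Gamma_{2}$ and at the standing assumption that $\Gamma_{2}$ is loaded.

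For (a): the root $r$ belongs to $C$ and is labeled $\Gamma_{1};\Gamma_{2}$ with $\Gamma_{2}$ loaded, so by the dichotomy of \autoref{l:splitClusterProp}(b) it is the \emph{second} components of all nodes of $C$ that are loaded; hence $\Lambda_{2}(t)$ is loaded, and a loaded sequent contains a loaded formula, so in particular it is nonempty. For (b): apply \autoref{l:splitClusterProp}(c) with $i=1$, using that $\Lambda_{1}(r)=\Gamma_{1}$ — emptiness of the left component at one node of $C$ is equivalent to emptiness at every node of $C$, in particular at $r$, which gives both directions of the stated equivalence.

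For (c): the first conjunct is immediate from \autoref{d:plus_exits}, since by definition $C^{+}$ contains every $\edge$-child of every node of $C$ and $t\in C$. For the second conjunct I would argue that a node of a proper cluster must have an outgoing edge that stays inside the cluster. Concretely, since $C$ is a nonsingleton strongly connected component of $\cEdge$ (by the split analogue of \autoref{l:eProp}(a)), $t$ has a $\cEdge$-successor in $C$. If $t$ is an interior node it is not a \lpr, since \lprs are leaves by condition~(\lprCondSplit); so no $\comp$-edge leaves $t$, and the $\cEdge$-successor in $C$ is in fact a tree child $s$ with $t\edge s$ and $s\in C$. If $t$ is a leaf, then since $\tab$ is closed $t$ is either closed or a \lpr; a closed leaf has no outgoing $\cEdge$-edge at all and would thus form a singleton cluster, contradicting $t\in C$; hence $t$ is a loaded-path repeat, and its companion $\compan{t}\in C$ is the required child, where for a loaded-path repeat we regard the back edge $t\comp \compan{t}$ as its unique successor.

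The bookkeeping in (a) and (b) is routine. The step that needs care is (c): one must observe that the child landing in $C$ may be either a genuine tree child (when $t$ is interior) or the companion reached along a back edge (when $t$ is a loaded-path repeat leaf), and one must exclude the closed-leaf case — which is exactly where the properness of $C$ is used, since a closed leaf has no outgoing $\cEdge$-edge and so cannot sit in a nonsingleton strongly connected component.
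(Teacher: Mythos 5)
The paper states this as an unproved \emph{Fact}, so there is no official proof to compare against; your argument is the one the authors presumably had in mind, and it is correct. Items (a) and (b) follow exactly as you say from \autoref{l:splitClusterProp}(b) and (c) applied at the root $r$ (whose label is $\Gamma_1;\Gamma_2$ with $\Gamma_2$ loaded). For (c), the first conjunct is immediate from \autoref{d:plus_exits}, and the SCC argument for the second conjunct is sound: take any $u\in C$ with $u\neq t$, any $\cEdge$-path from $t$ to $u$ has a first edge $t\cEdge s$, and $s\cEquiv t$ follows because $s\cEdgeRT u\cEdgeRT t$ together with $t\cEdge s$, so $s\in C$.

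Your attention to the leaf case of (c) is in fact the one genuinely delicate point. As written, ``children of $t$'' most naturally means $\edge$-children, consistent with the definition of $C^{+}$ in \autoref{d:plus_exits}; under that literal reading, a \lpr\ leaf $t\in C$ has no children and the second half of (c) would be vacuously false for it. You resolve this by reading the $\comp$-edge of a \lpr\ as its unique successor, which is a sensible convention (and coheres with $\cEdge$), and you correctly rule out the closed-leaf case by noting such a leaf would sit in a singleton cluster. It is worth observing that in the paper this Fact is only ever invoked at interior nodes (e.g.\ in \autoref{l:itpbas}(c) and (e), where a rule is applied at $t$), so the simpler reading where (c) is implicitly restricted to interior nodes would also suffice and avoids the interpretive choice. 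Either way, your reasoning is complete and correct, and you identify the only step in (c) that requires care.
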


\begin{rem}\label{r:itp-emptycase}
In case $\Gamma_{1} = \emptyset$, we can simply define the interpolant for $r$ as $\theta_{r} \isdef \top$.
In the remainder of this section we will therefore mainly consider the case where $\Gamma_{1} \neq \emptyset$ (but in the key definitions and lemmas we will briefly cover the case where $\Gamma_{1} = \emptyset$ as well).
Note that $\Gamma_{1} \neq \emptyset$ implies that $\Lambda_{1}(t)$ is nonempty for all $t \in C$.
\end{rem}

It will be convenient to introduce some further notation.
Note that $\Lambda_{2}[C^{+}] = \{ \Lambda_{2}(t) \mid t \in C^{+}\}$ 
is the set of (loaded) sequents that occur as second component in $C^+$,
and similarly for $\Lambda_{2}[C]$.

\begin{defi}
Given a sequent $\Delta \in \Lambda_{2}[C^{+}]$, we define
\[\begin{array}{lll}
   C_{\Delta} &\isdef& \{ t \in C \mid \Lambda_{2}(t) = \Delta \},
\\ C_{\Delta}^{+} &\isdef& \{ t \in C^{+} \mid \Lambda_{2}(t) = \Delta \},
\end{array}\]
and we let $C_{\Delta}^{L}$ ($C_{\Delta}^{R}$, respectively), denote the set
of nodes in $C_{\Delta}$, where a left rule (a right rule, respectively) is
applied.
\end{defi}

Below we gather some further observations.

\begin{lem}\label{l:itpbas}
For all sequents $\Delta$, the following hold:
\begin{enumerate}[(a)]
\item $C_{\Delta} = C_{\Delta}^{L} \uplus C_{\Delta}^{R}$.
\item If $t \in C_{\Delta}^{L}$ then $\Lambda_{1}(t)$ is not basic.
\item If $t \in C_{\Delta}^{L}$ then all of its children belong to $C_{\Delta}^+$, and at least one to $C_{\Delta}$.
\item If $C_\Delta$ is not empty then $C^R_\Delta$ is not empty. 
\item If $\Delta$ is basic and $t \in C_{\Delta}^{R}$ then the modal rule is applied at $t$ to the loaded formula in $\Lambda_{2}(t) = \Delta$, and the unique child of $t$ also belongs to $C$.
\item If $\Delta$ is not basic and $C^R_{\Delta} \neq \emptyset$ then there is a unique local rule $R_2$ and a possibly loaded formula $\xi \in \Delta$ such that for all $t \in C_{\Delta}^{R}$, the rule $R_2$ with principal formula $\xi$ is applied at $t$.
Consequently, there are sequents $\Pi_{1},\ldots,\Pi_{n}$ such that 
$\bigwedge \Delta \equiv \bigvee_{i} \bigwedge \Pi_{i}$, and, for all $t \in C_{\Delta}^{R}$, the children of $t$ can be listed as $u_{1},\ldots, u_{n}$ such that $\Lambda(u_{i}) = \Lambda_{1}(t);\Pi_{i}$, for all $i$.
\end{enumerate}
\end{lem}

\begin{proof}
Item (a) is obvious by the definitions; item (b) follows from the earlier observation that $\Lambda_2(t)$ is loaded, and thus, $\Lambda_1(t)$ is free.
Item (c) is an instantiation  of \autoref{f:splitpcl}(c), together with the observation that, obviously, the rule applied at any $t \in C_{\Delta}^{L}$ is a left rule, distinct from $(M)$, and hence does not touch $\Delta$.

We prove item (d):
Assume that $C_\Delta \neq \emptyset$. 
Then there exists $t \in C_\Delta$ such that $t$ is minimal in the tree relation, that is, there is no $t' \in C_\Delta$ such that $t' \edgeT t$.
Moreover, since all nodes in $C$ are loaded in the second component (\autoref{f:splitpcl}(a)), it follows that $\Delta$ is loaded.
If $t \in C^R_\Delta$ then we are done. So suppose $t \notin C^R_\Delta$.
We will show that there exists $u \in C_\Delta$ such that $t \edgeT u$, $\Lambda_1(u)$ is basic and $u$ is not a \lpr. It then follows that $u \in C^R_\Delta$.
This $u$ is obtained as follows.
Since $t \notin C^R_\Delta$, item (a) implies that $t \in C^L_\Delta$, i.e., some left rule is applied at $t$. By item (c), there is a child $v$ of $t$ that belongs to $C_\Delta$. By repeating this argument until no more left rules can be applied, we obtain a path $v_0 v_1 \cdots v_n$ in $C_\Delta$ such that $t=v_0$ and $\Lambda_1(v_n)$ is basic, and for all $i < n$ we have $v_i \in C^L_\Delta$.
We now prove that for all $i \leq n$, the node $v_i$ cannot be a \lpr. Suppose towards a contradiction that $v_i$ is a \lpr. Then by minimality of $t$ we must have $t \edgeRT c(v_i) \edgeT v_i$.
Hence the unique path from $c(v_i)$ to $v_i$ in $\tab$ must be a segment of the path $v_0 v_1 \cdots v_n$.
By \autoref{f:lprAreCritical}, the path from $c(v_i)$ to $v_i$ must include an application of the modal rule,
but since only left rules are applied along this path, and the loaded formula is in the right component, this is impossible. 
Hence we have a contradiction, and conclude that no $v_i$ is a \lpr, so we can take $u=v_n$.

The first observation of item (e) follows from the fact that the right component of $t$ is loaded; the second observation is another instance of \autoref{f:splitpcl}(c).

Finally, for item (f) let $\Delta$ be non-basic and $C^R_{\Delta} \neq \emptyset$.
By uniformity, the same (right) rule is applied at every $t \in C^R_{\Delta}$, and since $\Delta = \Lambda_2(t)$ is not basic, this rule must be local, and therefore only affects the right component $\Lambda_2(t)$.
The rest of the statement in (f) then follows from the local soundness and invertibility of the rule applied at $t$.
\end{proof}

The next definition introduces the pivotal structure of the interpolation proof.
The basic idea is to represent the cluster $C$ by a finite labeled tree $\Q$ of which the nodes are labeled with
\emph{right-side sequents only}.
Every node $x$ in $\Q$  will represent a certain subset $R_{x} \subseteq C^+$ (to be defined in \autoref{d:regfma}).

\begin{defi}\label{d:Qtableau}
\label{d:Q}
Given the cluster $C$, step by step we construct a quasi-tableau $\Q =
(Q,\qedge,k,\Delta)$.
Here $(Q,\qedge)$ will be a finite tree, $k: Q \to \{ 1, 2, 3 \}$ assigns a
\emph{type} $k(x)$ to every node $x \in Q$, and $\Delta$ is a map that labels every node
$x \in Q$ with a finite sequent $\Delta_{x} \in \Lambda_{2}[C^{+}]$.

We construct $\Q$ starting from the root, and maintain the following invariant for all nodes $x \in Q$:
If $x$ is not a leaf then $C_{\Delta_x} \neq \emptyset$. (Note that $C_{\Delta_x} \neq \emptyset$ iff $\Delta_x \in \Lambda_2[C]$.)

To start the construction, we add the root node $r_{\Q}$ to $\Q$ and let $r_{\Q}$ have type 1 and
label $\Gamma_{2}$, i.e., $\Delta_{r_{\Q}}=\Gamma_{2} = \Lambda_2(r)$ is the loaded right sequent of the
root $r$ of $C$. Since $r \in C_{\Gamma_2}$, the invariant holds at the root of $\Q$. 

Inductively, given a node $x \in Q$, the set of children of $x$ is defined
by means of the following case distinction. 
In each case, we show that if $x$ satisfies the invariant, then so do all its children.
\begin{description}
\item[Case $k(x) = 1$]
If $x$ is a repeat in $\Q$ (i.e., there exists $y \in Q$ such that $y$ is an ancestor of $x$ and $\Delta_x=\Delta_y$)  
or $\Delta_{x} \in \Lambda_{2}[C^{+}] \setminus \Lambda_{2}[C]$
then $x$ is a leaf.
In this case, the invariant hold trivially.
Otherwise, that is, if $x$ is not a repeat and $\Delta_{x} \in \Lambda_{2}[C]$,
then $x$ has a unique child with type 2 and label $\Delta_{x}$.
In this case the invariant holds at $y$ since $\Delta_y = \Delta_{x} \in \Lambda_{2}[C]$.
\item[Case $k(x) = 2$]
$x$ has a unique child $y$ with type 3 and label $\Delta_{x}$.
The invariant holds at $y$ since $\Delta_y = \Delta_x$ and the invariant holds at $x$.
\item[Case $k(x) = 3$] 
If $\Delta_{x}$ is not basic, then by the invariant for $x$, we have $C_{\Delta_x} \neq \emptyset$.
Hence by \autoref{l:itpbas}(d), $C^R_{\Delta_x} \neq \emptyset$, and 
by \autoref{l:itpbas}(f),
there is a unique local rule $R_2$ and a unique, possibly loaded formula $\xi \in \Delta_x$ such that $R_2$ with principal formula $\xi$ is applied uniformly at all $t \in C^R_{\Delta_x}$. 
Let $\Delta_x\,/\,\Pi_{1} | \ldots | \Pi_{n}$ be the rule instance of the non-split version of $R_2$  with principal formula $\xi$.
We define the set of children of $x$ in $\Q$ as the set $\{ y_{1},\ldots,y_{n} \}$, where each $y_{i}$ has type 1 and label $\Pi_{i}$.
We show that the invariant holds for all $y_i$:
If $y_i$ is not a leaf in $\Q$, then since $k(y_i)=1$ it follows from the above item for $k(x)=1$ that $\Delta_{y_i} \in \Lambda_2[C]$.

If $\Delta_{x}$ is basic and the unique loaded formula in
$\Delta_{x}$ is of the form $\lnot\loaded{[a][\delta_{1}]\cdots[\delta_{n}]}\phi$,
then $x$ has a unique child $y$ with type 1 and label 
$(\Delta_{x})_{a} \cup \{ \loaded{ \neg[\delta_{1}]\cdots[\delta_{n}]}\phi \}$.
We show that the invariant holds at $y$: Since it holds at $x$,  i.e., 
$C_{\Delta_x} \neq \emptyset$, 
by \autoref{l:itpbas}(d), there exists $t \in C^R_{\Delta_x}$,
and by \autoref{l:itpbas}(e), the unique child $v$ of $t$ belongs to $C$. 
Since $v$ is obtained by applying the modal rule $(M)_2$ at $t$,
it follows that $\Delta_y = \Lambda_2(v)$.
Hence $\Delta_y \in \Lambda_2[C]$.
\end{description}
\end{defi}

We remark that termination of the construction at a node~$z \in Q$ takes place in one of two ways: either $z$ is a repeat in $\Q$ or $\Delta_{z}\in\Lambda_{2}[C^{+}] \setminus \Lambda_{2}[C]$ (observe that $\Delta_z$ must be unloaded in the latter case). 
In both cases, $k(z)=1$.

Before discussing the role of the quasi-tableau $\Q$ in the interpolation
proof, we first gather some basic definitions and notations related to it.

As mentioned, each $x \in Q$ represents a subset $R_{x}$ of $C^{+}$ with fixed loaded component.
\begin{defi}\label{d:regfma}
For $x \in Q$, we define the set (sometimes called a \emph{region}) $R_{x} \subseteq C^{+}$ as follows:
\[
R_{x} \isdef \left\{\begin{array}{ll}
   C_{\Delta_x}^{+} & \text{if } k(x) = 1,2
\\[1mm] C_{\Delta_x}^{R} & \text{if } k(x) = 3.
\end{array}\right.
\]
\end{defi}

\begin{defi}\label{def-companions}
We let $<_{\Q}$ and $\leq_{\Q}$ denote, respectively, the
transitive and the reflexive-transitive closure of the relation $\qedge$.
Furthermore, for a repeat leaf $z \in Q$, 
let $c(z)$ be \emph{companion} of $z$ defined as the unique node $x$ such that $x <_{\Q} z$ 
and $k(x) = k(z)$ and $\Delta_{x} = \Delta_{z}$.
We let $K_{\Q} = \{ c(z) \mid z \in Q \text{ is a repeat}\}$ denote the set of \emph{companions}.
Finally, $L_{\Q}$ and $r_{\Q}$ denote, respectively, the set of leaves and the root of $\Q$.
Given a node $x \in Q$, we set
\[
  \cycs{x} \isdef \{ z \in L_{\Q} \mid  z \text{ is a repeat and } c(z) <_{\Q} x \leq_{\Q} z \}.
\]
\end{defi}

In words, $\cycs{x}$ denotes the set of repeat leaves that are descendants of $x$ and whose companion node is a proper ancestor of $x$.

The following lemma follows immediately from the previous definitions.

\begin{lem}\label{l:Q.Properties}
Let $\Q$ be a quasi-tableau and let $x, y \in Q$. Then:
  \begin{enumerate}[(a)]
      \item If $x$ is a repeat node or a companion node in $\Q$, then $k(x) = 1$.
      \item $L_{r_{\Q}} = \emptyset$.
      \item If $x <_{\Q} y$, then $\cycs{x} \subseteq \cycs{y}$.
      \item If $x$ is not a companion and $x \qedge y$, then $\cycs{x} = \cycs{y}$.
  \end{enumerate}
\end{lem}
\begin{proof}
    Items (a), (b) and (c) are immediate from the definitions.
    We show item (d).
    
    Let $z \in L_{\Q}$ be a repeat leaf. 
    We have that $c(z) <_{\Q} y $ iff $c(z) <_{\Q} x$, since $x$ is not a companion.
    From $x \qedge y$, it follows that $x \notin L_{\Q}$, and from this that $y \leq_{Q} z$ iff $x \leq_{Q} z$.
    Hence we can conclude that $\cycs{x} = \cycs{y}$.
\end{proof}

We are now ready to define interpolants.
The key idea is to define for each node $x$ in the quasi-tableau $\Q$ a so-called \emph{pre-interpolant}
$\iitp_{x}$.
The definition of pre-interpolants proceeds by a leaf-to-root
induction on the tree $(Q,\qedge)$, and ensures that
the pre-interpolant $\iitp_{r_{\Q}}$ of the root $r_{\Q}$ of $\Q$ is 
an interpolant for the root $r$ of $C$. Hence we can take $\theta_r = \iitp_{r_{\Q}}$.
(In the case where $\Gamma_{1} \neq \emptyset$, that is; for the case
where $\Gamma_{1}$ is empty we refer to \autoref{r:itp-emptycase}.)

To start the definition of the pre-interpolants, in the base case we need to
consider the \emph{leaves} of $\Q$.
Recall that a leaf $x$ has type 1, and there are two options: 
$x$ is a repeat or $\Delta_{x}$ belongs to $\Lambda_{2}[C^{+}] \setminus \Lambda_{2}[C]$.
In the second case, it follows that $R_x \subseteq C_{\Delta_x}^+\setminus C$, i.e., 
$x$ represents a region of exit nodes. 
Recall that by the assumption in \autoref{l:clusterInterpolation} we already have an
interpolant $\theta_{t}$ for every node $t \in C^{+}\setminus C$.
This guarantees the correctness of the following definition, and the subsequent
observation.

\begin{defi}\label{d:iota}
Given a sequent $\Delta \in \Lambda_{2}[C^{+}]$, we define
\[
\theta_{\Delta} \isdef \bigvee \{ \theta_{t} \mid t \in C^{+}_{\Delta}\setminus C_{\Delta}\}.
\]
\end{defi}

The following fact states that the formula $\theta_{\Delta}$ behaves as a ``region'' interpolant for the set $C^{+}_{\Delta}\setminus C_{\Delta}$ of exit nodes that have $\Delta$ as the second component.
Its proof is straightforward. (Note that if $C^{+}_{\Delta}\setminus C_{\Delta}=\emptyset$ then $\theta_\Delta \equiv \bot$.)

\begin{fact}\label{f:iota}
  Let $\Delta$ be a sequent in $\Lambda_{2}[C^{+}]$.
  Then we have:
  \begin{enumerate}[(a)]
  \item $\Lambda_{1}(t) \models \theta_{\Delta}$, for all $t \in C^{+}_{\Delta}\setminus C_{\Delta}$;
  \item $\Delta \models \lnot\theta_{\Delta}$;
  \item $\voc(\theta_\Delta)\subseteq \left(\bigcup_{t \in C^{+}_{\Delta}\setminus C_{\Delta}} \voc(\Lambda_1(t))\right)\cap\voc(\Delta)$
  \end{enumerate}
\end{fact}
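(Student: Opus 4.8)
The statement in \autoref{f:iota} is a routine consequence of the already-stated assumptions about the exit interpolants $\theta_t$ together with basic properties of the cluster structure. The plan is to prove the three claims (a), (b), (c) essentially directly from the definition $\theta_\Delta \isdef \bigvee \{ \theta_t \mid t \in C^+_\Delta \setminus C_\Delta \}$ and the hypothesis of \autoref{l:clusterInterpolation} that each such $\theta_t$ is an interpolant for the split sequent $\Lambda_1(t);\Lambda_2(t)$.

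For part (a), fix $t \in C^+_\Delta \setminus C_\Delta$. Since $\theta_t$ is an interpolant for $\Lambda_1(t);\Lambda_2(t)$, by \autoref{d:PartInterpolant} we have $\Lambda_1(t) \models \theta_t$. Since $\theta_t$ is a disjunct of $\theta_\Delta$, it follows that $\Lambda_1(t) \models \theta_\Delta$, as required. For part (b), note that every $t \in C^+_\Delta \setminus C_\Delta$ has, by definition of $C^+_\Delta$, the property $\Lambda_2(t) = \Delta$. Again because $\theta_t$ is an interpolant, $\Lambda_2(t) \models \lnot\theta_t$, i.e.\ $\Delta \models \lnot\theta_t$. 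Since this holds for every disjunct $\theta_t$ of $\theta_\Delta$, we get $\Delta \models \lnot\bigvee\{\theta_t\} = \lnot\theta_\Delta$. (In the degenerate case $C^+_\Delta \setminus C_\Delta = \emptyset$, the empty disjunction $\theta_\Delta$ is $\bot$ by the stated convention, and then $\Delta \models \lnot\bot$ trivially, while (a) and (c) are vacuous.)

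For part (c), the vocabulary condition, I would argue: $\voc(\theta_\Delta) = \bigcup_{t} \voc(\theta_t)$ over $t \in C^+_\Delta \setminus C_\Delta$ (a disjunction's vocabulary is the union of its disjuncts' vocabularies, modulo the empty case just handled). For each such $t$, the interpolant property gives $\voc(\theta_t) \subseteq \voc(\Lambda_1(t)) \cap \voc(\Lambda_2(t)) = \voc(\Lambda_1(t)) \cap \voc(\Delta)$. Taking the union over $t$ and distributing the intersection over the union, $\bigcup_t \big(\voc(\Lambda_1(t)) \cap \voc(\Delta)\big) = \big(\bigcup_t \voc(\Lambda_1(t))\big) \cap \voc(\Delta)$, which is exactly the claimed bound.

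There is no real obstacle here — the proof is a direct unwinding of definitions, which is why the authors call it "straightforward". The only point requiring a moment's care is the edge case where there are no exit nodes with second component $\Delta$, so that $\theta_\Delta$ is the empty disjunction $\bot$; one should note that (a) and (c) are then vacuously true and (b) reduces to the tautology $\Delta \models \lnot\bot$. Once that case is flagged, the remaining content is just the observation that a finite disjunction inherits the "implied by each $\Lambda_1(t)$" property, the "negation implied by the common component $\Delta$" property, and the union-of-vocabularies property from its disjuncts.
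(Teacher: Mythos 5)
Your proof is correct and is exactly the "straightforward" unwinding of \autoref{d:iota} and \autoref{d:PartInterpolant} that the paper intends: it leaves the fact unproved apart from noting the same degenerate case you flag, namely that $\theta_\Delta \equiv \bot$ when $C^{+}_{\Delta}\setminus C_{\Delta} = \emptyset$. Each of (a), (b), (c) follows as you describe from the hypothesis of \autoref{l:clusterInterpolation} and the identity $\Lambda_2(t) = \Delta$ for $t \in C^{+}_\Delta$.
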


Continuing our discussion of the base case of the inductive definition of the pre-interpolants, we now consider the leaves that are repeats.
For these we introduce some new propositional variables that we shall call \emph{internal}, and that will also feature in the pre-interpolants of the internal nodes of $\Q$ (but not in the one of the root).
More precisely, we will need a propositional variable $q_{x}$ for every companion node $x$ in $Q$.
For the specific syntactic shape of the pre-interpolants, we need to introduce some terminology.

\begin{defi}\label{d:iptfma}
The set $\itpfmas$ of formulas (from which we will take the pre-interpolants)
is given by the following grammar:
\BNF{\itpfmas\ni\iota}{\psi \mid q \mid \iota \land \iota \mid
\Box(\ol{\alpha},\iota)}
where $q \in \{ q_{x} \mid x \in K_{\Q} \}$ (called the set of internal variables), and $\psi$ and $\ol{\alpha}$ are such that their vocabulary contains no internal variables.
Formulas in $\itpfmas$ will be referred to as \emph{$\Q$-formulas}.

A formula in $\itpfmas$ is \emph{simple} if it is of the form $\psi$ or $\Box(\ol{\alpha},q_{x})$, and \emph{in normal form} if it is a conjunction of such formulas.
\end{defi}

It is not hard to see that every formula $\iota \in \itpfmas$ can be rewritten into an equivalent normal form, as we will see now.

\begin{defi}
By structural induction on $\itpfmas$-formulas, we define a function
$\Simple$ mapping every $\iota \in \itpfmas$ to a finite set of simple formulas:
\[\begin{array}{lll}
\Simple(\psi) &\isdef& \{ \psi \}
\\ \Simple(q_{x}) &\isdef& \{ [\top?]q_{x} \}
\\ \Simple(\iota_{1} \land \iota_{2}) &\isdef&
   \Simple(\iota_{0}) \cup \Simple(\iota_{1})
\\ \Simple([\alpha]\iota) &\isdef& \{ [\alpha]\theta \mid \theta\in \Simple(\iota)\}.
\end{array}\]
Define the \emph{normal form} of $\iota \in \itpfmas$ as the formula
\[
\nf{\iota} \isdef \bigwedge \Simple(\iota).
\]
\end{defi}

The following fact can be proved by a straightforward induction on the complexity of $\Q$-formulas.

\begin{fact}\label{f:iota-nf}
Every formula $\iota \in \itpfmas$ is equivalent to its normal form.
Moreover, $\voc(\iota) = \voc(\nf{\iota})$.
\end{fact}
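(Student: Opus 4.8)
The plan is to prove Fact~\ref{f:iota-nf} by structural induction on the grammar of $\itpfmas$-formulas from \autoref{d:iptfma}, establishing simultaneously the two claims: (i) $\iota \equiv \nf{\iota}$, and (ii) $\voc(\iota) = \voc(\nf{\iota})$. Both claims are closure-type statements that follow cleanly from the recursive definition of $\Simple$, so the argument is essentially bookkeeping over four cases.

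First I would handle the base cases. For $\iota = \psi$ (a formula whose vocabulary contains no internal variables), we have $\Simple(\psi) = \{\psi\}$, so $\nf{\iota} = \bigwedge\{\psi\} = \psi$, and both (i) and (ii) are immediate. For $\iota = q_x$, we have $\Simple(q_x) = \{[\top?]q_x\}$, so $\nf{\iota} = [\top?]q_x$; claim~(i) holds because $[\top?]q_x \equiv (\neg\top \lor q_x) \equiv q_x$ by the semantics of the test program (equivalently, the reduction axiom $\Ax{?}$), and claim~(ii) holds because $\voc([\top?]q_x) = \voc(\top) \cup \voc(q_x) = \voc(q_x)$ since $\top$ contains no atomic formulas or programs.

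For the inductive step there are two cases. If $\iota = \iota_1 \land \iota_2$, then by the induction hypothesis $\iota_j \equiv \nf{\iota_j} = \bigwedge\Simple(\iota_j)$ and $\voc(\iota_j) = \voc(\nf{\iota_j})$ for $j \in \{1,2\}$. Since $\Simple(\iota_1 \land \iota_2) = \Simple(\iota_1) \cup \Simple(\iota_2)$, we get $\nf{\iota} = \bigwedge(\Simple(\iota_1) \cup \Simple(\iota_2)) \equiv \bigwedge\Simple(\iota_1) \land \bigwedge\Simple(\iota_2) \equiv \iota_1 \land \iota_2$, and the vocabulary identity follows because $\voc$ of a conjunction is the union of the vocabularies and union distributes over the $\Simple$-sets. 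If $\iota = \Box(\ol{\alpha},\iota')$, i.e.\ $\iota = [\alpha_1]\cdots[\alpha_k]\iota'$, the key observation is the distributivity law $[\alpha](\xi \land \chi) \equiv [\alpha]\xi \land [\alpha]\chi$ (an instance of axiom $(K)$ together with necessitation), iterated over the list $\ol{\alpha}$; so $\Box(\ol{\alpha}, \bigwedge\Simple(\iota')) \equiv \bigwedge\{\Box(\ol{\alpha},\theta) \mid \theta \in \Simple(\iota')\}$. Combined with the induction hypothesis $\iota' \equiv \bigwedge\Simple(\iota')$ and monotonicity of $\Box(\ol{\alpha},-)$ under equivalence, this gives $\iota = \Box(\ol{\alpha},\iota') \equiv \Box(\ol{\alpha},\bigwedge\Simple(\iota')) \equiv \bigwedge\{\Box(\ol{\alpha},\theta)\mid\theta\in\Simple(\iota')\} = \bigwedge\Simple(\iota) = \nf{\iota}$. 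The vocabulary claim follows since $\voc(\Box(\ol{\alpha},\xi)) = \voc(\ol{\alpha}) \cup \voc(\xi)$ and this is preserved when replacing $\iota'$ by $\Simple(\iota')$, using the induction hypothesis $\voc(\iota') = \bigcup_{\theta\in\Simple(\iota')}\voc(\theta)$.

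I do not expect any serious obstacle here: the statement is deliberately routine, and the only point requiring slight care is making sure that $\Simple(\iota)$ is always a set of \emph{simple} formulas in the sense of \autoref{d:iptfma} (so that $\nf{\iota}$ is genuinely "in normal form"), which one verifies by noting that each clause of $\Simple$ produces only formulas of the shape $\psi$ or $\Box(\ol{\alpha},q_x)$ --- the $[\alpha]$-clause prepends a box to an already-simple formula, and prepending boxes to $\Box(\ol{\beta},q_x)$ or to $\psi$ keeps it in the allowed shape. One should also note the edge case that $\Simple(\iota)$ is always nonempty (every leaf of the syntax tree contributes at least one simple formula), so $\bigwedge\Simple(\iota)$ is well-defined and not the empty conjunction $\top$ unless $\iota$ itself was already trivial; this is not actually needed for the two stated claims but is worth a remark for the reader. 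The whole proof is short enough that writing it out in full would take only a few lines, which is why the excerpt simply says "by a straightforward induction on the complexity of $\Q$-formulas."
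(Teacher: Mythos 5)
Your proof is correct and is exactly the ``straightforward induction on the complexity of $\Q$-formulas'' that the paper alludes to without writing it out. One small correction to your closing remark: the nonemptiness of $\Simple(\iota)$ \emph{is} needed for the vocabulary claim, not just a nicety. In the case $\iota = \Box(\ol{\alpha},\iota')$ you compute
$\voc(\nf{\iota}) = \bigcup_{\theta\in\Simple(\iota')}\bigl(\voc(\ol{\alpha})\cup\voc(\theta)\bigr)$,
and pulling $\voc(\ol{\alpha})$ out of the union requires the index set to be nonempty; if $\Simple(\iota')$ were empty and $\voc(\ol{\alpha})\neq\emptyset$ the identity would fail (the equivalence direction would survive, since $\Box(\ol{\alpha},\top)\equiv\top$, but the vocabulary identity would not). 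Since $\Simple$ is nonempty at every constructor, the argument goes through, but the remark should be promoted from an aside to a hypothesis used in that case.
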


With all these preparations out of the way, we can now define pre-interpolants.

\begin{defi}[\lean{iitp}]\label{d:iitp}
By a leaf-to-root induction on the relation $\qedge$ we define, for each node
$x \in Q$, a formula $\iitp_{x}$ (called a \emph{pre-interpolant}) as follows:
\begin{description}
\item[Case $k(x) = 1$, $x$ is a repeat] $\iitp_{x} \isdef q_{c(x)}$.
\item[Case $k(x) = 1$,
 $\Delta_{x} \in {\Lambda_{2}[C^{+}] \setminus \Lambda_{2}[C]}$]
$\iitp_{x} \isdef \theta_{\Delta_x}$ (see \autoref{d:iota}).

\item[Case $k(x) = 1$, $x$ is a companion]
In this case $x$ has a unique child $y$, and inductively we may assume that the formula $\iitp_{y} \in \itpfmas$ has already been defined.
By \autoref{f:iota-nf}, $\iitp_{y}$ is equivalent to its normal form
\[
\iitp_{y} \equiv {\bigwedge_{i}}[\alpha_{i}]q_{x} \land {\bigwedge_{j}}[\beta_{j}]q_{z_{j}} \land \psi,
\]
where $q_{x}$ does not occur in any of the $\alpha_{i}$ or in the formula $\bigwedge_{j}[\beta_{j}]q_{z_{j}} \land \psi$.
Now define
\[
\iitp_{x} \isdef
[({\bigcup_{i}}\alpha_{i})^{*}]
\big({\bigwedge_{j}}[\beta_{j}]q_{z_{j}} \land \psi\big).
\]
\item[Case $k(x) = 1$, $x$ neither a leaf nor a companion]
In this case, $x$ also has a unique child $y$, for which the formula $\iitp_{y} \in \itpfmas$ has already been defined.
We now set $\iitp_{x}\isdef\iitp_{y}$.
\item[Case $k(x) = 2$]
In this case, $x$ has a unique child $y$, and inductively we may assume that the formula $\iitp_{y}$ has already been defined.
\[
\iitp_{x} \isdef [\lnot\theta_{\Delta_{x}}?]\iitp_{y}.
\]
\item[Case $k(x) = 3$, $\Delta_{x}$ not basic]
Here we define
\[
\iitp_{x} \isdef \bigwedge \{ \iitp_{y} \mid x \qedge y \}.
\]
\item[Case $k(x) = 3$, $\Delta_{x}$ basic]
In this case $x$ has a unique child $y$, and we define
\[
\iitp_{x} \isdef [a] \iitp_{y},
\]
where $a$ is the leading atomic program of the loaded formula in $\Delta_{x}$.
\end{description}
\end{defi}

\begin{rem}\label{r:dfitp}
We briefly discuss the two most interesting cases of \autoref{d:iitp}, which may also shed some light on the reasons why we need to differentiate the nodes of different types.

First, in the case where $k(x) = 1$ and $x$ is a companion node, one may think of $\iitp_{x}$ as the greatest fixpoint of the formula $\iitp_{y}$ with respect to the variable $q_x$.
The fixpoint part of this is the statement that 
\[
\iitp_{x} \equiv \subst{\iitp_{x}/q_{x}}{\iitp_{y}}.
\]
To see this, notice that if $\iitp_{y} \equiv {\bigwedge_{i}}[\alpha_{i}]q_{x} \land {\bigwedge_{j}}[\beta_{j}]q_{z_{j}} \land \psi$ then we have 
\begin{align*}
    \iitp_{x}
    &=
    [({\bigcup_{i}}\alpha_{i})^{*}]
    \big({\bigwedge_{j}}[\beta_{j}]q_{z_{j}} \land \psi\big)
    \\
    &\equiv
    [({\bigcup_{i}}\alpha_{i})][({\bigcup_{i}}\alpha_{i})^{*}]
    \big({\bigwedge_{j}}[\beta_{j}]q_{z_{j}} \land \psi\big)
    \land
    {\bigwedge_{j}}[\beta_{j}]q_{z_{j}} \land \psi
    \\
    &= 
    [({\bigcup_{i}}\alpha_{i})] \iitp_{x}
    \land
    {\bigwedge_{j}}[\beta_{j}]q_{z_{j}} \land \psi
    \\
    &\equiv 
    {\bigwedge_{i}}[\alpha_{i}]\iota_{x} \land {\bigwedge_{j}}[\beta_{j}]q_{z_{j}} \land \psi
\end{align*}
  
Second, in the case $k(x) = 2$, for the interpolant to be correct we need to combine the formula $\theta_{\Delta_x}$, which is an interpolant of the nodes in $C_{\Delta_x}$ that belong to $C^+ \setminus C$, with $\iota_y$.
For this reason, we assign the pre-interpolant $[\lnot \theta_{\Delta_x}?]\iota_y \equiv \theta_{\Delta_x} \lor \iota_y$ to nodes of type 2.
Using a test here, instead of a disjunction, guarantees that the normalisation succeeds when later on we arrive at a type-1 node in $\Q$.
\end{rem}

\begin{defi}[\lean{itp}]\label{d:itp}
The interpolant $\theta_r$ of the root $r$ of the cluster $C$ is given as
\[
\theta_{r} \isdef \left\{ \begin{array}{ll}
     \top              & \text{if } \Lambda_{1}(t) = \emptyset
  \\ \iitp_{c_{\Q}} & \text{if } \Lambda_{1}(t) \neq \emptyset.
\end{array}\right.
\]
\end{defi}

\subsection{Correctness of the interpolants}\label{subsec:CorrectnessOfTheInterpolants}

We will now show that the formulas that we defined in the previous subsection are interpolants indeed.
First we consider the vocabulary condition.

Recall that $\voc(\Gamma)$ is the set of atomic programs and atomic propositions occurring in formulas in the sequent~$\Gamma$, 
and for $x \in Q$, we defined
$\cycs{x} = \{ z \in L_{\Q} \mid  z \text{ is a repeat and } c(z) <_{\Q} x \leq_{\Q} z \}$ (see \autoref{def-companions}).

\begin{lem}\label{l:itp-voc}
For all nodes $x \in Q$, we have
\begin{equation}
\label{eq:itp-voc}
\voc(\iitp_{x}) \subseteq \big( \voc(\Gamma_{1}) \cap \voc(\Gamma_{2}) \big)
\cup \{ q_{c(z)} \mid z \in \cycs{x} \}.
\end{equation}
As an immediate corollary, we have:
$\voc(\theta_{r}) \subseteq \voc(\Gamma_{1}) \cap \voc(\Gamma_{2})$.
\end{lem}

\begin{proof}
First, observe that by \autoref{l:vocIsPreserved}, it follows that $\voc(\Lambda_1(t))\subseteq\voc(\Gamma_1)$ for all $t \in C^{+}_{\Delta_x}\setminus C_{\Delta_x}$ and that $\voc(\Delta_x)\subseteq\voc(\Gamma_2)$.

The main claim \eqref{eq:itp-voc} is proved by roof-to-leaf induction on the relation $\qedge$. 
The base cases are covered by the first two cases below, and the induction step is covered by the remaining cases.
\begin{description}
\item[Case $k(x) = 1$, $x$ is a repeat]
We have $\iitp_{x} = q_{c(x)}$ so the claim is immediate by the observation that $z \in \cycs{z}$ for every repeat $z$.

\item[Case $k(x) = 1$,
 $\Delta_{x} \in {\Lambda_{2}[C^{+}] \setminus \Lambda_{2}[C]}$]
We have $\iitp_{x} = \theta_{\Delta_x}$. The claim holds by \autoref{f:iota}(c).

\item[Case $k(x) = 1$, $x$ a companion]
In this case, $x$ has a unique child $y$, and $
\iitp_{y} \equiv {\bigwedge_{i}}[\alpha_{i}]q_{x} \land {\bigwedge_{j}}[\beta_{j}]q_{z_{j}} \land \psi$. 
By the induction hypothesis and \autoref{f:iota-nf}, the vocabulary $\voc(\iitp_{y})$ is a subset of 
$\big( \voc(\Gamma_{1}) \cap \voc(\Gamma_{2}) \big) \cup \{ q_{c(z)} \mid z \in \cycs{y}\}$. 
Hence by the definition of $\iitp_{x}$, it follows that 
\[
\voc(\iitp_x) \subseteq
\big( \voc(\Gamma_{1}) \cap \voc(\Gamma_{2}) \big)
\cup 
\big( \{ q_{c(z)} \mid z \in \cycs{y}\} \setminus \{ q_x \}\big)
\]
From the definitions of $\cycs{y}$ and $\cycs{x}$, and since $y$ is the child of $x$, we have that $\{ q_{c(z)} \mid z \in \cycs{y}\} \setminus \{ q_x \} \subseteq \{ q_{c(z)} \mid z \in \cycs{x}\}$. The claim follows.

\item[Case $k(x) = 1$, $x$ is neither a leaf nor a companion]
In this case, $x$ has a unique child $y$, and $\iitp_{x} = \iitp_{y}$.
By \autoref{l:Q.Properties}(d), we have $\cycs{x} = \cycs{y}$.
By the induction hypothesis for $y$, we obtain that:
\[\voc(\iitp_{y}) \subseteq \big( \voc(\Gamma_{1}) \cap \voc(\Gamma_{2}) \big)
\cup \{ q_{c(z)} \mid z \in \cycs{y} \}.\]
Since $\iitp_{x} = \iitp_{y}$ and $\cycs{x} = \cycs{y}$, the claim follows.

\item[Case $k(x) = 2$]
In this case $x$ has a unique child $y$, and $
\iitp_{x} \isdef [\lnot\theta_{\Delta_x}?]\iitp_{y}$.
By the induction hypothesis, $\voc(\iitp_{x})$ is a subset of $\voc(\theta_{\Delta_x})\cup \big( \voc(\Gamma_{1}) \cap \voc(\Gamma_{2}) \big)
\cup \{ q_{c(z)} \mid z \in \cycs{y}\}$.
Notice that $x$ is not a companion by \autoref{l:Q.Properties}(a), and thus by \autoref{l:Q.Properties}(d) we get that
$\cycs{y}=\cycs{x}$.
Also, by \autoref{f:iota}(c) we find $\voc(\theta_{\Delta_x})\subseteq \voc(\Gamma_1)\cap \voc(\Gamma_2)$. Hence the claim is established.

\item[Case $k(x) = 3$, $\Delta_{x}$ not basic]
Here $\iitp_{x} \isdef \bigwedge \{ \iitp_{y} \mid x \qedge y \}$. 
By the induction hypothesis, $\voc(\iitp_{x})$ is a subset of $\big( \voc(\Gamma_{1}) \cap \voc(\Gamma_{2}) \big) \cup \big(\bigcup_{\{y\mid x \qedge y\}}\{ q_{c(z)} \mid z \in \cycs{y}\}\big)$. 
Since $x$ is not a companion by \autoref{l:Q.Properties}(a),
we have that by \autoref{l:Q.Properties}(d) $\cycs{y}\subseteq \cycs{x}$ for all $y$ such that $x \qedge y$. Hence the claim is established.

\item[Case $k(x) = 3$, $\Delta_{x}$ basic]
In this case, $x$ has a unique child $y$, and $\iitp_{x} \isdef [a] \iitp_{y}$ where $a$ is the leading atomic program of the loaded formula in $\Delta_{x}$. 
By the induction hypothesis, $\voc(\iitp_{x})$ is a subset of $\{a\}\cup \big( \voc(\Gamma_{1}) \cap \voc(\Gamma_{2}) \big)
\cup \{ q_{c(z)} \mid z \in \cycs{y}\}$.
Since $x$ is not a companion by \autoref{l:Q.Properties}(a),
we have that by \autoref{l:Q.Properties}(d), $\cycs{y} =  \cycs{x}$.
From this and the fact that $a\in\voc(\Delta_x) \subseteq\voc(\Gamma_2)$, the claim follows.
\end{description}
Finally, the corollary that $\voc(\theta_{r}) \subseteq \voc(\Gamma_{1}) \cap \voc(\Gamma_{2})$ is immediate since $\theta_r = \iitp_{r_{\Q}}$ and $\cycs{r_{\Q}}=\emptyset$.
\end{proof}

Now that we have shown that the formula $\theta_r$ indeed only uses proposition letters and atomic programs that occur in both $\Gamma_1$ and $\Gamma_2$, we will show that it also has the desired semantic properties.
We first define a helper formula for this.

For this we define the following formula.

\begin{defi}\label{d:rho}
Given a quasi-tableau $\Q$, for each node $x \in Q$, let
\[
\rho_{x} \isdef \bigvee \{ \textstyle{\bigwedge \Lambda_{1}(t)} \mid t \in R_{x} \}.
\]
\end{defi}

\begin{lem}\label{l:correctLeftIP}
$\Gamma_{1} \models \theta_{r}$.
\end{lem}

\begin{proof}
In case $\Gamma_{1} = \emptyset$, we have $\theta_{r} = \top$, and so the Lemma holds trivially.
In the remainder of the proof we therefore assume that $\Gamma_{1} \neq \emptyset$, so that $\theta_{r} = \iitp_{r_{\Q}}$.

Consider the substitution $\sigma$ given by setting $\sigma: q_{x} \mapsto \rho_{x}$ where~$x \in Q$ is a companion node and $\rho_{x}$ is its region formula (\autoref{d:regfma}).
We will show that
\begin{equation}\label{eq:itp-l1}
\text{for all $x \in Q$: } \rho_{x} \models \subst{\sigma}{\iitp_{x}}.
\end{equation}

We will prove \eqref{eq:itp-l1} by leaf-to-root induction on the relation $\qedge$, and doing so we will frequently make implicit use of the fact that
\[
\rho_{x} \models \subst{\sigma}{\iitp_{x}} \quad\text{ iff }\quad
\text{ for all } t \in R_{x}: \Lambda_{1}(t) \models  \subst{\sigma}{\iitp_{x}}.
\]
The base cases are covered by the first two cases below, and the induction step is covered by the remaining cases.
\begin{description}
\item[Case $k(x) = 1$, $x$ is a repeat]
In this case, we have $\iitp_{x} \isdef q_{c(x)}$.
Hence $\rho_{x} \models \subst{\sigma}{\iitp_{x}}$ reduces to $\rho_{x} \models \rho_{c(x)}$. Since $\rho_x=\rho_{c(x)}$ (see \autoref{d:regfma}), this is trivially true.

\item[Case $k(x) = 1$,
   $\Delta_{x} \in {\Lambda_{2}[C^{+}] \setminus \Lambda_{2}[C]}$]
   For convenience, we let $\Delta \isdef \Delta_{x}$.
In this case, we have that $\subst{\sigma}{\iitp_{x}}  = \subst{\sigma}{\theta_{\Delta}} = \theta_{\Delta}$.
Now take an arbitrary node $t \in R_{x}$.
By the case assumption we have $t \in C^{+}\setminus C$, so that \autoref{f:iota} yields $\Lambda_{1}(t) \models \theta_{\Delta}$.
This suffices to show that $\rho_{x} \models \subst{\sigma}{\iitp_{x}}$, since $t$ was arbitrary.
\item[Case $k(x) = 1$, $x$ is a companion]
In this case, $x$ has a unique child $y$, and we have $\Delta_{x} = \Delta_{y}$.
This implies that $R_{x} = R_{y}$ so that also $\rho_{x} = \rho_{y}$; we simply write $\rho$ for this formula.

For the interpolants we find, writing $\alpha \isdef \bigcup_i \alpha_i$:
\[\begin{array}{lll}
\iitp_{y} & \equiv &
   [\alpha]q_{x} \land \bigwedge_{j}[\beta_{j}]q_{z_{j}} \land \psi,
\\ \iitp_{x} &=&
   [\alpha^{*}]\big({\bigwedge_{j}}[\beta_{j}]q_{z_{j}} \land \psi \big).
\end{array}\]
It then follows by the induction hypothesis on $y$ that
\[
\rho \models
[\alpha]\rho \land \bigwedge_{j}[\beta_{j}]\sigma(q_{z_{j}})
\land \psi,
\]
and from this we get by \autoref{l:stepToStar}
that
\[
\rho \models
   [\alpha^{*}]\big({\bigwedge_{j}}[\beta_{j}]\sigma(q_{z_{j}})
      \land \psi \big),
\]
so that we are done.

\item[Case $k(x) = 1$, $x$ is neither a leaf nor a companion]
In this case, $x$ has a unique child $y$, and by definition $\iota_x = \iota_y$. 
Furthermore, we again have $\Delta_{x} = \Delta_{y}$, implying that also $\rho_{x} = \rho_{y}$.
We now immediately obtain $\rho_{x} \models \iota_x$ from the induction hypothesis, which states $\rho_{y} \models \iota_y$.

\item[Case $k(x) = 2$]
In this case, $x$ has a unique child $y$, and by definition we have $k(y)=3$
and $\Delta_{y} = \Delta_{x}$; we will simply write $\Delta$ for this sequent.
Furthermore recall that $R_{x} = C_{\Delta}^{+}$, but $R_{y} = C_{\Delta}^{R}$,
and that $\iitp_{x} = [\lnot\theta_{\Delta}?]\iitp_{y}$, so that
$\subst{\sigma}{\iitp_{x}} \equiv \theta_{\Delta} \lor
\subst{\sigma}{\iitp_{y}}$.

By an inner induction on the relation $\edge$ of $\tab$ we will show
that all nodes $t$ in $\tab$ satisfy the following:
\begin{equation}\label{eq:itp2}
\text{if } t \in R_{x} \text{ then }\Lambda_{1}(t) \models
  \subst{\sigma}{\iitp_{x}}.
\end{equation}
To prove this we make a case distinction.
If $t \in R_{x} \setminus C$ then by \autoref{f:iota}, we have
$\Lambda_{1}(t) \models \theta_{\Delta}$ which suffices since 
$\theta_{\Delta} \models \subst{\sigma}{\iitp_{x}}$.
If $t \in R_x \cap C$ then we make a further distinction.
In case $t \in C_{\Delta}^{R}$ then by definition of $R_{y}$ we have $t \in
R_{y}$, and hence by the induction hypothesis on $y$,
$\Lambda_{1}(t) \models \subst{\sigma}{\iitp_{y}}$; again this suffices since
$\iitp_{y} \models \iitp_{x}$.
Finally, if $t \in C_{\Delta}^{L}$ then let $u_{1}, \ldots, u_{n}$ be the children of $t$ in $\tab$; note that since we applied a left rule to $t$, we have $\Lambda_{2}(u_{i}) = \Lambda_{2}(t)$, which means that all $u_{i}$ belong to $C_{\Delta}^{+} \subseteq R_{x}$.
Hence by the inner induction hypothesis, we find for all $i$ that
$\Lambda_{1}(u_{i}) \models \subst{\sigma}{\iitp_{x}}$; from this it
is immediate that $\bigvee_{i} \bigwedge \Lambda_{1}(u_{i}) \models \subst{\sigma}{\iitp_{x}}$.
This means that we are done, since we have $\bigwedge \Lambda_{1}(t) \equiv
\bigvee_{i} \bigwedge \Lambda_{1}(u_{i})$ by the local invertibility of the rule applied
at $t$.

\item[Case $k(x) = 3$, $\Delta_{x}$ not basic]
First of all observe that in this case we have $\Delta\isdef \Delta_{x}\in \Lambda_{2}[C]$.
Let $\Pi_1,\ldots,\Pi_n$ be sequents as in \autoref{l:itpbas} and in \autoref{d:Q}.
Take an arbitrary $t \in C_{\Delta}^{R} = R_x$, then we may present the children of $t$ in $\tab$ as $t_{1},\ldots,t_{n}$, with $\Lambda(t_{i}) = \Lambda_{1}(t);\Pi_{i}$.
Furthermore, let $y_{1},\ldots,y_{n}$ be the corresponding enumeration of the children of $x$ in $\Q$, that is, we have $\Delta_{y_{i}} = \Pi_{i}$.
We now have $t_{i} \in R_{y_{i}}$, for all $i$, so by the induction hypothesis we find that $\Lambda_{1}(t_{i}) \models \subst{\sigma}{\iitp_{y_{i}}}$.
Since for all $i$, $\Lambda_{1}(t) = \Lambda_{1}(t_{i})$, it follows that $\Lambda_{1}(t) \models \bigwedge_{i} \subst{\sigma}{\iitp_{y_{i}}}$, and hence by definition of $\iitp_{x} = \bigwedge_{i}\iitp_{y_{i}}$, 
we find that $\rho_{x} \models \subst{\sigma}{\iitp_{x}}$ holds indeed.

\item[Case $k(x) = 3$, $\Delta_{x}$ basic]
In this case, $x$ has a unique child $y$, which has type 1
and label $\Delta_{y} = (\Delta_{x})_{a} \cup \{ \lnot \xi\}$, where
$\lnot\loaded{[a]}\xi$ is the loaded formula in $\Delta_{x}$.
Take an arbitrary $t \in C_{\Delta_x}^{R} = R_x$. 
Since $\Delta_{x}$ is basic, by \autoref{l:itpbas}(e), 
the modal rule is applied at $t$.
Let $u$ be the unique child of $t$ in $\tab$, then 
$\Lambda_{1}(u) = (\Lambda_{1}(t))_{a}$ and
$\Lambda_{2}(u) = (\Lambda_{2}(t))_{a} \cup \{ \neg\xi \} = \Delta_{y}$,
and by \autoref{l:itpbas}(e), we have $u \in C$.
This means that $u \in R_{y}$, so that we obtain $\Lambda_{1}(u) \models\iitp_{y}$
by the induction hypothesis.
Since $\Lambda_{1}(u) = (\Lambda_{1}(t))_{a}$, we have $(\Lambda_{1}(t))_{a} \models \iitp_y$
from which we easily derive that $\Lambda_{1}(t) \models [a]\iitp_{y}$, 
and so we are done since $\iitp_{x} = [a]\iitp_{y}$.
\end{description}

Finally, to see why the lemma follows from \eqref{eq:itp-l1}, take $x$ to be the root $r_{\Q}$ of $\Q$.
Then \eqref{eq:itp-l1} implies that $\rho_{r_{\Q}} \models \subst{\sigma}{\iitp_{r_{\Q}}}$.
Now since $\voc(\iitp_{r_{\Q}}) \subseteq \voc(\Gamma_{1}) \cap \voc(\Gamma_{2})$ by \autoref{l:itp-voc}, we have $\subst{\sigma}{\iitp_{r_{\Q}}} = \iitp_{r_{\Q}}$; and because of our assumption, $\Gamma_{1} \neq \emptyset$, by definition of $\theta_r$ we have $\theta_{r} = \iitp_{r_{\Q}}$.
Thus we obtain $\rho_{r_{\Q}} \models \theta_{r}$.
Furthermore, since $\Lambda_{2}(r) = \Gamma_{2} = \Delta_{r_{\Q}}$ we find
$r \in R_{r_{\Q}}$, so that $\bigwedge\Gamma_{1} = \bigwedge\Lambda_{1}(r)$ is
one of the disjuncts of $\rho_{r_{\Q}}$.
But then $\rho_{r_{\Q}} \models \theta_{r}$ implies $\Gamma_{1} \models
\theta_{r}$, as required.
\end{proof}

To show that $\Gamma_2 \models \lnot \theta_r$, we will need the following notion of distance, relative to a given program.
Intuitively, the idea is that if we have $(v,w) \in R_\alpha$ in some Kripke model $\kmodel$, the $\alpha$-distance from $v$ to $w$ is the minimal number of atomic steps it takes to move from $v$ to $w$ along an $\alpha$-path.
Furthermore, the $\alpha$-distance from $v$ to $w$ is set to infinity if $(v,w) \not\in R_\alpha$.

\begin{defi}[\lean{distance}]\label{d:distance}
Let $v$ and $w$ be states in some model $\kmodel = (W, \{\reach{a} \mid a \in \progs_0\}, V)$, and let $\alpha$ be some program.
By induction on $\alpha$, we define the \emph{$\alpha$-distance} $d_\alpha(v,w)$ from $v$ to $w$ as follows:
\[\begin{array}{lll}
d_{a}(v,w) &\isdef& \left\{
   \begin{array}{lll}
   1         & \text{if } (v,w) \in R_{a}
   \\ \infty & \text{otherwise}
   \end{array}\right.
\\ d_{\tau?}(v,w) &\isdef& \left\{
   \begin{array}{lll}
   0         & \text{if } \kmodel, v \Vdash \tau \text{ and } v=w
   \\ \infty & \text{otherwise}
   \end{array}\right.
\\ d_{\alpha\cup\beta}(v,w) &\isdef& \min\big\{ d_{\alpha}(v,w), d_{\beta}(v,w) \big\}
\\ d_{\alpha;\beta}(v,w) &\isdef& 
   \min\big\{d_{\alpha}(v,u)+ d_{\beta}(u,w) \mid u \in W \big\}
\\ d_{\alpha^{\ast}}(v,w) &\isdef& 
   \min\big\{\sum_{i=0}^{n-1} d_{\alpha}(v_{i},v_{i+1}) \mid 
   v_{0},\ldots,v_{n} \in W, v_{0} = v, v_{n} = w \big\}.
\end{array}\]

We extend this definition to lists of programs in the obvious way:
\[\begin{array}{lll}
   d_{\emptylist}(v,w) &\isdef& \left\{
     \begin{array}{lll}
     0         & \text{if } v=w
     \\ \infty & \text{otherwise}
     \end{array}\right.
\\ d_{\alpha\ol{\gamma}}(v, w) &\isdef&
   \min\big\{d_{\alpha}(v,u)+ d_{\ol{\gamma}}(u,w) \mid u \in W \big\}.
\end{array}\]
\end{defi}

Note that if $d_{\alpha^{\ast}}(v,w) < \infty$ and $v \neq w$ then
there is a sequence $v=v_0, \ldots, v_n=w$ such that $d_{\alpha^{\ast}}(v,w) = \sum_{i=0}^{n-1} d_{\alpha}(v_{i},v_{i+1})$
and $v \neq v_1$. For, if $v=v_1$ then also $v_1, \ldots, v_n$ would witness $d_{\alpha^{\ast}}(v,w)$. More generally, we can remove those initial elements of the sequence that are equal to $v$ until we obtain a non-empty sequence whose second element is distinct from~$v$ (the assumption $v\neq w$ implies that the sequence cannot be empty).

\begin{lem}[\checklean{theorem}{distanceProps}]\label{l:distanceProps}
Let $v$ and $w$ be states in some model $\kmodel=(W, \{\reach{a} \mid a \in \progs_0\}, V)$.
Then the following hold, for any program $\alpha$ and lists of programs $\ol{\gamma}, \ol{\delta}$.
\begin{enumerate}[(a)]
\item $d_{\alpha}(v,w) < \infty$ iff $(v,w) \in R_{\alpha}$;
\item $d_{\alpha_1\dots\alpha_n}(v,w) = d_{\alpha_1; \dots; \alpha_n}(v, w)$;
\item if, for all $u$, $d_{\alpha}(v, u) \leq d_{\ol{\delta}}(v, u)$ then, for all $u$, 
$d_{\alpha\ol{\gamma}}(v, u) \leq d_{\ol{\delta}\ol{\gamma}}(v, u)$;
\item if $(X,\ol{\delta}) \in  H_{\alpha}$ and $\kmodel, v \Vdash \bigwedge X$ then, for all $u$, $d_{\alpha}(v,u) \leq d_{\ol{\delta}}(v,u)$;
\item if $(X,\ol{\delta}) \in  H_{\alpha}$ and $\kmodel, v \Vdash \bigwedge X$ then, for all $u$,  $d_{\alpha\ol{\gamma}}(v,u) \leq d_{\ol{\delta}\ol{\gamma}}(v,u)$;
\item if $(v,w) \in R_{\alpha}$ then there is a pair $(X,\ol{\delta}) \in  H_{\alpha}$ such that $\kmodel, v \Vdash \bigwedge X$ and $d_{\ol{\delta}}(v,w) = d_{\alpha}(v,w)$;
\item if $(v,w) \in R_{\alpha\ol{\gamma}}$ then there is a pair $(X,\ol{\delta}) \in  H_{\alpha}$ such that $\kmodel, v \Vdash \bigwedge X$ and $d_{\ol{\delta}\ol{\gamma}}(v,w) = d_{\alpha\ol{\gamma}}(v,w)$;
\item if $\kmodel, v \Vdash \lnot\loaded{\Box}(\alpha\ol{\gamma}, \psi)$ where $\psi$ is an unloaded PDL formula, then there exists a pair $(X, \ol{\delta}) \in H_{\alpha}$ such that $\kmodel, v \Vdash \bigwedge X \land \lnot \loaded{\Box}(\ol{\delta}\ol{\gamma}, \psi)$ and 
\[
\min \big\{ d_{\ol{\delta}\ol{\gamma}}(v,w) \mid \kmodel, w \Vdash \lnot\psi \big\} =
\min \big\{ d_{\alpha\ol{\gamma}}(v,w) \mid \kmodel, w \Vdash \lnot\psi \big\}
\]
\end{enumerate}
\end{lem}
    
\begin{proof}
Parts (a) and (b) are straightforward.
\begin{enumerate}
    \item[(c)]
    Let $\beta = \delta_0; \dots; \delta_n$ where $\ol{\delta} = \delta_0 \dots \delta_n$.
    Making use of (b), it suffices to show for any $u$ that
    $d_{\alpha\ol{\gamma}}(v, u) \leq d_{\beta\ol{\gamma}}(v, u)$.
    
    Let $u'$ witness minimality in $d_{\beta\ol{\gamma}}(v, u)=\min\big\{d_{\beta}(v,w)+ d_{\ol{\gamma}}(w,u) \mid w \in W \big\}$ in the sense that $d_{\beta\ol{\gamma}}(v, u) = d_{\beta}(v,u')+ d_{\ol{\gamma}}(u',u)$.
    By definition of $d_{\alpha\ol{\gamma}}(v, u)$:
    $d_{\alpha\ol{\gamma}}(v, u) \leq d_{\alpha}(v,u')+ d_{\ol{\gamma}}(u',u)$. Furthermore, from the hypothesis and (b), we have
    $d_{\alpha}(v, u') \leq d_{\beta}(v, u')$. So:
    \begin{align*}
    d_{\alpha\ol{\gamma}}(v, u) &\leq d_{\alpha}(v,u') + d_{\ol{\gamma}}(u',u) \\
                           &\leq d_{\beta}(v, u') + d_{\ol{\gamma}}(u', u) \\
                           &= d_{\beta\ol{\gamma}}(v, u)
    \end{align*}
    
    \item[(d)]
    We proceed by induction on the shallow program structure of $\alpha$.
    Suppose
    $(X,\ol{\delta}) \in  H_{\alpha}$ and $\kmodel, v \Vdash \bigwedge X$, and let $u$ be an arbitrary state.
        
    \begin{description}
    \item[Case $\alpha = a$]
        Then $H_{a} = \{ (\emptyset, a) \}$, and thus $\ol{\delta} = a$, making $d_{\alpha}(v, u) = d_{\ol{\delta}}(v, u)$ hold trivially.
    
    \item[Case $\alpha = \tau?$]
        Then we have that $H_{\tau?} = \{ ( \{ \tau \}, \emptylist )\}$.
        If $v = u$, then since $\kmodel, v \Vdash \tau$ we have $d_{\tau?}(v, u) = 0 = d_{\emptylist}(v, u)$.
        If $v \neq u$, then we get that
        $d_{\tau?}(v, u) = \infty = d_{\emptylist}(v, u)$.
    
    \item[Case $\alpha = \beta \cup \beta'$]
        Then $H_{\beta\cup\beta'} = H_{\beta} \cup H_{\beta'}$.
        Assume without loss of generality that $(X, \ol{\delta}) \in H_{\beta}$.
        Then, by the inductive hypothesis, we have that $d_{\beta}(v, u) \leq d_{\ol{\delta}}(v, u)$.
        It then follows that $d_{\beta \cup \beta'}(v, u) = \min(d_{\beta}(v, u), d_{\beta'}(v,u)) \leq d_{\beta}(v, u) \leq d_{\ol{\delta}}(v, u)$.
        
    \item[Case $\alpha = \beta; \beta'$]
        From the definition we have
        \begin{align*}
        H_{\beta;\beta'} =~& 
           \{(X,\ol{\delta}\beta' ) \mid (X,\ol{\delta}) \in H_{\beta}, \ol{\delta} \neq 
              \emptylist \}
        ~\cup \\
           &\{(X \cup Y, \ol{\delta} ) \mid (X,\emptylist) \in H_{\beta}, 
              (Y, \ol{\delta}) \in H_{\beta'} \}.
        \end{align*}
        Consider first a pair $(X, \ol{\delta}\beta') \in H_{\beta;\beta'}$ with $(X, \ol{\delta}) \in H_{\beta}$ and $\ol{\delta} \neq \emptylist$.
        By inductive hypothesis, we have that $d_{\beta}(v, u') \leq d_{\ol{\delta}}(v, u')$ for all $u'$.
        By (c) and instantiating $u'=u$, we obtain that $d_{\beta\beta'}(v, u) \leq d_{\ol{\delta}\beta'}(v, u)$.

        Now consider a pair $(X \cup Y, \ol{\delta}) \in H_{\beta;\beta'}$ with $(X,\emptylist) \in H_{\beta}$ and $(Y, \ol{\delta}) \in H_{\beta'}$.
        By inductive hypothesis, for all $u'$, $d_{\beta}(v, u') \leq d_{\emptylist}(v, u')$, and then by (c), for all $u'$, $d_{\beta\beta'}(v, u') \leq d_{\beta'}(v, u')$.
        Applying the inductive hypothesis with $(Y, \ol{\delta}) \in H_{\beta'}$ we get that $d_{\beta'}(v, u') \leq d_{\ol{\delta}}(v, u')$ for all $u'$.
        Combining these two observations, and instantiating $u'=u$, we obtain $d_{\beta\beta'}(v, u) \leq d_{\ol{\delta}}(v, u)$.

    \item[Case $\alpha = \beta^\ast$]
        From the definition we have
        \[
        H_{\beta^{\ast}} = \{ (\emptyset,\emptylist) \} \cup 
             \{ (X, \ol{\delta}\beta^{\ast}) 
        	   \mid (X,\ol{\delta}) \in H_{\beta}, \ol{\delta} \neq 
        	   \emptylist \}.
        \]
        Consider first the pair $(\emptyset, \emptylist) \in H_{\beta^\ast}$.
        For this pair, we need to show that $d_{\beta^\ast}(v, u) \leq  d_{\emptylist}(v, u)$. If $v = u$, then $d_{\beta^\ast}(v, u) = 0 = d_{\emptylist}(v, u)$. If $v \neq u$, then $d_{\emptylist}(v, u) = \infty$, and thus $d_{\beta^\ast}(v, u) \leq d_{\emptylist}(v, u)$ holds trivially.

        Now consider a pair $(X, \ol{\delta}\beta^\ast) \in H_{\beta^\ast}$ with $(X, \ol{\delta}) \in H_{\beta}$ and $\ol{\delta} \neq \emptylist$. We need to show $d_{\beta^\ast}(v, u) \leq d_{\ol{\delta}\beta^\ast}(v, u)$. 
        Let $v = v_0, \dots, v_n = u $ be a sequence such that
        $d_{\beta^\ast}(v, u) = \sum_{i = 0}^{n-1} d_{\beta}(v_i, v_{i+1})$,
        and let $v = u_0, \dots, u_k = u $ be a sequence such that
        $d_{\ol{\delta}\beta^\ast}(v, u) = d_{\ol{\delta}}(u_0, u_1) + \sum_{j = 1}^{k-1} d_{\beta}(u_j, u_{j+1})$, i.e., these sequences witness the minimality in the definitions of $d_{\beta^\ast}(v, u)$ and $d_{\ol{\delta}\beta^\ast}(v, u)$.
        We now have,\\
        $
        \begin{array}{rcll}
        d_{\beta^\ast}(v, u)
           & = &\sum_{i = 0}^{n-1} d_{\beta}(v_i, v_{i+1}) & \text{(by assumption on $v_0, \ldots, v_n$)}\\[.1em]
           &\leq &\sum_{j = 0}^{k-1} d_{\beta}(u_j, u_{j+1}) & \text{(by minimality of $v_0, \ldots, v_n$)}\\[.1em]
           &\leq & d_{\ol{\delta}}(v, u_1) + \sum_{j = 1}^{k-1} d_{\beta}(u_k, u_{k+1}) & \text{(by induction hypothesis)}\\[.1em] 
           & = &d_{\ol{\delta}\beta^\ast}(v, u) & \text{(by assumption on $u_0, \ldots, u_k$)}
        \end{array}
        $
    \end{description}
    
    \item[(e)] Straightforward from (c) and (d).
    
    \item[(f)]
    By induction on the shallow program structure of $\alpha$.
    We only cover two cases of the induction step.
    
    \begin{description}
    \item[Case $\alpha = \gamma; \beta$]
        Since $(v, w) \in R_{\gamma; \beta}$, by (a) it follows that 
        $d_{\gamma; \beta}(v, w) < \infty$.
        By definition of $d_{\gamma; \beta}(v, w)$, 
        there exists a $u$ such that
        $d_{\gamma; \beta}(v, w) = d_{\gamma}(v, u) + d_{\beta}(u, w)$.
        
        By the induction hypothesis on $\gamma$ we find a pair $(X,\ol{\delta}) \in H_\gamma$
        such that $\kmodel, v \Vdash \bigwedge X$ and $d_{\gamma}(v,u) = d_{\ol{\delta}}(v, u)$.
        
        Consider the definition of $H_{\gamma; \beta}$.
        If $\ol{\delta} \neq \emptylist$ then  
        $(X,\ol{\delta}\beta) \in H_{\gamma; \beta}$.
        From (e), we obtain 
        $d_{\gamma; \beta}(v, w) \leq d_{\ol{\delta}\beta}(v, w)$.
        Due to the $\min$ in the definition, $d_{\ol{\delta}\beta}(v, w)\leq 
        d_{\ol{\delta}}(v,u) + d_{\beta}(u, w)$.
        Thus
        \[
        d_{\gamma; \beta}(v, w)
        =d_{\gamma}(v, u) + d_{\beta}(u, w)
        =d_{\ol{\delta}}(v,u) + d_{\beta}(u, w) \geq d_{\ol{\delta}\beta}(v, w)
        \]
        We conclude that $d_{\gamma; \beta}(v, w) = d_{\ol{\delta}\beta}(v, w)$, as required.
        
        On the other hand, if $\ol{\delta} = \emptylist$, from $d_{\ol{\delta}}(v, u)=d_{\gamma}(v,u)\leq d_{\gamma; \beta}(v, w)<\infty$, and the definition of $d_{\emptylist}(v, u)$, we have $v = u$. 
        Furthermore, since $d_{\beta}(u, w) \leq d_{\gamma; \beta}(v, w) < \infty$, by (a) we have $(u,w) \in R_\beta$.
        Hence $(v,w)=(u,w) \in R_{\beta}$. 
        By the induction hypothesis it follows  that there
        is a pair $(Y,\ol{\eta}) \in H_{\beta}$ such that $\kmodel, v \Vdash \bigwedge Y$ and 
        $d_{\ol{\eta}}(v,w) = d_{\beta}(v, w)$.
        Combining all of this, and utilizing the definition of $H_{\gamma; \beta}$, we have $(X\cup Y, \ol{\eta}) \in H_{\gamma; \beta}$, and
        $\kmodel, v \Vdash \bigwedge(X \cup Y)$, and $d_{\ol{\eta}}(v,w) = d_{\beta}(v, w) = d_{\emptylist}(v, v) + d_{\beta}(v, w) = d_{\gamma; \beta}(v, w)$, as required.
        
    \item[Case $\alpha = \beta^{\ast}$]
        Let $(v,w) \in R_{\beta^{\ast}}$. There are two possibilities.
        If $v = w$ then the claim is witnessed by $(\emptyset,\emptylist)\in 
        H_{\beta^{\ast}}$:
        it follows from definition that $\kmodel, v \Vdash \bigwedge\emptyset$ and 
        $d_{\emptylist}(v, v) = 0 = d_{\beta^{\ast}}(v,v)$ (in the definition of $d_{\beta^{\ast}}(v,v)$, we can take $n=0$ which yields the empty sum, which equals 0).
        
        If $v \neq w$ then, since $(v, w) \in R_{\beta^{\ast}}$, by (a) we get that $d_{{\beta^{\ast}}}(v, w) < \infty$.
        Therefore, by definition of $d_{{\beta^{\ast}}}(v, w)$, there exists some $v_{1} \neq v$ (see observation above this lemma)
        such that 
        $d_{\beta^{\ast}}(v, w) = d_{\beta}(v,v_{1}) + d_{\beta^{\ast}}(v_{1},w) $.
        By the induction hypothesis there is a pair $(X,\ol{\delta}) \in H_{\beta}$
        such that $\kmodel, v \Vdash \bigwedge X$ and $d_{\ol{\delta}}(v,v_{1}) =  d_{\beta}(v, v_1)$.
        Note that we must have $\ol{\delta} \neq\emptylist$ since $v \neq v_{1}$.
        By definition of $H_{\beta^{\ast}}$, we then have that $(X,\ol{\delta}\beta^{\ast})\in H_{\beta^{\ast}}$.
        It remains to establish $d_{\beta^{\ast}}(v, w) = d_{\ol{\delta}\beta^\ast}(v, w)$. To see this, first notice that
        from  $(X,\ol{\delta}\beta^{\ast})\in H_{\beta^{\ast}}$ and $\kmodel, v \Vdash \bigwedge X$, by (e) we obtain
        $d_{\beta^\ast}(v, w) \leq d_{\ol{\delta}\beta^\ast}(v, w)$.
        On the other hand,
        $d_{\beta^\ast}(v, w) 
        = d_{\beta}(v,v_{1}) + d_{\beta^{\ast}}(v_{1},w) 
        = d_{\ol{\delta}}(v, v_1) + d_{\beta^\ast}(v_1, w) 
        \geq d_{\ol{\delta}\beta^\ast}(v, w)$.
        We conclude that $d_{\beta^\ast}(v, w) = d_{\ol{\delta}\beta^\ast}(v, w)$, so the claim is proved.
    
    \end{description}
    
    \item[(g)]
    Assume $(v, w) \in R_{\alpha \ol\gamma}$.
    First note that for all lists $\ol\gamma = \gamma_1 \ldots \gamma_n$ we have $R_{\ol\gamma} = R_{\gamma_1 \seqc \ldots \seqc \gamma_n}$ and $d_{\ol\gamma} = d_{\gamma_1 \seqc \ldots \seqc \gamma_n}$. 
    Since $\alpha \seqc \gamma_1 \seqc \ldots \seqc \gamma_n$ is a single program, we can apply (a) to obtain
    $d_{\alpha \ol\gamma}(v, w) < \infty$.
    Therefore, there exists a $u$ such that $(v, u) \in R_{\alpha}$, $(u, w) \in R_{\ol{\gamma}}$, 
    and $d_{\alpha\ol{\gamma}}(v,w) = d_{\alpha}(v,u)+ d_{\ol{\gamma}}(u,w)$, i.e., $u$ witnesses the minimum in the definition of $d_{\alpha \ol\gamma}(v, w)$.
    
    From $(v, u) \in R_{\alpha}$ and item (f), we obtain a pair $(X,\ol{\delta}) \in  H_{\alpha}$ such that $\kmodel, v \Vdash \bigwedge X$ and $d_{\ol{\delta}}(v,u) = d_{\alpha}(v,u)$.
    
    Note that $d_{\alpha\ol{\gamma}}(v, w) = d_{\alpha}(v,u)+ d_{\ol{\gamma}}(u,w) = d_{\ol{\delta}}(v,u) + d_{\ol{\gamma}}(u, w) \geq d_{\ol{\delta}\ol{\gamma}}(v, w)$.    
    Since $(X, \ol{\delta}) \in H_\alpha$, by (e) we obtain $d_{\alpha\ol{\gamma}}(v, w) \leq d_{\ol{\delta}\ol{\gamma}}(v, w)$.
    Combining both inequalities, we conclude $d_{\alpha\ol{\gamma}}(v, w) = d_{\ol{\delta}\ol{\gamma}}(v, w)$, as required.
    
    \item[(h)]
    Let $w_0 \in W$ be such that $\kmodel, w_0 \Vdash \lnot\psi$ and
    \[d_{\alpha\ol{\gamma}}(v, w_0) = \min \big\{ d_{\alpha\ol{\gamma}}(v,w) \mid \kmodel, w \Vdash \lnot\psi \big\}\]
    Since $(v, w_0) \in R_{\alpha\ol{\gamma}}$, applying (g), we obtain a pair $(X,\ol{\delta}) \in  H_{\alpha}$ such that $\kmodel, v \Vdash \bigwedge X$ and $d_{\ol{\delta}\ol{\gamma}}(v,w_0) = d_{\alpha\ol{\gamma}}(v,w_0)$.
    Now
    \[\min \big\{ d_{\ol{\delta}\ol{\gamma}}(v,w) \mid \kmodel, w \Vdash \lnot\psi \big\} \leq d_{\ol{\delta}\ol{\gamma}}(v,w_0) = d_{\alpha\ol{\gamma}}(v,w_0) \]
    
    Since $(X, \ol{\delta}) \in H_\alpha$, by (e) we get that $d_{\alpha \ol{\gamma}}(v, w) \leq d_{\ol{\delta} \ol{\gamma}}(v, w)$ for all $w \in W$.
    From this it follows that:
    \[\min \big\{ d_{\alpha\ol{\gamma}}(v,w) \mid \kmodel, w \Vdash \lnot\psi \big\} \leq
    \min \big\{ d_{\ol{\delta}\ol{\gamma}}(v,w) \mid \kmodel, w \Vdash \lnot\psi \big\}\]

    Combining the above, we obtain the desired result, since
    \begin{align*}
    \min \big\{ d_{\alpha\ol{\gamma}}(v,w) \mid \kmodel, w \Vdash \lnot\psi \big\}
    &\leq \min \big\{ d_{\ol{\delta}\ol{\gamma}}(v,w) \mid \kmodel, w \Vdash \lnot\psi \big\} \\
    &\leq d_{\alpha\ol{\gamma}}(v, w_0)\\
    &= \min \big\{ d_{\alpha\ol{\gamma}}(v,w) \mid \kmodel, w \Vdash \lnot\psi \big\}
    \qedhere
    \end{align*}
\end{enumerate}
\end{proof}

In order to prove that $\Gamma_{2} \models \theta_{r}$, the key observation is \autoref{l:satDownQ} below.
In the proof we will need the following auxiliary result.

\begin{lem}\label{l:clbas}
If $x$ is a repeat in $\Q$ then the path from $c(x)$ to $x$ passes a node $y$ such that $\Delta_{y}$ is basic.
\end{lem}

\begin{proof}
Every repeat is a node of type~1, and has a (unique) predecessor.
By definition of $\Q$, nodes of type~1 can only succeed nodes of type~3 that are labeled with a basic sequent.
\end{proof}

For the formulation of \autoref{l:satDownQ}, recall that $\cycs{x}$ denotes the set of repeat leaves that are descendants of $x$ and whose companion node is a proper ancestor of $x$.

\begin{lem}\label{l:satDownQ}
Let $x$ be a node in $\Q$.
If the sequent $\Delta_{x},\iitp_{x}$ is satisfiable,
then so is the sequent $\Delta_{z}, \iitp_{z}$, for some leaf $z \in \cycs{x}$.
\end{lem}

\begin{proof}
For the proof of this Lemma we will look in more detail at \emph{how} $\Delta_{x},\iitp_{x}$ is satisfied in some model $\kmodel$ by considering the ``witness distance'' of the loaded formula at a given node in  $\kmodel$ that satisfies the sequent $\Delta_{x},\iitp_{x}$.
We need some further notation here: for $x \in Q$, let $\ol{\delta_x}$, and $\psi_{x}$ be such that $\lnot \loaded{\Box}(\ol{\delta_{x}},\psi_{x})$ is the (unique) loaded formula in $\Delta_{x}$, while $\psi_{x}$ itself is unloaded.
Observe that $\ol{\delta_x}$ and $\psi_{x}$ are uniquely determined by $x$.
For a state~$v$ in $\kmodel$ and node~$x$ in $\Q$, define
\[
\witdis_{\kmodel}(v,x) \isdef \left\{ \begin{array}{ll}
   \infty & \text{if } \kmodel, v \not\Vdash \Delta_{x},\iitp_{x}
\\ \min \big\{ d_{\ol{\delta_{x}}}(v,w) 
   \mid \kmodel, w \Vdash \lnot\psi_x \big\}
   & \text{otherwise}.
\end{array}\right.
\]
In words: if $\kmodel, v \Vdash \Delta_{x},\iitp_{x}$ then $\witdis_{\kmodel}(v,x)$ is the least $\ol{\delta_x}$-distance from $v$ to a state that witnesses $\lnot\psi_x$.

The main technical result is formulated in the following Claim.

\emph{Claim.}
Let $x$ be a node in $\Q$.
Then for any pointed Kripke model $(\kmodel, v)$ such that $\kmodel, v \Vdash \Delta_{x}, \iitp_{x}$ there is a leaf $z \in \cycs{x}$ and a state $u$ in $\kmodel$, such that 
\\
1) $\kmodel, u \Vdash \Delta_{z}, \iitp_{z}$ and 
\\
2) $\witdis_{\kmodel}(u,z) \leq \witdis_{\kmodel}(v,x)$, with 
\\
3) $\witdis_{\kmodel}(u,z) < \witdis_{\kmodel}(v,x)$ if there is a basic node between $x$ and $z$.
\medskip

The proof of the claim is by primary induction on the relation $\qedge$ from leaf to root, and secondary induction on $\witdis_{\kmodel}(v,x)$.
Assume that $\kmodel = (W,R,V)$, $v$ and $x$ are as in the formulation of the claim.
We make a case distinction as to the type of $x$.

\begin{description}
\item[Case $k(x) = 1$, $x$ is a repeat]
In this case, we have $x \in \cycs{x}$, and the claim holds by simply taking $z \isdef x$ and $u\isdef v$.

\item[Case $k(x) = 1$, $x$ is a leaf because $\Delta_{x} \in {\Lambda_{2}[C^+]}\setminus {\Lambda_{2}[C]}$]
Since $\iitp_x = \theta_{\Delta_x}$, it follows from \autoref{f:iota} that the sequent $\Delta_{x}, \iitp_{x}$ is not satisfiable, so that the claim is vacuously true.

\item[Case $k(x) = 1$, $x$ is neither a leaf nor a companion]
By the definition of $\Q$, $x$ has a unique child $y$, and we have $\iitp_{x} = \iitp_{y}$ and $\Delta_x = \Delta_y$. Since $\kmodel, v \Vdash \Delta_x, \iota_x$, it follows that $\kmodel, v \Vdash \Delta_y, \iota_y$. Thus, we can apply the primary inductive hypothesis at $y$, which gives the existence of a leaf $z \in \cycs{y}$ and a state $v \in W$ such that
\[ \kmodel, v \Vdash \Delta_{z}, \iitp_{z}
\text { and  }
\witdis_{\kmodel}(v, z) \leq \witdis_{\kmodel}(v, y). \]
where the inequality is strict if there is a basic node between $y$ and $z$.

First, notice that since $\Delta_x = \Delta_y$, it follows that $\witdis_{\kmodel}(v, x) = \witdis_{\kmodel}(v, y)$, and that there is a basic node between $y$ and $z$ if and only if there is a basic node between $x$ and $z$. 
Second, since $x$ is not a companion of any node by \autoref{l:Q.Properties}(d) it follows that $\cycs{x} = \cycs{y}$. From these observations, we conclude that there exists a leaf $z \in \cycs{x}$ and a state $v \in W$ such that 
\[
\text{
$\kmodel, v \Vdash \Delta_{z}, \iitp_{z}$ and 
$\witdis_{\kmodel}(v, z) \leq \witdis_{\kmodel}(v, x)$,
}
\]
where the inequality is strict if there is a basic node between $x$ and $z$.
This exactly matches the claim.

\item[Case $k(x) = 1$, $x$ a companion]
Observe that by definition of $\Q$, $x$ has a unique child $y$ with $k(y) = 2$ and $\Delta_{x} = \Delta_{y}$ (which we simply write as $\Delta$).
Furthermore, as noted in \autoref{r:dfitp},
by construction of the quasi-tableau, 
$\iitp_{y} \equiv {\bigwedge_{i}}[\alpha_{i}]q_{x} \land {\bigwedge_{j}}[\beta_{j}]q_{z_{j}} \land \psi$ and
$\iitp_{x}\equiv {\bigwedge_{i}}[\alpha_{i}]\iota_{x} \land {\bigwedge_{j}}[\beta_{j}]q_{z_{j}} \land \psi$, hence $\iitp_{x}\equiv \subst{\iitp_{x}/q_{x}}{\iitp_{y}}$.

Let $\kmodel'$ be the Kripke model $\kmodel' \isdef (W,R,V')$, where $V'$ is the valuation  on $W$ that modifies $V$ by mapping $q_{x}$ to the set
$\{ u \in W \mid \kmodel, u \Vdash \iitp_{x} \}$.
The difference between $V$ and $V'$ may affect the truth of a 
pre-interpolant that contains $q_{x}$, but it will not affect the truth of any formula in $\Delta_z$ for all $z \in Q$,
since such formulas are $q_{x}$-free.
Also note that $\kmodel, v\Vdash \Delta, \iitp_{x}$ iff 
$\kmodel, v \Vdash \Delta, \subst{\iitp_{x}/q_{x}}{\iitp_{y}}$ iff
$\kmodel', v \Vdash \Delta, \iitp_{y}$. The first `iff' is
due to the definitions of $\iitp_{x}$ and $\iitp_{y}$, and the second `iff' is due to the Substitution Lemma (\autoref{l:subst}).

Since $x$ is closer to the root of $\Q$ than $y$, from the primary induction hypothesis at $y$, we obtain a leaf $z \in \cycs{y}$ and a state $v' \in W$ such that 
\begin{equation}\label{PIH}
\text{
$\kmodel', v' \Vdash \Delta_{z}, \iitp_{z}$ and 
$\witdis_{\kmodel'}(v',z) \leq \witdis_{\kmodel'}(v,y)$
}
\end{equation}
where the inequality is strict if there is basic node between $y$ and $z$.
Applying the Substitution Lemma to $\kmodel', v' \Vdash \Delta_{z}, \iitp_{z}$ gives us $\kmodel, v' \Vdash \Delta_{z}, \subst{\iitp_{x}/q_{x}}{\iitp_{z}}$.
Now make a further case distinction.

\begin{description}
\item[\it Subcase $z \in \cycs{x}$.]
By this assumption, $x$ is not the companion of $z$. Hence $\iitp_{z} = q_{c(z)} \neq q_x$, 
so $\subst{\iitp_{x}/q_{x}}{\iitp_{z}} = \iitp_{z}$.
Then $\kmodel, v' \Vdash \Delta_{z}, \subst{\iitp_{x}/q_{x}}{\iitp_{z}}$ simplifies to
$\kmodel, v' \Vdash \Delta_{z},\iitp_{z}$.
From the latter and from $\kmodel', v'\Vdash\Delta_{z},\iitp_{z}$  obtained in \eqref{PIH},
we have that $\witdis_{\kmodel}(v',z) = \witdis_{\kmodel'}(v',z)$ as
$\kmodel$ and $\kmodel'$ have the same relational structure, and since in both cases $\witdis$ returns the witness distance of the same loaded formula. 
For the same reason --- and since by assumption $\kmodel, v\Vdash \Delta, \iitp_{x}$ and hence 
$\kmodel', v \Vdash \Delta, \iitp_{y}$ --- we have $\witdis_{\kmodel'}(v,y) = \witdis_{\kmodel}(v,x)$.
Chaining these observations with $\witdis_{\kmodel'}(v',z)\leq \witdis_{\kmodel'}(v,y)$ \eqref{PIH}, we obtain $\witdis_{\kmodel}(v',z) \leq \witdis_{\kmodel}(v,x)$
which constitutes part (2) of the Claim.
Part (3) then follows by the observation that there is a basic node between $x$
and $z$ iff there is a basic node between $y$ and $z$.

\item[\it Subcase $z \not\in \cycs{x}$.]
In this case, we have $z \in \cycs{y} \setminus \cycs{x}$ which means that the companion of~$z$ must be~$x$.
It follows that $\Delta_{z} = \Delta$ and $\iitp_{z} = q_{c(z)} = q_{x}$ and thus we find $\subst{\iitp_{x}/q_{x}}{\iitp_{z}} = \iitp_{x}$.
Since $\kmodel', v'\Vdash\Delta_{z},\iitp_{z}$ \eqref{PIH}, by the Substitution lemma we also have $\kmodel, v'\Vdash\Delta_{z},\subst{\iitp_{x}/q_{x}}{\iitp_{z}}$, and the latter is simply $\kmodel, v'\Vdash\Delta,\iitp_{x}$.
Thus $\witdis_{\kmodel}(v',x) = \witdis_{\kmodel'}(v',z)$, since $\kmodel$ and $\kmodel'$ have the same relational structure, and $\witdis$ returns the witness distance of the same loaded formula. 
Also, by \autoref{l:clbas} there is a basic node between $x$ and $z$; this 
node then also lies between $y$ and $z$, so that in fact $\witdis_{\kmodel'}(v',z)$
is strictly smaller than $\witdis_{\kmodel'}(v,y)$.

Combining the above we get that $\witdis_{\kmodel}(v',x)
= \witdis_{\kmodel'}(v',z) < \witdis_{\kmodel'}(v,y) = \witdis_{\kmodel}(v,x)$ where the last equality relies on $\kmodel',v'\Vdash\Delta,\iitp_{y}$ (already shown) and the fact that $\mathcal{M}$ and $\mathcal{M}'$ have the same relational structure. 
(In passing we note that the strict inequality implies that the subcase $z \not\in \cycs{x}$ cannot occur when $\witdis_{\kmodel}(v,x) = 0$.)
Since $\witdis_{\kmodel}(v',x)<\witdis_{\kmodel}(v,x)$
and $\kmodel, v'\Vdash\Delta,\iitp_{x}$, we can apply the secondary induction hypothesis to obtain a node $z' \in \cycs{x}$ and a state $u \in W$ such that $\kmodel, u \Vdash \Delta_{z'},\iitp_{z'}$ and 
$\witdis_{\kmodel}(u,z') \leq \witdis_{\kmodel}(v',x) < \witdis_{\kmodel}(v,x)$, which means that we are done.
\end{description}

\item[Case $k(x) = 2$]
Let $y$ be the unique child of $x$, then by definition of $\Q$, we have
$\Delta_{x} = \Delta_{y}$ (which we write simply as $\Delta$)
and $\iitp_{x} =
[\lnot\theta_{\Delta}?]\iitp_{y} \equiv \theta_{\Delta} \lor \iitp_{y}$.
By \autoref{f:iota} the sequent $\Delta,\theta_{\Delta}$ is not satisfiable, and so it must be the case that
$\kmodel, v \Vdash \Delta,\iitp_{y}$.
We may now apply the primary induction hypothesis to $y$ and obtain a leaf $z \in \cycs{y}$ and a state $u \in W$ such that $\kmodel, u \Vdash \Delta_{z}, \iitp_{z}$ and $\witdis_{\kmodel}(u,z) \leq \witdis_{\kmodel}(v,y)$, with the inequality being strict if there is basic node between $y$ and $z$.
Since $k(x)\neq 1$, \autoref{l:Q.Properties}(a) implies that $x$ is not the companion of any repeat, and thus $\cycs{y}=\cycs{x}$. Hence the claim is established.

\item[Case $k(x) = 3$, $\Delta_{x}$ is basic]
In this case, the loaded formula in $\Delta$ must be of the form
$\lnot \loaded{\Box}(a\ol{\gamma},\psi_{x})$ for some atomic program $a$.
Then by definition we have $\iitp_{x} = [a]\iitp_{y}$,
and $\Delta_{y} = (\Delta_{x})_{a}, \lnot \loaded{\Box}(\ol{\gamma},\psi_{x})$,
where $y$ is the unique child of $x$.

In addition, since $\kmodel, v \Vdash \lnot\loaded{\Box}(a\ol{\gamma},\psi_{x})$,
there must be some $w$ at minimum witness distance (a finite value by~\autoref{l:distanceProps}(a))
so $\kmodel, w \Vdash \lnot\psi_{x})$ with an $a$-successor $v'$ of $v$ such that  $\kmodel, v' \Vdash \lnot\loaded{\Box}(\ol{\gamma},\psi_{x})$
and $d_{\ol{\gamma}}(v',w) = d_{a\ol{\gamma}}(v,w) - 1$.
It is then easy to see that $\kmodel, v' \Vdash \Delta_{y},\iitp_{y}$.
By the primary induction hypothesis applied to $y$, we obtain a leaf $z \in 
\cycs{y}$ and a state $u$ in $\kmodel$ such that 
($1'$) $\kmodel, u \Vdash \Delta_{z}, \iitp_{z}$ and 
($2'$) $\witdis_{\kmodel}(u,z) \leq \witdis_{\kmodel}(v',y)$.
Since $k(x)\neq 1$, \autoref{l:Q.Properties}(a) implies that $x$ is not the companion of any repeat, and thus we find $z \in \cycs{x}$ and
$\witdis_{\kmodel}(u,z) \leq \witdis_{\kmodel}(v',y) \leq d_{\ol{\gamma}}(v',w)
< d_{a\ol{\gamma}}(v,w) = \witdis_{\kmodel}(v,x)$.

\item[Case $k(x) = 3$, $\Delta_{x}$ is not basic]
Let $y_{1},\ldots,y_{n}$ be the children of $x$, labeled with, respectively,
the sequents $\Pi_{1},\ldots,\Pi_{n}$.
Notice that in this case, it follows from the definitions that $\iitp_{x}
= \bigwedge_{i}\iitp_{y_{i}}$.

We will first show that there exists a child $y_i$ such that $\witdis_{\kmodel}(v, y_i) =  \witdis_{\kmodel}(v, x)$ and that $\kmodel, v \Vdash \Pi_i, \iitp_{y_{i}}$.
To show this, we distinguish two subcases based on whether at the elements of $R_x = C^R_{\Delta_x}$ a rule was applied to an unloaded formula or to the loaded formula.

\begin{description}
\item[\it The rule was applied to an unloaded formula]
    If at the elements of $R_x = C^R_{\Delta_x}$ a rule was applied to an unloaded formula, 
    then the loaded formula remains unchanged, 
    and thus $\ol\delta_x = \ol\delta_{y_i}$ for all $i \leq n$.
    (Recall that $\ol\delta_x$ is the program list in the loaded formula of $\Delta_x$.)
    Therefore, $d_{\ol{\delta_x}}(v, w') = d_{\ol{\delta_{y_i}}}(v, w')$ for all $w' \in W$ and for all $i \leq n$.
    From this, it follows that $\witdis_{\kmodel}(v, y_i) = \witdis_{\kmodel}(v, x)$ for all $i \leq n$.
    By the local invertibility of the local rules we have $\bigwedge \Delta_x \equiv \bigvee_{i} \bigwedge \Pi_{i}$.
    Hence, it follows from $\kmodel, v \Vdash \Delta_{x}, \iitp_{x}$ that $\kmodel, v \Vdash \Pi_i, \iitp_{y_{i}}$, for some $i \leq n$.
    We have thus shown that there exists a child $y_i$ of $x$ such that $\witdis_{\kmodel}(v, y_i) =  \witdis_{\kmodel}(v, x)$ and $\kmodel, v \Vdash \Pi_i, \iitp_{y_{i}}$.
\item[\it The rule was applied to the loaded formula]
    If at the elements of $R_x = C^R_{\Delta_x}$ a rule $R_2$ was applied to the loaded formula, then $R_2$ must have been the rule $(\loaded{\diam})_2$, since $\Delta_x$ is not basic.
    Let $\alpha_x$ and $\ol{\gamma_x}$ be such that $\ol{\delta_x} = \alpha_x \ol{\gamma_x}$, and let
    \[ H_{\alpha_x} = \{(F_1, \ol{\beta_{1}}), \dots, (F_k, \ol{\beta_k})\}.\]
    Since the branching at $x$ in $\Q$ corresponds to the application of $(\loaded{\diam})_2$, it follows that $k=n$, and
    we can assume that for all $i \leq n$, the sequent $\Pi_i$ is obtained from $\Delta_x$ by replacing the loaded formula $\lnot \loaded{\Box}(\alpha_x \ol{\gamma_x}, \psi_x)$ with $F_{i} \cup \{\lnot\loaded{\Box}(\ol{\beta_{i}} \ol{\gamma_x}, \psi_x)\}$.
    Since $\kmodel, v \Vdash \lnot \loaded{\Box}(\alpha_x\ol{\gamma_x}, \psi_x)$, we have by \autoref{l:distanceProps}(h) that there exists a pair $(F_{i}, \ol{\beta_{i}}) \in H_{\alpha_x}$ such that $\kmodel, v \Vdash \bigwedge F_{i} \land \lnot\loaded{\Box}(\ol{\beta_{i}} \ol{\gamma_x}, \psi_x)$, and that \[\min \big\{ d_{\alpha_x\ol{\gamma_x}}(v,w) \mid \kmodel, w \Vdash \lnot\psi_x \big\} = \min \big\{ d_{\ol{\beta_{i}}\ol{\gamma_x}}(v,w) \mid \kmodel, w \Vdash \lnot\psi_x \big\}.\]
    
    Since $\Pi_i = (\Delta_x \setminus (\lnot \loaded{\Box}(\alpha_x\ol{\gamma_x}, \psi_x))) \cup F_i \cup \{ \lnot \loaded{\Box}(\beta_i\ol{\gamma_x}, \psi_x)\}$ and $\kmodel, v \Vdash \Delta_x$, it follows that $\kmodel, v \Vdash \Pi_i$.
    Moreover, since $\kmodel, v \Vdash \iota_x$ and $\iota_x = \bigwedge \iota_{y_i}$, we infer that $\kmodel, v \Vdash \Pi_i, \iota_{y_i}$.
    
    Also notice that:
    \begin{align*}
        \witdis_{\kmodel}(v, x)
        &= \min \big\{ d_{\alpha_x\ol{\gamma_x}}(v,w) \mid \kmodel, w \Vdash \lnot\psi_x \big\} \\ 
        &= \min \big\{ d_{\ol{\beta_{i}}\ol{\gamma_x}}(v,w) \mid \kmodel, w \Vdash \lnot\psi_x \big\} \\
        &= \witdis_{\kmodel}(v, y_i)
    \end{align*}
    The first equality is justified by the fact that $\kmodel, v \Vdash \Delta_x, \iota_x$ and the third by having $\kmodel, v \Vdash \Pi_i, \iota_{y_i}$.
    
    This concludes the proof of this subcase.
\end{description}

We have thus shown that there exists a child $y_i$ of $x$ such that $\witdis_{\kmodel}(v, y_i) =  \witdis_{\kmodel}(v, x)$ and that $\kmodel, v \Vdash \Pi_i, \iitp_{y_{i}}$.
Let us fix such a child $y_i$.

Since $\kmodel, v \Vdash \Pi_i, \iitp_{y_i}$,
we can apply the inductive hypothesis at $y_i$, to obtain that
there exists a leaf $z \in \cycs{y_i}$ and a state $u$ in $\kmodel$, such that 
$\kmodel, u \Vdash \Delta_{z}, \iitp_{z}$ and 
$\witdis_{\kmodel}(u,z) \leq \witdis_{\kmodel}(v,y_i)$, 
where the inequality is strict if there is a basic node between $y_i$ and $z$.

Notice that since $k(x)\neq 1$, \autoref{l:Q.Properties}(a) implies that $x$ is not the companion of any repeat, and thus $\cycs{y_i} \subseteq \cycs{x}$.
From this it follows that $z \in \cycs{x}$.
Since $\witdis_{\kmodel}(v, y_i) = \witdis_{\kmodel}(v,x)$, and since $x$ is not basic, we conclude that
$\witdis_{\kmodel}(u,z) \leq \witdis_{\kmodel}(v,x)$, 
where the inequality is strict if there is a basic node between $x$ and $z$.
These observations and the fact that $\kmodel, u, \Vdash \Delta_z, \iota_z$ are exactly what we wanted to show for this case.\qedhere
\end{description}
\end{proof}

\begin{lem}\label{l:correctRightIP}
$\Gamma_{2} \models \lnot\theta_{r}$.
\end{lem}
\begin{proof}
Note that for the root $r_{\Q}$ of $\Q$ the set $\cycs{r_{\Q}}$ is empty.
It then follows by \autoref{l:satDownQ} that the sequent $\Delta_{r_{\Q}},
\iota_{r_{\Q}}$ cannot be satisfied.
From this we immediately get that the formula $\lnot\theta_{r} =
\lnot\iota_{r_{\Q}}$ is a consequence of the sequent $\Gamma_{2} =
\Delta_{r_{\Q}}$.
\end{proof}

We are now ready to combine various lemmas to prove the correctness of the interpolants.

\label{l:clusterInterpolationProof}
\begin{proof}[Proof of \autoref{l:clusterInterpolation}]
Let $\tab = (V, \edge, r, \Lambda, L)$ be a uniform closed tableau for the split sequent $\Gamma_1; \Gamma_2$.
and assume that the root $r$ belongs to a proper cluster $C$.
By assumption the split sequent $\Gamma_1; \Gamma_2$ is loaded.

Assume that $\Gamma_2$ is loaded.
By \autoref{l:itp-voc}, \autoref{l:correctLeftIP} and \autoref{l:correctRightIP}, the formula $\theta_r$ as given in \autoref{d:itp} 
is an interpolant for the root $r$ of $C$.

Now assume that $\Gamma_1$ is loaded.
It is easy to see that $\tab' = (V, \edge, r, \Lambda', L')$ given by
\begin{align*}
\Lambda'_1(s) &\isdef \Lambda_2(s) \\
\Lambda'_2(s) &\isdef \Lambda_1(s) \\
L'(s) &\isdef \twopartdef{R_1}{\text{if } L(s) = R_2}{R_2}{\text{if } L(s) = R_1}
\end{align*}
is a uniform closed tableau for the split sequent $\Gamma_2; \Gamma_1$.

Furthermore, recall that we have assumed that for every node $t \in C^{+}\setminus C$ we already have an interpolant $\theta_{t}$ for the split sequent $\Lambda_{1}(t); \Lambda_{2}(t)$.
It is easy to see that then we also have an interpolant for every node 
$t \in C^{+}\setminus C$, for the split sequent $\Lambda_{2}(t); \Lambda_{1}(t)$, by simply taking $\lnot \theta_{t}$ to be the interpolant.

Then, as done in the case where $\Gamma_2$ was loaded, but by considering $\tab'$ instead of $\tab$, by \autoref{l:itp-voc}, \autoref{l:correctLeftIP} and \autoref{l:correctRightIP}, the interpolant $\theta_r$ as given in \autoref{d:itp} is a correct interpolant for $\Gamma_2; \Gamma_1$.
Thus, $\lnot \theta_r$ is a correct interpolant for $\Gamma_1; \Gamma_2$, as wanted.
\end{proof}

We can now combine the lemmas to prove that split tableaux do indeed yield interpolants.

\begin{proof}[Proof of \autoref{t:tabToInt}]
Let $\tab$ be a uniform, closed tableau for the split sequent $\Gamma_1; \Gamma_2$.
We want to show that $\Gamma_1; \Gamma_2$ has an interpolant.

We proceed by induction on the size (number of nodes) of $\tab$.

For the base case, if $\tab$ is a single node, then its label must be a closed sequent and by \autoref{l:itpLeaves} we are done.

As the inductive hypothesis, suppose for that for any closed tableau $\tab'$ with fewer nodes than $\tab$ there exists an interpolant for the label of the root of $\tab'$.
We make a case distinction.

If the root of $\tab$ forms a singleton cluster, then we find an interpolant by \autoref{l:easyItp}.

Consider now the case where the root is loaded and part of a proper cluster $C$.
Let $\tab'$ be a sub-tableau of $\tab$ whose root is a node in $C^+ \setminus C$.
Observe that $\tab'$ is a tableau for the PDL rules.
We will show that $\tab'$ is a PDL-tableau and that it is closed.
Observe that if $v$ is a free repeat or a \lpr in $\tab$, then it is a leaf in $\tab$, and thus it is also a leaf in $\tab'$.
As all repeats in $\tab'$ are repeats in $\tab$, it follows that $\tab'$ is a PDL-tableau.

We show that $\tab'$ is closed.
Let  $v$ be a leaf in $\tab'$. We must show that $v$ is closed or a \lpr.
So assume $v$ is not closed. Then $v$ must be a \lpr in $\tab$, as otherwise $\tab$ would be open.
Therefore, there exists a companion $c(v)$ in $\tab$ such that the path from $c(v)$ to $v$ is loaded.
It follows that $v$ and $c(v)$ are in the same cluster and thus, that $c(v)$ is in $\tab'$, which entails that $v$ is a \lpr in $\tab'$.
From this it follows that $\tab'$ is closed.

By inductive hypothesis, we can obtain interpolants for all nodes $t \in C^+ \setminus C$, and by applying \autoref{l:clusterInterpolation}, we obtain the claim.
\end{proof}

We now get our main result.

\begin{proof}[Proof of \autoref{t:interpolation}]
Suppose $\models \phi \to \psi$.
Then by \autoref{t:unsatIffSplitClosed} there is a closed tableau for the split sequent $\phi ; \lnot \psi$.
By \autoref{t:tabToInt} there is an interpolant $\theta$ for this split sequent and by \autoref{fact:intToPartInt} this $\theta$ is also an interpolant for $\phi \to \psi$.
\end{proof}

\subsection{Constructing an interpolant for a particular example}\label{subsec:ExampleInterpolant}

To demonstrate how to construct an interpolant, we now construct an interpolant for the split sequent $\Gamma_1; \Gamma_2$ where $\Gamma_1 = \{p \+ [a][a^\ast](p \lor [a^\ast]p)\}$
and $\Gamma_2 = \{\lnot[a][a^\ast]p\}$.
Notice that since $\Gamma_1 \cup \Gamma_2$ is unsatisfiable, we can use the algorithm to find an interpolant for $\Gamma_1; \Gamma_2$.

The first step is to find a closed tableau for $\Gamma_1; \Gamma_2$.
As there exists only a finite number of tableaux for $\Gamma_1; \Gamma_2$, it is possible to algorithmically find a closed split PDL-tableau for $\Gamma_1; \Gamma_2$.
As an example, the tableau in \autoref{f:exk2} is a closed split PDL-tableau for $\Gamma_1; \Gamma_2$.

Notice that this tableau contains only one proper cluster, the highlighted cluster $C$.
Next we will construct the quasi-tableau $\Q$ for the cluster $C$.
From this construction we will find the interpolant for the root of the cluster $C$.

\begin{figure}[htb]
\begin{center}
  \begin{tikzpicture}[node distance=17mm, >=latex]
    \draw[black!30,dashed,fill=yellow!10]
    (-3.5, -1.3)             -- (8, -1.3)  --
                              (8, -10.5) --
                   (1.25, -10.5)            --
                (-0.5, -6.1)                  --
    (-3.5, -6.1)                             --
                cycle;
    
    \node (0) 
    {$p \+ [a][a^\ast](p \lor [a^\ast]p) ~;~ \lnot[a][a^\ast]p$};
    \node [below of=0] (2) 
    {$p \+ [a][a^\ast](p \lor [a^\ast]p) ~;~ \lnot\loaded{[a][a^\ast]}p$}; \arr{0}{$(L+)_2$}{2}
    \node [below of=2] (4) 
    {$[a^\ast](p \lor [a^\ast]p) ~;~ \lnot\loaded{[a^\ast]}p$}; \arr{2}{$(M)_2$}{4}
    \node [below of=4] (6) 
    {$p \lor [a^\ast]p \+ [a][a^\ast](p \lor [a^\ast]p) ~;~ \lnot\loaded{[a^\ast]}p$}; \arr{4}{$(\Box)_1$}{6}
    \node [below of=6, node distance=8mm] {\footnotesize $(\lnot \land)_1$};
    \node [below of=6] (9) 
    {};
        \node [left of=9, node distance=8em] (8) 
        {$\lnot \lnot [a^\ast]p \+ [a][a^\ast](p \lor [a^\ast]p) ~;~ \lnot\loaded{[a^\ast]}p$}; \arr{6}{}{8}
        \node [right of=9, node distance=8em] (10) 
        {$\lnot \lnot p \+ [a][a^\ast](p \lor [a^\ast]p) ~;~ \lnot\loaded{[a^\ast]}p$}; \arr{6}{}{10}
        \node [below of=10] (12) 
        {$p \+ [a][a^\ast](\lnot p \lor [a^\ast]p) ~;~ \lnot\loaded{[a^\ast]}p$}; \arr{10}{$(\lnot)_1$}{12}
        \node [below of=12, node distance=8mm] {\footnotesize $(\loaded{\diam})_2$};
        \node [below of=12] (15) 
        {};
            \node [left of=15, node distance=12em] (14) 
            {$p \+ [a][a^\ast](p \lor [a^\ast]p) ~;~ \lnot p$}; \arr{12}{}{14}
            \node [right of=15, node distance=4.5em] (16) 
            {$p \+ [a][a^\ast](p \lor [a^\ast]p) ~;~ \lnot\loaded{[a][a^\ast]}p$}; \arr{12}{}{16}
    \draw (16.east) edge [dashed, gray, ->, thick, bend right=26, looseness=1.2] node [right] {$\comp$} (2.east);
  \end{tikzpicture}
\end{center}
\caption{A closed tableau for $\Gamma_1; \Gamma_2$ with the highlighted proper cluster $C$.}
\label{f:exk2}
\end{figure}
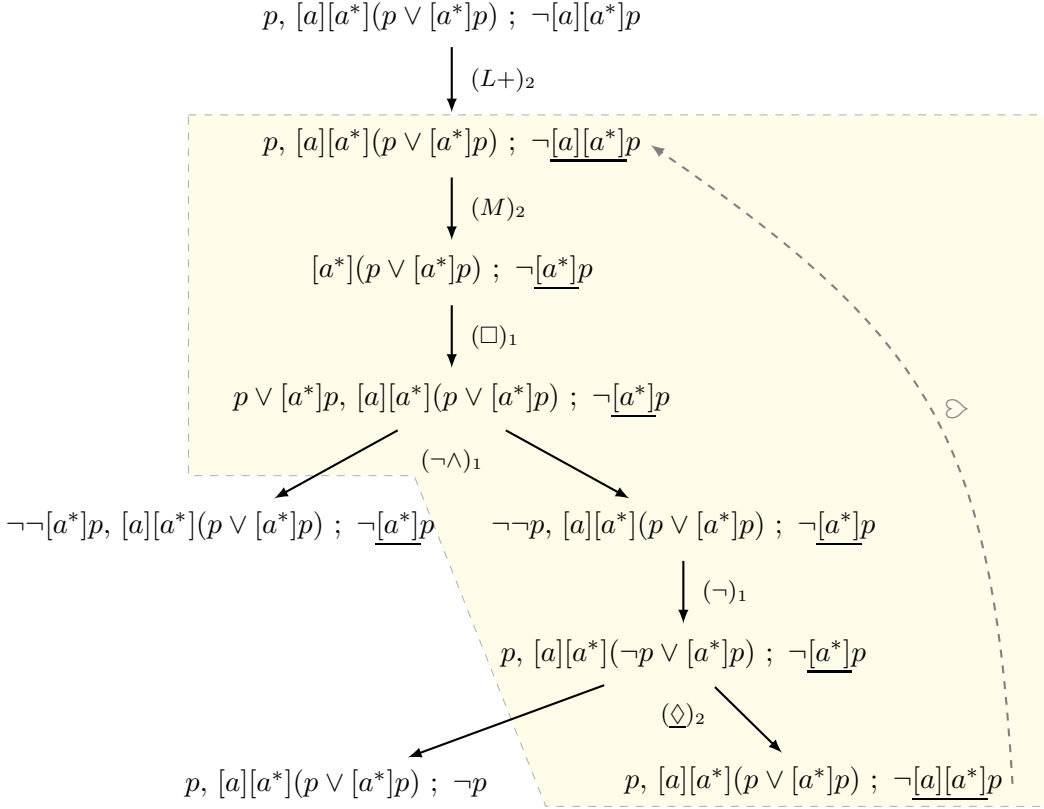

The interpolants for the two nodes in $C^+ \setminus C$ can be obtained following the method described in \autoref{subsec:InterpolationForSingletonClusters} as these are singleton clusters.
In particular, notice that for the leaf with split sequent
$p \+ [a][a^\ast](p \lor [a^\ast]p) ~;~ \lnot p$
we obtain the interpolant $p$,
and for the leaf with split sequent
$\lnot \lnot [a^\ast]p \+ [a][a^\ast](p \lor [a^\ast]p) ~;~ \lnot\loaded{[a^\ast]}p$
we obtain the interpolant
$\lnot \lnot [a^\ast]p$.
From this, and the fact that there is no leaf with the sequent $\{\lnot [a][a^\ast]p\}$ in the loaded component, it follows that:
\begin{align*}
\theta_{\{\lnot p\}} &= p \\
\theta_{\{\lnot [a^\ast]p\}} &= \lnot \lnot [a^\ast]p \\
\theta_{\{\lnot [a][a^\ast]p\}} &= \bot
\end{align*}

Having made these computations, we can now construct the quasi-tableau $\Q$ for the cluster $C$ with the corresponding pre-interpolants.
The result is shown in \autoref{f:exk2Q}.
\newcommand{\preInt}[1]{(\iota = #1)}

\begin{figure}[htb]
\begin{center}
  \begin{tikzpicture}[node distance=4em, >=latex]
    \node (a1) 
    {$x_1: \lnot\loaded{[a][a^\ast]}p$};
    \node [below of=a1] (a2) 
    {$x_2: \lnot\loaded{[a][a^\ast]}p$};
    \node [below of=a2] (a3) 
    {$x_3: \lnot\loaded{[a][a^\ast]}p$};
    \node [below of=a3] (b1) 
    {$y_1: \lnot\loaded{[a^\ast]}p$};
    \node [below of=b1] (b2) 
    {$y_2: \lnot\loaded{[a^\ast]}p$};
    \node [below of=b2] (b3) 
    {$y_3: \lnot\loaded{[a^\ast]}p$};
    \node [below of=b3] (cd) 
    {};
        \node [left of=cd] (c1) 
        {$l_1: \lnot p$};
        %
        \node [right of=cd] (d1) 
        {$r_1: \lnot\loaded{[a][a^\ast]}p$};

    \arr{a1}{}{a2}
    \arr{a2}{}{a3}
    \arr{a3}{}{b1}
    \arr{b1}{}{b2}
    \arr{b2}{}{b3}
        \arr{b3}{}{c1}
        \arr{b3}{}{d1}

    \node [right of=a1, node distance=3.9em, anchor=west] (ia1) 
    {$\preInt{[(\lnot\bot?; a; \lnot\lnot\lnot[a^\ast]p?)^\ast][\lnot\bot?; a; \lnot\lnot\lnot[a^\ast]p?]p}$};
    \node [right of=a2, node distance=3.9em, anchor=west] (ia2) 
    {$\preInt{[\lnot\bot?][a][\lnot\lnot\lnot[a^\ast]p?](p \land q_{x_1})}$};
    \node [right of=a3, node distance=3.9em, anchor=west] (ia3) 
    {$\preInt{[a][\lnot\lnot\lnot[a^\ast]p?](p \land q_{x_1})}$};
    \node [right of=b1, node distance=3.9em, anchor=west] (ib1) 
    {$\preInt{[\lnot\lnot\lnot[a^\ast]p?](p \land q_{x_1})}$};
    \node [right of=b2, node distance=3.9em, anchor=west] (ib2) 
    {$\preInt{[\lnot\lnot\lnot[a^\ast]p?](p \land q_{x_1})}$};
    \node [right of=b3, node distance=3.9em, anchor=west] (ib3) 
    {$\preInt{p \land q_{x_1}}$};
    %
        \node [left of=c1, node distance=3em, anchor=east] (ic1) 
        {$\preInt{p}$};
        %
        \node [right of=d1, node distance=3.9em, anchor=west] (id1) 
        {$\preInt{q_{x_1}}$};
  \end{tikzpicture}
\end{center}
\caption{The quasi-tableau $\Q$ for the cluster $C$ annotated with pre-interpolants.}
\label{f:exk2Q}
\end{figure}
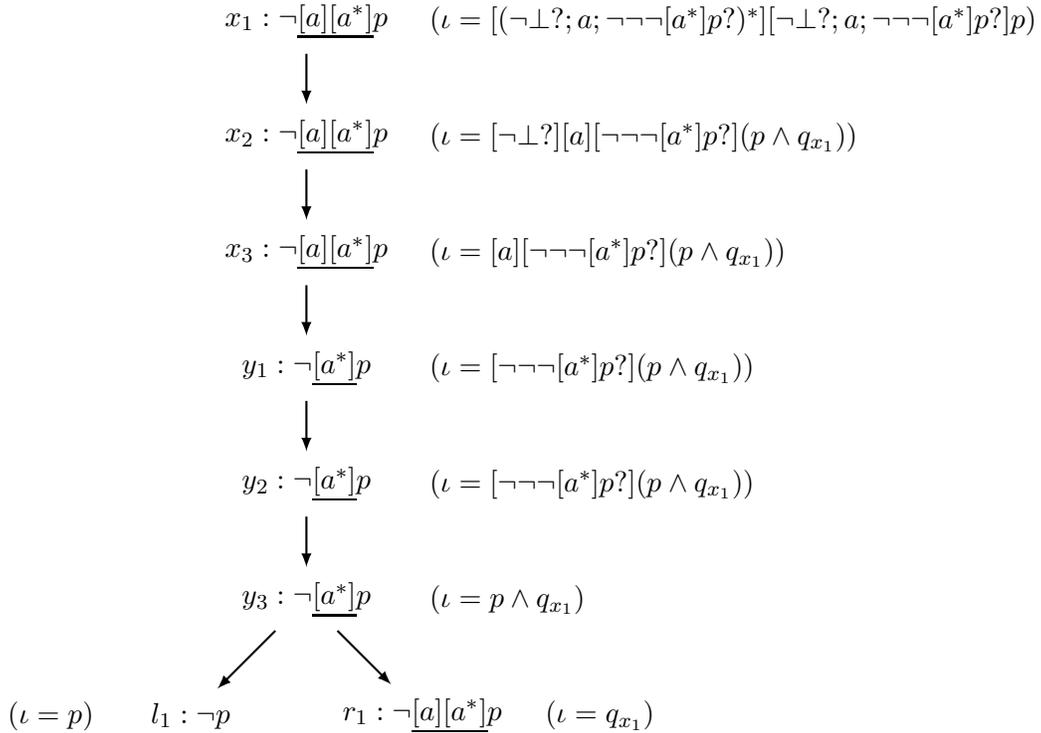

Observe that the obtained pre-interpolant \[\iota_{x_1} = [(\lnot\bot?; a; \lnot\lnot\lnot[a^\ast]p?)^\ast][\lnot\bot?; a; \lnot\lnot\lnot[a^\ast]p?]p\] for the root $x_1$ of $\Q$ is indeed an interpolant for $p \+ [a][a^\ast](p \lor [a^\ast]p) ~;~ \lnot\loaded{[a][a^\ast]}p$.
To make this observation a little easier, note that $\iota_{x_1} \equiv [(a; \lnot[a^\ast]p?)^\ast][a; \lnot[a^\ast]p?]p$.
Finally, observe that the interpolant assigned to the root of the tableau is also $\iota_{x_1}$, and that this is indeed an interpolant for $\Gamma_1; \Gamma_2$.
This concludes the example.

\section{Conclusion}\label{sec:Conclusion}

We have shown that Propositional Dynamic Logic (PDL) has Craig Interpolation.
Our method is based on the key ideas from~\cite{MB1988}: we defined a tableau system for PDL, showed its soundness and completeness and then used an adaptation of Maehara's method to construct interpolants.
We end with some remarks about our ongoing project to implement the system and formally verify the proof, as well as remaining open questions.

\paragraph{Implementation.}
We provide an implementation of the tableau system at
\href{https://github.com/m4lvin/modal-tableau-interpolation}{this link}.
It is written in Haskell and comes in two variants:
one version stays close to~\cite{MB1988} with $n$-formulas,
the other version uses our $\unfold$ functions.
We encourage the reader to try out the web interface available via
\href{https://w4eg.de/malvin/illc/tapdleau}{\emph{this link}}.

\paragraph{Verification.}
We also started to formally verify the proof in the interactive theorem prover Lean~\cite{MU2021:Lean4}.
The formalization so far includes \autoref{l:localBoxTruth} and \autoref{l:localDiaTruth} to deal with local repeats, the Truth \autoref{l:truthLemma} and smaller results such as \autoref{l:subst}, \autoref{l:stepToStar} and \autoref{l:distanceProps}, but not yet the ``big three'' results for soundness, completeness and interpolation.
For details about this ongoing project we refer to \url{https://github.com/m4lvin/lean4-pdl} and the short article about a previous project~\cite{Gat2022:AiML:BMLean} with a proof of interpolation for the basic modal logic $\mathsf{K}$ in Lean 3.

\paragraph{Open questions.}
The somewhat unclear situation in the literature has arguably also blocked research building on top of the result.
Now that we have it, there are interesting follow-up questions that can be pursued:
\begin{itemize}
\item
As we also discuss in \autoref{apdx:history}, it was shown in~\cite{Kracht1999tools} that in order to show that PDL has interpolation it suffices to establish the property for its test-free fragment.
The other direction has \emph{not} been shown, and we do not know whether test-free PDL has interpolation.
Note that full PDL is more expressive than its test-free version, as was shown by Berman \& Paterson~\cite{BerPat1981:testFreeWeaker}.

It is of course tempting, but not immediately clear how to adapt our method to test-free PDL.
Note that we use tests in the construction of interpolants for proper clusters --- in the case $k(x)=2$ of \autoref{d:iitp}, as discussed in \autoref{r:dfitp}.
These tests are often trivial, i.e.\ of the form $?\lnot\bot$, but not always.
For example, in \autoref{subsec:ExampleInterpolant} we started with a test-free split sequent but obtained an interpolant containing the non-trivial test $\lnot\lnot\lnot[a^\ast]p?$.
That is, even when given test-free validities our method will generally yield interpolants with tests.
Of course, this does not show that test-free PDL does not have interpolation and for the concrete example in \autoref{subsec:ExampleInterpolant} there exists a test-free interpolant, namely $[a][a^\ast]p$.  
\item
In a similar direction, we leave it open whether the method would work for \emph{other fragments and variants} of PDL obtained by removing or adding additional operators such as intersection and converse.
\item
There is a well-known correspondence between interpolation for logics and amalgamation of structures and algebras.
  For an example, see~\cite[Section 5.5]{Hodges1997:ShorterMT} where interpolation for first-order logic is shown in this way.
This correspondence was also the basis for the algebraic proof attempt by~\cite{Kowalski2002}.
PDL contains \emph{Kleene Algebra with tests} (KAT) in the sense that a valid KAT-equation such as $\alpha = \beta$ corresponds to the PDL-validity $[\alpha]p \leftrightarrow [\beta]q$ where $p$ is fresh~\cite{Kozen1997:ComKAT}.
We leave open the question what the implications of the Craig interpolation property for PDL are for Kleene algebras with test, and (possibly) other classes of algebras.
\end{itemize}

\paragraph{Acknowledgements.}
We thank additional former members of the PDL reading group that we started 2022/23 in Amsterdam and Groningen: Johannes Marti, Jan Rooduijn and Guillermo Menéndez Turata.

\appendix

\section{Notational Conventions}\label{apdx:notation}


\begin{tabular}{ll}
  $\phi$, $\psi$, $\rho$ & PDL formula \\
  $\tau$ & PDL formula used as a test \\
  $\theta$, $\iota$ & PDL formula used as interpolant \\
  $\chi$ & loaded formula \\
  $\xi$ & PDL formula or loaded formula \\
  $a$ & atomic program \\
  $\alpha$, $\beta$, $\gamma$ & program \\
  \\
  $X$, $Y$ & set of formulas \\
  $\ol{\delta}$, $\ol{\lambda}$, $\ol{\eta}$ & list of programs \\
  \\
  $\kmodel$ & Kripke model \\
  $W$ & set of states \\
  $u$, $v$, $w$ & state \\
  \\
  $H_\alpha$ & set of $(X,\ol{\delta})$ pairs used for $\unfold_\diamond(\alpha,\psi)$, see Definitions~\ref{d:H}~and~\ref{d:unfoldDiamond} \\
  $\ell$ & test profile \\
  $\sigma^\ell$ & signature formula \\
  \\
  $s$, $t$ & node in a tableau \\
  $\Delta$, $\Gamma$, $\Upsilon$ & sequents (finite set of formulas) and set of formulas \\
  $\Lambda$ & function labelling tableau nodes with sequents, e.g.\ $\Lambda(s)=\Delta$ \\
  \\
  $s \edge t$ & $s$ is a \emph{parent} of $t$ \\
  $s \edgeT t$ & transitive closure of $\edge$, $s$ is an \emph{ancestor} of $t$ \\
  $s \edgeRT t$ & reflexive transitive closure of $\edge$ \\
  \\
  $t \comp s$ & $t$ has the companion $s$ (implies that $t$ is a \lpr) \\
  \\
  $s \cEdge t$ & $\edge$ plus repeat-companion pairs \\
  $s \cEdgeT t$ & transitive closure of $\cEdge$ \\
  $s \cEdgeRT t$ & reflexive transitive closure of $\cEdge$ \\
  $\simpler{t}{s}$ & partial well-founded order induced by $\cEdge$ \\
  \\
  $\tedge$ & successor in strategy tree \\
\end{tabular}

\clearpage

\section{Rule Overview}\label{apdx:allRules}

Below are the local rules from \autoref{d:localTableau} and the modal rule from \autoref{d:Modalrule}.

\bigskip

\begin{center}
  \AxiomC{$\Delta , \lnot \lnot \phi$}
  \LeftLabel{$(\lnot)$}
  \UnaryInfC{$\Delta,\phi$}
  \DisplayProof
  \hspace{1.5em}
  \AxiomC{$\Delta , \phi \land \psi$}
  \LeftLabel{($\land$)}
  \UnaryInfC{$\Delta,\phi,\psi$}
  \DisplayProof
  \hspace{1.5em}
  \AxiomC{$\Delta , \lnot (\phi \land \psi)$}
  \LeftLabel{$(\lnot \land)$}
  \UnaryInfC{$\Delta , \lnot \phi \splitCase \Delta , \lnot \psi$}
  \DisplayProof

  \smallskip
  
  \AxiomC{$\Delta , [\alpha]\phi$}
  \LeftLabel{$(\Box)$}
  \RightLabel{\ \ $\alpha$ non-atomic}
  \UnaryInfC{$\big\{\Delta, \Gamma 
      \mid  \Gamma \in \unfold_{\square}(\alpha, \phi) \big\}$}
  \DisplayProof

  \smallskip
  
  \AxiomC{$\Delta , \lnot[\alpha]\phi$}
  \LeftLabel{$(\diam)$}
  \RightLabel{\ \ $\alpha$ non-atomic}
  \UnaryInfC{$\{\Delta, \Gamma \mid \Gamma \in \unfold_{\diam}(\alpha,\phi)\}$}
  \DisplayProof
  
  \smallskip
  
  \AxiomC{$\Delta , \lnot\loaded{[\alpha]}\xi$}
  \LeftLabel{$(\loaded{\diam})$}
  \RightLabel{\ \ $\alpha$ non-atomic}
  \UnaryInfC{$\big\{\Delta, \Gamma \mid \Gamma \in 
      \loaded{\unfold}_{\diam}(\alpha, \xi)\big\}$}
  \DisplayProof
  
  \smallskip
  
  \AxiomC{$\Delta , \lnot[a][\alpha_1]\ldots[\alpha_n]\phi$}
  \LeftLabel{$(L+)$}
  \RightLabel{\ \ $\Delta$ free and basic and $n \geq 0$ maximal}
  \UnaryInfC{$\Delta , \lnot\loaded{[a][\alpha_1]\ldots[\alpha_n]}\phi$}
  \DisplayProof

  \smallskip
  
  \AxiomC{$\Delta , \lnot\loaded{[\alpha_1]\ldots[\alpha_n]}\phi$}
  \LeftLabel{$(L-)$}
  \RightLabel{\ \ $\Delta$ basic and $n \geq 1$}
  \UnaryInfC{$\Delta , \lnot[\alpha_1]\ldots[\alpha_n]\phi$}
  \DisplayProof

  \smallskip
  
  \AxiomC{$\Delta , \lnot\loaded{[a]}\xi$}
  \LeftLabel{$(M)$}
  \RightLabel{\ \ $\Delta$ basic}
  \UnaryInfC{$\Delta_a , \lnot \xi$}
  \DisplayProof

  \smallskip
\end{center}

where $\phi, \psi \in \pdlforms$, $\alpha_1, \dots, \alpha_n \in \progs$, $a$ atomic, and $\xi \in \pdlforms \cup \loaded{\pdlforms}$, i.e.\ $\xi$ is a possibly loaded formula.

Furthermore, if $t$ is a child of a node $s$ and $L(s) = (L+)$ then $L(t) \neq (L-)$.
That is, we are not allowed to apply $(L-)$ immediately after $(L+)$.

\section{History of the Problem}\label{apdx:history}

We are aware of three prior attempts in the literature to show that Propositional Dynamic Logic (PDL) has the Craig interpolation property.
Here we summarize the history of the problem and how our proof relates to these attempts.

Two years after the introduction of PDL by Fischer and Ladner~\cite{FL:PDL1979}, in 1981 Leivant defined a finitary sequent calculus for PDL to which he then applied Maehara's method in order to obtain interpolation~\cite{Leivant1981}.
However, that proof of interpolation is actually for CPDL, a constructive variant of PDL defined by Leivant, and not standard classical PDL.
To transfer interpolation from CPDL to PDL, Proposition 3.2.2 from~\cite{Leivant1981} claims that inserting double negations in front of every connective suffices to relate the two logics.
However, no proof details are given, and in particular it is not clear whether programs and test formulas also need to be modified in this translation.
It remains an open question whether and how this can actually be done.

Several years later, in 1988 Borzechowski (a co-author of the present work) wrote his Diplomarbeit (comparable to a master's thesis) on PDL and interpolation under the supervision of Rautenberg at FU Berlin~\cite{MB1988}.
The thesis was written in German and never published --- our suspicion is that not many people in the research community actually read it.
A translation to English was made available in 2020 by Gattinger~\cite{BorGat2020:AiML-talk}, another co-author of the present work.
Like with Leivant, the main technique used by Borzechowski is Maehara's method, this time applied to a tableau system for PDL itself instead of a sequent calculus for a constructive variant.
Borzechowski's proof~\cite{MB1988} provided the starting point for the present work.
We further discuss the changes and additions we made in the next subsection.

We continue the history of the problem another decade later.
In his 1999 book, Kracht also discusses ``The Unanswered Question''~\cite[Section 10.6]{Kracht1999tools}.
He mentions the proof attempts by Leivant and Borzechowski, and says that ``in neither case was it possible to verify the argument''~\cite[p.~493]{Kracht1999tools}.
Kracht does not further discuss the proof by Borzechowski, but says that Leivant ``makes use of the fact that if $\phi \vdash_{\textbf{PDL}} \psi$ then we can bound the size of a possible countermodel so that the iterat $\alpha^\ast$ only needs to search up to a depth $d$ which depends on $\phi$ and $\psi$''~\cite[p.~493]{Kracht1999tools}.
Indeed, this is what makes the finitary $\ast$-rule used in~\cite{Leivant1981} admissible.
Kracht then suggests that the argument by Leivant would require ``that interpolation is preserved under intersection''~\cite[p.~493]{Kracht1999tools}, and Kracht rightly notes that this is not the case in general.
However, Leivant's argument actually does not use an intersection of logics and does not need preservation of interpolation under intersection.
For more details, see~\cite{Gat2014:notesAboutLeivant}.
Overall, the criticism by Kracht does not apply to the proof by Leivant, but as mentioned above there are other gaps in it.

It is worth mentioning that Kracht shows that if test-free PDL has interpolation, then PDL has interpolation~\cite[Theorem 10.6.2]{Kracht1999tools}.
We do not use this result here but work with full PDL including tests.
To our knowledge it remains open whether test-free PDL has interpolation.

A third proof attempt was made in 2002, by Kowalski~\cite{Kowalski2002}.
This attempt was fundamentally different from the other two, as it used algebraic instead of proof-theoretic methods and claimed to show the superamalgamation property for free dynamic algebras.
The article was withdrawn in 2004~\cite{KowalskiRetraction2004}, after a mistake was found in the proof that could not be repaired.
Since then the question was considered open again, and for example highlighted by Van~Benthem in~\cite{vB2008:ManyFaces}.

\section{Differences from Borzechowski (1988)}\label{apdx:differences}

The proof we have presented here still uses the three key ideas from~\cite{MB1988}, but we made several modifications to the tableau system and the ways in which we prove its properties.
We summarize the four main modifications below.

\subsection*{Avoid n-formulas by using \texorpdfstring{$(\diam)$}{diamond} and \texorpdfstring{$(\Box)$}{box} rules}

Recall that our unfolding mechanism (see \autoref{d:unfoldDiamond} and \autoref{d:unfoldBox}) was specifically made to prevent \emph{local repeats}, i.e. to never generate the same formula within a local tableau.
Local repeats arise due to the interaction of iteration $^\ast$ and the test operator $?$, or due to presence of nested iterations.
For example, formulas like $[(a^\ast)^\ast]p$, $\lnot[(q?)^\ast]p$, and $\lnot[(a^\ast)^\ast]p$ all produce local repeats.

The mechanism that deals with local repeats in~\cite{MB1988} is called $n$-formulas.
Next, we present this mechanism and compare it to the rules $\diam$ and $\Box$ that replace it in our approach.

In \cite{MB1988}, an $n$-formula is defined as a string which is obtained by replacing in a PDL formula one or more programs of the form $\alpha^\ast$ by $\alpha^{(n)}$.
This replacement indicates that $\alpha^\ast$ has been unfolded, hence if a formula $[\alpha^{(n)}]\phi$ is regenerated in a node label, then it is a local repeat.
Semantically, the interpretation of an $n$-formula is the same as the original PDL formula from which the $n$-formula was obtained.
Thus, a formula $[\alpha^{(n)}]\phi$ is semantically equivalent to the PDL formula $[\alpha^\ast]\phi$.
It is worth noting that $(n)$ is simply a symbol and $n$ does not stand for a number.

In \cite{MB1988}, tableau nodes are labeled with sets consisting of $n$-formulas and/or PDL formulas.
Furthermore, instead of having the rules $(\Box)$ and $(\diam)$ that deal with every possible program operator, the system in \cite{MB1988} has the following rules, that follow the well-known Segerberg axioms:

\begin{figure}[H]
\begin{center}
    \AxiomC{$\Gamma , \lnot[\alpha \cup \beta]\phi$}
    \LeftLabel{$(\lnot\cup)$}
    \UnaryInfC{$\Gamma , \lnot[\alpha]\phi \splitCase \Gamma , \lnot [\beta]\phi$}
    \DisplayProof%
    \hspace{1em}
    \AxiomC{$\Gamma , \lnot[\tau?]\phi$}
    \LeftLabel{$(\lnot?)$}
    \UnaryInfC{$\Gamma , \tau , \lnot \phi$}
    \DisplayProof%
    \hspace{1em}
    \AxiomC{$\Gamma , \lnot [\alpha;\beta]\phi$}
    \LeftLabel{$(\lnot;)$}
    \UnaryInfC{$\Gamma , \lnot[\alpha][\beta]\phi$}
    \DisplayProof%

    \smallskip

    \AxiomC{$\Gamma , [\alpha \cup \beta]\phi$}
    \LeftLabel{$(\cup)$}
    \UnaryInfC{$\Gamma , [\alpha]\phi, [\beta]\phi$}
    \DisplayProof%
    \hspace{1.5em}
    \AxiomC{$\Gamma , [\tau?]\phi$}
    \LeftLabel{$(?)$}
    \UnaryInfC{$\Gamma , \lnot \tau \splitCase \Gamma , \phi$}
    \DisplayProof%
    \hspace{1.5em}
    \AxiomC{$\Gamma , [\alpha;\beta]\phi$}
    \LeftLabel{$(;)$}
    \UnaryInfC{$\Gamma , [\alpha][\beta]\phi$}
    \DisplayProof%

    \smallskip

    \AxiomC{$\Gamma , \lnot[\alpha^\ast]\phi$}
    \LeftLabel{$(\lnot n)$}
    \UnaryInfC{$\Gamma , \lnot \phi \splitCase \Gamma , \lnot[\alpha][\alpha^{(n)}]\phi$}
    \DisplayProof%
    \hspace{1.5em}
    \AxiomC{$\Gamma , [\alpha^\ast]\phi$}
    \LeftLabel{$(n)$}
    \UnaryInfC{$\Gamma , \phi , [\alpha][\alpha^{(n)}]\phi$}
    \DisplayProof%
\end{center}
\caption{Rules for boxes and negated boxes in \cite{MB1988} system.}
\label{fig:MBRules}
\end{figure}

In addition to these rules, a Local Tableau must also satisfy these conditions:
\begin{enumerate}[(1)]
  \item A node label $\Gamma, \lnot[a]\phi$ or $\Gamma,[a]\phi$ where $\phi$ is an $n$-formula is replaced with $\Gamma , \lnot[a]\psi$ or $\Gamma , [a]\psi$, respectively, where $\psi$ is obtained by replacing each $(n)$ with $\ast$ in $\phi$.
  \item A node label $\Gamma , [\alpha^{(n)}]\phi$ is replaced with $\Gamma$.  
  \item If it is possible, then a rule must be applied to an $n$-formula.
  \item No rule may be applied if the node label contains a formula of the form $\lnot [\alpha^{(n)}]\phi$.
\end{enumerate}

Finally, it is important to mention that the definition of a closed tableau (Definition~16 in \cite[page 22]{MB1988}), entails that leaves that contain a formula of the form $\lnot[\alpha^{(n)}]\phi$ are considered closed.

The $n$-formulas, the rules in \autoref{fig:MBRules}, the conditions (1)-(4), and the treatment of leaves that contain a formula of the form $\lnot[\alpha^{(n)}]\phi$ as closed, constitute the four elements that handle the local repeats in~\cite{MB1988}.

To see how the \cite{MB1988} system handles particular examples of local repeats, consider the examples in \autoref{fig:MBExamples}.
These examples show how the $(n)$-formulas and conditions (1)-(4) deal with local repeats by deleting the formulas of the form $[\alpha^{(n)}]\phi$ or stop the system from applying any rules to nodes that contain a formula of the form $\lnot[\alpha^{(n)}]\phi$.

In the left tableau, the crossed-out formula $\cancel{[(a^\ast)^{(n)}]p}$ is there to indicate that condition (2) has been applied to obtain that node, and thus the node only contains the formulas $p$ and $[a^\ast][(a^\ast)^{\ast}]p$.
Notice also that in the only leaf, condition (1) has been applied and thus the $(n)$-formula $[a][a^{(n)}][(a^\ast)^{(n)}]p$ has been replaced with its equivalent PDL formula $[a][a^{\ast}][(a^\ast)^{\ast}]p$.

In the right tableau, the cross below the right leaf indicates that condition (4) has been applied, and thus that no more rules can be applied to that node, and that this node is considered closed.

\begin{figure}[H]
\noindent
\centering
\begin{minipage}{0.4\textwidth}
  \begin{tikzpicture}[node distance=4em, >=latex]
    \node (0) {$[(a^\ast)^\ast]p$};
    \node [below of=0] (1) {$p, [a^\ast][(a^\ast)^{(n)}]p$}; \arr{0}{$(n)$}{1}
    \node [below of=1] (2) {$p, \cancel{[(a^\ast)^{(n)}]p}, [a][a^\ast][(a^\ast)^\ast]p$}; \arr{1}{$(n)$}{2}
  \end{tikzpicture}
\end{minipage}
\begin{minipage}{0.55\textwidth}
  \begin{tikzpicture}[node distance=4em, >=latex]
    \node (0) {$\lnot[(q?)^\ast]p$};
    \node [below of=0, node distance=2em] (1L) {$(\lnot n)$};
    \node [below of=0] (1) {};
    \node [left of=1, node distance=7em] (1a) {$\lnot p$}; \arr{0}{}{1a}
    \node [right of=1, node distance=7em] (1b) {$\lnot[q?][(q?)^{(n)}]p$};  \arr{0}{}{1b} 
    \node [below of=1b] (2) {$q \+ \lnot[(q?)^{(n)}]p$}; \arr{1b}{$(\lnot ?)$}{2}
    \node [below of=2, node distance=2em] (3) {$\times$};
  \end{tikzpicture}
\end{minipage}
\caption{Two \cite{MB1988} local tableaux for $[(a^*)^*]p$ and $\lnot[(q?)^\ast]p$. }
\label{fig:MBExamples}
\end{figure}

Although the $n$-formulas suffice to deal with local repeats, \cite{MB1988} then also had to deal with $n$-formulas as a recurring special case in many lemmas and theorems.
Furthermore, the presence of $n$-formulas also added an additional step in the construction of the interpolant.

In contrast, our tableau system does not use $n$-formulas, but instead encapsulates all the reasoning about the $n$-formulas from \cite{MB1988} including conditions (1)-(4) into the definition of the $\unfold$-functions that are used to define the rules
$(\Box)$ and $(\Diamond)$.
For example, compare the Local Tableaux of our system in \autoref{fig:MGExamples} to that of \autoref{fig:MBExamples}.

We explain the relationship between our $\unfold$-functions and the local tableaux of \cite{MB1988} in more detail.
For clarity, we refer to these as local MB tableaux.
For boxes, consider the maximal local MB tableau for $[\alpha]\psi$ where $\psi$ and all $\tau \in \Test(\alpha)$ are treated as atomic.
Branching in this tableau arises from applications of $(?)$.
Each $\Gamma \in \unfold_{\Box}(\alpha,\psi)$ corresponds to some leaf of this local tableau.
Conversely, a leaf of this local tableau may be associated with multiple $\Gamma$. This can be seen by considering the maximal local MB tableau for $[p?;q?]r$. The leaf labeled with $\neg p$ corresponds to both sequents 
$\{\neg p, \neg q\}$ and $\{\neg p\}$ in $\unfold_{\Box}(p? \seqc q?,r)$. Informally, if the test $p?$ fails then the outcome of $q?$ does not matter.

For diamonds, consider the maximal local MB tableau for $\neg[\alpha]\psi$ where $\psi$ and all $\tau \in \Test(\alpha)$ are treated as atomic. 
Here, branching arises from applications of $(\neg \cup)$ and $(\neg n)$.
The sequents in $\unfold_{\diam}(\alpha,\psi)$ are the labels of the leaves in this local MB tableau that do not contain an $n$-formula $\lnot[\beta^{(n)}]\phi$.

Through the use of the $\unfold$ functions in rules $(\diam)$ and $(\Box)$, we have removed the need for $n$-formulas and the intermediate nodes that contain $n$-formulas.

\begin{figure}[H]
  \centering
  \begin{tikzpicture}[node distance=4em, >=latex]
    \node (1) {$[(a^\ast)^\ast]p$};
    \node [below of=1] (2) {$p, [a][a^\ast][(a^\ast)^\ast]p$}; \arr{1}{$(\Box)$}{2}
    \node [right of=1, node distance=14em] (1') {$\lnot[(q?)^\ast]p$};
    \node [below of=1'] (2') {$\lnot p$}; \arr{1'}{$(\diam)$}{2'}
  \end{tikzpicture}
  \caption{Two local tableaux for $[(a^*)^*]p$ and $\lnot[(q?)^\ast]p$ using our system.}\label{fig:MGExamples}
\end{figure}

\subsection*{Completeness via games}

Both the completeness proof given in~\cite{MB1988} and the one we give here use the same standard approach:
given a consistent set $\Gamma$ of formulas, we know that there is no closed tableau for $\Gamma$.
We can then use the existence of maximal open tableau to build a model graph, i.e., a Kripke model where states correspond to sets of formulas that are true at them.
In~\cite{MB1988} this model is defined via a set $M_0$ of maximal tableaux that contains first a maximal local tableau for $\Gamma$ and then gets closed such that it ``also contains every maximal saturating tableau''~\cite[p.~36]{MB1988} for any sequent occurring in a tableau already in the set.
It is not clear whether $M_0$ is finite, nor whether it needs to be.
In~\cite{MB1988} then maximal local paths are used to define the states of the model-graph.
The main work of the completeness proof is to show condition (d) of \autoref{d:ModelGraph}.
The argument in \cite{MB1988} for this is not entirely clear, in particular, the proof needs to ``jump'' between different tableaux in $M_0$ and termination of this process is not shown.

Instead of $M_0$, in our completeness proof here we obtain the model graph via the \emph{tableau game} from \autoref{subsec:TableauGame}.
The role of maximal local paths is played by pre-states and the main work in our completeness proof is to show that a winning strategy for the player ``Builder'' suffices to define a model graph.
This use of games is by now a standard tool in the theory of (fixpoint) logics (see e.g.~\cite{ArnoldNiwinski:rudiments,GraedelTW2001:LNCS2500,SEP:logic-games}), and they provide an intuitive, yet formal, setting for reasoning about proof search.

\subsection*{Interpolant correctness via semantics}

In~\cite{MB1988}, the proof that the constructed interpolants are correct first extends the proof system with additional rules, argues for their admissibility and then builds tableaux in that new system to prove the correctness of the interpolants.
It should be noted that \cite{MB1988} does not include a complete proof of the admissibility of the additional rules, nor does it provide a convincing argument showing that the admissibility of these rules implies the completeness of the split system (see Lemma~13 in \cite[page 31]{MB1988}).
Furthermore, some definitions in \cite{MB1988} lack precision.
For instance, Definition~26 in \cite[page 32]{MB1988}, which removes $n$-nodes from a tableaux, and Definition~31 in \cite[page 37]{MB1988}, which describes what we now call quasi-tableaux, are both imprecise in that the resulting structures are not tableaux even though they are defined as tableaux.
In addition, the proof of the correctness of the interpolant contains gaps.
The most significant one, is in Lemma~25 in \cite[page 40]{MB1988} where some cases are handled incorrectly.

Instead, in this document, we do not establish the completeness of the split system via the admissibility of additional rules but rely on a direct argument. We also prove the correctness of interpolants directly via semantics (\autoref{l:correctLeftIP} and \autoref{l:correctRightIP}), rather than using a proof-theoretic approach.

\subsection*{Simplified loading}

In~\cite{MB1988}, the notation for loaded formulas looks like $[a][b]\phi^\phi$, but we write this as $\loaded{[a][b]}\phi$.
The purpose of this mechanism is the same and we still call it ``loading''.
However, we make two changes to the loading rule $(L+)$.
First, our $(L+)$ can only be applied to formulas that have an atomic program in its leading modality.
Second, our $(L+)$ always loads maximally,
meaning that, e.g., the formula $[a][b^*][c \cup d](p \lor [a^*]q)$ can only be loaded as
$\loaded{[a][b^*][c \cup d]}(p \lor [a^*]q)$.
These restrictions to $(L+)$ are not necessary, but they simplify the tableau system overall.

\bibliographystyle{alphaurl}
\bibliography{references}

\end{document}
